\newcommand{\N}{\mathbb{N}}
\newcommand{\A}{\mathcal{A}}
\newcommand{\B}{\mathcal{B}}
\newcommand{\F}{\mathcal{F}}
\newcommand{\V}{\mathcal{V}}
\newcommand{\X}{\mathcal{X}}
\newcommand{\Sorts}{\mathcal{S}}
\newcommand{\Constructors}{\mathcal{C}}
\newcommand{\Defineds}{\mathcal{D}}
\newcommand{\Var}{\mathit{Var}}
\newcommand{\Data}{\texttt{Data}}
\newcommand{\VValue}{\mathtt{Value}_{d_1,\dots,d_M}^\prog}
\newcommand{\domain}{\mathtt{dom}}
\newcommand{\Card}{\mathsf{Card}}
\newcommand{\transition}[5]{#1~\displaystyle{\mathop{=\!\!=\!\!\!
\Longrightarrow}^{#2/#3\ #4}}~#5}
\newcommand{\numrep}[1]{[#1]}
\newcommand{\timecomp}[1]{\ensuremath{\textrm{TIME}\left(#1\right)}}
\newcommand{\exptime}[1]{\mathsf{EXP}^{#1}\mathsf{TIME}}
\newcommand{\nexptime}[1]{\mathsf{NEXP}^{#1}\mathsf{TIME}}
\newcommand{\expspace}[1]{\ensuremath{\textrm{EXP}^{#1}\textrm{SPACE}}}
\newcommand{\logtime}{\mathsf{LOGTIME}}
\newcommand{\logspace}{\mathsf{LOGSPACE}}
\newcommand{\asort}{\iota}
\newcommand{\asortorpair}{\kappa}
\newcommand{\atype}{\sigma}
\newcommand{\btype}{\tau}
\newcommand{\ctype}{\pi}
\newcommand{\aindex}{l}
\newcommand{\bindex}{p}
\newcommand{\cindex}{q}
\newcommand{\prog}{\mathsf{p}}
\newcommand{\eprog}{\mathsf{p}'}
\newcommand{\arity}{\mathtt{arity}_\prog}
\newcommand{\vdashcall}{\vdash^{\mathtt{call}}}
\newcommand{\vdashif}{\vdash^{\mathtt{if}}}
\newcommand{\vvdash}{\Vdash}
\newcommand{\vvdashcall}{\Vdash^{\mathtt{call}}}
\newcommand{\treeroot}{\mathit{root}}
\newcommand{\app}[2]{#1\ #2}
\newcommand{\apps}[3]{#1\ #2 \cdots #3}
\newcommand{\symb}[1]{\mathtt{#1}}
\newcommand{\interpret}[1]{\llbracket #1 \rrbracket_\B}
\newcommand{\pinterpret}[1]{\langle\!| #1 |\!\rangle_\B}
\newcommand{\numinterpret}[1]{\langle #1 \rangle}
\newcommand{\down}[2]{#1\!\Downarrow\!#2}
\newcommand{\arrtype}{\Rightarrow}
\newcommand{\arrr}{\to}
\newcommand{\too}{\Rightarrow}
\newcommand{\subtermeq}{\unlhd}
\newcommand{\supterm}{\rhd}
\newcommand{\suptermeq}{\unrhd}
\newcommand{\ssupseteq}{\sqsupseteq'}
\newcommand{\consistent}[2]{#1\wr#2}
\newcommand{\blank}{\textbf{\textvisiblespace}}
\newcommand{\nul}{\symb{0}}
\newcommand{\nil}{\symb{[]}}
\newcommand{\cons}{\symb{::}}
\newcommand{\strue}{\symb{true}}
\newcommand{\sfalse}{\symb{false}}
\newcommand{\suc}{\symb{s}}
\newcommand{\bits}{\symb{list}}
\newcommand{\bool}{\symb{bool}}
\newcommand{\nat}{\symb{nat}}
\newcommand{\msymbol}{\symb{symbol}}
\newcommand{\mstate}{\symb{state}}
\newcommand{\mdirec}{\symb{direc}}
\newcommand{\zero}[1][\pi]{\symb{zero}_{#1}}
\newcommand{\pred}[1][\pi]{\symb{pred}_{#1}}
\newcommand{\seed}[1][\pi]{\symb{seed}_{#1}}
\newcommand{\numtype}[1][\pi]{\alpha_{#1}}
\newcommand{\exx}{\symb{x}}
\newcommand{\linear}{\langle 1,1\rangle}
\newcommand{\pol}{{\langle a,b\rangle}}
\newcommand{\epi}{{\symb{e}[\pi]}}
\newcommand{\er}[1][\pi]{\psi[{#1}]}
\newcommand{\eelin}{{\psi[\psi[\linear]]}}
\newcommand{\ifte}[3]{\mathtt{if} \, #1 \, \mathtt{then} \, #2 \,
                    \mathtt{else} \, #3}
\newcommand{\choice}{\mathtt{choose}}
\newcommand{\identifier}[1]{\mathtt{#1}}
\newcommand{\progeval}[2]{\llbracket #1 \rrbracket(#2)}
\newcommand{\progevaluation}{\llbracket \prog \rrbracket(d_1,\dots,d_M)}
\newcommand{\progresult}{\llbracket \prog \rrbracket(d_1,\dots,d_M)
\mapsto b}
\newcommand{\pclass}{\mathsf{P}}
\newcommand{\npclass}{\mathsf{NP}}
\newcommand{\expclass}{\mathsf{EXP}}
\newcommand{\logspaceclass}{\mathsf{L}}
\newcommand{\nlogspaceclass}{\mathsf{NL}}
\newcommand{\pspaceclass}{\mathsf{PSPACE}}
\newcommand{\elementary}{\mathsf{ELEMENTARY}}
\newcommand{\complexityfun}{h}
\newcommand{\OO}{\mathcal{O}}
\newcommand{\typeorder}[1]{\ensuremath{\mathit{ord}\!\left(#1\right)}}
\newtheorem{algorithm}[lemma]{Algorithm}
\begin{document}

\thispagestyle{empty}
\definecolor{comment}{rgb}{0.92, 0.92, 0.92}
\colorbox{comment}{\parbox{0.9\textwidth}{
  This paper is a pre-print of the paper
  \emph{The Power of Non-Determinism in Higher-Order Implicit
  Complexity} which has been accepted for publication at the
  \emph{European Symposium for Programming} (ESOP 2017). \\
  \ \\
  The text is (almost) identical to the published version, but
  the present work includes an appendix containing full proofs of all
  the results in the paper.
}}
\setcounter{page}{0}

\newpage

\mainmatter

\title{The Power of Non-Determinism in Higher-Order Implicit Complexity
\thanks{The authors are supported by the Marie Sk{\l}odowska-Curie
action ``HORIP'', program H2020-MSCA-IF-2014, 658162 and by the Danish
Council for Independent Research Sapere Aude grant ``Complexity via
Logic and Algebra'' (COLA).}}
\subtitle{Characterising Complexity Classes using Non-deterministic
Cons-free Programming}

\author{Cynthia Kop
\and Jakob Grue Simonsen
}
\authorrunning{C. Kop and J. Simonsen}
\institute{
Department of Computer Science, University of Copenhagen (DIKU) \\
\email{kop@di.ku.dk}
\quad\quad\quad
\email{simonsen@di.ku.dk}
}

\maketitle

\begin{abstract}
We investigate the power of non-determinism in purely functional
programming languages with higher-order types.
Specifically, we consider \emph{cons-free} programs of varying data
orders, equipped with explicit non-deterministic choice.  Cons-freeness
roughly means that data constructors cannot occur in
function bodies and all manipulation of storage space thus has to
happen indirectly using the call stack.

\hspace{2ex}
While cons-free programs have previously been used by several authors
to characterise complexity classes, the work on
\emph{non-deterministic} programs has almost exclusively considered
programs of data order $0$.
Previous work has shown that adding explicit non-determinism to
cons-free programs taking data of order 0 does not increase
expressivity
; we prove that this---dramatically---is not the case for higher data orders: adding
non-determinism to programs with data order at least $1$
allows for a characterisation of the entire class of elementary-time
decidable sets.

\hspace{2ex}
Finally we show how, even 
with
non-deterministic choice, the original hierarchy of characterisations
is restored by imposing different restrictions.

\keywords{implicit computational complexity, cons-free programming,
EXPTIME hierarchy, non-deterministic programming, unitary variables}
\end{abstract}

\section{Introduction}

\emph{Implicit complexity} is, roughly, the study of how to create
bespoke programming languages that allow the programmer to write
programs which are guaranteed to (a) \emph{only} solve problems
within a certain complexity class (e.g., the class of polynomial-time
decidable sets of binary strings), and (b) to be able to solve
\emph{all} problems in this class.
When equipped with an efficient execution engine, the programs of
such a language may themselves be guaranteed to run within the
complexity bounds of the class (e.g., run in polynomial time), and
the plethora of means available for analysing programs devised by the
programming language community means that methods from outside
traditional complexity theory can conceivably be brought to
bear on open problems in computational complexity.

One successful approach to implicit complexity is to syntactically
constrain the programmer's ability to create new data structures.  In
the seminal
paper~\cite{jon:01}, Jones introduces \emph{cons-free programming}.
Working with a small functional programming language, cons-free
programs are 
\emph{read-only}: recursive data cannot
be created or altered (beyond taking sub-expressions), only read from
input.  By imposing further restrictions on \emph{data order}
(i.e., order 0 = integers, strings; order 1 = functions on data of
order 0; etc.) and recursion scheme (e.g., full/tail/primitive
recursion), classes of cons-free programs turn out to characterise
various deterministic classes in the time and space hierarchies of
computational complexity.

However, Jones' language is deterministic and, perhaps as a
result, his characterisations concern only deterministic complexity
classes. It is tantalising to consider the method in a
non-deterministic setting: could adding non-deterministic
choice to Jones' language increase its
expressivity; for example, from $\pclass$ to $\npclass$?

The immediate answer is \emph{no}: following Bonfante~\cite{bon:06},
adding a non-deterministic choice operator to cons-free programs with
data order $0$ makes no difference in expressivity---deterministic or
not, 
they
characterise $\pclass$.  However, the 
details are subtle and depend 
on other features of the
language;
when only primitive recursion is allowed,
non-determinism \emph{does} increase expressivity from
$\logspaceclass$ to $\nlogspaceclass$ \cite{bon:06}.

While many authors consider the expressivity of 
higher types, the interplay of higher types and non-determinism is
not fully understood. 
Jones obtains
several hierarchies of deterministic complexity classes by increasing
data orders~\cite{jon:01},
but these hierarchies have
at most an exponential increase 
between levels.  Given
the expressivity added by non-determinism, it is \emph{a priori} not
evident that similarly ``tame'' hierarchies would arise in the
non-deterministic setting. 

The purpose of the present paper is to investigate the power of
\emph{higher-order} (cons-free) programming 
to characterise
complexity classes. The main surprise is that while non-determinism
does not add expressivity for first-order programs, the combination of
second-order (or higher) programs and non-determinism characterises
the full class of elementary-time decidable sets---and increasing the
order beyond second-order programs does not further increase
expressivity.
However, we will also show that there are simple changes to
the restrictions that allow us to obtain a hierarchy of
characterisations as in the deterministic setting.

Proofs for the results in this paper are all available in the appendix.

\subsection{Overview and contributions}

\begin{figure}[!ht]
\begin{tabular}{c|c|c|c|c}
& \textbf{data order 0} &
\textbf{data order 1} &
\textbf{data order 2} &
\textbf{data order 3} 
\\
\cline{1-5}
\vphantom{$2^{x^y}$}
\textbf{cons-free} &
$\pclass =$ &
$\expclass =$ &
\multirow{2}{*}{$\exptime{2}$} &
\multirow{2}{*}{$\exptime{3}$} 
\\
\textbf{deterministic} &
$\exptime{0}$ &
$\exptime{1}$ &
& \\
\cline{1-5}
\vphantom{$2^{x^y}$}
\textbf{cons-free} &
$\logspaceclass$ &
$\pspaceclass$ & & \\
\textbf{tail-recursive} &
$=$ &
$=$ &
$\expspace{1}$ &
$\expspace{2}$ 
\\
\textbf{deterministic} &
$\expspace{-1}$ &
$\expspace{0}$ &
& \\
\cline{1-5}
\vphantom{$2^{x^y}$}
\textbf{cons-free} &
$\logspaceclass$ &
$\pclass$ & $\pspaceclass$ & $\expclass$ \\
\textbf{primitive recursive} &
$=$ & $=$ & $=$ & $=$
\\
\textbf{deterministic} &
$\expspace{-1}$ &
$\exptime{0}$ & $\expspace{0}$
& $\exptime{1}$ \\
\cline{1-5}

\multicolumn{5}{c}{} \\
\multicolumn{5}{c}{
\parbox[c]{\textwidth}{
\emph{The characterisations obtained in~\cite{jon:01}, transposed to the
more permissive language used here.  This list (and the one below)
should be imagined as extending infinitely to the right.
The ``limit'' for all rows (i.e., all finite data orders allowed)
characterises $\elementary$, the class of elementary-time decidable
sets.
}}}\\
\multicolumn{5}{c}{\vspace{0.1cm}} \\

& \textbf{data order 0} &
\textbf{data order 1} &
\textbf{data order 2} &
\textbf{data order 3} 
\\
\cline{1-5}
\vphantom{$2^{x^y}$}
\textbf{cons-free} &
$\pclass$ &
$\elementary$ &
$\elementary$ &
$\elementary$ 
\\
\cline{1-5}
\textbf{cons-free} &
$\pclass =$ &
$\expclass =$ &
\multirow{2}{*}{$\exptime{2}$} &
\multirow{2}{*}{$\exptime{3}$} 
\\
\textbf{unitary variables} &
$\exptime{0}$ &
$\exptime{1}$ &
\\
\cline{1-5}
\multicolumn{5}{c}{} \\
\multicolumn{5}{c}{
\parbox[c]{\textwidth}{
\emph{The characterisations obtained by allowing non-deterministic choice.
As above, the ``limit'' where all data orders are allowed
characterises $\elementary$ (for both rows).}
\\
}} \\
& \scriptsize\textbf{arrow depth 0} &
\scriptsize\textbf{arrow depth 1} &
\scriptsize\textbf{arrow depth 2} &
\scriptsize\textbf{arrow depth 3} 
\\
\cline{1-5}
\textbf{cons-free} &
$\pclass$ &
$\elementary$ &
$\elementary$ &
$\elementary$ 
\\
\cline{1-5}
\multicolumn{5}{c}{} \\
\multicolumn{5}{c}{
\parbox[c]{\textwidth}{
\emph{The characterisations obtained by allowing non-deterministic choice and
considering \emph{arrow depth} as the variable factor rather than
data order}}
} \\
\end{tabular}
\vspace{-6pt}
\caption{Overview of the results discussed or obtained in this paper.}
\vspace{-12pt}
\label{fig:overview}
\end{figure}

We define a purely functional programming language with
non-deterministic choice and, following Jones~\cite{jon:01},
consider the restriction to \emph{cons-free} programs.

Our results are summarised in Figure~\ref{fig:overview}.  For
completeness, we have also included the results from~\cite{jon:01};
although the language used there is slightly more syntactically
restrictive than ours, the results easily generalise provided we limit
interest to \emph{deterministic} programs, where the
$\choice$ operator is not used.  As the technical machinations
involved to procure the results for a language with full recursion are
already intricate and lengthy, we have not yet considered the
restriction to tail- or primitive recursion in the non-deterministic
setting.

Essentially,
our paper has two major contributions:
(a) we show that previous observations about the increase in
expressiveness when adding non-determinism change dramatically at
higher types, and (b) we provide
two characterisations of the $\exptime{}$ hierarchy using
a non-deterministic language
---which
may provide a basis
for future characterisation of
common non-deterministic classes as well.

Note that (a) is highly surprising:
As evidenced by early work of Cook \cite{coo:71} merely adding full
non-determinism to a restricted (i.e., non-Turing complete)
computation model may result in it still characterising a
\emph{deterministic} class of problems.  This 
also holds true
for cons-free programs with non-determinism,
as shown in different settings by Bonfante \cite{bon:06}, by de Carvalho and
Simonsen \cite{car:sim:14},
and by Kop and Simonsen \cite{DBLP:conf/rta/KopS16}, all resulting only in
characterisations of deterministic classes such as $\pclass$.
With the exception of \cite{DBLP:conf/rta/KopS16}, all of the above
attempts at adding non-determinism consider data order at most
$0$, and one would expect few changes when passing to higher data
orders.  This turns out to be patently false as simply increasing to
data order $1$ already results in
an explosion of expressive power.

\subsection{Overview of the ideas in the paper}

Cons-free programs (Definition~\ref{def:consfree}) are, roughly,
functional programs where function bodies are allowed to contain
constant data and substructures of the function arguments, but \emph{no data constructors}---e.g.,
clauses $\symb{tl}\ (x\cons xs) = xs$ and $\symb{tl}\ \nil = \nil$
are both allowed, but $\symb{append}\ (x\cons xs)\ ys = x\cons
(\symb{append}\ xs\ ys)$ is not.\footnote{The formal definition is slightly
more liberal to support easier implementations using pattern-matching,
but the ideas remain the same.}
This restriction severely limits expressivity, as it means no new data
can be created.

A key idea in Jones' original work on cons-free programming is \emph{counting}: expressions which
represent numbers and functions to calculate with them.
It is not in general possible to represent numbers in the
usual unary way as $\nul$, $\suc\ \nul$, $\suc\ (\suc\ \nul)$,
etc., or as lists of bits---since in a cons-free program these
expressions cannot be built unless they already occur in the
input---but counting up to limited bounds \emph{can} be achieved by other tricks.
By repeatedly simulating a single step of a Turing Machine up to such bounds, Jones shows that any decision problem in $\exptime{K}$ can be decided using a cons-free
program (\cite{jon:01} and Lemma~\ref{lem:deterministic:simulate}).

The core insight in the present paper is that in the presence of
non-determinism, an expression of type $\atype \arrtype \btype$
represents a \emph{relation} between expressions of type $\atype$ and
expressions of type $\btype$ rather than a \emph{function}.
While the number of functions for a given type is exponential in the
order of that type, the number of relations is exponential in the
depth of arrows occurring in it.  We exploit this (in
Lemma~\ref{lem:nondetmodule}) by counting up to arbitrarily high
numbers using only first-order data.
This observation also suggest that by limiting the \emph{arrow depth}
rather than the \emph{order of types}, the increase in expressive power
disappears (Theorem~\ref{thm:arrowdepth}). 

Conversely, we also provide an
algorithm to compute the output of cons-free programs 
potentially much faster than the program's own running time, by using a
tableaux to store results.  Although similar to Jones' ideas, our
proof style deviates to easily support both non-deterministic
and deterministic programs.

\subsection{Related work}

The creation of programming languages that characterise complexity
classes has been a research area since Cobham's work in the 1960ies,
but saw rapid development only after similar advances in the related
area of \emph{descriptive complexity} (see, e.g.,
\cite{Immerman99descriptivecomplexity}) in the 1980ies
and Bellantoni and Cook's work on characterisations of $\pclass$
\cite{DBLP:journals/cc/BellantoniC92} using constraints on recursion
in a purely functional language with programs reminiscent of classic
recursion theoretic functions. Following Bellantoni and Cook,
a number of authors obtained programming languages by 
constraints on
recursion, and under a plethora of names (e.g., \emph{safe},
\emph{tiered} or \emph{ramified} recursion, see
\cite{Clote:handbook,DalLago2012} for overviews), and this area
continues to be active. The main difference with our work is that we
consider full recursion in all variables, but place syntactic
constraints on the function bodies (both cons-freeness and
unitary variables).
Also,
as in traditional complexity theory we 
consider decision problems (i.e., what \emph{sets} can be decided by
programs), whereas much research in implicit complexity considers
functional complexity (i.e., what 
\emph{functions} can be computed).

Cons-free programs, combined with various limitations on recursion,
were introduced by Jones \cite{jon:01}, building on ground-breaking
work by Goerdt
\cite{DBLP:journals/tcs/Goerdt92a,DBLP:journals/iandc/Goerdt92}, 
and have been studied by a number of authors (see, e.g., \cite{DBLP:conf/icalp/Ben-AmramP98,bon:06,DBLP:journals/njc/KristiansenV05,KRISTIANSEN2004139}).
The main difference with our work is that we consider full recursion
with full non-determinism, but impose constraints not present in the
previous literature.

Characterisation of non-deterministic complexity classes via
programming languages remains a largely unexplored area.
Bellantoni
obtained a characterisation of $\npclass$ in his dissertation
\cite{Bellantoni:thesis} using similar approaches as
\cite{DBLP:journals/cc/BellantoniC92}, but at the cost of having a
minimisation operator (as in recursion theory), a restriction later
removed by Oitavem \cite{DBLP:journals/apal/Oitavem11}. A general
framework for implicitly characterising
a larger hierarchy of non-deterministic classes remains an open problem.

\section{A purely functional, non-deterministic,
call-by-value programming language}

We define a simple call-by-value programming language with explicit
non-deterministic choice.  This generalises Jones' toy language
in~\cite{jon:01} by supporting different types and pattern-matching
as well as non-determinism.
The more permissive language actually \emph{simplifies} proofs
and examples, since we do not need to encode all data as boolean
lists, and have fewer special cases.

\subsection{Syntax}

We consider programs defined by the syntax in Figure~\ref{fig:syntax}

\begin{figure}[!ht]
\vspace{-12pt}
\begin{center}
\fbox{
\begin{tabular}{rcl}
$\prog \in \texttt{Program}$ & ::= & $\rho_1$ $\rho_2$ $\dots$
$\rho_N$ \\
$\rho \in \texttt{Clause}$ & ::= &
$\apps{\identifier{f}}{\ell_1}{\ell_k} = s$ \\
$\ell \in \texttt{Pattern}$ & ::= &
$x \mid \apps{\identifier{c}}{\ell_1}{\ell_m}$ \\
$s,t \in \texttt{Expr}$ & ::= & $x \mid \identifier{c} \mid
\identifier{f} \mid \ifte{s_1}{s_2}{s_3} \mid
\apps{\choice}{s_1}{s_n} \mid (s,t) \mid s\ t$ \\
$x,y \in \V$ & ::= & identifier \\
$\identifier{c} \in \Constructors $ & ::= &
identifier disjoint from $\V$
\quad (we assume $\{\strue,\sfalse\} \subseteq \Constructors$) \\
$\identifier{f},\identifier{g} \in \Defineds$ & ::= &
identifier disjoint from $\V$ and $\Constructors$ \\
\end{tabular}
}
\end{center}
\vspace{-12pt}
\caption{Syntax}\label{fig:syntax}
\vspace{-12pt}
\end{figure}

We call elements of $\V$ \emph{variables}, elements of $\Constructors$
\emph{data constructors} and elements of $\Defineds$ \emph{defined
symbols}.  The \emph{root} of a clause $\apps{\identifier{f}}{
\ell_1}{\ell_k} = s$ is the defined symbol $\identifier{f}$. The
\emph{main function} $\identifier{f}_1$ of the program is the root of
$\rho_1$.
We denote $\Var(s)$ for the set of variables occurring in an
expression $s$.  An expression $s$ is \emph{ground} if $\Var(s) =
\emptyset$.
Application is left-associative, i.e., $s\ t\ u$ should be read
$(s\ t)\ u$.

\begin{definition}
For expressions $s,t$, we say that \emph{$t$ is a sub-expression
of $s$}, notation $s \suptermeq t$, if this can be derived using the
clauses:
\[
\begin{array}{rclclrclcl}
s & \suptermeq & t & \text{if} & s = t\ \text{or}\ s \supterm t \\
(s_1,s_2) & \supterm & t & \text{if} & s_1 \suptermeq t\ \text{or}\ 
s_2 \suptermeq t\quad &
\ifte{s_1}{s_2}{s_3} & \supterm & t & \text{if} & s_i \suptermeq t\ 
\text{for some}\ i \\
s_1\ s_2 & \supterm & t & \text{if} & s_1 \supterm t\ \text{or}\ 
s_2 \suptermeq t &
\apps{\choice}{s_1}{s_n}& \supterm & t & \text{if} & s_i \suptermeq t
\ \text{for some}\ i \\
\end{array}
\]
Note: the head $s$ of an application $s\ t$ is \emph{not}
\pagebreak
considered a sub-expression of $s\ t$.
\end{definition}

Note that the programs we consider have no pre-defined data
structures like integers: these may be encoded using inductive data
structures in the usual way.


\begin{example}\label{ex:successor}
Integers can be encoded as bitstrings of unbounded length:
$\Constructors \supseteq \{ \sfalse,\strue,\cons,\nil \}$.  Here,
$\cons$ is
considered infix and right-associative, and $\nil$ denotes the end of
the string.  Using little endian, $6$ is
encoded by
$\sfalse\cons\strue\cons\strue\cons\nil$ as well as
$\sfalse\cons\strue\cons\strue\cons\sfalse\cons\sfalse\cons\nil$.
We for instance have $\strue\cons (\symb{succ}\ xs)
\linebreak
\suptermeq xs$ (for $xs \in \V$).
The program below imposes $\Defineds = \{\symb{succ}\}$:
\[
\begin{array}{lcl}
\symb{succ}\ \nil = \strue\cons\nil & \quad\quad &
\symb{succ}\ (\sfalse\cons xs) = \strue\cons xs \\
& & \symb{succ}\ (\strue\cons xs) = \sfalse\cons (\symb{succ}\ xs) \\
\end{array}
\]
\end{example}

\subsection{Typing}

Programs have explicit simple types without polymorphism, with the
usual definition of type order $\typeorder{\atype}$; this is formally
given in Figure~\ref{fig:order}.

\begin{figure}[!ht]
\vspace{-12pt}
\begin{center}
\fbox{
\begin{minipage}{0.4\textwidth}
\begin{tabular}{rcl}
$\asort \in \Sorts$ & ::= & sort identifier \\
$\atype,\btype \in \texttt{Type}$ & ::= & $\asort \mid \atype \times
\btype \mid \atype \arrtype \btype$ \\
\\
\end{tabular}
\end{minipage}
\begin{minipage}{0.5\textwidth}\vspace{-12pt}
$$
\begin{array}{rcl}
\typeorder{\asort} & = & 0\ \ \text{for}\ \asort \in \Sorts \\
\typeorder{\atype \times \btype} & = & \max(\typeorder{\atype},
\typeorder{\btype}) \\
\typeorder{\atype \arrtype \btype} & = & \max(\typeorder{\atype}+1,
\typeorder{\btype}) \\
\end{array}
$$
\end{minipage}
}
\end{center}
\vspace{-12pt}
\caption{Types and type orders}
\vspace{-12pt}
\label{fig:order}
\end{figure}

The (finite) set $\Sorts$ of sorts is used to type 
atomic data such as bits; we assume $\bool \in \Sorts$.
The function arrow $\arrtype$ is considered right-associative.
Writing $\asortorpair$ for a sort or a pair type $\atype \times
\btype$, any type can be uniquely presented in the form
$\atype_1 \arrtype \dots \arrtype \atype_m \arrtype \asortorpair$.
We will limit interest to \emph{well-typed, well-formed} programs:

\begin{definition}\label{def:well-typed}
A program $\prog$ is \emph{well-typed} if there is an assignment $\F$
from $\Constructors \cup \Defineds$ to the set of simple types such
that:
\begin{itemize}
\item the main function $\identifier{f}_1$ is assigned a type
$\asortorpair_1 \arrtype \dots \arrtype \asortorpair_M \arrtype
\asortorpair$, with $\typeorder{\asortorpair_i} = 0$ for $1 \leq
i \leq M$ and also $\typeorder{\asortorpair} = 0$
\item data constructors $\identifier{c} \in \Constructors$ are
assigned a type
$\asortorpair_1 \arrtype \dots \arrtype \asortorpair_m \arrtype
\asort$ with $\asort \in \Sorts$ and $\typeorder{\asortorpair_i} =
0$ for $1 \leq i \leq m$
\item for all clauses $\apps{\identifier{f}}{\ell_1}{\ell_k} = s \in
\prog$, the following hold:
\begin{itemize}
\item $\Var(s) \subseteq \Var(\apps{\identifier{f}}{\ell_1}{\ell_k})$
  and
  each variable occurs only once in
  $\apps{\identifier{f}}{\ell_1}{\ell_k}$;
\item there exist a type environment $\Gamma$ mapping
  $\Var(\apps{\identifier{f}}{\ell_1}{\ell_k})$ to simple types, and a
  simple type $\atype$, such that both
  $\apps{\identifier{f}}{\ell_1}{\ell_k} : \atype$ and $s : \atype$
  using the rules in Figure~\ref{fig:typing}; we call $\atype$ the
  type of the clause.
\end{itemize}
\end{itemize}
\end{definition}

\begin{figure}[!htpb]
\vspace{-12pt}
\begin{center}
\fbox{
\begin{minipage}{0.95\linewidth}
\begin{minipage}{0.3\linewidth}
\begin{prooftree}
\AxiomC{}
\RightLabel{if $a : \atype \in \Gamma \cup \F$}
\UnaryInfC{$a : \atype$}
\end{prooftree}
\end{minipage}
\begin{minipage}{0.3\linewidth}
\begin{prooftree}
\AxiomC{$s : \atype$}
\AxiomC{$t : \btype$}
\BinaryInfC{$(s,t) : \atype \times \btype$}
\end{prooftree}
\end{minipage}
\begin{minipage}{0.3\linewidth}
\begin{prooftree}
\AxiomC{$s : \atype \arrtype \btype$}
\AxiomC{$t : \atype$}
\BinaryInfC{$s\ t : \btype$}
\end{prooftree}
\end{minipage}

\begin{minipage}{0.45\linewidth}
\begin{prooftree}
\AxiomC{$s_1 : \bool$}
\AxiomC{$s_2 : \atype$}
\AxiomC{$s_3 : \atype$}
\TrinaryInfC{$\ifte{s_1}{s_2}{s_3} : \atype$}
\end{prooftree}
\end{minipage}
\begin{minipage}{0.45\linewidth}
\begin{prooftree}
\AxiomC{$s_1 : \atype$}
\AxiomC{$\dots$}
\AxiomC{$s_n : \atype$}
\TrinaryInfC{$\apps{\choice}{s_1}{s_n} : \atype$}
\end{prooftree}
\end{minipage}

\end{minipage}
}
\end{center}
\vspace{-8pt}
\caption{Typing (for fixed $\F$ and $\Gamma$, see Definition \ref{def:well-typed})}
\label{fig:typing}
\vspace{-12pt}
\end{figure}


Note that this definition does not allow for polymorphism:
there is a single type assignment $\F$ for the full program.  The
assignment $\F$ also forces a unique choice for the type environment
$\Gamma$ of variables in each clause.  Thus, we may speak of
\emph{the} type of an expression in a clause without risk of
confusion.

\begin{example}\label{ex:successortype}
The program of Example~\ref{ex:successor} is typed using $\F = \{
\sfalse : \bool, \strue : \bool, \nil : \bits, \cons : \bool \arrtype \bits
\arrtype \bits, \symb{succ} : \bits \arrtype \bits \}$.
As all argument and output types have order $0$, the variable
restrictions are satisfied and all clauses can be typed using $\Gamma
= \{ xs : \bits \}$, the program is well-typed.
\end{example}

\begin{definition}
A program $\prog$ is \emph{well-formed} if it is well-typed, and
moreover:
\begin{itemize}
\item data constructors are always fully applied: for all
$\identifier{c} \in \Constructors$ with
$\identifier{c} : \asortorpair_1 \arrtype \dots
\arrtype \asortorpair_m \arrtype \asort \in \F$: if a sub-expression
$\apps{\identifier{c}}{t_1}{t_n}$ occurs in any clause, then $n = m$;
\item the number of arguments to a given defined symbol is fixed:
if $\apps{\identifier{f}}{\ell_1}{\ell_k} = s$ and
$\apps{\identifier{f}}{\ell_1'}{\ell_n'} = t$ are both in $\prog$,
then $k = n$; we let $\arity(\identifier{f})$ denote $k$.
\end{itemize}
\end{definition}

\begin{example}\label{ex:successorform}
The program of Example~\ref{ex:successor} is well-formed, and
$\arity(\symb{succ}) = 1$.

However, the program would not be well-formed if the clauses below
were added,
as here the defined symbol $\symb{or}$ does not have a consistent
arity.
\[
\begin{array}{rclcrclcrcl}
\symb{id}\ x & = & x & \quad\quad &
\symb{or}\ \strue\ x & = & \strue & \quad\quad &
\symb{or}\ \sfalse & = & \symb{id} \\
\end{array}
\]
\end{example}

\begin{remark}\label{remark:constructor}
Data constructors must (a) have a sort as output type (\emph{not} a
pair), and (b) occur only fully applied.
This is consistent with typical functional programming languages,
where sorts and constructors are declared with a grammar such as: \\
\begin{tabular}{rcl}
$\mathit{sdec} \in \mathtt{SortDec}$ & ::= &
$\symb{data}\ \asort = \mathit{cdec}_1 \mid \dots \mid
\mathit{cdec}_n$ \\
$\mathit{cdec} \in \mathtt{ConstructorDec}$ & ::= &
$\identifier{c}\ \atype_1\ \cdots\ \atype_m$
\end{tabular} \\
In addition, we require that the arguments to data constructors have
type order $0$.  This is not standard in functional programming, but
is the case in~\cite{jon:01}.  We limit interest to such constructors
because, practically, these are the only ones which can be used in a
\emph{cons-free} program (as we will discuss in
Section~\ref{sec:consfree}).
\end{remark}

\begin{definition}
A program has \emph{data order $K$} if all clauses can be typed using
type environments $\Gamma$ such that, for all $x : \atype \in \Gamma$:
$\typeorder{\atype} \leq K$.
\end{definition}



\begin{example}\label{ex:hocount}
We consider a higher-order program, operating on the same data
constructors as Example~\ref{ex:successor}; however, now we encode
numbers using \emph{functions}:
\[
\begin{array}{l}
\symb{fsucc}\ F\ \nil = \ifte{\:F\ \nil\:}{\:\symb{set}\ F\ []\ \sfalse\:}{
  \:\symb{set}\ F\ \nil\ \strue\:} \\
\symb{fsucc}\ F\ xs = \ifte{\:F\ xs\:}{
  \:\symb{fsucc}\ (\symb{set}\ F\ xs\ \sfalse)\ (\symb{tl}\ xs)\\
  \phantom{\symb{fsucc}\ F\ xs = }
  }{\:\symb{set}\ F\ xs\ \strue\:} \\
\symb{set}\ F\ \mathit{val}\ xs\ ys =
  \ifte{\:\symb{eqlen}\ xs\ ys\:}{\:val\:}{\:F\ ys} \\
\symb{tl}\ (x\cons xs) = xs
\phantom{ABEfg}
\symb{eqlen}\ (x\cons xs)\ (y\cons ys) = \symb{eqlen}\ xs\ ys \\
\symb{eqlen}\ \nil\ \nil = \strue
\phantom{ABE}
\symb{eqlen}\ xs\ ys = \sfalse \\
\end{array}
\]

Only one typing is possible, with $\symb{fsucc} : (\bits
\arrtype \bool) \arrtype \bits \arrtype \bits \arrtype \bool$;
therefore, $F$ is always typed $ \bits \arrtype \bool$---which has
type order $1$---and all other variables with a type of order $0$.
Thus, this program has data order $1$.

To explain the program: we use boolean lists as \emph{unary} numbers
of a limited size; assuming that (a) $F$ represents a
bitstring of length $N+1$, and (b) $\mathit{lst}$ has length $N$, the
successor of $F$ (modulo wrapping) is obtained by $\symb{fsucc}\ F\ 
\mathit{lst}$.
\end{example}

\subsection{Semantics}\label{sec:semantics}

Like Jones, our language has a closure-based call-by-value semantics.
We let data expressions, values and environments be defined by
the grammar in Figure~\ref{fig:values}.

\begin{figure}[!htpb]
\vspace{-12pt}
\begin{center}
\fbox{
\begin{minipage}{0.5\textwidth}
\begin{tabular}{rcl}
$d,b \in \texttt{Data}$ & ::= & $\apps{c}{d_1}{d_m} \mid (d,b)$ \\
$v,w \in \texttt{Value}$ & ::= & $d \mid (v,w) \mid
\apps{f}{v_1}{v_n}$ \\
& & ($n < \arity(f)$) \\
$\gamma,\delta \in \texttt{Env}$ & ::= & $\V \rightarrow
\texttt{Value}$ \\
\end{tabular}
\end{minipage}
\begin{minipage}{0.4\textwidth}
\begin{tabular}{rcl}
\multicolumn{2}{c}{\textbf{Instantiation:}} \\
$x\gamma$ & := & $\gamma(x)$ \\
$(\apps{\identifier{c}}{\ell_1}{\ell_n})\gamma$ & := &
$\apps{\identifier{c}}{(\ell_1\gamma)}{(\ell_n\gamma)}$ \\
\\
\end{tabular}
\end{minipage}
}
\end{center}
\vspace{-12pt}
\caption{Data expressions, values and environments}
\vspace{-12pt}
\label{fig:values}
\end{figure}

\smallskip\noindent
Let $\domain(\gamma)$ denote the domain of an environment (partial
function) $\gamma$.  Note that values are ground expressions, and we
only use well-typed values with fully\linebreak
applied data constructors.
To every pattern $\ell$ and environment $\gamma$ with $\domain(\gamma)
\supseteq \Var(\ell)$, we associate a value $\ell\gamma$ by
instantiation in the obvious way, see Figure~\ref{fig:values}.

Note that, for every value $v$ and pattern $\ell$, there is at most one
environment $\gamma$ with $\ell\gamma = v$.
We say that an expression $\apps{\identifier{f}}{s_1}{s_n}$
\emph{instantiates} the left-hand side of a clause
$\apps{\identifier{f}}{\ell_1}{\ell_k}$ if $n = k$ and there is an
environment $\gamma$ with each $s_i = \ell_i\gamma$.

Both input and output to the program are data expressions.  If
$\identifier{f}_1$ has type $\asortorpair_1 \arrtype \dots \arrtype
\asortorpair_M \arrtype \asortorpair$, we can think of the program as
calculating a function $\progevaluation$ from $M$ input data arguments
to an output data expression.

Expression and program evaluation are given by the rules in
Figure~\ref{fig:evaluation}.
Since, in [Call], there is at most one suitable
$\gamma$, the only source of non-determinism is the
$\choice$ operator.  Programs without this operator are called
\emph{deterministic}.
By contrast, we may refer to a 
\emph{non-deterministic} program as one which is not explicitly
required to be deterministic, so which may or may not contain
$\choice$.

\begin{figure}[!ht]
\fbox{
\begin{minipage}{0.98\linewidth}
\begin{center}
\textbf{Expression evaluation:}
\vspace{-8pt}
\end{center}
\begin{minipage}{0.5\linewidth}
\begin{prooftree}
\AxiomC{}
\LeftLabel{[Instance]:\quad}
\UnaryInfC{$\prog,\gamma \vdash x \arrr \gamma(x)$}
\end{prooftree}
\end{minipage}
\begin{minipage}{0.5\linewidth}
\begin{prooftree}
\AxiomC{$\prog \vdashcall \identifier{f} \arrr w$}
\RightLabel{for $\identifier{f} \in \Defineds$}
\LeftLabel{[Function]:\quad}
\UnaryInfC{$\prog,\gamma \vdash \identifier{f} \arrr w$}
\end{prooftree}
\end{minipage}

\begin{prooftree}
\AxiomC{$\prog,\gamma \vdash s_1 \arrr b_1$}
\AxiomC{$\cdots$}
\AxiomC{$\prog,\gamma \vdash s_m \arrr b_m$}
\LeftLabel{[Constructor]:\quad}
\TrinaryInfC{$\prog,\gamma \vdash \apps{\identifier{c}}{s_1}{s_m}
\arrr \apps{\identifier{c}}{b_1}{b_m}$}
\end{prooftree}

\begin{prooftree}
\AxiomC{$\prog,\gamma \vdash s \arrr v$}
\AxiomC{$\prog,\gamma \vdash t \arrr w$}
\LeftLabel{[Pair]:\quad}
\BinaryInfC{$\prog,\gamma \vdash (s,t) \arrr (v,w)$}
\end{prooftree}

\begin{prooftree}
\AxiomC{$\prog,\gamma \vdash s_i \arrr w$}
\LeftLabel{[Choice]:\quad}
\RightLabel{for $1 \leq i \leq n$}
\UnaryInfC{$\prog,\gamma \vdash \apps{\choice}{s_1}{s_n} \arrr w$}
\end{prooftree}

\begin{prooftree}
\AxiomC{$\prog,\gamma \vdash s_1 \arrr d$}
\AxiomC{$\prog,\gamma \vdashif d,s_2,s_3 \arrr w$}
\LeftLabel{[Conditional]:}
\BinaryInfC{$\prog,\gamma \vdash \ifte{s_1}{s_2}{s_3} \arrr w$}
\end{prooftree}

\begin{minipage}{0.5\linewidth}
\begin{prooftree}
\AxiomC{$\prog,\gamma \vdash s_2 \arrr w$}
\LeftLabel{[If-True]:}
\UnaryInfC{$\prog,\gamma \vdashif \strue,s_2,s_3 \arrr w$}
\end{prooftree}
\end{minipage}
\begin{minipage}{0.5\linewidth}
\begin{prooftree}
\AxiomC{$\prog,\gamma \vdash s_3 \arrr w$}
\LeftLabel{[If-False]:}
\UnaryInfC{$\prog,\gamma \vdashif \sfalse,s_2,s_3\ \arrr w$}
\end{prooftree}
\end{minipage}

\begin{prooftree}
\AxiomC{$\prog,\gamma \vdash s \arrr \apps{\identifier{f}}{v_1}{v_n}$}
\AxiomC{$\prog,\gamma \vdash t \arrr v_{n+1}$}
\AxiomC{$\prog \vdashcall \apps{\identifier{f}}{v_1}{v_{n+1}} \arrr w$}
\LeftLabel{[Appl]:\quad}
\TrinaryInfC{$\prog,\gamma \vdash s\ t \arrr w$}
\end{prooftree}

\begin{prooftree}
\AxiomC{}
\RightLabel{if $n < \arity(\identifier{f})$}
\LeftLabel{[Closure]:\quad}
\UnaryInfC{$\prog \vdashcall \apps{\identifier{f}}{v_1}{v_n} \arrr
\apps{\identifier{f}}{v_1}{v_n}$}
\end{prooftree}
\begin{prooftree}
\AxiomC{$\prog,\gamma \vdash s \arrr w$}
\RightLabel{\begin{tabular}{l}
if $\apps{\identifier{f}}{\ell_1}{\ell_k} = s$ is the
first clause in $\prog$ such \\
that $\apps{\identifier{f}}{v_1}{v_k}$ instantiates
$\apps{\identifier{f}}{\ell_1}{\ell_k}$, and \\
$\domain(\gamma) = \Var(\apps{\identifier{f}}{\ell_1}{\ell_k})$
and each $v_i = \ell_i\gamma$
\end{tabular}}
\LeftLabel{[Call]:\quad}
\UnaryInfC{$\prog\vdashcall \apps{\identifier{f}}{v_1}{v_k} \arrr w$}
\end{prooftree}

\begin{center}
\vspace{2pt}
\textbf{Program execution:}
\end{center}
\vspace{-14pt}

\begin{prooftree}
\AxiomC{$\prog,[x_1:=d_1,\dots,x_M:=d_M] \vdash
\apps{\identifier{f}_1}{x_1}{x_M} \arrr b$}
\UnaryInfC{$\progresult$}
\end{prooftree}
\end{minipage}
}
\caption{Call-by-value semantics}
\vspace{-8pt}
\label{fig:evaluation}
\end{figure}

\begin{example}
For the program from Example~\ref{ex:successor},
$\progeval{\prog}{\strue\cons\sfalse\cons\strue\cons\nil} \mapsto
\sfalse\cons\strue\cons\strue\cons\nil$, giving $5+1 = 6$.
In the program $\identifier{f}_1\ x\ y = \choice\ 
x\ y$, we can both derive $\progeval{\prog}{\strue,\sfalse} \mapsto
\strue$ and $\progeval{\prog}{\strue,\sfalse} \mapsto \sfalse$.
\end{example}

The language is easily seen to be Turing-complete unless further
restrictions are imposed.  In order to assuage any fears
on
whether the complexity-theoretic characterisations
we obtain are due to brittle design choices, we
add some remarks.

\begin{remark}
We have omitted some constructs common to even some toy pure
functional languages, but these are in general simple syntactic sugar
that can be readily expressed by the existing constructs in the
language, even in the presence of non-determinism. For instance, a
let-binding $\texttt{let} \, x = s_1 \, \texttt{in} \, s_2$ can be
straightforwardly encoded by a function call in a pure call-by-value
setting (replacing $\texttt{let} \, x = s_1 \, \texttt{in} \, s_2$ by
$\symb{helper}\ s_1$ and adding a clause $\symb{helper}\ x = s_2$).
%
\end{remark}

\begin{remark}
We do not require the clauses of a function definition
to exhaust all possible patterns. For instance, it is possible to have
a clause $\symb{f} \, \strue = \cdots$ without a 
clause
for $\symb{f} \, \sfalse$.
Thus,
a program 
has zero or more values.
\end{remark}

\paragraph{Data order versus program order.}
We have followed Jones in considering \emph{data order} as the
variable for increasing complexity.  However, an alternative choice
---which turns out to streamline our proofs---is
\emph{program order}, which considers the type order of the function
symbols.  Fortunately, these notions are closely related; barring
unused symbols, $\langle$program order$\rangle$ = $\langle$data
order$\rangle$ + 1.

More specifically, we have the
following result:
\edef\propernesslem{\number\value{lemma}}
\begin{lemma}\label{lem:proper}
For every well-formed program $\prog$ with data order $K$, there is a
well-formed program $\eprog$ such that
$\progresult$ iff $\progeval{\eprog}{d_1,\dots,d_M} \mapsto b$
for any $b_1,\dots,b_M,d$ and:
(a) all defined symbols in $\eprog$ have a type $\atype_1 \arrtype
  \dots \arrtype \atype_m \arrtype \asortorpair$ such that both
  $\typeorder{\atype_i} \leq K$ for all $i$ and
  $\typeorder{\asortorpair} \leq K$, and
(b) in all clauses, all sub-expressions of the right-hand side have a
  type of order $\leq K$ as well.
\end{lemma}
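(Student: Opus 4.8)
The plan is to build $\eprog$ from $\prog$ by a semantics-preserving, type-directed transformation that combines \emph{full $\eta$-saturation} of every defined symbol with a \emph{lowering} of every argument position and sub-expression whose type order exceeds $K$. The guiding observation is that, because the data order of $\prog$ is $K$, \emph{no value of type order $> K$ is ever bound to a variable or matched against a pattern}: every variable in every $\Gamma$, and hence every pattern position, has type order $\leq K$. Consequently such values can never be inspected and never influence control flow ([Conditional]) or the final data result, so we are free to replace them.

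Concretely, I would first prove the key invariant on the call-by-value derivations of Figure~\ref{fig:evaluation}: in any derivation $\prog,\gamma \vdash s \arrr v$ reachable from an application of $\identifier{f}_1$ to data, every produced value $v$ has $\typeorder{v} \leq K+1$, and every \emph{missing} argument of a partial application $\apps{\identifier{f}}{v_1}{v_n}$ (with $n < \arity(\identifier{f})$) has type order $\leq K$. The point is that a value of function type is always such a partial application whose pending argument types are the pattern types at positions $n+1,\dots,\arity(\identifier{f})$, all of order $\leq K$; and when the symbol is fully applied and a clause fires, the resulting value again falls under the invariant. This is proved by induction on the derivation, the delicate cases being [Call], [Choice] and mutual recursion, where one inducts on the derivation itself rather than on term structure.

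From the invariant it follows that any defined symbol that can ever produce a value has a right-hand-side type $\btype$ of order $\leq K+1$, so in its full decomposition $\atype_1 \arrtype \dots \arrtype \atype_m \arrtype \asortorpair$ every pending $\atype_i$ (and $\asortorpair$) has order $\leq K$. For these symbols I would $\eta$-expand each clause $\apps{\identifier{f}}{\ell_1}{\ell_k} = s$ to $\identifier{f}\ \ell_1 \cdots \ell_k\ x_{k+1} \cdots x_m = s\ x_{k+1} \cdots x_m$ with fresh $x_i$; the new variables have order $\leq K$, so the data order is unchanged, and since closures are values this preserves all derivations. After saturation the offending higher-order heads (such as the partial applications of $\symb{set}$ in Example~\ref{ex:hocount}) are exactly the heads of applications, which by definition are \emph{not} sub-expressions, so every genuine sub-expression of a right-hand side now has order $\leq K$, giving (b); and all argument types in every symbol's decomposition have order $\leq K$, giving (a).

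It remains to neutralise the \emph{spurious} higher-order material coming from symbols that can never produce a value (e.g.\ a clause $\identifier{f} = \identifier{f}$ with $\identifier{f} : (\bool \arrtype \bool) \arrtype \bool$, which is well-formed at data order $0$ yet has program order $2$). Since, by the invariant, the corresponding arguments of order $> K$ are never matched or bound, I would lower every such argument position to $\bool$ and replace the inert argument sub-expressions by $\strue$, adjusting the affected symbols' types accordingly; this is a purely type-level erasure on material that does not reach the output, so well-formedness and the input/output relation are preserved. The correctness of the whole construction (the \emph{iff}) is then a bisimulation between derivations of $\prog$ and $\eprog$, matching erased or renamed parts using the invariant. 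The main obstacle is precisely this invariant together with the erasure step: one must show rigorously that order-$>K$ values are operationally inert under \emph{non-deterministic} call-by-value evaluation and recursion, which requires careful induction on derivations and essential use of the fact that the head of an application is excluded from the sub-expression relation $\suptermeq$.
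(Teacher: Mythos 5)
Your overall strategy---$\eta$-saturate defined symbols and neutralise operationally inert higher-order material, justified by the observation that values of type order $> K$ can never be matched or bound (your ``invariant'' is essentially the paper's Lemma~\ref{lem:ordexpress})---matches the paper's architecture in two of its four steps. But there is a concrete hole: you never deal with $\symb{if}$/$\choice$ expressions occurring at the \emph{head} of an application, which is the paper's second transformation (Lemma~\ref{lem:progifte}). Consider $K = 0$ and the clause $\symb{f}\ x = (\ifte{x}{\symb{g}}{\symb{h}})\ x$ with $\symb{g},\symb{h} : \bool \arrtype \bool$. This program has data order $0$, and $\symb{g}$ and $\symb{h}$ are sub-expressions of the right-hand side of type order $1 > K$ (only the head of an application is exempt from $\suptermeq$; the branches of an $\symb{if}$ at head position are not). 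These sub-expressions are \emph{live}---$\symb{f}\ \strue$ evaluates $\symb{g}\ \strue$---so your lowering/erasure step may not touch them, and full $\eta$-saturation leaves them in place (it only appends variables to clauses). So your construction fails to establish property~(b); the paper repairs exactly this by rewriting to $\ifte{x}{(\symb{g}\ x)}{(\symb{h}\ x)}$ before anything else.

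Two further points. First, your inference ``clause type of order $\leq K+1$, hence every pending $\atype_i$ \emph{and} $\asortorpair$ has order $\leq K$'' is false when $\asortorpair$ is a pair type with functional components: at $K = 0$ the clause $\symb{f}\ x = (\symb{g},x)$ with $\symb{f} : \bool \arrtype (\bool \arrtype \bool) \times \bool$ is well-formed with data order $0$, and $\symb{f}$ \emph{does} produce values (pairs containing closures), yet $\typeorder{\asortorpair} = 1 > K$. Such symbols fall into neither of your two regimes (saturatable vs.\ valueless), so conditions (a) and (b) fail for them; the paper instead classifies them as ``bad'' symbols and \emph{deletes} them together with their clauses, after showing their call sites are unreachable. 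Second, replacing inert argument expressions by the terminating value $\strue$ is risky where the paper's choice of a fresh symbol $\bot_\btype$ with the single clause $\bot_\btype = \bot_\btype$ is safe by construction: in the original program the surrounding application provably has no value, and replacing a valueless sub-expression by a valueless one preserves the input/output relation exactly. Your replacement substitutes a valueful expression, so soundness now hinges on showing that the surrounding application heads remain valueless \emph{after} your global retyping and saturation---transformations that change arities and can turn previously valueless expressions (e.g., a bare nullary bad symbol, now saturated) into closures. You assert this (``purely type-level erasure \dots\ preserved'') but it is precisely the delicate claim, and nothing in the proposal proves it. Together with the missing head-normalisation step, the construction as described does not go through.
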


\begin{proof}[Sketch]
$\eprog$ is obtained from $\prog$ through the following
successive changes:
\begin{enumerate}
\item\label{lem:proper:arity}
  Replace any clause $\apps{\identifier{f}}{\ell_1}{\ell_k} = s$ where
  $s : \atype \arrtype \btype$ with $\typeorder{\atype \arrtype
  \btype} = K+1$, by $\apps{\identifier{f}}{\ell_1}{\ell_k}\ x = s\ x$
  for a fresh $x$.
  Repeat until no such clauses remain.
\item\label{lem:proper:ifchoice}
  In any clause $\apps{\identifier{f}}{\ell_1}{\ell_k} = s$, replace
  all sub-expressions
  $\apps{(\apps{\choice}{s_1}{s_m})}{t_1\linebreak
  }{t_n}$ or
  $\apps{(\ifte{s_1}{s_2}{s_3})}{t_1}{t_n}$ of $s$ with
  $n > 0$ by
  $\apps{\choice}{(\apps{s_1}{t_1}{t_n})\linebreak
  }{(\apps{s_m}{t_1}{t_n})}$ or
  $\ifte{s_1}{(\apps{s_2}{t_1}{t_n})}{(\apps{s_3}{t_1}{t_n})}$
  respectively.
\item\label{lem:proper:occurrence}
  In any clause $\apps{\identifier{f}}{\ell_1}{\ell_k} = s$, if $s$
  has a sub-expression $t = \apps{\identifier{g}}{s_1}{s_n}$ with
  $\identifier{g} : \atype_1 \arrtype \dots \arrtype \atype_n \arrtype
  \btype$ such that $\typeorder{\btype} \leq K$ but
  $\typeorder{\atype_i} > K$ for some $i$, then replace
  $t$ by a fresh symbol $\bot_\btype$.  Repeat
  until no such sub-expressions remain, then add clauses
  $\bot_\btype = \bot_\btype$ for the new symbols.
\item\label{lem:proper:removal}
  If there exists $\identifier{f} : \atype_1 \arrtype \dots \arrtype
  \atype_m \arrtype \asortorpair \in \F$ with
  $\typeorder{\asortorpair} > K$ or $\typeorder{\atype_i} > K$ for
  some $i$, then remove the symbol $\identifier{f}$ and all clauses
  with root $\identifier{f}$.
\end{enumerate}
The key observation is that if the derivation for $\progresult$ uses
some $\apps{\identifier{f}}{s_1}{s_n} : \atype$ with $\typeorder{\atype}
\leq K$ but $s_i : \btype$ with $\typeorder{\btype} > K$, then there is
a variable with type order $> K$.  Thus,
if a clause introduces such an expression, either the clause is never
used, or the expression occurs beneath an $\symb{if}$ or $\choice$ and
is never selected; it may be replaced with a symbol
whose only rule is unusable.  
This also justifies step~\ref{lem:proper:arity}; for
step~\ref{lem:proper:removal}, only unusable clauses are removed.

(See Appendix~\ref{app:properness} for the complete proof.)
\qed
\end{proof}

\begin{example}\label{ex:proper}
The following program has data order $0$, but clauses of functional type;
$\symb{fst}$ and $\symb{snd}$ have output type $\nat \arrtype
\nat$ of order $1$.  The program is changed by
replacing the last two clauses
by
$\symb{fst}\ x\ y = \symb{const}\ x\ y$ and $\symb{snd}\ x\ y =
\symb{id}\ y$.
\[
\begin{array}{lcl}
\multicolumn{3}{l}{
  \symb{start}\ xs\ ys = \choice\ (\symb{fst}\ xs\ ys)\ 
    (\symb{snd}\ xs\ ys)} \\
\symb{const}\ x\ y = x & \quad\quad &
\symb{fst}\ x = \symb{const}\ x \\
\symb{id}\ x = x & \quad\quad &
\symb{snd}\ x = \symb{id} \\
\end{array}
\]
\end{example}

\section{Cons-free programs}\label{sec:consfree}

Jones defines a cons-free program as one where the list constructor
$\cons$ does not occur in 
any clause.
In our setting (where
more constructors are in principle admitted), this translates
to disallowing 
non-constant data constructors
from being introduced
in the right-hand side of a clause.  We define:

\begin{definition}\label{def:consfree}
A program $\prog$ is cons-free if all clauses in $\prog$ are
cons-free.  A clause $\apps{\identifier{f}}{\ell_1}{\ell_k} = s$ is
cons-free if
for all $s \suptermeq t$: if $t = \apps{\identifier{c}}{s_1}{s_m}$
with $\identifier{c} \in \Constructors$, then $t$ is a data expression
or $\ell_i \suptermeq t$ for some $i$.
\end{definition}

\begin{example}
Example~\ref{ex:successor} is not cons-free, due to the second and
third clause (the first clause \emph{is} cons-free).
Examples~\ref{ex:hocount} and~\ref{ex:proper} are
both cons-free.
\end{example}


The key property of cons-free programming is that no \emph{new} data
structures can be created during program execution.  Formally,
in a derivation tree
with
root
$\progresult$,
all data values (including $b$) are in the set $\B_{d_1,\dots,
d_M}^\prog$:

\begin{definition}
Let $\B^\prog_{d_1,\dots,d_M} := \{ d \in \Data \mid \exists
i 
[d_i \suptermeq d] \vee \exists (\identifier{f}\ \vec{\ell} = s) \in
\prog [s \suptermeq d] \}$.
\end{definition}

$\B_{d_1,\dots,d_M}^\prog$ is a set of data expressions closed under
$\supterm$, with a linear number of elements in the size of $d_1,
\dots,d_M$ (for fixed $\prog$).  The property that no new data is
created during execution is formally expressed by the following lemma.

\edef\safetypreservelem{\number\value{lemma}}
\begin{lemma}\label{lem:safetysimple}
Let $\prog$ be a cons-free program, and suppose that $\progresult$ is
obtained by a derivation tree $T$.  Then for all statements $\prog,
\gamma \vdash s \arrr w$ or $\prog,\gamma \vdashif b',s_1,s_2 \arrr w$
or $\prog \vdashcall \apps{\identifier{f}}{v_1}{v_n} \arrr w$ in T,
and all expressions $t$ such that (a) $w \suptermeq t$, (b) $b'
\suptermeq t$, (c) $\gamma(x) \suptermeq t$ for some $x$ or (d) $v_i
\suptermeq t$ for some $i$:
if $t$ has the form $\apps{\identifier{c}}{b_1}{b_m}$ with
$\identifier{c} \in \Constructors$, then $t \in
\B_{d_1,\dots,d_M}^\prog$.
\end{lemma}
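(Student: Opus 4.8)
The plan is to prove a single inductive invariant on the derivation tree $T$ and then read off the four clauses (a)--(d). Call a value $v$ \emph{safe} if every subexpression $v \suptermeq t$ of the form $\apps{\identifier{c}}{b_1}{b_m}$ with $\identifier{c} \in \Constructors$ already lies in $\B_{d_1,\dots,d_M}^\prog$, and call an environment context \emph{safe} accordingly. Since $\B_{d_1,\dots,d_M}^\prog$ is closed under $\supterm$, the conjunction of (a)--(d) says exactly that in every statement of $T$ the output $w$, the scrutinee $b'$, every argument $v_i$, and every $\gamma(x)$ is safe. I would establish this by induction on the structure of $T$, proving three implications: (i) for $\prog,\gamma \vdash s \arrr w$ where $s$ is a subexpression of the right-hand side $s_0$ of some clause $\apps{\identifier{f}}{\ell_1}{\ell_k} = s_0$ and every $\ell_i\gamma$ is safe, the output $w$ is safe; (ii) for $\prog,\gamma \vdashif b',s_2,s_3 \arrr w$ with $b'$ safe, $s_2,s_3$ subexpressions of such an $s_0$, and every $\ell_i\gamma$ safe, the output $w$ is safe; and (iii) for $\prog \vdashcall \apps{\identifier{f}}{v_1}{v_n} \arrr w$ with every $v_i$ safe, the output $w$ is safe. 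Note $\ell_i\gamma$ safe forces $\gamma(x)$ safe for $x \in \Var(\ell_i)$, since these are subexpressions; and the top-level root is handled by viewing $\apps{\identifier{f}_1}{x_1}{x_M}$ as the body of a virtual clause with patterns $x_i$ and $\ell_i\gamma = d_i$, each safe because $d_i \in \B_{d_1,\dots,d_M}^\prog$.

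The bulk of the case analysis is routine propagation: for each rule the premises inherit safe inputs from the conclusion. In \textbf{[Call]} the clause context is \emph{refreshed}---the matched arguments $v_i = \ell_i\gamma$ are safe by hypothesis, so the body $s_0$ is evaluated with all $\ell_i\gamma$ safe, meeting the precondition of (i). In \textbf{[Appl]} the two subevaluations of $s$ and $t$ feed safe values $v_1,\dots,v_{n+1}$ into the subsidiary \textbf{[Closure]} or \textbf{[Call]}, and \textbf{[Closure]} returns a partial application $\apps{\identifier{f}}{v_1}{v_n}$ whose only constructor-headed subexpressions are those of the safe $v_i$. The \textbf{[Pair]} case uses that a pair $(v,w)$ is never constructor-headed, so its safety reduces to that of $v$ and $w$. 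The cases \textbf{[Choice]}, \textbf{[Conditional]}, \textbf{[If-True]}, \textbf{[If-False]}, \textbf{[Instance]}, and \textbf{[Function]} are immediate from the inductive hypothesis.

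The crux is \textbf{[Constructor]}, where a fresh constructor-headed value is produced and must be shown to be \emph{old}, i.e. already in $\B_{d_1,\dots,d_M}^\prog$. Here $s = \apps{\identifier{c}}{s_1}{s_m}$ is a subexpression of the clause right-hand side $s_0$, so cons-freeness (Definition~\ref{def:consfree}) applies: either $s$ is a data expression, or $\ell_i \suptermeq s$ for some $i$. I would first record two easy auxiliary facts, each by a short structural sub-induction: a data expression $d$ evaluates only to itself, and a constructor-or-variable subterm $t$ of a pattern evaluates under $\gamma$ to its instantiation $t\gamma$. In the first case $w = s$ is a ground data expression occurring in a clause, hence $w \in \B_{d_1,\dots,d_M}^\prog$ directly by the second disjunct of the definition of $\B_{d_1,\dots,d_M}^\prog$. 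In the second case $w = s\gamma$ is a subexpression of $\ell_i\gamma$, which is safe by the hypothesis of (i); since $w$ is constructor-headed, $w \in \B_{d_1,\dots,d_M}^\prog$, and all its subexpressions follow by closure under $\supterm$.

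I expect this \textbf{[Constructor]} case to be the only real obstacle, and the delicate point is the bookkeeping: the argument is unsound if one only assumes ``$\gamma$ is safe'' in the naive sense that each $\gamma(x)$ is safe, because $\B_{d_1,\dots,d_M}^\prog$ is \emph{not} closed under placing a constructor on top of safe pieces. This is precisely why the invariant must carry the stronger data that $s$ is a subexpression of a clause right-hand side and that the enclosing clause's matched arguments $\ell_i\gamma$ are safe, information that \textbf{[Call]} supplies to the whole body subderivation. Once this is arranged, cons-freeness is invoked exactly once, in \textbf{[Constructor]}, and the remaining work is the mechanical verification that safe inputs flow to safe premises through every rule.
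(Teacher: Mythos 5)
Your proposal is correct and follows essentially the same route as the paper: the paper's proof (Lemma~\ref{lem:safety}, via Lemma~\ref{lem:patterninstantiate}, which is exactly your auxiliary fact that pattern subterms evaluate to their instantiations) is the same induction over the derivation tree with a strengthened invariant, where its hypothesis that $t\gamma \in \B_{d_1,\dots,d_M}^\prog$ for every constructor-headed $t \subtermeq s$ carries precisely the information your ``matched arguments $\ell_i\gamma$ are safe'' condition supplies. The only cosmetic difference is where cons-freeness is unfolded---the paper discharges it once in the [Call] case when passing the invariant into the clause body, whereas you defer it to the [Constructor] case---and you correctly identify the same crux, namely that naive safety of $\gamma$ alone would not suffice.
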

That is, any data expression in the derivation tree of
$\progresult$ (including occurrences as a sub-expression of
other values) is also in $\B_{d_1,\dots,d_M}^\prog$.

\begin{proof}[Sketch]
Induction on the form of $T$, assuming that for a statement under
consideration, (1) the requirements on $\gamma$ and the $v_i$ are
satisfied,
and (2) $\gamma$ maps expressions $t \subtermeq s,s_1,s_2$ to
elements of $\B_{d_1,\dots,d_M}^\prog$ if $t =
\apps{\identifier{c}}{t_1}{t_m}$ with $\identifier{c} \in
\Constructors$.

(See Appendix~\ref{app:consfree} for the complete proof.)
\qed
\end{proof}

Note that Lemma~\ref{lem:safetysimple} implies that the program result
$b$ is in $\B_{d_1,\dots,d_M}^\prog$.
Recall also Remark~\ref{remark:constructor}: if we had
admitted constructors with higher-order argument types, then
Lemma~\ref{lem:safetysimple} shows that they are never used, since
any constructor appearing in a derivation for $\progresult$
must already occur in the (data!) input.


\section{Turing Machines, decision problems and complexity}

We assume familiarity with the standard notions of Turing Machines and
complexity classes (see, e.g., \cite{Papadimitriou:complexity,Jones:CompComp,Sipser:comp});
in this section, we fix the notation we use.

\subsection{(Deterministic) Turing Machines}

Turing Machines (TMs) are triples $(A,S,T)$ where $A$ is a finite set
of tape symbols such that $A \supseteq \{0,1,\blank\}$,\ $S
\supseteq \{ \symb{start}, \symb{accept}, \symb{reject} \}$
is a finite set of states, and $T$ is a finite set of transitions
$(i,r,w,d,j)$ with $i \in S \setminus \{\symb{accept},\symb{reject}\}$
(the \emph{original state}), $r \in A$ (the \emph{read symbol}), $w
\in A$ (the \emph{written symbol}), $d \in \{ \symb{L},\symb{R} \}$
(the \emph{direction}), and $j \in S$ (the \emph{result state}).
We sometimes denote this transition as $\transition{i}{r}{w}{d}{j}$.

A \emph{deterministic} Turing Machine is a TM such that every pair
$(i,r)$ with $i \in
S \setminus \{\symb{accept},\symb{reject}\}$ and $r \in A$ is
associated with exactly one transition $(i,r,w,d,j)$.
Every TM in this paper has a single, right-infinite tape.

A \emph{valid tape} is 
an element $t$ of $A^\N$ with $t(p) \neq \blank$ for only finitely
many $p$.
A \emph{configuration} 
is a triple $(t,p,s)$ with $t$ a valid tape, $p \in \N$ and $s \in S$.
The transitions $T$ induce a 
relation $\Rightarrow$ between configurations in the obvious way.

\subsection{Decision problems}

A \emph{decision problem} is a set $X \subseteq \{0,1\}^+$.
A deterministic TM \emph{decides} $X$ if for any
$x \in \{0,1\}^+$: $x \in X$ if{f} $\blank x_1\dots x_n
\blank\blank\dots,0,\symb{start}) \Rightarrow^* (t,i,\symb{accept})$
for some $t,i$,
and 
$(\blank x_1\dots x_n\blank\blank\dots,0,\symb{start})
\Rightarrow^* (t,i,\symb{reject})$ if{f} $x \notin X$.
Thus, 
the TM
halts on all
inputs, ending in $\symb{accept}$ or $\symb{reject}$
depending on whether $x \in X$
.

If $\complexityfun : \N \longrightarrow \N$ is a function, a
deterministic
TM
\emph{runs in time} $\lambda n.\complexityfun(n)$ if for all $n \in
\N \setminus \{0\}$ and $x \in \{0,1\}^n$:
any evaluation starting in $(\blank x_1\dots x_n\blank\blank\dots,0,
\symb{start})$ ends in the $\symb{accept}$ or $\symb{reject}$ state
in at most $h(n)$ transitions.

\subsection{Complexity and the $\exptime{}$ hierarchy}

We define classes of decision problem based on the \emph{time} needed
to accept them.

\begin{definition}
Let $\complexityfun : \N \rightarrow \N$ be a function.
Then, $\timecomp{\complexityfun(n)}$ is the set of all $X \subseteq
\{0,1\}^+$ such that there exist $a > 0$ and a deterministic TM
running in time $\lambda n.a \cdot \complexityfun(n)$ that decides
$X$.
\end{definition}
By design, $\timecomp{\complexityfun(n)}$) is closed
under $\OO$: $\timecomp{\complexityfun(n)} =
\timecomp{\OO(\complexityfun(n))}$.

\begin{definition}
For $K,n \geq 0$, let $\exp_2^0(n) = n$ and $\exp_2^{K+1}(n) =
\exp_2^K(2^n) = 2^{\exp_2^K(n)}$.
For $K \geq 0$, define
$
\exptime{K} \triangleq \bigcup_{a,b \in \N}
\timecomp{\textrm{exp}_2^{K}(an^b)}
$.
\end{definition}

Since for every polynomial $\complexityfun$, there are $a,b
\in \N$ such that $\complexityfun(n) \leq a \cdot n^b$ for all $n >
0$, we have $\exptime{0} = \pclass$ and $\exptime{1} = \expclass$
(where $\expclass$ is the usual complexity class of this name, see
e.g., \cite[Ch.\ 20]{Papadimitriou:complexity}).
In the literature, $\expclass$ is sometimes called
$\mathsf{EXPTIME}$ or $\mathsf{DEXPTIME}$ (e.g., in the celebrated
proof that \textsc{ML} typability is complete for $\mathsf{DEXPTIME}$
\cite{DBLP:journals/jacm/KfouryTU94}).
Using the Time Hierarchy Theorem \cite{Sipser:comp}, it is easy to
see that $\pclass = \exptime{0} \subsetneq \exptime{1} \subsetneq
\exptime{2} \subsetneq \cdots$.

\begin{definition}
The set $\elementary$ of elementary-time computable languages is
$\bigcup_{K \in \N} \exptime{K}$.
\end{definition}

\subsection{Decision problems and programs}\label{subsec:decision}

To solve decision problems by (cons-free) programs, we will
consider programs with constructors $\strue,\sfalse$ of type $\bool$,
$\nil$ of type $\bits$ and $\cons$ of type $\bool \arrtype
\bits \arrtype \bits$, and whose main function
$\identifier{f}_1$ has type $\bits \arrtype \bool$.

\begin{definition}
We define:
\begin{itemize}
\item A program $\prog$ \emph{accepts} $\symb{a}_1\symb{a}_2\dots
\symb{a}_n \in \{0,1\}^*$ if $\progeval{\prog}{\overline{\symb{a}_1
}\cons\dots\cons\overline{\symb{a}_n}} \mapsto \strue$, where
$\overline{\symb{a}_i} = \strue$ if $\symb{a}_i = 1$ and
$\overline{\symb{a}_i} = \sfalse$ otherwise.
\item The \emph{set accepted by} program $\prog$ is
$\{ \symb{a} \in \{0,1\}^* \mid \prog$ accepts $\symb{a} \}$.
\end{itemize}
\end{definition}

Although we focus on programs of this form, our proofs will
allow for arbitrary input and output---with the limitation (as
guaranteed by the rule for program execution) that both are data.
This makes it possible to for instance consider decision problems on
a larger input alphabet without needing encodings.

\begin{example}
The two-line program with clauses $\symb{even}\ \nil = \strue$ and
$\symb{even}\ (x\cons xs) = \ifte{\:x\:}{\:\sfalse\:}{\:\strue\:}$
accepts the problem $\{ x \in \{0,1\}^* \mid x$ is a bitstring
representing an even number (following Example~\ref{ex:successor})$\}$.
\end{example}

We will sometimes speak of the input size, defined by:

\begin{definition}
The \emph{size} of a list of data expressions $d_1,\dots,d_M$ is
$\sum_{i = 1}^M \mathit{size}(d_i)$, where
$\mathit{size}(\apps{\identifier{c}}{b_1}{b_m})$ is defined as $1 +
\sum_{i = 1}^m \mathit{size}(b_i)$.
\end{definition}

\section{Deterministic characterisations}\label{sec:deterministic}

As a basis, we transfer Jones' basic result on \emph{time} classes
to our more general language.  That is,
we obtain the first line of the first table in
Figure~\ref{fig:overview}.

\smallskip\noindent
\begin{tabular}{c|c|c|c|c|c}
& \textbf{data order 0} &
\textbf{data order 1} &
\textbf{data order 2} &
\textbf{data order 3} & \dots
\\
\cline{1-6}
\vphantom{$2^{x^y}$}
\textbf{cons-free} &
$\pclass =$ &
$\expclass =$ &
\multirow{2}{*}{$\exptime{2}$} &
\multirow{2}{*}{$\exptime{3}$} & \multirow{2}{*}{\dots}
\\
\textbf{deterministic} &
$\exptime{0}$ &
$\exptime{1}$ &
& \\
\cline{1-6}
\end{tabular}

\smallskip
To show that deterministic cons-free programs of data order $K$
characterise $\exptime{K}$ it is necessary to prove two things:
\begin{enumerate}
\item\label{enum:simulate}
if $\complexityfun(n) \leq \exp_2^K(a \cdot n^b)$ for all $n$,
then for every deterministic Turing Machine $M$ running in
$\timecomp{\complexityfun(n)}$, there is a deterministic, cons-free
program with data order at most $K$, which accepts $x \in
\{0,1\}^+$ if and only if $M$ does;
\item\label{enum:algorithm}
for every deterministic cons-free program $\prog$ with data
order $K$, there is a deterministic algorithm operating in
$\timecomp{\exp_2^K(a \cdot n^b)}$ for some $a,b$ which, given input
expressions $d_1,\dots,d_M$, determines $b$ such that
$\progresult$ (if such $b$ exists).
Like Jones \cite{jon:01}, we assume our algorithms are implemented
on a sufficiently expressive Turing-equivalent machine like the RAM.
\end{enumerate}

We will show part (\ref{enum:simulate}) in
Section~\ref{subsec:simulcore}, and part
(\ref{enum:algorithm}) in Section~\ref{subsec:detalgorithm}.

\subsection{Simulating TMs using deterministic
cons-free programs}
\label{subsec:simulcore}\label{subsec:counting}

Let $M := (A,S,T)$ be a deterministic Turing Machine running in time
$\lambda n.\complexityfun(n)$.
Like Jones, we start by assuming that we have a way to represent the
numbers $0,\dots,\complexityfun(n)$ as expressions, along with
successor and predecessor operators and checks for equality.
Our simulation uses the following data constructors
\begin{itemize}
\item
  $\strue : \bool,\ \sfalse : \bool,\ \nil : \bits$ and $\cons :
  \bool \arrtype \bits \arrtype \bits$ as discussed in
  Section~\ref{subsec:decision};
\item
  $\symb{a} : \msymbol$ for $a \in A$ (writing $\symb{B}$ for the
  blank symbol), $\symb{L},\symb{R} : \mdirec$ and
  $\symb{s} : \mstate$ for $s \in S$;
\item
  $\symb{action} : \msymbol \arrtype \mdirec \arrtype \mstate
  \arrtype \symb{trans}$;
and
\item
  $\symb{end} : \symb{state} \arrtype \symb{trans}$.
\end{itemize}

The rules to simulate the machine are given in
\pagebreak
Figure~\ref{fig:machine}.

\begin{figure}[!ht]
$\symb{run}\ cs = \symb{test}\ (\symb{state}\ cs\ \numrep{\complexityfun(|cs|)})$

$\symb{test}\ \symb{accept} = \strue$
\phantom{ABCDEF}
$\symb{transition}\ \symb{i}\ \symb{r} = \symb{action}\ \symb{w}\ 
\symb{d}\ \symb{j}$
\hfill for all $\transition{i}{r}{w}{d}{j} \in T$ \\
$\symb{test}\ \symb{reject} = \sfalse$
\phantom{abCDEF}
$\symb{transition}\ \symb{i}\ x = \symb{end}\ \symb{i}$
\hfill for $i \in \{ \symb{accept},\symb{reject} \}$

\medskip
$\symb{state}\ cs\ \numrep{n} = \ifte{\numrep{n = 0}}{\:\symb{start}
\:}{\:\symb{get3}\ (\symb{transat}\ cs\ \numrep{n-1})}$ \\
$\symb{transat}\ cs\ \numrep{n} =
\symb{transition}\ (\symb{state}\ cs\ \numrep{n})\ 
(\symb{tapesymb}\ cs\ \numrep{n})$

\medskip
$\symb{get1}\ (\symb{action}\ x\ y\ z) = x$
$\phantom{ABC}\symb{get1}\ (\symb{end}\ x) = \symb{B}$ \\
$\symb{get2}\ (\symb{action}\ x\ y\ z) = y$
$\phantom{ABC}\symb{get2}\ (\symb{end}\ x) = \symb{R}$ \\
$\symb{get3}\ (\symb{action}\ x\ y\ z) = z$
$\phantom{ABC}\symb{get3}\ (\symb{end}\ x) = x$

\medskip
$\symb{tapesymb}\ cs\ \numrep{n} = \symb{tape}\ cs\ \numrep{n}\ 
(\symb{pos}\ cs\ \numrep{n})$

\medskip
$\symb{tape}\ cs\ \numrep{n}\ \numrep{p} = \ifte{\numrep{n = 0}}{\:
\symb{inputtape}\ cs\ \numrep{p}
\\\phantom{\symb{tape}\ cs\ \numrep{n}\ \numrep{p} =\,}
}{\:
\symb{tapehelp}\ cs\ \numrep{n}\ \numrep{p}\ 
(\symb{pos}\ cs\ \numrep{n-1})}$ \\
$\symb{tapehelp}\ cs\ \numrep{n}\ \numrep{p}\ \numrep{i} =
\ifte{\numrep{p = i}}{\:
\symb{get1}\ (\symb{transat}\ cs\ \numrep{n-1})\\
\phantom{\symb{tapehelp}\ cs\ \numrep{n}\ \numrep{p}\ \numrep{i} =\,}
}{\:
\symb{tape}\ cs\ \numrep{n-1}\ \numrep{p}}$

\medskip
$\symb{pos}\ cs\ \numrep{n} = \ifte{\numrep{n=0}}{\numrep{0}\:
}{\:
\symb{adjust}\ cs\ (\symb{pos}\ cs\ \numrep{n-1})\ 
(\symb{get2}\ (\symb{transat}\ cs\ \numrep{n-1}))}\:\:$ \\
$\symb{adjust}\ cs\ \numrep{p}\ \symb{L} = \numrep{p-1}$
\phantom{AB}
$\symb{adjust}\ cs\ \numrep{p}\ \symb{R} = \numrep{p+1}$ \\

$\symb{inputtape}\ cs\ \numrep{p} = \ifte{\numrep{p=0}}{\:\symb{B\:}}{\:
\symb{nth}\ cs\ \numrep{p-1}}$ \\
$\symb{nth}\ \nil\ \numrep{p} = \symb{B}$
\phantom{ABCDEFGHIJKLMNOPQRSTUVWXYZABC}
$\symb{bit}\ \strue = \symb{1}$ \\
$\symb{nth}\ (x\cons xs)\ \numrep{p} = \ifte{\numrep{p = 0}}{\:
\symb{bit}\ x\:}{\:\symb{nth}\ xs\ \numrep{p-1}}$
\phantom{ac}
$\symb{bit}\ \sfalse = \symb{0}$ \\
\vspace{-12pt}
\caption{Simulating a deterministic Turing Machine $(A,S,T)$}
\vspace{-12pt}
\label{fig:machine}
\end{figure}

Types of defined symbols are easily derived.  The intended meaning is
that $\symb{state}\ cs\ \numrep{n}$, for $cs$ the 
input list
and $\numrep{n}$ a number in $\{0,
\dots,\complexityfun(|cs|)\}$, returns the state of the machine at
time $\numrep{n}$; 
$\symb{pos}\ cs\ \numrep{n}$ returns the
position of the reader at time $\numrep{n}$, and $\symb{tape}\ cs
\numrep{n}\ \numrep{p}$ the symbol at time $\numrep{n}$ and
position $\numrep{p}$.

Clearly, the program is highly exponential, even when
$\complexityfun(|cs|)$ is polynomial, since the same expressions are
repeatedly evaluated.  This apparent contradiction is not problematic:
we do not claim that all cons-free programs with data order $0$ (say)
have a derivation tree of at most polynomial size.  Rather, as we will
see in Section~\ref{subsec:detalgorithm}, we can find their
\emph{result} in polynomial time by essentially using a caching
mechanism to avoid reevaluating the same expression.

\medskip
What remains is to simulate numbers and counting.  For a
machine running in $\timecomp{\complexityfun(n)}$,
it suffices to find a value $\numrep{i}$ representing $i$ for all $i
\in \{0,\dots,\complexityfun(n)\}$ and cons-free clauses to
calculate predecessor and successor functions and to perform
zero and equality checks.  This is given by a
\emph{$(\lambda n.\complexityfun(n)+1)$-counting module.}
This defines, for a given input list $cs$ of length $n$, a set of
values $\A_\pi^n$ to represent numbers and functions $\seed,\ \pred$
and $\zero$ such that (a) $\seed\ cs$ evaluates to a value which
represents $\complexityfun(n)$, (b) if $v$ represents a number $k$,
then $\pred\ cs\ v$ evaluates to a value which represents $k-1$, and
(c) $\zero\ cs\ v$ evaluates to $\strue$ or $\sfalse$ depending on
whether $v$ represents $0$.  Formally:

\begin{definition}[Adapted from~\cite{jon:01}]
For $P : \N \rightarrow \N \setminus \{0\}$, a $P$-counting module is
a tuple $C_\pi = (\numtype,\Defineds_\pi,\A_\pi,
\numinterpret{\cdot}_\pi,\prog_\pi)$ such that:
\begin{itemize}
\item $\numtype$ is a type (this will be the type of
\emph{numbers});
\item $\Defineds_\pi$ is a set of defined symbols disjoint from
$\Constructors,\Defineds,\V$, containing symbols
$\seed,\ \pred$ and $\zero$, with types
$\seed : \bits \arrtype \numtype,\ 
\pred : \bits \arrtype \numtype \arrtype \numtype$ and
$\zero : \bits \arrtype \numtype \arrtype \bool$;
\item for $n \in \N$,
$\A_\pi^n$ is a set of values of type $\numtype$, all built over
$\Constructors \cup \Defineds_\pi$ (this is the set of values used
to represent numbers);
\item for $n \in \N$,
$\numinterpret{\cdot}_\pi^n$ is a total function from
$A_{\pi}^n$ to $\N$;
\item $\prog_\pi$ is a list of cons-free clauses on the symbols in
$\Defineds_\pi$, such that, for all lists $cs : \bits \in \Data$
with length $n$:
\begin{itemize}
\item there is a unique value $v$ such that $\prog_\pi \vdashcall
  \seed\ cs \arrr v$;
\item if $\prog_\pi \vdashcall \seed\ cs \arrr v$, then $v \in
  \A_{\pi}^n$ and $\numinterpret{v}_\pi^n = P(n)-1$;
\item if $v \in \A_\pi$ and $\numinterpret{v}_\pi^n = i > 0$,
  then there is a unique value $w$ such that $\prog_\pi \vdashcall
  \pred\ cs\ v \arrr w$;
  we have $w \in \A_\pi^n$ and $\numinterpret{w}_\pi^n = i-1$;
\item for $v \in \A_\pi^n$ with $\numinterpret{v}_\pi^n = i$:
  $\prog_\pi \vdashcall \zero\ cs\ v \arrr \strue$ if and only if
  $i = 0$, and
  $\prog_\pi \vdashcall \zero\ cs\ v \arrr \sfalse$ if and only if
  $i > 0$.
\end{itemize}
\end{itemize}
\end{definition}

It is easy to see how a $P$-counting module can be plugged into the
program of Figure~\ref{fig:machine}.  We only lack successor and
equality functions, which are easily defined:

\smallskip\noindent
$\symb{succ}_\pi\ cs\ i = \symb{sc}_\pi\ cs\ (\seed\ cs)\ i$ \\
$\symb{sc}_\pi\ cs\ j\ i = \ifte{\:\symb{equal}_\pi\ cs\ 
(\symb{pred}_\pi\ cs\ j)\ i\:}{\:j\:}{\:
\symb{sc}\ cs\ (\pred\ cs\ j)\ i}$ \\
$\symb{equal}_\pi\ cs\ i\ j = \ifte{\:\zero\ cs\ i\:}{\:\zero\ cs\ j
\\\phantom{\symb{equal}_\pi\ cs\ i\ j =\,}
}{\ifte{\:\zero\ cs\ j\:}{\:\sfalse
\\\phantom{\symb{equal}_\pi\ cs\ i\ j =\,}
}{\:\symb{equal}_\pi\ cs\ (\pred\ cs\ i)\ 
(\pred\ cs\ j)}
}$

\medskip
Since the clauses in Figure~\ref{fig:machine} are
cons-free and have data order $0$, we obtain:

\begin{lemma}\label{lem:counting}
Let $x$ be a decision problem which can be decided by a
deterministic TM running in $\timecomp{\complexityfun(n)}$.
If there is a cons-free $(\lambda n.\complexityfun(n)+1)$-counting
module $C_\pi$ with data order $K$,
then $x$ is accepted by a cons-free program with data order $K$;
the program is deterministic if the counting module is.
\end{lemma}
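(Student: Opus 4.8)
The plan is to take the program skeleton of Figure~\ref{fig:machine}, which simulates the deterministic TM $M$ but assumes an oracle for representing the numbers $0,\dots,\complexityfun(n)$ together with the operations $\symb{succ}_\pi$, $\symb{pred}_\pi$, zero-check and equality, and to instantiate that oracle with the concrete clauses supplied by the counting module $C_\pi$. Concretely, I would form the program $\prog$ by taking the union of (i) the clauses of Figure~\ref{fig:machine}, (ii) the successor/equality clauses $\symb{succ}_\pi$, $\symb{sc}_\pi$, $\symb{equal}_\pi$ displayed just before the statement, and (iii) the clause list $\prog_\pi$ of the counting module, renaming $\Defineds_\pi$ apart from the existing defined symbols so the disjointness condition of the counting-module definition is preserved. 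The numerals $\numrep{i}$ occurring in Figure~\ref{fig:machine} are then read as values of $\A_\pi^n$, the notations $\numrep{n-1}$, $\numrep{p+1}$, $\numrep{n=0}$, $\numrep{p=i}$ as the calls $\symb{pred}_\pi\ cs\ (\cdot)$, $\symb{succ}_\pi\ cs\ (\cdot)$, $\symb{zero}_\pi\ cs\ (\cdot)$ and $\symb{equal}_\pi\ cs\ (\cdot)\ (\cdot)$ respectively, and the seed value $\numrep{\complexityfun(|cs|)}$ as $\symb{seed}_\pi\ cs$ (which by the counting-module specification represents $P(n)-1 = \complexityfun(n)$, exactly the top time index we need).

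Next I would check the two structural side-conditions. For \emph{cons-freeness}: the clauses in Figure~\ref{fig:machine} and the three successor/equality clauses introduce no non-constant constructors in their right-hand sides (every constructor occurrence is either a constant such as $\symb{B}$, $\symb{start}$, $\symb{accept}$, $\symb{0}$, $\symb{1}$, or a subterm of a left-hand-side variable), and $\prog_\pi$ is cons-free by hypothesis; hence $\prog$ is cons-free. For \emph{data order}: the simulation symbols all take arguments of order $0$ (bitlists, numerals, symbols, states, transitions) and return data of order $0$, so they contribute data order $0$; the only possibly higher-order part is $\prog_\pi$, whose clauses have data order $K$ by assumption, giving $\prog$ data order exactly $\max(0,K)=K$. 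The auxiliary symbols $\symb{succ}_\pi$, $\symb{sc}_\pi$, $\symb{equal}_\pi$ only pass numerals of type $\numtype$ around, so they do not raise the data order above that of $C_\pi$. Determinism is immediate: none of the added clauses uses $\choice$, so if $\prog_\pi$ is deterministic then so is $\prog$.

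The substance of the argument is \emph{correctness}, and this is where the main work lies. I would prove by induction on $n \le \complexityfun(|cs|)$ that, writing $cs$ for the input encoding a string $x$ of length $|x|=n_0$, the calls $\symb{state}\ cs\ \numrep{n}$, $\symb{pos}\ cs\ \numrep{n}$ and $\symb{tape}\ cs\ \numrep{n}\ \numrep{p}$ evaluate (uniquely, given $M$ deterministic) to the state, head position and tape contents of $M$ after $n$ steps on input $x$; the base case $n=0$ reads off the input via $\symb{inputtape}/\symb{nth}$, and the step case uses the counting-module guarantees that $\symb{pred}_\pi$, $\symb{zero}_\pi$ and $\symb{equal}_\pi$ compute the intended arithmetic on representatives, so that $\symb{transat}\ cs\ \numrep{n-1}$ yields the transition actually taken and the $\symb{get}_i$/$\symb{transition}$ clauses decode it correctly, including the fall-through to $\symb{end}$ in the halting states $\symb{accept},\symb{reject}$. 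I would also have to verify that $\symb{succ}_\pi$ genuinely realises the successor relation, which follows from the counting-module postconditions by a secondary (downward) induction inside $\symb{sc}_\pi$ on the distance from the seed to the target, and that $\symb{equal}_\pi$ terminates and returns the correct boolean by simultaneous induction on both representatives. Granting the induction, $\symb{run}\ cs = \symb{test}\ (\symb{state}\ cs\ \numrep{\complexityfun(n_0)})$ returns $\strue$ exactly when $M$ is in $\symb{accept}$ at time $\complexityfun(n_0)$; since $M$ runs in time $\lambda n.\complexityfun(n)$ it has already halted by then and stays in its halting state (I would note the implicit assumption that $M$'s halting states are stationary under $\Rightarrow$, or equivalently read off the state at the exact halting time), so $\prog$ accepts $x$ iff $M$ does. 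The expected obstacle is precisely the bookkeeping of this simulation invariant — aligning the numeral arithmetic provided abstractly by $C_\pi$ with the concrete step/position/tape recurrences, and handling the boundary behaviour in the halting states — rather than anything about cons-freeness or data order, which are syntactic and immediate.
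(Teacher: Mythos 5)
Your proposal follows essentially the same route as the paper: the paper's proof of this lemma is literally ``by the argument given above,'' namely plugging the counting module's $\seed$, $\pred$, $\zero$ clauses (plus the displayed $\symb{succ}_\pi$/$\symb{equal}_\pi$ definitions) into the skeleton of Figure~\ref{fig:machine} and observing that the skeleton is cons-free, deterministic, and of data order $0$, so the combined program inherits cons-freeness, data order $K$, and determinism from $C_\pi$. Your additional material---the induction on the time index establishing the $\symb{state}/\symb{pos}/\symb{tape}$ invariant and the remark about stationary halting states (handled in the figure by the $\symb{end}$ clauses)---correctly fills in the correctness details the paper leaves implicit.
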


\begin{proof}
By the argument given above.
\qed
\end{proof}

The obvious difficulty is the restriction to cons-free clauses: we
cannot simply construct a new number type, but will have to represent
numbers using only sub-expressions of the input list $cs$, and
constant data expressions.

\begin{example}\label{ex:counting}
We consider a $P$-counting module $C_\exx$ where $P(n) = 3 \cdot
(n+1)^2$.
Let $\numtype[\exx] := \bits \times \bits \times \bits$ and
for given $n$, let $\A_\pi^n := \{ (d_0,d_1,d_2) \mid d_0$ is a list
of length $\leq 2$ and $d_1,d_2$ are lists of length
$\leq n \}$.  Writing $|\ x_1\cons \dots \cons x_k
\cons\nil\ | = k$, let $\numinterpret{(d_0,d_1,d_2)}_\exx^n :=
|d_0| \cdot (n+1)^2 + |d_1| \cdot (n+1) + |d_2|$.
Essentially, we 
consider $3$-digit numbers $i_0 i_1 i_2$ in base $n+1$, with each
$i_j$ represented by a list.  $\prog_\exx$ is:
\[
\begin{array}{ll}
\seed[\exx]\ cs = (\sfalse\cons\sfalse\cons\nil,\ cs,\ cs)\ \  \\
\pred[\exx]\ cs\ (x_0,x_1,y\cons ys) = (x_0,x_1,ys) &
\zero[\exx]\ cs\ (x_0,x_1,y\cons ys) = \sfalse \\
\pred[\exx]\ cs\ (x_0,y\cons ys,\nil) = (x_0,ys,cs) &
\zero[\exx]\ cs\ (x_0,y\cons ys,\nil) = \sfalse \\
\pred[\exx]\ cs\ (y\cons ys,\nil,\nil) = (ys,cs,cs) &
\zero[\exx]\ cs\ (y\cons ys,\nil,\nil) = \sfalse \\
\pred[\exx]\ cs\ (\nil,\nil,\nil) = (\nil,\nil,\nil) &
\zero[\exx]\ cs\ (\nil,\nil,\nil) = \strue
\end{array}
\]
If
$cs = \strue\cons\sfalse\cons\strue\cons\nil$, one value
in $\A_\exx^3$ is $v = (\sfalse\cons\nil,\ \sfalse\cons\strue\cons
\nil,\ \nil)$, which is mapped to the number $1 \cdot 4^2 + 2 \cdot 4
+ 0 = 24$.  Then $\prog_\exx \vdashcall \pred[\exx]\ cs\ v \arrr w :=
(\sfalse\cons\nil,\ \strue\cons\nil,\ cs)$, which is mapped to $1
\cdot 4^2 + 1 \cdot 4 + 3 = 23$ as desired.
\end{example}

Example~\ref{ex:counting} suggests a systematic way to create
polynomial counting modules.

\edef\modulepol{\number\value{lemma}}
\begin{lemma}\label{lem:module:pol}
For any $a,b \in \N \setminus \{0\}$, there is a $(\lambda n.a \cdot
(n+1)^b)$-counting module $C_\pol$ with data order $0$.
\end{lemma}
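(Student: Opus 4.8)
I need to construct, for arbitrary nonzero $a,b$, a counting module that counts up to $a \cdot (n+1)^b$. Looking at Example 4.10, the pattern is already essentially there for the special case $a=3, b=2$: represent a number in "base $n+1$" using $b$ digit-lists, each a sublist of the input $cs$ (so each digit ranges over $0,\dots,n$), plus possibly one extra leading digit-list of bounded length to handle the factor $a$. So the plan is to generalize that example systematically.

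**Sketch of the plan.** The structure I would use:

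First, I would fix the type $\numtype[\pol] := \underbrace{\bits \times \cdots \times \bits}_{b+1}$, a $(b+1)$-tuple of bitlists. For a given input length $n$, I would define $\A_\pol^n$ to be the set of tuples $(d_0,d_1,\dots,d_b)$ where $d_0$ is a list of length $\le a-1$ and each $d_i$ for $1 \le i \le b$ is a list of length $\le n$. The interpretation is the mixed-radix value
\[
\numinterpret{(d_0,\dots,d_b)}_\pol^n := |d_0|\cdot (n+1)^b + \sum_{i=1}^{b} |d_i| \cdot (n+1)^{b-i},
\]
so the digits $d_1,\dots,d_b$ run in base $n+1$ and the top digit $d_0$ runs in base $a$; the maximum representable value is exactly $a\cdot(n+1)^b - 1$, i.e.\ $P(n)-1$ for $P(n) = a\cdot(n+1)^b$.

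Second, I would give the clauses for $\seed[\pol]$, $\pred[\pol]$, $\zero[\pol]$ exactly mirroring Example 4.10. For $\seed[\pol]\ cs$, I would return the tuple whose top digit is a constant list of length $a-1$ (some fixed boolean list, built from constant data so cons-freeness is respected) and whose remaining digits are all $cs$ (giving each the maximal length $n$), which interprets to $P(n)-1$. For $\pred[\pol]$, I would use a cascade of clauses implementing the mixed-radix decrement: peel one element off the lowest nonzero digit, and reset all lower digits to the full list $cs$ (for digits $1,\dots,b$) to simulate "borrowing"; this is the direct analogue of the three-line cascade in the example. For $\zero[\pol]$, I would return $\sfalse$ whenever any digit is a nonempty list and $\strue$ only on the all-$\nil$ tuple. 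All right-hand sides use only $cs$, its sublists, the tail of a pattern variable, or constant data, so the clauses are cons-free; and since all variables have type $\bits$ or the tuple type of order $0$, the module has data order $0$.

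**The main obstacle and verification.** The routine-but-essential part is verifying the counting-module axioms: that $\seed$ yields a unique value interpreting to $P(n)-1$; that $\pred$ on a value of interpretation $i>0$ yields a unique value of interpretation $i-1$; and that $\zero$ correctly tests for $0$. The genuinely delicate point I expect to need care is the $\pred$ cascade together with the claim that $\A_\pol^n$ is closed under $\pred$ and that every value in $\A_\pol^n$ is reachable (so that the interpretation is well-behaved on the whole range $0,\dots,P(n)-1$). I would argue this by a mixed-radix-arithmetic induction: decrementing the value by one corresponds exactly to the pattern-matching cascade, where finding the lowest nonempty digit and replacing lower digits by $cs$ realizes the carry/borrow, and one checks that the resulting digit lengths stay within the prescribed bounds (each reset digit becomes length exactly $n$, which is allowed). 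Uniqueness of $\seed$ and $\pred$ outputs follows because the patterns in the clauses are mutually exclusive and exhaustive on $\A_\pol^n$. For the special edge cases ($a=1$, where $d_0$ is forced to be $\nil$, and $b$ arbitrary) I would note the construction degenerates gracefully. I would present the explicit clause list in the style of Example 4.10 and relegate the full arithmetic verification to the appendix, giving only the key invariant in the main text.
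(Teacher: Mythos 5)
Your proposal coincides with the paper's own proof in all essentials: the same type $\bits^{b+1}$, the same set $\A_\pol^n$ of tuples $(d_0,\dots,d_b)$ with $|d_0| \leq a-1$ and $|d_i| \leq n$, the same interpretation $\sum_{i=0}^{b} |d_i| \cdot (n+1)^{b-i}$ (maximising at $a\cdot(n+1)^b - 1$), and the same seed/predecessor/zero clause cascade obtained by directly generalising Example~\ref{ex:counting}, with a constant list of length $a-1$ as the top digit. The paper likewise treats the mixed-radix verification of the module axioms as routine, so your plan is correct and essentially identical to the published argument.
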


\begin{proof}[Sketch]
A straightforward generalisation of Example~\ref{ex:counting}

(See Appendix~\ref{app:properness} for the complete proof.)
\qed
\end{proof}

By increasing type orders, we can obtain an exponential increase of
magnitude
.

\edef\moduleexp{\number\value{lemma}}
\begin{lemma}\label{lem:module:exp}
If there is a $P$-counting module $C_\pi$ of data order $K$, then
there is a $(\lambda n.2^{P(n)})$-counting module $C_\epi$ of data
order $K+1$.
\end{lemma}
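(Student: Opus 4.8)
The plan is to build the new counting module $C_\epi$ on top of the given module $C_\pi$ by representing a number $m \in \{0,\dots,2^{P(n)}-1\}$ as its binary expansion, where each of the $P(n)$ bits is indexed by a value of $C_\pi$. Since a number of $C_\pi$ of value $i$ can already stand for a position, the natural representation of a bit-string is a \emph{function} from $\numtype[\pi]$-values to $\bool$: I would set $\numtype[\epi] := \numtype[\pi] \arrtype \bool$, so that a value $v$ of type $\numtype[\epi]$ represents the number $\numinterpret{v}_\epi^n = \sum_{i=0}^{P(n)-1} c_i \cdot 2^i$, where $c_i \in \{0,1\}$ is the bit obtained by applying $v$ to (the canonical representative of) $i$. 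Because the argument type $\numtype[\pi]$ has order $K$, the new type has order $K+1$, giving the desired bound on data order. The set $\A_\epi^n$ is then the set of closures representing such bit-valued functions on $\A_\pi^n$.

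The core of the proof is to define cons-free clauses for $\seed[\epi]$, $\pred[\epi]$ and $\zero[\epi]$ operating on these functional values. First I would define $\seed[\epi]\ cs$ to represent $2^{P(n)}-1$, i.e.\ the all-ones bit-string; concretely $\seed[\epi]\ cs = \symb{allone}_\pi\ cs$ where $\symb{allone}_\pi\ cs\ i = \strue$ ignores its argument. For $\zero[\epi]$, a number is zero iff every bit is $0$; I would implement this by iterating over all indices $i$ from $P(n)-1$ down to $0$ using the predecessor and zero-test of $C_\pi$, checking that $v\ i$ returns $\sfalse$ at each step. The delicate part is $\pred[\epi]$: decrementing a binary number flips trailing $1$s to $0$s up to and including the lowest $0$-bit, which I would realise by producing a \emph{new} function value that, on input index $i$, computes the $i$-th bit of $m-1$ by inspecting the bits of $m$ at positions $\leq i$ (e.g.\ the decremented bit-string agrees with $m$ above the lowest $0$-bit, and the borrow can be detected by testing whether all lower bits of $m$ are $1$). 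Each such clause only applies the old module's operators and the stored closure for $v$; no constructors are introduced, so cons-freeness is preserved, and all recursion is driven by $\pred[\pi]$ and $\zero[\pi]$ on $C_\pi$-values.

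I would then verify the counting-module axioms by induction, using the guarantees of $C_\pi$: existence and uniqueness of $\seed[\epi]\ cs$, the facts $\numinterpret{\seed[\epi]\ cs}_\epi^n = 2^{P(n)}-1 = (\lambda n.2^{P(n)})(n)-1$, correctness of $\pred[\epi]$ (mapping a value of $C_\epi$-index $i>0$ to one of index $i-1$), and correctness of $\zero[\epi]$. The main obstacle I expect is making the \emph{borrow} computation in $\pred[\epi]$ both correct and cons-free while staying within the functional representation: since I cannot store an intermediate bit-string, the decremented number must itself be a closure whose behaviour on each index is recomputed on demand from the old value $v$ and the module $C_\pi$. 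The cleanest route is to define an auxiliary defined symbol $\symb{decbit}_\pi$ with $v$ as a stored argument that, given an index $i$, returns the $i$-th bit of $m-1$ by walking down from $i$ with $\pred[\pi]$ and testing $v$ at each lower position to decide whether the borrow has propagated; $\pred[\epi]\ cs\ v$ then returns the partial application $\symb{decbit}_\pi\ cs\ v$, a value of type $\numtype[\epi]$. Checking that this auxiliary function uses only sub-expressions of its arguments and the operators of $C_\pi$ (hence is cons-free and of the right data order) is the technical heart of the argument.
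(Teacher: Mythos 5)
Your representation is the same as the paper's: $\numtype[\epi] := \numtype \arrtype \bool$ (order $K+1$ since the argument type has order $\leq K$), numbers as bit-valued functions indexed by $C_\pi$-values, the constant-$\strue$ closure as $\seed[\epi]$ (the all-ones string), and a zero test that walks every index down from $\seed\ cs$ using $\pred$ and $\zero$ of $C_\pi$ — all of this matches Figure~\ref{fig:epi}, modulo an irrelevant reversal of bit order. Where you genuinely diverge is the predecessor. The paper computes it \emph{eagerly}: $\symb{phelp}_\epi$ scans from the least significant bit, wrapping the old value in one $\symb{flip}_\epi$ closure per bit passed, until it meets a set bit, so the result is a nest of partial applications assembled at predecessor time. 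Your $\symb{decbit}$ closure instead computes each bit of $m-1$ \emph{lazily} at query time, by scanning the positions below the queried index. This lazy variant is also workable: the partial application of $\symb{decbit}$ to $cs$ and $v$ is a legitimate value of type $\numtype[\epi]$, the scan terminates by the guarantees of $C_\pi$, determinism gives the uniqueness required by the module axioms, and cons-freeness is clear. The cost is that a bit query on an $r$-fold iterated predecessor re-scans through $r$ nested closures, but that is harmless here, since nothing bounds the simulating program's own running time.

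There is, however, a concrete error in your borrow logic: as stated, it computes the \emph{successor}, not the predecessor. Decrementing a binary number flips trailing $0$s to $1$s up to and including the lowest $1$-bit — you wrote the opposite — and correspondingly the borrow is active at position $i$ iff all bits of $m$ strictly below $i$ are $0$; your test ``all lower bits of $m$ are $1$'' is the carry condition for increment, and likewise $m-1$ agrees with $m$ above the lowest \emph{one}-bit, not the lowest zero-bit. With the test as you state it, $\symb{decbit}$ returns the bits of $m+1 \bmod 2^{P(n)}$, violating the axiom that $\pred[\epi]$ sends a value of interpretation $i > 0$ to one of interpretation $i-1$. The repair is one line — bit $i$ of $m-1$ equals the negation of $v$'s bit at $i$ exactly when $v$ yields $\sfalse$ at every lower position, and equals $v$'s bit otherwise — after which your verification plan goes through as intended.
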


\begin{proof}[Sketch]
Let $\numtype[\epi] := \numtype \arrtype \bool$; then $\typeorder{
\numtype[\epi]} \leq K+1$.  A number $i$ with bit representation $b_0
\dots b_{P(n)-1}$ (with $b_0$ the most significant digit) is
represented by a value $v$ such that, for $w$
with $\numinterpret{w}_\pi = i$:
$\prog_{\epi} \vdashcall v\ w \arrr \strue$ if{f} $b_i = 1$, and
$\prog_{\epi} \vdashcall v\ w \arrr \sfalse$ if{f} $b_i = 0$.  We use
the clauses of Figure~\ref{fig:epi}.
\begin{figure}[!ht]
\[
\begin{array}{l}
\seed[\epi]\ cs\ x = 
\strue \\
\zero[\epi]\ cs\ F = \symb{zhelp}_\epi\ cs\ F\ (\seed\ cs) \\
\symb{zhelp}_\epi\ cs\ F\ k = \ifte{\:F\ k\:}{\:\sfalse
\\\phantom{\symb{zhelp}_\epi\ cs\ k\ F =\,}}{
\ifte{\:\zero\ cs\ k\:}{\:\strue
\\\phantom{\symb{zhelp}_\epi\ cs\ k\ F =\,}
}{\:\symb{zhelp}_\epi\ cs\ F\ (\symb{pred}_\pi\ cs\ k)}} \\
\pred[\epi]\ cs\ F = \symb{phelp}_\epi\ cs\ F\ (\seed\ cs) \\
\symb{phelp}_\epi\ cs\ F\ k = \ifte{\:F\ k\:}{\:\symb{flip}_\epi\ 
cs\ F\ k
\\\phantom{\symb{phelp}_\epi\ cs\ k\ F =\,}
}{\ifte{\:\zero\ cs\ k\:}{\:\seed[\epi]\ cs
\\\phantom{\symb{phelp}_\epi\ cs\ k\ F =\,}
}{\:\symb{phelp}_\epi\ cs\ (\symb{flip}_\epi\ cs\ F\ k)\ 
(\pred\ cs\ k)}} \\
\symb{flip}_\epi\ cs\ F\ k\ i = \ifte{\:\symb{equal}_\pi\ cs\ k\ i
\:}{\:\symb{not}\ (F\ i)\:}{\:F\ i} \\
\symb{not}\ b = \ifte{\:b\:}{\:\sfalse\:}{\:\strue\:} \\
\end{array}
\]
\vspace{-12pt}
\caption{The clauses used in $\prog_\epi$, extending $\prog_\pi$ with an
exponential step.}
\vspace{-12pt}
\label{fig:epi}
\end{figure}

We also include all clauses in $\prog_\pi$.
Here, note that a bitstring $b_0 \dots b_m$ represents $0$ if each
$b_i = 0$, and that the predecessor of $b_0\dots b_i 1 0
\dots 0$ is $b_0 \dots b_i 0 1 \dots 1$.

(See Appendix~\ref{app:properness} for the complete proof.)
\qed
\end{proof}

Combining these results, we obtain:

\begin{lemma}\label{lem:deterministic:simulate}
Every decision problem in $\exptime{K}$ is accepted by a deterministic
cons-free program with data order $K$.
\end{lemma}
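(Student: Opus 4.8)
The plan is to assemble the lemma directly from the machinery already built in this section. The final statement asserts that every decision problem in $\exptime{K}$ is accepted by a deterministic cons-free program of data order $K$. By Lemma~\ref{lem:counting}, it suffices to produce, for any problem decided by a deterministic Turing Machine running in $\timecomp{\complexityfun(n)}$ with $\complexityfun(n) \leq \exp_2^K(a\cdot n^b)$, a deterministic cons-free $(\lambda n.\complexityfun(n)+1)$-counting module of data order $K$. So the entire argument reduces to exhibiting such a counting module; the Turing-Machine simulation of Figure~\ref{fig:machine} is then plugged in wholesale and produces the accepting program.

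First I would unfold the definition of $\exptime{K}$: a problem in $\exptime{K}$ is decided by some deterministic TM running in $\timecomp{\exp_2^K(a\cdot n^b)}$ for suitable $a,b \in \N$. Fix such $a,b$, and set $\complexityfun(n) = \exp_2^K(a\cdot(n+1)^b)$, which is an upper bound of the required form and satisfies $\complexityfun(n)+1 \leq \exp_2^K(a'\cdot(n+1)^{b'})$ for suitable $a',b'$ (the ``$+1$'' is harmless and can be absorbed, e.g.\ by enlarging the constants or by noting a counting module for a slightly larger bound suffices). The target is therefore a counting module for a bound of shape $\lambda n.\exp_2^K(\,\text{polynomial in } n\,)$.

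Next I would build this module by induction on $K$, using Lemmas~\ref{lem:module:pol} and~\ref{lem:module:exp} as the base case and inductive step. For the base case $K=0$, Lemma~\ref{lem:module:pol} supplies a $(\lambda n.a'\cdot(n+1)^{b'})$-counting module of data order $0$, which is exactly a $\lambda n.\exp_2^0(a'\cdot(n+1)^{b'})$-counting module since $\exp_2^0(m)=m$. For the inductive step, given a $P$-counting module of data order $K$ with $P(n)=\exp_2^K(a'\cdot(n+1)^{b'})$, Lemma~\ref{lem:module:exp} yields a $(\lambda n.2^{P(n)})$-counting module of data order $K+1$, and $2^{P(n)} = 2^{\exp_2^K(\cdots)} = \exp_2^{K+1}(\cdots)$ by the definition of $\exp_2$. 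Iterating $K$ times from the polynomial base produces a $(\lambda n.\exp_2^K(a'\cdot(n+1)^{b'}))$-counting module of data order $K$. Each application of Lemma~\ref{lem:module:exp} preserves determinism (the clauses of Figure~\ref{fig:epi} contain no $\choice$), so the resulting module is deterministic, as is the base module.

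Finally I would invoke Lemma~\ref{lem:counting} with this module: since $\complexityfun(n) \leq \exp_2^K(a'\cdot(n+1)^{b'})$ and we have a deterministic cons-free $(\lambda n.\complexityfun(n)+1)$-counting module of data order $K$, we obtain a deterministic cons-free program of data order $K$ accepting exactly the problem decided by $M$, completing the proof. The only genuinely delicate point is bookkeeping on the bounds: one must check that the ``$+1$'' in the counting-module requirement and the constant slack in the polynomial can be folded into the parameters $a',b'$ so that the iterated construction still dominates $\complexityfun(n)$. I expect this constant-chasing to be the main (though routine) obstacle, while the structural core—compose the inductive counting module with the fixed TM simulation—is immediate from the lemmas already proved.
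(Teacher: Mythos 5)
Your proposal matches the paper's proof essentially exactly: the paper also takes the polynomial counting module of Lemma~\ref{lem:module:pol} as the base, applies Lemma~\ref{lem:module:exp} $K$ times to get a $(\lambda n.\exp_2^K(a \cdot (n+1)^b))$-counting module of data order $K$, and plugs it into Lemma~\ref{lem:counting}, handling the ``$+1$'' by noting that $\exp_2^K(a\cdot(n+1)^b)$ dominates $\exp_2^K(a\cdot n^b)+1$. Your additional remarks on determinism and constant bookkeeping are correct and consistent with the paper's (terser) treatment.
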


\begin{proof}
A decision problem is in $\exptime{K}$ if it is decided by a
deterministic TM operating in time $\exp_2^K(a \cdot n^b))$ for some
$a,b$.  By Lemma~\ref{lem:counting}, it
therefore suffices if there is a $Q$-counting module for some $Q \geq
\lambda n.\exp_2^K(a \cdot n^b)+1$, with data order $K$.
Certainly $Q(n) := \exp_2^K(a \cdot (n+1)^b)$ is large enough.  By
Lemma~\ref{lem:module:pol}, there is a $(\lambda n.a \cdot (n+1)^b)
$-counting module $C_\pol$ with data order $0$.
Applying Lemma~\ref{lem:module:exp} $K$ times, we obtain the required
$Q$-counting module $C_{\symb{e}[\dots[\symb{e}[\pol]]]}$.
\qed
\end{proof}

\begin{remark}
Our definition of a counting module significantly differs from the
one in~\cite{jon:01}, for example by representing numbers as
\emph{values} rather than \emph{expressions}, introducing the sets
$\A_\pi^n$ and imposing evaluation restrictions.  The changes enable
an easy formulation of the non-deterministic counting
module in Section~\ref{sec:elementary}.
\end{remark}

\subsection{Simulating deterministic cons-free programs using
an algorithm}\label{subsec:detalgorithm}

We now turn to the second part of characterisation:
that every decision problem solved by a deterministic
cons-free program of data order $K$ is in $\exptime{K}$.
We 
give
an algorithm which
determines the result of a fixed program
(if any) on a given input
in $\timecomp{\exp_2^K(a \cdot n^b)}$ for some $a,b$.
The algorithm is designed to extend easily to the non-deterministic
characterisations in subsequent settings.

\paragraph{Key idea.}  The principle of our algorithm is easy to
explain 
when variables
have data order $0$.  Using Lemma~\ref{lem:safetysimple}, all such
variables must be instantiated by (tuples of) elements of $\B_{d_1,
\dots,d_M}^\prog$, of which there are only polynomially many in the
input size.  Thus, we can make a comprehensive list of all
expressions that might occur as the left-hand side of a [Call] in
the derivation tree.  Now we can go over the list repeatedly, filling
in reductions to 
trace a top-down derivation of the tree.

In the higher-order setting, there are infinitely many possible
values; for example, if $\symb{id} : \bool \arrtype \bool$
has arity $1$ and $\symb{g} : (\bool \arrtype \bool) \arrtype \bool
\arrtype \bool$ has arity $2$, then $\symb{id},\ \symb{g}\ \symb{id},
\ \symb{g}\ (\symb{g}\ \symb{id})$ and so on are all values.
Therefore, instead of looking directly at values we consider
an extensional replacement.

\begin{definition}\label{def:pinterpret}
Let $\B$ be a set of data expressions closed under $\supterm$.
For $\asort \in \Sorts$, let $\pinterpret{\asort} = \{ d \in \B \mid\ 
\vdash d : \asort \}$.  Inductively, let $\pinterpret{\atype \times
\btype} = \pinterpret{\atype} \times \pinterpret{\btype}$ and
$\pinterpret{\atype \arrtype \btype}\linebreak
= \{ A_{\atype \arrtype \btype}
\mid A \subseteq \pinterpret{\atype} \times \pinterpret{\btype}
\wedge \forall e \in \pinterpret{\atype}$ there is at most one $u$
with $(e,u) \in A_{\atype \arrtype \btype} \}_{\atype \arrtype
\btype}$.
We call the elements of any $\pinterpret{\atype}$
\emph{deterministic extensional values}.
\end{definition}

Note that deterministic extensional values are data
expressions in $\B$ if $\atype$ is a sort, \emph{pairs} if $\atype$
is a pair type, and sets of pairs labelled with a type otherwise;
these sets are exactly partial functions, and can be used as such:

\begin{definition}\label{def:partialapply}
For $e \in \pinterpret{\atype_1 \arrtype \dots \arrtype \atype_n
\arrtype \btype}$ and $u_1 \in \pinterpret{\atype_1},\dots,
u_n \in \pinterpret{\atype_n}$,
we inductively define $e(u_1,
\dots,u_n) \subseteq \pinterpret{\btype}$:
\begin{itemize}
\item if $n = 0$, then $e(u_1,\dots,u_n) = e() = \{ e \}$;
\item if $n \geq 1$, then $e(u_1,\dots,u_n) =
  \bigcup_{A_{\atype_n \arrtype \btype} \in e(u_1,\dots,u_{n-1})}
  \{ o \in \pinterpret{\btype} \mid (u_n,o) \in A \}$.
\end{itemize}
\end{definition}

By induction on $n$, 
each $e(u_1,\dots,u_n)$
has at most one element 
as would be expected of a partial function.
We also
consider a form of matching.

\begin{definition}
Fix a set $\B$ of data expressions.
An \emph{extensional expression} has the form $\apps{\identifier{f}}{
e_1}{e_n}$ where $\identifier{f} : \atype_1 \arrtype \dots \arrtype
\atype_n \arrtype \btype \in \Defineds$ and each $e_i \in
\pinterpret{\atype_i}$.
Given a clause $\rho\colon\apps{\identifier{f}}{\ell_1}{\ell_k} = r$
with $\identifier{f} : \atype_1 \arrtype \dots \arrtype \atype_k
\arrtype \tau \in \F$ and variable environment $\Gamma$, an
\emph{ext-environment} for $\rho$ is a partial function $\eta$
mapping each $x : \btype \in \Gamma$ to an element of $\pinterpret{
\btype}$, such that $\ell_j\eta \in \pinterpret{\atype_j}$ for $1 \leq
j \leq n$.  Here,
\begin{itemize}
\item $\ell\eta = \eta(\ell)$ if $\ell$ is a variable
\item $\ell\eta = (\ell^{(1)}\eta,\ell^{(2)}\eta)$ if
  $\ell = (\ell^{(1)},\ell^{(2)})$;
\item $\ell\eta = \ell[x:=\eta(x) \mid x \in \Var(\ell)]$
otherwise (in this case, $\ell$ is a pattern with data order $0$,
so all its variables have data order $0$, so each $\eta(x) \in
\Data$).
\end{itemize}
Then $\ell\eta$ is a deterministic extensional value for
$\ell$ a pattern.
We say $\rho$ \emph{matches} an extensional expression
$\apps{\identifier{f}}{e_1}{e_k}$
if there is an ext-environment $\eta$ for $\rho$
such that $\ell_i\eta = e_i$ for all $1 \leq i \leq k$.
We call $\eta$ the \emph{matching ext-environment}.
\end{definition}

Finally, for technical reasons we will need an ordering on extensional
values:

\begin{definition}
We define a relation $\sqsupseteq$ on extensional values of the same
type:
\begin{itemize}
\item For $d,b \in \pinterpret{\asort}$ with $\asort \in \Sorts$:
$d \sqsupseteq b$ if $d = b$.
\item For $(e_1,e_2),(u_1,u_2) \in \pinterpret{\atype \times \btype}$:
$(e_1,e_2) \sqsupseteq (u_1,u_2)$ if each $e_i \sqsupseteq u_i$.
\item For $A_\atype,B_\atype \in \pinterpret{\atype}$ with $\atype$
functional: $A_\atype \sqsupseteq B_\atype$ if for all $(e,u) \in
B$ there is $u' \sqsupseteq u$ such that $(e,u') \in A$.
\end{itemize}
\end{definition}

\paragraph{The algorithm.}
Let us now define our algorithm.
We will present it in a general form---including
a case \ref{alg:iterate:choice} which does not apply to deterministic
programs---so we can reuse the algorithm in the non-deterministic
settings to follow.

\begin{algorithm}\label{alg:base}\normalfont
Let $\prog$ be a fixed, deterministic cons-free program, and suppose
$\identifier{f}_1$ has a type
$\asortorpair_1 \arrtype \dots \arrtype \asortorpair_M \arrtype
\asortorpair \in \F$.

{\bf Input:} data expressions $d_1 : \asortorpair_1,\dots,d_M :
\asortorpair_M$.

{\bf Output:} The set of values $b$ with $\progresult$.

\begin{enumerate}
\item\label{alg:prepare} Preparation.
\begin{enumerate}
\item\label{alg:prepare:eprog}
  Let $\eprog$ be obtained from $\prog$ by the
  transformations of Lemma~\ref{lem:proper}, and by adding a clause
  $\apps{\symb{start}}{x_1}{x_M} = \apps{\identifier{f}_1}{x_1}{x_M}$
  for a fresh symbol
  $\symb{start}$ (so that $\progresult$ if{f} $\eprog \vdashcall
  \apps{\symb{start}}{d_1}{d_M} \arrr b$).
\item\label{alg:prepare:statements}
  Denote $\B := \B_{d_1,\dots,d_M}^\prog$ and let $\X$ be the set
  of all ``statements'':
  \begin{enumerate}
  \item $\vdash \apps{\identifier{f}}{e_1}{e_n} \leadsto o$ for
    (a) $\identifier{f} \in \Defineds$ with $\identifier{f} :
    \atype_1 \arrtype \dots \arrtype \atype_m \arrtype \asortorpair'
    \in \F$, (b) $0 \leq n \leq \arity(\identifier{f})$ such that
    $\typeorder{\atype_{n+1} \arrtype \dots \arrtype \atype_m
    \arrtype \asortorpair'} \leq K$, (c) $e_i
    \in \pinterpret{\atype_i}$ for $1 \leq i \leq n$ and (d) $o \in
    \pinterpret{\atype_{n+1} \arrtype \dots \arrtype \atype_m \arrtype
    \asortorpair'}$;
  \item $\eta \vdash t \leadsto o$ for (a) $\rho\colon
    \apps{\identifier{f}}{\ell_1}{\ell_k} = s$ a clause in $\eprog$,
    (b) $s \suptermeq t : \tau$, (c) $o \in \pinterpret{\btype}$ and
    (d) $\eta$ an ext-environment for $\rho$.
  \end{enumerate}
\item\label{alg:prepare:base}
  Mark statements of the form $\eta \vdash t \leadsto o$ in $\X$
  as confirmed if :
  \begin{enumerate}
  \item\label{alg:prepare:base:var}
    $t \in \V$ and $\eta(t) \sqsupseteq o$,
    \emph{or}
  \item\label{alg:prepare:base:constructor}
    $t = \apps{\identifier{c}}{t_1}{t_m}$ with $\identifier{c}
    \in \Constructors$ and $t\eta = o$.
  \end{enumerate}
  All statements not of either form are marked unconfirmed.
\end{enumerate}
\item\label{alg:iterate}
Iteration: repeat the following steps, until no further changes are made.
\begin{enumerate}
\item\label{alg:iterate:value}
  For all unconfirmed statements $\vdash \apps{\identifier{f}}{e_1}{
  e_n} \leadsto o$ in $\X$ with $n < \arity(\identifier{f})$: write
  $o = O_\atype$ and mark the statement as confirmed if for all
  $(e_{n+1},u) \in O$ there exists $u' \sqsupseteq u$ such that
  $\vdash \apps{\identifier{f}}{e_1}{e_{n+1}} \leadsto u'$
  is marked confirmed.
\item\label{alg:iterate:lhs}
  For all unconfirmed statements $\vdash \apps{\identifier{f}}{e_1}{
  e_k} \leadsto o$ in $\X$ with $k = \arity(\identifier{f})$:
  \begin{enumerate}
  \item\label{alg:iterate:lhs:find}
    find the first clause $\rho\colon
    \apps{\identifier{f}}{\ell_1}{\ell_k} = s$ in $\eprog$ that
    matches $\apps{\identifier{f}}{e_1}{e_k}$ and let $\eta$ be the
    matching ext-environment (if any);
  \item\label{alg:iterate:lhs:rhs}
    determine whether $\eta \vdash s \leadsto o$ is confirmed and if
    so, mark the statement $\apps{\identifier{f}}{e_1}{e_k} \leadsto
    o$ as confirmed.
  \end{enumerate}
\item\label{alg:iterate:ifte}
  For all unconfirmed statements of the form $\eta \vdash
  \ifte{s_1}{s_2}{s_3} \leadsto o$ in $\X$, mark
  the statement confirmed if
  \begin{enumerate}
  \item\label{alg:iterate:ifte:true}
    both $\eta \vdash s_1 \leadsto \strue$ and $\eta \vdash s_2
    \leadsto o$ are confirmed, or
  \item\label{alg:iterate:ifte:false}
    both $\eta \vdash s_1 \leadsto \sfalse$ and $\eta \vdash s_3
    \leadsto o$ are confirmed.
  \end{enumerate}
\item\label{alg:iterate:choice}
  For all unconfirmed statements $\eta \vdash
  \apps{\choice}{s_1}{s_n} \leadsto o$ in $\X$, mark the statement
  as confirmed if $\eta \vdash s_i \leadsto o$ for any $i \in \{1,
  \dots,n\}$.
\item\label{alg:iterate:pair}
  For all unconfirmed statements $\eta \vdash (s_1,s_2) \leadsto
  (o_1,o_2)$ in $\X$, mark the statement confirmed if both
  $\eta \vdash s_1 \leadsto o_1$ and $\eta \vdash s_2 \leadsto o_2$
  are confirmed.
\item\label{alg:iterate:rhs:var}
  For all unconfirmed statements $\eta \vdash \apps{x}{s_1}{s_n}
  \leadsto o$ in $\X$ with $x \in \V$, mark the statement as confirmed
  if there are $e_1 \in \pinterpret{\atype_1},\dots,e_n \in
  \pinterpret{\atype_n}$ such that each $\eta \vdash s_i \leadsto e_i$
  is marked confirmed, and there exists $o' \in \eta(x)(e_1,
  \dots,e_n)$ such that $o' \sqsupseteq o$.
\item\label{alg:iterate:rhs:func}
  For all unconfirmed statements $\eta \vdash \apps{\identifier{f}}{
  s_1}{s_n} \leadsto o$ in $\X$ with $\identifier{f} \in \Defineds$,
  mark the statement as confirmed if there are $e_1 \in
  \pinterpret{\atype_1},\dots,e_n \in \pinterpret{\atype_n}$ such that
  each $\eta \vdash s_i \leadsto e_i$ is marked confirmed, and:
  \begin{enumerate}
  \item\label{alg:iterate:rhs:call}
    $n \leq \arity(\identifier{f})$ and $\vdash \apps{\identifier{f}}{e_1}{
    e_n} \leadsto o$ is marked confirmed,
    \emph{or}
  \item\label{alg:iterate:rhs:extra}
    $n > k := \arity(\identifier{f})$ and
    there are $u,o'$ such that $\vdash \apps{\identifier{f}}{e_1}{
    e_k} \leadsto u$ is marked confirmed and $u(e_{k+1},\dots,e_n)
    \ni o' \sqsupseteq o$.
  \end{enumerate}
\end{enumerate}
\item\label{alg:completion}
  Completion: return $\{ b \mid b \in \B \wedge \vdash \apps{\symb{start}}{d_1}{d_M}
  \leadsto b$ is marked confirmed$\}$.
\end{enumerate}
\end{algorithm}

Note that, for programs of data order $0$, this algorithm closely
follows the earlier sketch.  Values of a higher type are abstracted to
deterministic extensional values.  The use of
$\sqsupseteq$ is needed because a value of higher type is associated
to many extensional values; e.g., to confirm a
statement $\vdash \symb{plus}\ \symb{3} \leadsto \{ (\symb{1},
\symb{4}), (\symb{0},\symb{3}) \}_{\nat \arrtype \nat}$ in some
program, it may be necessary to first confirm $\vdash \symb{plus}\ 
\symb{3} \leadsto \{ (\symb{0},\symb{3}) \}_{\nat \arrtype \nat}$.

The complexity of the algorithm relies on the following key
observation:

\edef\complexitylem{\number\value{lemma}}
\begin{lemma}\label{lem:complexitycore}
Let $\prog$ be a cons-free program of data order $K$.
Let $\Sigma$ be the set of all types $\atype$ with $\typeorder{\atype}
\leq K$ which occur as part of an argument type, or as an output type
of some $\identifier{f} \in \Defineds$.
Suppose that, given input of total size $n$,
$\pinterpret{\atype}$ has cardinality at most $F(n)$ for all $\atype
\in \Sigma$, and testing whether $e_1 \sqsupseteq e_2$ for $e_1,e_2
\in \interpret{\atype}$ takes at most $F(n)$ steps.
Then Algorithm~\ref{alg:base} runs in $\timecomp{a \cdot F(n)^b}$ for
some $a,b$.
\end{lemma}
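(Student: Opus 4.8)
The plan is to bound the running time of Algorithm~\ref{alg:base} by estimating separately (i) the number of statements in $\X$, (ii) the number of passes through the iteration loop, and (iii) the cost of processing a single statement during one pass, and then to multiply these. Throughout I use that $\prog$ (and hence $\eprog$) is fixed, so the number of clauses, the arities $\arity(\identifier{f})$, the number of subexpressions occurring in right-hand sides, and the maximal length $c$ of an application $\apps{\identifier{g}}{s_1}{s_c}$ in a right-hand side are all bounded by constants independent of the input.

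First I would bound $|\X|$. A statement of the first form is determined by a defined symbol (finitely many), a length at most $\arity(\identifier{f})$ (constantly many), a tuple of that many extensional values, and an output. By Lemma~\ref{lem:proper} every defined symbol of $\eprog$ has argument and output types of order $\le K$, so together with side condition (b) in the definition of $\X$ every extensional value and every output appearing in such a statement ranges over a set $\pinterpret{\atype}$ with $\typeorder{\atype} \le K$; all these types lie in $\Sigma$ (this is the purpose of collecting both argument and output components in its definition), hence have at most $F(n)$ elements. As the length and the number of symbols are constants, there are at most $a_0 \cdot F(n)^{b_0}$ statements of this form. A statement of the second form is determined by a clause and a subexpression $t$ of its right-hand side (finitely many), an ext-environment $\eta$ --- assigning to each clause variable $x : \btype$ an element of $\pinterpret{\btype}$, where $\btype$ is a component of an argument type and so lies in $\Sigma$ --- and an output $o \in \pinterpret{\ctype}$, where $\ctype$ is the type of $t$ and has order $\le K$ by Lemma~\ref{lem:proper}(b), hence also lies in $\Sigma$. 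Enlarging $a_0, b_0$ if needed, this form contributes at most $a_0 \cdot F(n)^{b_0}$ statements as well, so $|\X| \le a_0 \cdot F(n)^{b_0}$.

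Next I would bound the iteration. Confirmation is monotone: a confirmed statement is never unmarked. Hence each execution of steps (a)--(g) either confirms at least one previously unconfirmed statement or changes nothing, in which case the loop halts; so the loop runs at most $|\X|+1$ times. In a single pass I process each statement once and apply the rule matching its shape. The only rules doing more than constantly many lookups and $\sqsupseteq$-tests are those quantifying over extensional values --- steps \ref{alg:iterate:value}, \ref{alg:iterate:rhs:var} and \ref{alg:iterate:rhs:func} --- which range over tuples of extensional values of length at most $c$, i.e.\ over at most $F(n)^{c}$ tuples (or over the at most $F(n)^2$ pairs of some $O$). For each tuple the algorithm performs table lookups among the $\le |\X|$ statements together with constantly many $\sqsupseteq$-tests, each costing at most $F(n)$ steps by hypothesis; since type orders are bounded by the constant $K$, each extensional value, and hence each statement, has size polynomial in $F(n)$, so each lookup is polynomial as well. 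Thus one statement is handled in time $\OO(F(n)^{b_1})$, a pass costs $\OO(|\X| \cdot F(n)^{b_1}) = \OO(F(n)^{b_0 + b_1})$, and all passes together cost $\OO(F(n)^{2 b_0 + b_1})$.

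Finally I would observe that preparation is dominated by enumerating $\X$ and writing down its statements, which costs $\OO(F(n)^{b_2})$ by the size bound above, while the base-case marking and the completion step are each a single scan of $\X$. Summing the contributions gives a total bound of the form $\timecomp{a \cdot F(n)^b}$ for suitable constants $a, b$, as required. The step I expect to be the main obstacle is the bookkeeping of the second paragraph: checking that \emph{every} extensional value and output occurring in a statement genuinely lives in some $\pinterpret{\atype}$ with $\atype \in \Sigma$ --- in particular for the curried partial-application output types --- because the cardinality hypothesis is granted only for types in $\Sigma$, whereas a function space $\pinterpret{\atype \arrtype \btype}$ estimated naively could be exponential in $F(n)$ rather than bounded by it.
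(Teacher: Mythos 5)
Your proposal is correct and follows essentially the same route as the paper's proof: bound $|\X|$ polynomially in $F(n)$ using that, thanks to the preparation step with $\eprog$, all extensional values live in $\pinterpret{\atype}$ for $\atype \in \Sigma$; observe that monotone confirmation gives at most $|\X|+1$ iterations; and bound each iteration step by a polynomial in $|\X|$ and $F(n)$ (with the tuple-enumeration in the variable/function-application steps as the dominant cost). Even the obstacle you flag at the end---that partial-application output types must themselves lie in $\Sigma$ rather than being estimated as function spaces---is exactly the point the paper disposes of with its opening observation about $\eprog$ and condition (b) in the definition of $\X$.
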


Here, the cardinality $\Card(A)$ of a set $A$ is just the number of
elements of $A$.

\begin{proof}[Sketch]
Due to the use of $\eprog$,
all intensional values occurring in Algorithm~\ref{alg:base}
are in $\bigcup_{\atype \in \Sigma} \pinterpret{\atype}$.
Writing \textsf{a} for the greatest number of arguments any defined
symbol $\identifier{f}$ or variable $x$ in $\eprog$ may take
and \textsf{r} for the greatest number of sub-expressions of any
right-hand side in $\eprog$ (which is independent of the input!),
$\X$ contains at most $\textsf{a} \cdot |\Defineds|
\cdot F(n)^{\textsf{a}+1} + |\eprog| \cdot \textsf{r} \cdot
F(n)^{\textsf{a}+1}$ statements.
Since in all but the last step of the iteration at least one
statement is flipped from unconfirmed to confirmed, there are at most
$|\X|+1$ iterations, each considering $|\X|$ statements.  It is easy
to see that the individual steps in both the preparation and iteration
are all polynomial in $|\X|$ and $F(n)$, resulting in a polynomial
overall complexity.

(See Appendix~\ref{app:complexity} for the complete proof.)
\qed
\end{proof}

The result follows as $\Card(\pinterpret{\atype})$ is
given by a tower of exponentials in 
$\typeorder{\atype}$:

\edef\pcomplexitylem{\number\value{lemma}}
\begin{lemma}\label{lem:pinterpretcard}
If $1 \leq \Card(\B) < N$, then for each $\atype$ of length $L$
(where the length of a type is the number of sorts occurring in it,
including repetitions), with
$\typeorder{\atype} \leq K$: $\Card(\pinterpret{\atype}) <
\exp_2^K(N^L)$. 
Testing $e \sqsupseteq u$ for $e,u \in \pinterpret{\atype}$ takes
at most $\exp_2^K(N^{(L+1)^3})$ comparisons between elements of $\B$.
\end{lemma}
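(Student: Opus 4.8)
The plan is to prove both claims by structural induction on $\atype$, after isolating two elementary monotonicity facts about the iterated exponential $\exp_2^{(\cdot)}$. Since $1 \le \Card(\B) < N$ forces $N \ge 2$, I would first record: (i) \emph{superscript monotonicity}, $\exp_2^{K_1}(x) \le \exp_2^{K_2}(x)$ whenever $K_1 \le K_2$ and $x \ge 0$ (immediate by induction, using $m < 2^m$); and (ii) \emph{multiplicativity}, $\exp_2^{M}(N^a)\cdot\exp_2^{M}(N^b) \le \exp_2^{M}(N^{a+b})$ for all $a,b \ge 1$. Fact (ii) is proved by induction on $M$: the base is the identity $N^a N^b = N^{a+b}$, and the step reduces to $\exp_2^{M}(N^a)+\exp_2^{M}(N^b) \le \exp_2^{M}(N^{a+b})$, which follows from $u+v \le uv$ for $u,v \ge 2$ together with the induction hypothesis (each factor is $\ge N^a \ge 2$). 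These two facts are the only arithmetic I will need.

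For the cardinality bound I induct on the structure of $\atype$, writing $K_i,L_i$ for the orders and lengths of the immediate subtypes. The base case $\atype = \asort$ is $\Card(\pinterpret{\asort}) \le \Card(\B) < N = \exp_2^{0}(N^{1})$. For a product, $\Card(\pinterpret{\atype\times\btype}) = \Card(\pinterpret{\atype})\cdot\Card(\pinterpret{\btype})$, and since each factor is strictly below $\exp_2^{K_i}(N^{L_i}) \le \exp_2^{K}(N^{L_i})$ with $K=\max(K_1,K_2)$, multiplicativity yields the claim (strictness coming from the two strict factors). The crux is the arrow case. Here an element of $\pinterpret{\atype\arrtype\btype}$ is exactly a partial function from $\pinterpret{\atype}$ to $\pinterpret{\btype}$, so with $p = \Card(\pinterpret{\atype})$ and $q = \Card(\pinterpret{\btype})$ there are $(q+1)^p$ of them. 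Taking logarithms, $\log_2((q+1)^p) = p\log_2(q+1)$, and I bound $\log_2(q+1) \le \exp_2^{\max(K_2-1,0)}(N^{L_2})$ (checking the cases $K_2=0$ and $K_2\ge1$ separately) and $p < \exp_2^{K_1}(N^{L_1})$. Since the target order is $K=\max(K_1+1,K_2)$, so that $K-1 = \max(K_1,K_2-1)$ dominates both $K_1$ and $\max(K_2-1,0)$, superscript monotonicity and multiplicativity give $p\log_2(q+1) < \exp_2^{K-1}(N^{L_1+L_2})$; exponentiating base $2$ then gives $(q+1)^p < \exp_2^{K}(N^{L})$ with $L=L_1+L_2$.

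For the comparison bound I would run a second induction, this time tracking two mutually recursive cost functions: $C(\atype)$, the number of $\B$-comparisons needed to decide $e \sqsupseteq u$, and an auxiliary $E(\atype)$ for deciding equality of two type-$\atype$ extensional values (needed to locate, inside a partial function $A$, the pair whose first coordinate matches a given argument). Sorts give $C=E=1$; products add the costs of the components. For $\atype=\sigma\arrtype\tau$, testing $A \sqsupseteq B$ scans the at most $\Card(\pinterpret{\sigma})$ pairs of $B$, for each one locating the matching pair of $A$ (at most $\Card(\pinterpret{\sigma})$ equality tests at type $\sigma$, of cost $E(\sigma)$ each) and performing one $\sqsupseteq$ test at type $\tau$; this yields a recurrence dominated by $\Card(\pinterpret{\sigma})^2\cdot E(\sigma) + \Card(\pinterpret{\sigma})\cdot C(\tau)$, and similarly for $E$. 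Bounding every cardinality by the first part and feeding these into the induction, the target exponent $(L+1)^3$ is exactly the slack that absorbs the polynomial overhead: the squaring of $\Card(\pinterpret{\sigma})$ costs one extra exponential level, which is available because the domain contributes $K_\sigma+1 \le K$, while enlarging the length argument from the subtype lengths to $(L+1)^3$ dominates the doubled and summed length exponents (using $(L_i+1)^3 \le (L+1)^3-1$ and $2\exp_2^{K}(N^{m-1}) \le \exp_2^{K}(N^{m})$ for $N\ge 2$).

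I expect the main obstacle to be organisational rather than conceptual: getting the level arithmetic of the arrow case exactly right---recognising that only the \emph{domain} contributes the extra $+1$ to the order while the codomain does not, which is precisely why the logarithm estimate must split at $K_2=0$---and then, for the second part, verifying that the generous cubic exponent $(L+1)^3$ genuinely dominates the accumulated polynomial factors (squares from pairwise matching, the extra exponential level from the cardinalities) through the mutual recursion. Both reduce to repeated application of the two monotonicity facts established at the outset.
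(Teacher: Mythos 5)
Your proposal is correct and follows essentially the same route as the paper's proof: the same multiplicativity fact $\exp_2^K(X)\cdot\exp_2^K(Y)\le\exp_2^K(X\cdot Y)$, the same count $(q+1)^p$ of partial functions in the arrow case, and the same cubic-slack bookkeeping for the $\sqsupseteq$-test, where the paper's single cost function (defined as the maximum of the $\sqsupseteq$- and equality-test costs) plays the role of your mutually recursive pair $C,E$. Your logarithm-based phrasing of the arrow case and the explicit split at $K_2=0$ are only cosmetic variants of the paper's direct calculation (and note that squaring the domain cardinality is absorbed by doubling the length exponent via multiplicativity rather than by an extra exponential level, though your length-slack estimates close the recurrence regardless).
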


\begin{proof}[Sketch]
An easy induction on the form of $\atype$, using that
$\exp_2^K(X) \cdot \exp_2^K(Y) \leq \exp_2^K(X \cdot Y)$ for $X \geq
2$, and that for $A_{\atype_1 \arrtype \atype_2}$, each key $e \in
\pinterpret{\atype_1}$ is assigned one of
$\Card(\pinterpret{\atype_2})+1$ choices: an element $u$ of
$\pinterpret{\atype_2}$ such that $(e,u) \in A$, or non-membership.
The second part (regarding $\sqsupseteq$) uses the first.

(See Appendix~\ref{app:complexity} for the complete proof.)
\qed
\end{proof}

We will postpone showing correctness of the algorithm until
Section~\ref{subsec:proof}, where we can show the result together
with the one for non-deterministic programs.  Assuming correctness
for now, we may conclude:

\begin{lemma}\label{lem:deterministic:algorithm}
Every decision problem accepted by a deterministic cons-free program
$\prog$ with data order $K$ is in $\exptime{K}$.
\end{lemma}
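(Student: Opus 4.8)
The plan is to combine Algorithm~\ref{alg:base} with the two cardinality/complexity lemmas, deferring the correctness of the algorithm to Section~\ref{subsec:proof} as the excerpt allows. Let $\prog$ be a deterministic cons-free program with data order $K$ that accepts some decision problem $X$, so $\identifier{f}_1 : \bits \arrtype \bool$. Given an input string of length $n$, encoded as a data expression $cs$ of size $\OO(n)$, the set $\B := \B^\prog_{cs}$ has cardinality linear in $n$ (as observed after the definition of $\B^\prog_{d_1,\dots,d_M}$), so there is a bound $N = \OO(n)$ with $1 \leq \Card(\B) < N$. Running Algorithm~\ref{alg:base} on $cs$ returns, by the (assumed) correctness result, exactly the set of $b$ with $\progresult$; in particular $cs$ is accepted if{f} $\strue$ lies in the returned set. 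Thus it suffices to bound the running time of the algorithm by $\exp_2^K(a\cdot n^b)$ for suitable $a,b$.

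For the time bound I would instantiate Lemma~\ref{lem:complexitycore}. Let $\Sigma$ be the finite set of types of order $\leq K$ occurring as argument or output types of defined symbols in $\eprog$; since $\prog$ is fixed, $\Sigma$ is fixed and there is a maximal type-length $L$ over $\Sigma$, again independent of the input. By Lemma~\ref{lem:pinterpretcard}, for every $\atype \in \Sigma$ we have $\Card(\pinterpret{\atype}) < \exp_2^K(N^L)$, and testing $e \sqsupseteq u$ takes at most $\exp_2^K(N^{(L+1)^3})$ comparisons between elements of $\B$, each comparison being $\OO(1)$ for fixed $\prog$. Hence both quantities are bounded by $F(n) := \exp_2^K(N^{(L+1)^3})$. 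Since $N = \OO(n)$, we may absorb constants to write $N^{(L+1)^3} \leq a' \cdot n^{b'}$ for some $a',b'$ depending only on $\prog$, so $F(n) \leq \exp_2^K(a' \cdot n^{b'})$.

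Now Lemma~\ref{lem:complexitycore} yields that Algorithm~\ref{alg:base} runs in $\timecomp{a \cdot F(n)^b}$ for some $a,b$. It remains to check that $a \cdot F(n)^b$ stays within the $\exptime{K}$ envelope. Using $F(n)^b = \left(\exp_2^K(a'\cdot n^{b'})\right)^b$ together with the standard absorption property $\exp_2^K(X)^b \leq \exp_2^K(b\cdot X)$ for $X \geq 2$ (an instance of the $\exp_2^K(X)\cdot\exp_2^K(Y) \leq \exp_2^K(X\cdot Y)$ inequality used in Lemma~\ref{lem:pinterpretcard}), we get $a\cdot F(n)^b \leq \exp_2^K(a'' \cdot n^{b''})$ for some $a'',b''$. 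Therefore $X \in \exptime{K}$. The algorithm being deterministic (case~\ref{alg:iterate:choice} never fires for a deterministic program), this is a genuine deterministic-time bound, completing the argument.

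The one genuinely delicate point is the interface rather than any single calculation: I must ensure the quantities bounded in Lemma~\ref{lem:complexitycore} (cardinality of $\pinterpret{\atype}$ and cost of a $\sqsupseteq$-test) really are the same $F$ furnished by Lemma~\ref{lem:pinterpretcard}, and that the exponent length $L$ is taken over $\Sigma$ rather than over arbitrary types appearing mid-computation. The transformation to $\eprog$ via Lemma~\ref{lem:proper} is precisely what guarantees that every extensional value arising in the run has a type of order $\leq K$ drawn from this fixed finite $\Sigma$, so the tower height in $\exp_2^K$ never exceeds $K$; verifying that no higher-order intermediate type sneaks in is the main thing to get right. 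The rest is bookkeeping with the monotonicity of $\exp_2^K$.
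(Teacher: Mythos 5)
Your proposal is correct and takes essentially the same route as the paper: correctness of Algorithm~\ref{alg:base} is deferred to Lemma~\ref{lem:basecorrectness}, and the time bound is obtained by feeding the cardinality/comparison bounds of Lemma~\ref{lem:pinterpretcard} (over the fixed finite set $\Sigma$ produced by the Lemma~\ref{lem:proper} transformation) into Lemma~\ref{lem:complexitycore}, then absorbing the polynomial composition into $\exp_2^K(a''\cdot n^{b''})$. One small slip in the bookkeeping: the inequality $\exp_2^K(X)^b \leq \exp_2^K(b\cdot X)$ is false already at $K=0$ (take $b=2$, $X$ large); what you need --- and what repeated application of the inequality $\exp_2^K(X)\cdot\exp_2^K(Y) \leq \exp_2^K(X\cdot Y)$ that you cite actually gives --- is $\exp_2^K(X)^b \leq \exp_2^K(X^b)$, which with $X = a'\cdot n^{b'}$ still lands in $\exp_2^K(a''\cdot n^{b''})$ and so does not affect the conclusion.
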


\begin{proof}
We will see in Lemma~\ref{lem:basecorrectness} in
Section~\ref{subsec:proof} that $\progresult$ if and only if
Algorithm~\ref{alg:base} returns the set $\{b\}$.  For a program
of data order $K$, Lemmas~\ref{lem:complexitycore}
and~\ref{lem:pinterpretcard} together give that
Algorithm~\ref{alg:base} operates in $\timecomp{\exp_2^K(n)}$.
\qed
\end{proof}

\begin{theorem}\label{thm:deterministic}
The class of deterministic cons-free programs with data order $K$
characterises $\exptime{K}$ for all $K \in \N$.
\end{theorem}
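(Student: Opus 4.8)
The plan is to prove the theorem by combining the two directions established in this section, which together show that deterministic cons-free programs of data order $K$ capture exactly $\exptime{K}$.

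\textbf{Inclusion $\exptime{K} \subseteq$ (programs of data order $K$).} First I would invoke Lemma~\ref{lem:deterministic:simulate}, which already states that every decision problem in $\exptime{K}$ is accepted by some deterministic cons-free program of data order $K$. This is the ``simulation'' direction, and the work has all been done: the counting modules of Lemmas~\ref{lem:module:pol} and~\ref{lem:module:exp} provide representations of numbers up to $\exp_2^K(a\cdot(n+1)^b)$ at data order $K$, and plugging these into the Turing-machine simulator of Figure~\ref{fig:machine} (justified by Lemma~\ref{lem:counting}) yields the required program. So for this half I would simply cite Lemma~\ref{lem:deterministic:simulate}.

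\textbf{Inclusion (programs of data order $K$) $\subseteq \exptime{K}$.} For the converse I would appeal to Lemma~\ref{lem:deterministic:algorithm}, which states that every decision problem accepted by a deterministic cons-free program of data order $K$ lies in $\exptime{K}$. This rests on Algorithm~\ref{alg:base} together with the complexity bounds of Lemmas~\ref{lem:complexitycore} and~\ref{lem:pinterpretcard}: the set of deterministic extensional values $\pinterpret{\atype}$ for types of order at most $K$ has cardinality bounded by a height-$K$ tower of exponentials in the input size, and the algorithm runs polynomially in this cardinality, giving an overall $\timecomp{\exp_2^K(a\cdot n^b)}$ bound. Since this is exactly the shape of a $\exptime{K}$ time bound, the accepted problem is in $\exptime{K}$.

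\textbf{Assembly.} Combining the two inclusions gives that the class of problems accepted by deterministic cons-free programs of data order $K$ is precisely $\exptime{K}$, which is the claimed characterisation. The main conceptual obstacle in the whole development is \emph{not} in this final assembly step, which is essentially a one-line combination of the two lemmas, but rather in the two ingredients it relies on: the exponential counting module of Lemma~\ref{lem:module:exp} (getting a genuine $(\lambda n.2^{P(n)})$-counting module out of raising the type order by one, while staying cons-free) and the correctness of Algorithm~\ref{alg:base}, whose proof is deferred to Section~\ref{subsec:proof} (Lemma~\ref{lem:basecorrectness}). I would therefore make explicit in the proof that correctness of the algorithm is being assumed here and established later, exactly as signalled before Lemma~\ref{lem:deterministic:algorithm}.
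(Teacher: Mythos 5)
Your proposal is correct and matches the paper's own proof exactly: Theorem~\ref{thm:deterministic} is proved there as a one-line combination of Lemma~\ref{lem:deterministic:simulate} and Lemma~\ref{lem:deterministic:algorithm}. Your additional remarks about what underpins each lemma (the counting modules, Algorithm~\ref{alg:base} with the bounds of Lemmas~\ref{lem:complexitycore} and~\ref{lem:pinterpretcard}, and the deferred correctness in Lemma~\ref{lem:basecorrectness}) are accurate and consistent with the paper's structure.
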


\begin{proof}
A combination of Lemmas~\ref{lem:deterministic:simulate}
and~\ref{lem:deterministic:algorithm}.
\qed
\end{proof}

\section{Non-deterministic characterisations}\label{sec:elementary}

A natural question is what happens if we do not limit interest to
deterministic programs.  For data order $0$,
Bonfante~\cite{bon:06} shows that adding the choice operator to
Jones' language does not increase expressivity.   We will recover
this result for our generalised language in
Section~\ref{sec:nopartialvar}.
However, in the higher-order setting, non-deterministic choice
\emph{does} increase expressivity---dramatically so.  We have:

\smallskip\noindent
\begin{tabular}{c|c|c|c|c|c}
& \textbf{data order 0} &
\textbf{data order 1} &
\textbf{data order 2} &
\textbf{data order 3} & \dots
\\
\cline{1-6}
\vphantom{$2^{x^y}$}
\textbf{cons-free} &
$\pclass$ &
$\elementary$ &
$\elementary$ &
$\elementary$ & \dots
\\
\cline{1-6}
\end{tabular}

\medskip
As before, we will show the result---for data orders $1$ and
above---in two parts:
in Section~\ref{subsec:elementary:counting} we see that cons-free
programs of data order $1$ suffice to accept all problems in
$\elementary$; in Section~\ref{subsec:elementary:algorithm} we see
that they cannot go beyond.

\subsection{Simulating TMs using (non-deterministic)
cons-free programs}
\label{subsec:elementary:counting}

We start by showing how Turing Machines in $\elementary$ can be
simulated by non-deterministic cons-free programs.  For this, we
reuse the core simulation from Figure~\ref{fig:machine}.  The
reason for the jump in expressivity lies in
Lemma~\ref{lem:counting}: by taking advantage of non-determinism,
we can count up to arbitrarily high numbers.

\begin{lemma}\label{lem:nondetmodule}
If there is a $P$-counting module $C_\pi$ with data order $K \leq 1$,
there is a (non-deterministic) $(\lambda n.2^{P(n)-1})$-counting
module $C_{\er}$ with data order $1$.
\end{lemma}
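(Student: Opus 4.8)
The plan is to reproduce the deterministic exponential step of Lemma~\ref{lem:module:exp} at the level of the \emph{arithmetic}, while replacing its function-space representation of bit-vectors by one that exploits non-determinism and so stays at data order $1$ instead of climbing to $K+1$. A number $m \in \{0,\dots,2^{P(n)-1}-1\}$ is identified with its binary expansion $b_0\cdots b_{P(n)-2}$, whose $P(n)-1$ bit positions are exactly the inner numbers $0,\dots,P(n)-2$ supplied by $C_\pi$. I would therefore take $\numtype[\er]$ to be an order-$1$ type, let $\A_{\er}^n$ be the set of values encoding such bit-vectors, and set $\numinterpret{v}_{\er}^n$ to be the integer whose bits are those recorded by $v$. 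The decisive difference from Lemma~\ref{lem:module:exp}---where the representation $\numtype\arrtype\bool$ forces order $K+1$---is that under the non-deterministic semantics a functional expression denotes a \emph{relation}; this is precisely the ``relations rather than functions'' phenomenon of the introduction, and it is what lets the $\approx P(n)$ bits be carried by an order-$1$ value, hence what ultimately makes the construction iterable at fixed order $1$.

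With the representation fixed, I would define $\prog_{\er}$ by direct analogy with Figure~\ref{fig:epi}, including all clauses of $\prog_\pi$ together with the derived $\symb{equal}_\pi$ and $\symb{succ}_\pi$, and using the inner $\seed$, $\pred$ and $\zero$ to walk through the positions $P(n)-2,\dots,0$. As there, $\seed[\er]$ yields the all-ones vector (representing $2^{P(n)-1}-1$); $\zero[\er]$ scans the positions and returns $\strue$ iff every bit is $0$; and $\pred[\er]$ performs binary decrement, using that the predecessor of $b_0\cdots b_i 1 0\cdots 0$ is $b_0\cdots b_i 0 1\cdots 1$, so the trailing zeros become ones and the last one becomes zero. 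The $\choice$ operator is used exactly where a bit of the relational representation must be consulted or rewritten, and the only bookkeeping beyond Figure~\ref{fig:epi} is the off-by-one in the position range (one fewer bit than in the deterministic step), absorbed by reserving one inner value as the top of the range. Two obligations are then routine: cons-freeness is immediate, since every right-hand side only applies defined symbols to sub-expressions of the left-hand side and to constant data; and the data-order claim reduces to checking that no clause forces a variable to a type of order exceeding $1$---this is where the relational encoding, rather than $\numtype\arrtype\bool$, is essential, and it must be verified for both $K=0$ and $K=1$.

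The main obstacle, and the real content of the proof, is the bundle of \emph{semantic} conditions in the definition of a counting module: that $\seed[\er]\ cs$ and $\pred[\er]\ cs\ v$ each evaluate to a \emph{unique} value, that $\numinterpret{\cdot}_{\er}^n$ decreases correctly under $\pred[\er]$, and that $\zero[\er]$ decides zero correctly---all \emph{in spite of} the deliberate use of $\choice$. The tension is intrinsic: non-determinism supplies the representational power, yet the interface demands that $\seed[\er]$ and $\pred[\er]$ behave functionally. I would resolve it by isolating an invariant characterising the \emph{well-formed} representations in $\A_{\er}^n$---that their relational content is single-valued on exactly the positions $0,\dots,P(n)-2$---and proving that $\seed[\er]$ produces a well-formed value and that $\pred[\er]$ preserves well-formedness. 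On well-formed inputs every bit-query has a determined outcome, so the apparent branching never actually forks and the required uniqueness follows; correctness of the decrement bookkeeping then yields the interpretation property, and the same single-valuedness yields correctness of $\zero[\er]$. Lemma~\ref{lem:safetysimple} is invoked throughout to guarantee that no data outside $\B$ is ever produced. Assembling the syntactic and semantic parts gives the desired order-$1$ $(\lambda n.2^{P(n)-1})$-counting module $C_{\er}$.
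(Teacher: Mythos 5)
There is a genuine gap: your proposal assumes the existence of the encoding rather than exhibiting it, and the heart of this lemma \emph{is} the encoding. You say you would ``take $\numtype[\er]$ to be an order-$1$ type'' and let $\A_{\er}^n$ be ``the set of values encoding such bit-vectors,'' but never say what the values are or how a bit is read or written cons-freely. The paper's construction is concrete: $\numtype[\er] := \bool \arrtype \numtype$, and a value $v$ represents the bitstring $b_1\dots b_{P(n)-1}$ where $b_i = 1$ iff $\app{v}{\strue}$ \emph{can} evaluate to a representation of $i$ in $C_\pi$, and $b_i = 0$ iff $\app{v}{\sfalse}$ can (with the inner value $0$ reserved as a default reachable from \emph{both} $\app{v}{\strue}$ and $\app{v}{\sfalse}$, so that both applications always have at least one derivable value and queries cannot get stuck). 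Writing is done by the closure-builders $\symb{st1}_{\er}/\symb{st0}_{\er}/\symb{base}_{\er}$, and reading by the loop $\symb{bitset}_{\er}$, which repeatedly samples $\app{F}{\strue}$ and $\app{F}{\sfalse}$ and compares against the queried position $i$, retrying (possibly forever) when neither sample equals $i$. Your ``reserving one inner value as the top of the range'' to absorb the off-by-one does not serve the totality role that the paper's shared default at $0$ serves, and without some such default the semantic conditions (existence of a value for $\zero[\er]\ cs\ v$, etc.) can fail.

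Relatedly, your stated mechanism for discharging the uniqueness obligations---``on well-formed inputs every bit-query has a determined outcome, so the apparent branching never actually forks''---is in tension with the representational premise you invoke. If evaluation of a well-formed $v$ never forked, $v$ would denote a partial \emph{function} from $\bool$ to $\numtype$, of which there are only $O(P(n)^2)$ many, far short of the $2^{P(n)-1}$ values needed; the capacity comes precisely from $\app{v}{\strue}$ having \emph{many} possible outcomes (all positions whose bit is $1$). In the paper, evaluation genuinely forks; what is unique is the result of $\symb{bitset}_{\er}$, because the well-formedness invariant (each $1 \leq i < P(n)$ is producible from exactly one of the two sides) guarantees that \emph{all terminating runs of the retry loop agree}, while failing branches simply diverge---the paper explicitly accepts a non-terminating program here. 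This distinction (``all terminating derivations agree'' versus ``no forking'') is exactly what one must use to verify the counting-module conditions for $\seed[\er]$, $\pred[\er]$ and $\zero[\er]$, since these build their output closures deterministically from bit-queries whose derivable results are unique even though their evaluations are not. Your invariant is the right shape, but without the relational $\bool \arrtype \numtype$ encoding, the both-sides default, and the sample-and-retry bit test, the uniqueness argument as you state it does not go through.
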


\begin{proof}
We let $\numtype[\er] := \bool \arrtype \numtype$ (which has type
order $\max(1,\typeorder{\numtype})$), and:
\begin{itemize}
\item $\A_{\er}^n :=$ the set of those values $v : \numtype[\er]$
  such that:
  \begin{itemize}
  \item there is $w \in \A_\pi$ with $\numinterpret{w}_\pi^n = 0$
    such that $\prog_{\er} \vdashcall \app{v}{\strue} \arrr w$;
  \item there is $w \in \A_\pi$ with $\numinterpret{w}_\pi^n = 0$
    such that $\prog_{\er} \vdashcall \app{v}{\sfalse} \arrr w$;
  \end{itemize}
  and for all $1 \leq i < P(n)$ exactly one of the following holds:
  \begin{itemize}
  \item there is $w \in \A_\pi^n$ with $\numinterpret{w}_\pi^n = i$
    such that $\prog_{\er} \vdashcall \app{v}{\strue} \arrr w$;
  \item there is $w \in \A_\pi^n$ with $\numinterpret{w}_\pi^n = i$
    such that $\prog_{\er} \vdashcall \app{v}{\sfalse} \arrr w$;
  \end{itemize}
  We will say that $\app{v}{\strue} \mapsto i$ or
  $\app{v}{\sfalse} \mapsto i$ respectively.
\item $\numinterpret{v}_{\er}^n := \sum_{i=1}^{P(n)-1} \{
    2^{P(n)-1-i} \mid \app{v}{\strue} \mapsto i \}$;
\item $\prog_{\er}$ be given by
  Figure~\ref{fig:nondetcount} appended to $\prog_\pi$, and
  $\Defineds_{\er}$ by the symbols in $\prog_{\er}$.
\end{itemize}
So, we interpret a value $v$ as the number given by the bitstring
$b_1 \dots b_{P(n)-1}$ (most significant digit first), where
$b_i$ is $1$ if $\app{v}{\strue}$ evaluates to a value representing
$i$ in $C_\pi$, and $b_i$ is $0$ otherwise---so exactly if
$\app{v}{\sfalse}$ evaluates to such a value.
\qed
\end{proof}

\begin{figure}[!htb]
\vspace{-12pt}
-- core elements; $\symb{st}i\ n\ F$ sets bit $n$ in $F$ to the
value $i$ \\
$\symb{base}_{\er}\ x\ b = x$ \\
\begin{minipage}{0.5\linewidth}
$\symb{st1}_{\er}\ n\ F\ \strue = \choice\ n\ (F\ \strue)$ \\
$\symb{st1}_{\er}\ n\ F\ \sfalse = F\ \sfalse$ \\
\end{minipage}
\begin{minipage}{0.5\linewidth}
$\symb{st0}_{\er}\ n\ F\ \strue = \app{F}{\strue}$ \\
$\symb{st0}_{\er}\ n\ F\ \sfalse = \choice\ n\ (F\ \sfalse)$ \\
\end{minipage}
\vspace{-9pt}

-- testing bit values (using non-determinism and non-termination) \\
$\symb{bitset}_{\er}\ cs\ F\ i = \ifte{\:\symb{equal}_\pi\ cs\ (F\
  \strue)\ i\:}{\:\strue
  \\\phantom{\symb{bitset}_{\er}\ cs\ F\ i =\,}}{
  \ifte{\:\symb{equal}_\pi\ cs\ (F\ \sfalse)\ i\:}{\:\sfalse
  \\\phantom{\symb{bitset}_{\er}\ cs\ F\ i =\,}}{\:
  \symb{bitset}_{\er}\ cs\ F\ i}}$

-- the seed function \\
$\symb{nul}_\pi\ cs = \symb{nul'}_\pi\ cs\ (\seed\ cs)$ \\
$\symb{nul'}_\pi\ cs\ n = \ifte{\:\zero\ cs\ n\:}{\:n\:}{\:
  \symb{nul'}_\pi\ cs\ (\pred\ cs\ n)}$ \\
$\seed[\er]\ cs = \symb{seed'}_{\er}\ cs\ (\seed\ cs)\
  (\symb{base}_{\er}\ (\symb{nul}_\pi\ cs))$ \\
$\symb{seed'}_{\er}\ cs\ i\ F = \ifte{\:\zero\ cs\ i\:}{\:F\:}{\:
  \symb{seed'}_{\er}\ cs\ (\pred\ cs\ i)\ (\symb{st1}_{\er}\ i\ F)}$

-- the zero test \\
$\zero[\er]\ cs\ F = \symb{zero'}_{\er}\ cs\ F\ (\seed\ cs)$ \\
$\symb{zero'}_{\er}\ cs\ F\ i = \ifte{\:\zero\ i\:}{\:\strue
  \\\phantom{\symb{zero'}_{\er}\ cs\ F\ i =\,}
  }{\ifte{\:\symb{bitset}_{\er}\ cs\ F\ i\:}{\:\sfalse
  \\\phantom{\symb{zero'}_{\er}\ cs\ F\ i =\,}
  }{\:\symb{zero'}_{\er}\ cs\ F\ (\pred\ cs\ i)}}$

-- the predecessor \\
$\pred[\er]\ cs\ F = \symb{pr}_{\er}\ cs\ F\ (\seed\ cs)\
  (\symb{base}_{\er}\ (\symb{nul}_\pi\ cs))$ \\
$\symb{pr}_{\er}\ cs\ F\ i\ G =
  \ifte{\:\symb{bitset}_{\er}\ cs\ F\ i\:}{\:
  \symb{cp}_{\er}\ cs\ F\ (\pred\ cs\ i)\ (\symb{st0}_{\er}\ i\ G)
  \\\phantom{\symb{pr}_{\er}\ cs\ F\ i\ G =\ }
  }{\:\symb{pr}_{\er}\ cs\ F\ (\pred\ cs\ i)\ (\symb{st1}_{\er}\ i\ G)}$ \\
$\symb{cp}\ cs\ F\ i\ G = \ifte{\zero\ cs\ i\:}{\:G
  \\\phantom{\symb{cp}\ cs\ F\ i\ G =\,}}{
  \ifte{\:\symb{bitset}_{\er}\ cs\ F\ i\:}{\:
  \symb{cp}_{\er}\ cs\ F\ (\pred\ cs\ i)\ (\symb{st1}_{\er}\ i\ G)
  \\\phantom{\symb{cp}\ cs\ F\ i\ G =\,}
  }{\:\symb{cp}_{\er}\ cs\ F\ (\pred\ cs\ i)\ (\symb{st0}_{\er}\ i\ G)}
  }$
\caption{Clauses for the counting module $C_{\er}$.}
\vspace{-12pt}
\label{fig:nondetcount}
\end{figure}

To understand the counting program, consider $4$, with
bit representation $100$.  If $0,1,2,3$ are represented
in $C_\pi$ by values $O,w_1,w_2,w_3$ respectively, then
in $C_{\er}$, the number $4$ corresponds for example to $Q$:
\[
\symb{st1}\ w_1\ (\symb{st0}\ w_2\ (\symb{st0}\ w_3\ 
(\symb{base}_{\er}\ O)))
\]
The null-value $O$ functions as a default, and is a possible value
of both $\app{Q}{\strue}$ and $\app{Q}{\sfalse}$ for any function $Q$
representing a bitstring.

The non-determinism comes into play when determining whether
$\app{Q}{\strue} \mapsto i$ or not: we can evaluate $\app{F}{\strue}$ to
\emph{some} value, but this may not be the value we\linebreak need.
Therefore, we find some value of both $\app{F}{
\strue}$ and $\app{F}{\sfalse}$; if either represents $i$ in $C_\pi$, then
we have confirmed or rejected that $b_i = 1$.  If both
evaluations give a different value, we repeat the test.  This
gives a non-terminating program, but there is always exactly one
value $b$ such that $\prog_{\er} \vdashcall
\symb{bitset}_{\er[\pi]}\ cs\ F\ i \arrr b$.

The $\seed[\er]$ function generates the bit string $1\dots 1$, so the
function $F$ with $\app{F}{\strue} \mapsto i$ for all $i \in \{0,
\dots,P(n)-1\}$ and $\app{F}{\sfalse} \mapsto i$ for only $i = 0$.
The $\zero[\er]$ function iterates through $b_{P(n)-1},b_{P(n)-2},
\dots,b_1$ and tests whether all bits are set to $0$.  The clauses
for $\pred[\er]$ assume given a bitstring $b_1\dots b_{i-1}10\dots0$,
and recursively build $b_1 \dots b_{i-1}01\cdots 1$ in the parameter
$G$.

\begin{example}\label{ex:er}
Consider an input string of length 3, say $\sfalse\cons\sfalse\cons
\strue\cons\nil$.
Recall from Lemma~\ref{lem:module:pol} that there is a
$(\lambda n.n+1)$-counting module $C_{\linear}$ representing
$i \in \{0,\dots,3\}$ as suffixes of length $i$ from the input string.
Therefore, there is also a second-order $(\lambda n.2^n)
$-counting module $C_{\er[\linear]}$ representing $i \in \{0,\dots,
7\}$.
The number $6$---with bitstring $110$---is represented
by the value $w_6$:
\[
\begin{array}{c}
w_6 = \symb{st1}_{\er[\linear]}\ (\strue\cons\nil)\ (\ 
\symb{st1}_{\er[\linear]}\ (\sfalse\cons\strue\cons\nil)\ ( \\
\symb{st0}_{\er[\linear]}\ (\sfalse\cons\sfalse\cons\strue\cons\nil)\ 
(\ \symb{cons}_{\er[\linear]}\ \nil\ )\ )\ ) : \bool \arrtype \bits
\end{array}
\]
But then there is also a $(\lambda n.2^{2^n-1})$-counting module
$C_\eelin$, representing $i \in \{0,\dots,2^7-1\}$.  For example
97---with bit vector 1100001---is represented by:
\[
\begin{array}{c}
S = \symb{st1}_\eelin\ w_1\ (\ \symb{st1}_\eelin\ w_2\ 
(\ \symb{st0}_\eelin\ w_3\ (\\
\phantom{S =}\ \symb{st0}_\eelin\ w_4\ 
(\ \symb{st0}_\eelin\ w_5\ (\ \symb{st0}_\eelin\ w_6\ (\\
  \symb{st1}_\eelin\ w_7\ 
  (\ \symb{cons}_\eelin\ w_7\ )\ )\ )\ )\ )\ )\ )
\end{array}
\]
Here $\symb{st1}_\eelin$ and $\symb{st0}_\eelin$ have the type
$(\bool \arrtype \bits)
\arrtype (\bool \arrtype \bool \arrtype \bits) \arrtype \bool \arrtype
\bool \arrtype \bits$
and each $w_i$ 
represents $i$ in
$C_{\er[\linear]}$, as shown for $w_6$ above.
Note: $S\ \strue \mapsto w_1,w_2,w_7$ and $S\ \sfalse \mapsto
w_3,w_4,w_5,w_6$.
\end{example}

Since $2^{2^m-1}-1 \geq 2^m$ for all $m \geq 2$, we can count up to
arbitrarily high bounds using this module.  Thus, already with data
order $1$, we can simulate Turing Machines operating in
$\timecomp{\exp_2^K(n)}$ for any $K$.

\begin{lemma}\label{lem:nondetelementary}
Every decision problem in $\elementary$ is accepted by a
non-deterministic cons-free program with data order $1$.
\end{lemma}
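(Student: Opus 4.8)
The plan is to build, for an arbitrary problem in $\elementary$, a counting module of data order $1$ whose bound is high enough to drive the Turing-machine simulation of Figure~\ref{fig:machine} through Lemma~\ref{lem:counting}; the module is obtained by iterating the non-deterministic exponential step of Lemma~\ref{lem:nondetmodule} on top of the polynomial module of Lemma~\ref{lem:module:pol}. A problem in $\elementary$ lies in some $\exptime{K}$, hence is decided by a deterministic TM running in time $\exp_2^K(a' \cdot n^{b'})$ for suitable $a',b'$. By Lemma~\ref{lem:counting} it therefore suffices to exhibit a $Q$-counting module of data order $1$ with $Q(n) \geq \exp_2^K(a' \cdot n^{b'}) + 1$.

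I would start from the $(\lambda n.\,a(n+1)^b)$-counting module $C_\pol$ of Lemma~\ref{lem:module:pol}, which has data order $0$ and represents the numbers $0,\dots,a(n+1)^b - 1$. Choosing $b := b'$ and $a := a'+2$ guarantees that its largest representable value $m := a(n+1)^b - 1$ satisfies $m \geq a'\,n^{b'}$ and $m \geq 2$ for all $n \geq 1$. I would then apply Lemma~\ref{lem:nondetmodule} a total of $2K$ times. Each application turns a module representing $0,\dots,Q$ into one representing $0,\dots,2^{Q}-1$ (a $P$-counting module represents $0,\dots,P(n)-1$, and the lemma yields a $(\lambda n.\,2^{P(n)-1})$-module), and it preserves data order $1$ since it applies whenever its input module has data order $\leq 1$ — in particular it lifts the order-$0$ module $C_\pol$ to order $1$ on the first step and stays there afterwards. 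Writing $g(x) = 2^{x}-1$, the largest number representable after the $2K$ steps is thus $g^{2K}(m)$.

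The crux, and the step I expect to be the main obstacle, is the arithmetic showing that $g^{2K}(m)$ still dominates the clean tower $\exp_2^K(m)$: a single application of Lemma~\ref{lem:nondetmodule} only produces $2^{Q}-1$, losing a bit per level, so the iterations must be batched in pairs. I would prove by induction on $j$ that $g^{2j}(m) \geq \exp_2^j(m)$ for every $m \geq 2$. The base case $j=0$ is an equality, and the step uses the inequality $g(g(x)) = 2^{2^{x}-1}-1 \geq 2^{x}$ (valid for $x \geq 2$, exactly the estimate in the remark preceding the lemma), applied to $x = g^{2j}(m) \geq \exp_2^j(m) \geq m \geq 2$, giving $g^{2(j+1)}(m) \geq 2^{g^{2j}(m)} \geq 2^{\exp_2^j(m)} = \exp_2^{j+1}(m)$. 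Hence $g^{2K}(m) \geq \exp_2^K(m) \geq \exp_2^K(a'\,n^{b'})$ by monotonicity of $\exp_2^K$ and the choice of $m$.

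Assembling the pieces, the module obtained after $2K$ applications is a $Q$-counting module of data order $1$ with $Q(n) = g^{2K}(m) + 1 \geq \exp_2^K(a'\,n^{b'}) + 1$, which is at least the bound required for the simulation; by Lemma~\ref{lem:counting} this yields a cons-free program of data order $1$ accepting the problem, and since the module uses $\choice$ the program is non-deterministic. The degenerate case $K = 0$ (the class $\pclass$) needs no iteration at all: with $2K = 0$, the module $C_\pol$ itself already has $Q(n) = m + 1 \geq a'\,n^{b'} + 1$ and data order $0 \leq 1$, exactly as in Lemma~\ref{lem:deterministic:simulate}, so it falls out of the same construction.
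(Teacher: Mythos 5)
Your proposal is correct and takes essentially the same route as the paper: iterate the non-deterministic exponentiation step of Lemma~\ref{lem:nondetmodule} on top of the polynomial module of Lemma~\ref{lem:module:pol} and feed the resulting $Q$-counting module to Lemma~\ref{lem:counting}, with your ``batched pairs'' inequality $2^{2^x-1}-1 \geq 2^x$ (for $x \geq 2$) being precisely the estimate the paper states in the remark preceding the lemma. Your write-up simply makes explicit the bookkeeping (the choice of $a,b$, the $2K$-fold iteration preserving data order $1$, and the induction $g^{2j}(m) \geq \exp_2^j(m)$) that the paper's short proof leaves implicit.
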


\begin{proof}
A decision problem is in $\elementary$ if it is in some $\exptime{K}$
which, by Lemma~\ref{lem:counting}, is certainly the case if for any
$a,b$ there is a $Q$-counting module with $Q \geq \lambda n.\exp_2^K(
a \cdot n^b)$.  Such a module exists for data order $1$ by
Lemma~\ref{lem:nondetmodule}.
\qed
\end{proof}

\subsection{Simulating cons-free programs using an algorithm}
\label{subsec:elementary:algorithm}

Towards a characterisation, we must also see that every
decision problem accepted by a cons-free program is in $\elementary$%
---so that the result of every such program can be found by an
algorithm operating in $\timecomp{\exp_2^K(a \cdot n^b)}$
for some $a,b,K$.
We can reuse Algorithm~\ref{alg:base} by altering the definition of
$\pinterpret{\atype}$.

\begin{definition}\label{def:interpret}
Let $\B$ be a set of data expressions closed under $\supterm$.
For $\asort \in \Sorts$, let $\interpret{\asort} = \{ d \in \B \mid\ 
\vdash d : \asort \}$.  Inductively, define $\interpret{\atype \times
\btype} = \interpret{\atype} \times \interpret{\btype}$ and
$\interpret{\atype \arrtype \btype} = \{ A_{\atype \arrtype \btype}
\mid A \subseteq \interpret{\atype} \times \interpret{\btype} \}$.
We call the elements of any $\interpret{\atype}$
\emph{non-deterministic extensional values}.
\end{definition}

Where the elements of $\pinterpret{\atype \arrtype \btype}$ are
partial functions, 
$\interpret{\atype \arrtype \btype}$ contains arbitrary
relations: a value $v$ is associated to
a set of pairs $(e,u)$ such that $v\ e$ \emph{might} evaluate to $u$.
The notions of extensional expression, $e(u_1,\dots,u_n)$ and
$\sqsupseteq$ immediately extend to non-deterministic extensional
values.
Thus we can define:

\begin{algorithm}\label{alg:general}
Let $\prog$ be a fixed, non-deterministic cons-free program, with
$\identifier{f}_1 : \asortorpair_1 \arrtype \dots \arrtype
\asortorpair_M \arrtype \asortorpair \in \F$.

{\bf Input:} data expressions $d_1 : \asortorpair_1,\dots,d_M :
\asortorpair_M$.

{\bf Output:} The set of values $b$ with $\progresult$.

Execute Algorithm~\ref{alg:base}, but using $\interpret{
\atype}$ in place of $\pinterpret{\atype}$.
\end{algorithm}

In Section~\ref{subsec:proof}, we will see that indeed
$\progresult$ if and only if Algorithm~\ref{alg:general} returns a set
containing $b$.  But as before, we first consider complexity.  To
properly analyse this, we introduce the new notion of \emph{arrow
depth}.

\begin{definition}
A type's \emph{arrow depth} is given by:
$\mathit{depth}(\asort) = 0,\ 
\mathit{depth}(\atype \times \btype) = \max(\mathit{depth}(\atype),
  \mathit{depth}(\btype))$ and
$\mathit{depth}(\atype \arrtype \btype) = 1 +
  \max(\mathit{depth}(\atype),\mathit{depth}(\btype))$.
\end{definition}

Now the cardinality of each $\interpret{\atype}$ can be expressed
using its arrow depth:

\edef\ncomplexitylem{\number\value{lemma}}
\begin{lemma}\label{lem:interpretcard}
If $1 \leq \Card(\B) < N$, then for each $\atype$ of length $L$, with 
$\mathit{depth}(\atype) \leq K$:
$\Card(\interpret{\atype}) < \exp_2^K(N^L)$.
Testing $e \sqsupseteq u$ for $e,u \in \interpret{\atype}$ takes at
most $\exp_2^K(N^{(L+1)^3})$ comparisons.
\end{lemma}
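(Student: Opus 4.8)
The plan is to mirror the structure of the proof sketch given for Lemma~\ref{lem:pinterpretcard}, but with two crucial adjustments that reflect the switch from deterministic to non-deterministic extensional values. The first adjustment is conceptual: whereas $\pinterpret{\atype \arrtype \btype}$ consisted of \emph{partial functions} (each key assigned at most one value), $\interpret{\atype \arrtype \btype}$ now consists of \emph{arbitrary relations}, i.e.\ arbitrary subsets of $\interpret{\atype} \times \interpret{\btype}$. The second adjustment is the bookkeeping: the governing quantity is now $\mathit{depth}(\atype)$ rather than $\typeorder{\atype}$. I would proceed by induction on the structure of $\atype$, establishing the cardinality bound first and then deriving the comparison bound from it.

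For the cardinality bound, the base case $\atype = \asort \in \Sorts$ gives $\interpret{\asort} \subseteq \B$, so $\Card(\interpret{\asort}) \leq \Card(\B) < N = \exp_2^0(N^1)$, matching $\mathit{depth}(\asort) = 0$ and length $L = 1$. For the product case $\atype \times \btype$, I would use $\Card(\interpret{\atype \times \btype}) = \Card(\interpret{\atype}) \cdot \Card(\interpret{\btype})$ together with the inductive bounds and the inequality $\exp_2^K(X) \cdot \exp_2^K(Y) \leq \exp_2^K(X \cdot Y)$ for $X, Y \geq 2$ (exactly the auxiliary fact cited in the sketch for Lemma~\ref{lem:pinterpretcard}); here $\mathit{depth}$ takes a max and the lengths add, so the exponents combine correctly. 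The arrow case is where non-determinism bites: since an element of $\interpret{\atype \arrtype \btype}$ is an \emph{arbitrary} subset of a set of size $\Card(\interpret{\atype}) \cdot \Card(\interpret{\btype})$, we get $\Card(\interpret{\atype \arrtype \btype}) = 2^{\Card(\interpret{\atype}) \cdot \Card(\interpret{\btype})}$, and the arrow adds $1$ to the depth, which is precisely what produces the extra level of exponentiation $\exp_2^{K+1}$. I would then check that the length increment and the combined inductive bounds slot into $\exp_2^{K+1}(N^L)$ after applying the same product inequality inside the exponent.

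The main obstacle I anticipate is the arithmetic of propagating the \emph{length} parameter $L$ through the arrow case while keeping the strict inequality. In the partial-function case of Lemma~\ref{lem:pinterpretcard} each key had $\Card(\interpret{\btype}) + 1$ choices, giving $(\Card(\interpret{\btype})+1)^{\Card(\interpret{\atype})}$; in the relational case the clean form $2^{\Card(\interpret{\atype}) \cdot \Card(\interpret{\btype})}$ is actually simpler, but I must verify that $2^{\exp_2^K(N^{L_1}) \cdot \exp_2^K(N^{L_2})} < \exp_2^{K+1}(N^{L_1 + L_2})$ where $L = L_1 + L_2$ is the length of $\atype \arrtype \btype$, and that the slack from the strict inequalities in the induction hypothesis is enough to absorb the constants. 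The $+1$ adjustments and the need for $X \geq 2$ in the product inequality are the fiddly points; I would handle them by noting $\Card(\B) \geq 1$ forces all cardinalities to be at least $1$ and by choosing the base case inequalities strict enough that the exponent bound $N^L$ is never tight.

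For the second part, testing $e \sqsupseteq u$, I would argue exactly as in the sketch for Lemma~\ref{lem:pinterpretcard}: the relation $\sqsupseteq$ is defined by recursion on type structure, and at an arrow type $A_\atype \sqsupseteq B_\atype$ requires, for each of the at most $\Card(\interpret{\atype_1}) \cdot \Card(\interpret{\atype_2})$ pairs in $B$, searching for a dominating pair in $A$, each such search recursing into comparisons at strictly smaller types. Bounding the total number of leaf comparisons between elements of $\B$ by a product of the cardinality bounds already established, and absorbing the polynomial overhead into the exponent via the crude envelope $N^{(L+1)^3}$, yields $\exp_2^K(N^{(L+1)^3})$ comparisons. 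Since the first part is reused verbatim here, the only new content is confirming that the relational (rather than functional) comparisons do not change the counting asymptotics, which they do not because both are bounded by the same cardinality products.
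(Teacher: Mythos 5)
Your proposal matches the paper's own proof essentially step for step: the paper (Lemmas \ref{lem:interpretsize} and \ref{lem:sqsupcomplexity} in the appendix) proves the cardinality bound by the same structural induction, counting arbitrary subsets at arrow types as $2^{\Card(\interpret{\atype}) \cdot \Card(\interpret{\btype})}$ and combining exponents via the auxiliary inequality $\exp_2^K(X) \cdot \exp_2^K(Y) \leq \exp_2^K(X \cdot Y)$ for $X, Y \geq 2$, and it derives the $\sqsupseteq$-comparison bound by exactly the search-over-pairs recursion you describe, absorbed into the $N^{(L+1)^3}$ envelope by a shared lemma that covers both the deterministic and non-deterministic cases. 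Your observations about where the strictness and the $X \geq 2$ side conditions matter are the same fiddly points the paper's calculation handles.
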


\begin{proof}[Sketch]
A straightforward induction on the form of $\atype$, like
Lemma~\ref{lem:pinterpretcard}.

(See Appendix~\ref{app:complexity} for the complete proof.)
\qed
\end{proof}

Thus, once more assuming correctness for now, we may conclude:

\begin{lemma}\label{lem:nondeterministic:algorithm}
Every decision problem accepted by a non-deterministic
cons-free program $\prog$ is in $\elementary$.
\end{lemma}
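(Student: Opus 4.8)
The plan is to mirror the proof of Lemma~\ref{lem:deterministic:algorithm}, the only novelty being that the relevant complexity parameter is now the \emph{arrow depth} of the types occurring in the program rather than their order. Concretely, I would run Algorithm~\ref{alg:general} on the given input, whose correctness will be established in Section~\ref{subsec:proof} (the non-deterministic counterpart of Lemma~\ref{lem:basecorrectness}): $\progresult$ holds iff the algorithm returns a set containing $b$. Since the algorithm iterates a monotone marking procedure over the finite set $\X$ of statements until a fixed point is reached, it always terminates---even when $\prog$ itself is non-terminating or has no values---so running it and testing whether $\strue$ lies in the returned set gives a deterministic decision procedure for the set accepted by $\prog$. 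It therefore remains only to bound its running time.

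For the bound, fix $\prog$ and let $\eprog$ and $\X$ be as produced by the preparation step of Algorithm~\ref{alg:base}. Let $\Sigma$ be the (finite) set of argument and output types of the defined symbols of $\eprog$, let $K := \max_{\atype \in \Sigma} \mathit{depth}(\atype)$ and let $L$ be the maximal length of a type in $\Sigma$; both $K$ and $L$ are constants depending only on $\prog$. Given input of total size $n$, the set $\B = \B^\prog_{d_1,\dots,d_M}$ has cardinality linear in $n$, so we may fix $N = \OO(n)$ with $\Card(\B) < N$. By Lemma~\ref{lem:interpretcard}, every $\atype \in \Sigma$ satisfies $\Card(\interpret{\atype}) < \exp_2^K(N^L)$, and testing $e \sqsupseteq u$ for $e,u \in \interpret{\atype}$ costs at most $\exp_2^K(N^{(L+1)^3})$ comparisons between elements of $\B$, each of which is cheap. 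Setting $F(n) := \exp_2^K(N^{(L+1)^3})$ thus bounds both the cardinality of the interpretations and the cost of $\sqsupseteq$-testing. The counting argument behind Lemma~\ref{lem:complexitycore} is insensitive to whether $\pinterpret{\atype}$ or $\interpret{\atype}$ is used---it only needs that all extensional values lie in $\bigcup_{\atype \in \Sigma}\interpret{\atype}$ and that $F(n)$ bounds both quantities above---so it applies verbatim to Algorithm~\ref{alg:general} and yields a running time in $\timecomp{a \cdot F(n)^b}$ for some $a,b$. Since $N = \OO(n)$, we have $a \cdot F(n)^b \leq \exp_2^K(c \cdot n^d)$ for suitable constants $c,d$, so the algorithm runs within $\exptime{K}$ and the set accepted by $\prog$ lies in $\exptime{K} \subseteq \elementary$.

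The main conceptual point---and the only place where this proof genuinely departs from the deterministic one---is the switch from type order to arrow depth. Because $\interpret{\atype \arrtype \btype}$ collects \emph{arbitrary} relations rather than partial functions, it can be exponentially larger than $\pinterpret{\atype \arrtype \btype}$, and consequently the height of the exponential tower bounding $\Card(\interpret{\atype})$ is governed by $\mathit{depth}(\atype)$ (Lemma~\ref{lem:interpretcard}) instead of $\typeorder{\atype}$ (Lemma~\ref{lem:pinterpretcard}). This is exactly the blow-up responsible for the jump to $\elementary$ in Lemma~\ref{lem:nondetelementary}. The saving grace, and the crux of the argument, is that for any \emph{fixed} program the arrow depth $K$ is still a finite constant; the tower therefore has bounded height and the running time stays inside $\exptime{K}$ for that $K$, hence inside $\elementary = \bigcup_{K} \exptime{K}$. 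The one technical care needed is to confirm that the statement of Lemma~\ref{lem:complexitycore}, phrased for $\pinterpret{\atype}$ and data order, transfers to the $\interpret{\atype}$ setting; this is immediate once one observes that its proof only manipulates the cardinality bound $F(n)$ and never the specific structure of the extensional values.
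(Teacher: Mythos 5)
Your proof is correct and follows essentially the same route as the paper: it invokes the correctness of Algorithm~\ref{alg:general} (Lemma~\ref{lem:generalcorrectness}), observes that Lemma~\ref{lem:complexitycore} transfers unchanged to the non-deterministic extensional values, and combines this with the arrow-depth cardinality bound of Lemma~\ref{lem:interpretcard} to place the problem in $\exptime{K} \subseteq \elementary$ for the fixed program's finite $K$. Your additional remarks---that the fixed-point iteration terminates even on non-terminating programs, and that the height of the exponential tower is governed by arrow depth rather than type order---are exactly the points the paper relies on, just spelled out more explicitly.
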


\begin{proof}
We will see in Lemma~\ref{lem:generalcorrectness} in
Section~\ref{subsec:proof} that $\progresult$ if and only if
Algorithm~\ref{alg:general} returns a set containing $b$.
Since all types have an arrow depth and the set $\Sigma$ in
Lemma~\ref{lem:complexitycore} is finite, Algorithm~\ref{alg:general}
operates in some $\timecomp{\exp_2^K(n)}$.  Thus, the 
problem is in 
$\exptime{K} \subseteq \elementary$.
\qed
\end{proof}

\begin{theorem}\label{thm:nondeterministic}
The class of non-deterministic cons-free programs with data order $K$
characterises $\elementary$ for all $K \in \N \setminus \{0\}$.
\end{theorem}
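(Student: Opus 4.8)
The plan is to establish the two inclusions separately and combine them, exactly as Theorem~\ref{thm:deterministic} combines its two supporting lemmas. Fix $K \in \N \setminus \{0\}$. I must show (i) every decision problem in $\elementary$ is accepted by some non-deterministic cons-free program of data order $K$, and (ii) every decision problem accepted by such a program lies in $\elementary$.

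For (i), I would invoke Lemma~\ref{lem:nondetelementary}, which already produces, for any problem in $\elementary$, a non-deterministic cons-free program of data order $1$ accepting it. The only additional observation needed is monotonicity of the data-order condition in $K$: by definition a program all of whose variables carry types of order $\leq 1$ also satisfies the bound $\leq K$ for every $K \geq 1$, so the very same program witnesses the data-order-$K$ case. Hence $\elementary$ is contained in the class of problems accepted at data order $K$ for every $K \geq 1$.

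For (ii), I would appeal to Lemma~\ref{lem:nondeterministic:algorithm}, which already states that \emph{any} problem accepted by a non-deterministic cons-free program---irrespective of its data order---belongs to $\elementary$. This direction is thus entirely uniform in $K$. Internally it rests on running Algorithm~\ref{alg:general}, bounding $\Card(\interpret{\atype})$ by a fixed tower of exponentials through Lemma~\ref{lem:interpretcard}, and the correctness of the algorithm deferred to Section~\ref{subsec:proof}; none of this needs to be reopened here.

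Putting (i) and (ii) together yields equality with $\elementary$ for each $K \geq 1$. At the level of the theorem statement there is no real obstacle---all the difficulty is discharged in the supporting lemmas. The one point demanding care is the restriction to $K \geq 1$: completeness relies on the non-deterministic counting module of Lemma~\ref{lem:nondetmodule}, which requires data order $1$ and has no data-order-$0$ analogue (at data order $0$ the class collapses to $\pclass$), so the theorem is genuinely false at $K = 0$ and the excluded case cannot be folded in.
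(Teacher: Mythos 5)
Your proof is correct and follows the paper's own argument exactly: the paper proves Theorem~\ref{thm:nondeterministic} as an immediate combination of Lemma~\ref{lem:nondetelementary} and Lemma~\ref{lem:nondeterministic:algorithm}. Your additional remarks---that the data-order condition is monotone in $K$ (so the order-$1$ program from Lemma~\ref{lem:nondetelementary} witnesses every $K \geq 1$) and that the converse direction is uniform in $K$---are exactly the observations the paper leaves implicit.
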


\begin{proof}
A combination of Lemmas~\ref{lem:nondetelementary}
and~\ref{lem:nondeterministic:algorithm}.
\qed
\end{proof}

\subsection{Correctness proofs of Algorithms~\ref{alg:base}
and~\ref{alg:general}}\label{subsec:proof}

Algorithms~\ref{alg:base} and~\ref{alg:general} are the same---merely
parametrised with a different set of extensional values to be
used in step~\ref{alg:prepare:statements}.  Due to this
similarity, and because $\pinterpret{\atype} \subseteq \interpret{
\atype}$, we can largely combine their correctness proofs.
The proofs are somewhat intricate, however; all details are provided 
in~ Appendix~\ref{app:correctness}.

We begin with \emph{soundness}:

\edef\soundnesslem{\number\value{lemma}}
\begin{lemma}\label{lem:generalsound}
If Algorithm~\ref{alg:base} or~\ref{alg:general} returns a set $A \cup
\{b\}$, then $\progresult$.
\end{lemma}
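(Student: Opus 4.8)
The plan is to show that every statement the algorithm marks \emph{confirmed} can be \emph{realized} by a genuine evaluation in the semantics of Figure~\ref{fig:evaluation}, and then to read the claim off from the confirmed statement $\vdash \apps{\symb{start}}{d_1}{d_M} \leadsto b$. First I would set up a \emph{realizability} relation between concrete values and extensional values, defined by induction on types: for a sort, $v$ realizes $e$ iff $v = e$; for a pair type, componentwise; and for $\atype \arrtype \btype$, a value $v$ realizes $A_{\atype \arrtype \btype}$ iff for every $(e,u) \in A$ and every $v'$ realizing $e$ there is a value $w$ realizing $u$ such that the application $\app{v}{v'}$ evaluates to $w$ (via [Appl], [Call] and [Closure]). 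This is exactly the reading of a relation as ``$\app{v}{v'}$ \emph{might} evaluate to $w$'', which is why the definition is uniform for both $\pinterpret{\cdot}$ (Algorithm~\ref{alg:base}) and $\interpret{\cdot}$ (Algorithm~\ref{alg:general}); since $\pinterpret{\atype} \subseteq \interpret{\atype}$ the two soundness proofs coincide. An auxiliary step is to prove, by induction on types, that realizability is downward closed under $\sqsupseteq$: if $v$ realizes $e$ and $e \sqsupseteq e'$, then $v$ realizes $e'$. I also extend realizability to environments, writing that $\gamma$ realizes $\eta$ when $\gamma(x)$ realizes $\eta(x)$ for every $x$.

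The core is an invariant proved by well-founded induction on the order in which statements become confirmed (each confirmation rule of Algorithm~\ref{alg:base} inspects only statements \emph{already} marked confirmed, so this order is well-defined). The invariant has two halves. For a confirmed expression statement $\eta \vdash t \leadsto o$: whenever $\gamma$ realizes $\eta$, there is a value $w$ realizing $o$ with $\prog,\gamma \vdash t \arrr w$. For a confirmed function statement $\vdash \apps{\identifier{f}}{e_1}{e_n} \leadsto o$: whenever $v_1,\dots,v_n$ realize $e_1,\dots,e_n$, there is $w$ realizing $o$ such that the (possibly partial) application $\apps{\identifier{f}}{v_1}{v_n}$ evaluates to $w$ --- a [Closure] value when $n < \arity(\identifier{f})$, and a $\vdashcall$-reduct when $n = \arity(\identifier{f})$.

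The proof is then a case analysis matching each confirmation rule to the corresponding rule of Figure~\ref{fig:evaluation}, invoking the downward-closure lemma wherever a rule quantifies over $u' \sqsupseteq u$ or $o' \sqsupseteq o$. The base cases \ref{alg:prepare:base:var} and \ref{alg:prepare:base:constructor} correspond to [Instance] and [Constructor]; steps \ref{alg:iterate:ifte}, \ref{alg:iterate:choice}, \ref{alg:iterate:pair} to [Conditional], [Choice], [Pair]; and the application steps \ref{alg:iterate:rhs:var}/\ref{alg:iterate:rhs:func} to [Appl]. For the full-application step \ref{alg:iterate:lhs} I would verify that extensional matching coincides with concrete pattern matching --- realizers of a constructor pattern's extensional value are forced (being data) to have the matching shape, while variable patterns always match --- so the \emph{first} extensionally matching clause is the first clause the concrete arguments instantiate, whence [Call] applies and the result follows from the induction hypothesis on $\eta \vdash s \leadsto o$ with $\gamma$ built from the $v_i$. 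I expect the main obstacle to be the closure-construction step \ref{alg:iterate:value}: there one must show that the [Closure] value $\apps{\identifier{f}}{v_1}{v_n}$ realizes $o = O_\atype$ precisely because, for each recorded pair $(e_{n+1},u) \in O$, the confirmed sub-statement $\vdash \apps{\identifier{f}}{e_1}{e_{n+1}} \leadsto u'$ (with $u' \sqsupseteq u$) supplies, via the induction hypothesis, an evaluation of $\apps{\identifier{f}}{v_1}{v_{n+1}}$ to a realizer of $u'$, hence of $u$ by downward closure --- matching exactly the functional clause of realizability.

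Finally I would conclude: if the algorithm returns $A \cup \{b\}$, then $\vdash \apps{\symb{start}}{d_1}{d_M} \leadsto b$ is confirmed, and each data input $d_i$ trivially realizes itself. The function-statement half of the invariant (with $n = M = \arity(\symb{start})$) yields a value $w$ realizing $b$ with $\eprog \vdashcall \apps{\symb{start}}{d_1}{d_M} \arrr w$; since $b$ has an order-$0$ (sort or pair) type, realizing $b$ forces $w = b$. By the construction of $\eprog$ in step \ref{alg:prepare:eprog} (together with Lemma~\ref{lem:proper}) this is equivalent to $\progresult$, as required.
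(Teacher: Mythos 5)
Your proposal is correct and follows essentially the same route as the paper's proof: your realizability relation is exactly the paper's $\down{v}{e}$ (with the same $A \subseteq \varphi(v)$ clause at arrow types), your downward-closure step is the paper's Lemma on $\Downarrow$ and $\sqsupseteq$, your pattern-matching observation is the paper's matching lemma, and your two-part invariant proved by induction on confirmation order is precisely the paper's main induction, concluded the same way via the construction of $\eprog$ and the $\symb{start}$ clause. The only detail you leave implicit is the small auxiliary induction on $n$ showing that membership $o \in e(u_1,\dots,u_n)$ yields an iterated evaluation (the paper isolates this as a separate helper lemma for the steps handling $\apps{x}{s_1}{s_n}$ and over-applied $\identifier{f}$), but this follows directly from your functional realizability clause.
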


\begin{proof}[Sketch]
We define for every value $v : \atype$ and $e \in \interpret{\atype}$:
$\down{v}{e}$ if{f}:
  (a)
  $\atype \in \Sorts$ and $v = e$;
  or (b)
  $\atype = \atype_1 \times \atype_2$ and $v = (v_1,v_2)$ and
  $e = (e_1,e_2)$ with $\down{v_1}{e_1}$ and $\down{v_2}{e_2}$;
  or (c)
  $\atype = \atype_1 \arrtype \atype_2$ and $e = A_\atype$ with
  $A \subseteq \{ (u_1,u_2) \mid u_1 \in
  \interpret{\atype_1} \wedge u_2 \in \interpret{\atype_2} \wedge$ for
  all values $w_1 : \atype_1$ with $\down{w_1}{u_1}$ there is some
  value $w_2 : \atype_2$ with $\down{w_2}{u_2}$ such that $\eprog
  \vdashcall \app{v}{w_1} \arrr w_2 \}$.

We now prove two statements together by induction on
the confirmation time in Algorithm~\ref{alg:base}, which
we consider equipped with \emph{unspecified} subsets $[\atype]$ of
$\interpret{\atype}$:

\begin{enumerate}
\item
Let:
  (a) $\identifier{f} : \atype_1 \arrtype \dots \arrtype \atype_m
  \arrtype \asortorpair \in \F$ be a defined symbol;
  (b) $v_1 : \atype_1,\dots,v_n : \atype_n$ be values, for
  $1 \leq n \leq \arity(\identifier{f})$;
  (c) $e_1 \in \interpret{\atype_1},\dots,e_n \in
  \interpret{\atype_n}$ be such that each $\down{v_i}{e_i}$;
  (d) $o \in \interpret{\atype_{n+1} \arrtype \dots \arrtype
  \atype_m \arrtype \asortorpair}$.
If $\vdash \apps{\identifier{f}}{e_1}{e_n} \leadsto o$ is
eventually confirmed, then $\eprog \vdashcall
\apps{\identifier{f}}{v_1}{v_n} \arrr w$ for some $w$ with
$\down{w}{o}$.
\item
Let:
  (a) $\rho\colon\app{\identifier{f}}{\vec{\ell}} = s$ be a clause
  in $\eprog$;
  (b) $t : \btype$ be a sub-expression of $s$;
  (c) $\eta$ be an ext-environment for $\rho$;
  (d) $\gamma$ be an environment such that $\down{\gamma(x)}{\eta(x)}$
  for all $x \in \Var(\app{\identifier{f}}{\vec{\ell}})$;
  (e) $o \in \interpret{\btype}$.
If the statement $\eta \vdash t \leadsto o$ is eventually confirmed,
then $\eprog,\gamma \vdash t \arrr w$ for some $w$ with
$\down{w}{o}$.
\end{enumerate}

Given the way $\eprog$ is defined from $\prog$, the lemma
follows from the first statement.  The induction is easy, but
requires minor sub-steps such as transitivity of $\sqsupseteq$.
\qed
\end{proof}

The harder part, where the algorithms diverge, is \emph{completeness}:

\edef\ncompletenesslem{\number\value{lemma}}
\begin{lemma}\label{lem:generalcomplete}
If $\progresult$, then Algorithm~\ref{alg:general} returns a set
$A \cup \{b\}$.
\end{lemma}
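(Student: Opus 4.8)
The plan is to establish the converse of the soundness argument (Lemma~\ref{lem:generalsound}), but by induction on a fixed derivation tree $T$ rather than on confirmation time. By step~\ref{alg:prepare:eprog}, $\progresult$ is equivalent to $\eprog \vdashcall \apps{\symb{start}}{d_1}{d_M} \arrr b$; fixing a tree $T$ for the latter, it suffices to show that Algorithm~\ref{alg:general} eventually confirms $\vdash \apps{\symb{start}}{d_1}{d_M} \leadsto b$, since then $b$ --- which lies in $\B$ by Lemma~\ref{lem:safetysimple} --- is placed in the set returned at step~\ref{alg:completion}. The driving device is a \emph{tree-relative abstraction} $\alpha_T$ mapping each value occurring in $T$ to a non-deterministic extensional value of the same type: $\alpha_T(d) = d$ on data, $\alpha_T((v,w)) = (\alpha_T(v),\alpha_T(w))$ on pairs, and for $v$ of arrow type $\atype \arrtype \btype$,
\[
\alpha_T(v) = \{ (\alpha_T(w_1),\alpha_T(w_2)) \mid \eprog \vdashcall \app{v}{w_1} \arrr w_2 \text{ occurs in } T \}_{\atype \arrtype \btype}.
\]
Restricting to applications occurring in $T$ keeps every component $\B$-based (Lemma~\ref{lem:safetysimple}), so $\alpha_T(v) \in \interpret{\atype \arrtype \btype}$, which is finite. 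This is exactly where non-determinism is needed: because $\interpret{\atype \arrtype \btype}$ ranges over \emph{arbitrary} relations (Definition~\ref{def:interpret}), $\alpha_T(v)$ may record several outputs for a single input, whereas the partial-function abstraction $\pinterpret{\cdot}$ of Algorithm~\ref{alg:base} could not.

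I would then prove that the algorithm confirms every statement of the following two shapes, matching the two statements of the soundness proof: (1) $\vdash \apps{\identifier{f}}{\alpha_T(v_1)}{\alpha_T(v_n)} \leadsto \alpha_T(w)$ whenever $\eprog \vdashcall \apps{\identifier{f}}{v_1}{v_n} \arrr w$ occurs in $T$; and (2) $\eta \vdash t \leadsto \alpha_T(w)$, with $\eta(x) = \alpha_T(\gamma(x))$, whenever $\eprog,\gamma \vdash t \arrr w$ occurs in $T$ for a right-hand-side sub-expression $t$. Each derivation rule of Figure~\ref{fig:evaluation} is mirrored by exactly one confirmation step: the base markings of step~\ref{alg:prepare:base} cover [Instance] and [Constructor] (here $\ell\eta = \alpha_T(\ell\gamma)$ because instantiation commutes with $\alpha_T$ on patterns), step~\ref{alg:iterate:ifte} mirrors [Conditional], step~\ref{alg:iterate:choice} mirrors [Choice], step~\ref{alg:iterate:pair} mirrors [Pair], steps~\ref{alg:iterate:rhs:var}--\ref{alg:iterate:rhs:func} mirror [Appl], and step~\ref{alg:iterate:lhs} mirrors [Call]. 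For the full-arity statements~(1) with $n = \arity(\identifier{f})$ and the sub-expression statements~(2), the premises of the matching rule are witnessed by strictly shorter sub-derivations of $T$, so an induction on the height of $T$-nodes supplies their confirmations.

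The subtle case --- and the one I expect to be the main obstacle --- is a \emph{partially applied} defined symbol, i.e.\ statement~(1) with $n < \arity(\identifier{f})$, which the algorithm can only confirm through step~\ref{alg:iterate:value}: for every pair $(e_{n+1},u) \in \alpha_T(\apps{\identifier{f}}{v_1}{v_n})$ one must exhibit a confirmed $\vdash \apps{\identifier{f}}{\alpha_T(v_1)}{e_{n+1}} \leadsto u'$ with $u' \sqsupseteq u$. By construction of $\alpha_T$ each such pair arises from a full application $\apps{\identifier{f}}{v_1}{v_n}\,w_1 \cdots$ occurring \emph{somewhere} in $T$ --- not necessarily below the [Closure] node that created the partial value --- so a naive single height-induction does not close. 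I would resolve this by separating the two layers: statements~(1)--(2) at full arity and at right-hand sides are obtained by height induction as above, while the partial-arity statements are obtained by a downward induction on $\arity(\identifier{f}) - n$, each level appealing to the already-confirmed statements one argument longer. Since all of this ultimately concerns membership in the algorithm's (monotone) fixpoint of confirmed statements, the apparent circularity between the two inductions is harmless: one argues with the combined lexicographic measure $(\text{height in } T,\ \arity(\identifier{f}) - n)$.

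The remaining work is routine bookkeeping: checking that the $\sqsupseteq$-witnesses demanded in steps~\ref{alg:iterate:value}, \ref{alg:iterate:rhs:var} and~\ref{alg:iterate:rhs:func} are available (here one may always take $u' = u$, since $\alpha_T$ records the exact witnessing outputs), that matching in step~\ref{alg:iterate:lhs:find} selects the same clause as [Call] because $\alpha_T$ is the identity on the order-$0$ pattern arguments, and that confirmations are preserved across iterations. Assembling these, the top-level statement $\vdash \apps{\symb{start}}{d_1}{d_M} \leadsto b$ is confirmed, so Algorithm~\ref{alg:general} returns a set of the form $A \cup \{b\}$.
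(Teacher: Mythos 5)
There is a genuine gap, and it sits exactly where you predicted the ``main obstacle'' would be: your global, position-independent abstraction $\alpha_T$ does not support the proposed lexicographic measure $(\text{height in } T,\ \arity(\identifier{f})-n)$, because the dependencies among your statements are genuinely cyclic. Concretely, take the clauses $\symb{h}\ x\ y = \ifte{y}{\symb{g}\ (\symb{h}\ x)}{\strue}$ and $\symb{g}\ F = F\ \sfalse$, and a tree $T$ rooted at $\eprog \vdashcall \symb{h}\ d\ \strue \arrr \strue$. Then $\alpha_T(\symb{h}\ d) = \{(\strue,\strue),(\sfalse,\strue)\}_{\bool \arrtype \bool}$, and your statement (1) at $n=1$, namely $\vdash \symb{h}\ d \leadsto \alpha_T(\symb{h}\ d)$, needs via step~\ref{alg:iterate:value} the full-arity statement $\vdash \symb{h}\ d\ \strue \leadsto \strue$, whose witnessing application node is the \emph{root} of $T$ (so the height component increases rather than decreases); that statement in turn needs, via step~\ref{alg:iterate:lhs} and then step~\ref{alg:iterate:rhs:call}, your statement (2) $\eta \vdash \symb{h}\ x \leadsto \alpha_T(\symb{h}\ d)$, which needs statement (1) again. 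The pair $(\strue,\strue)$ enters $\alpha_T(\symb{h}\ d)$ from an application node that is an \emph{ancestor} of the site where the closure is created, and this configuration is ubiquitous in cons-free programs, since recursive calls typically re-create the same closure value by passing arguments unchanged. Note also that your proposed layering fails on its own terms: right-hand-side statements (2) appeal to partial-arity statements (1) whenever a right-hand side contains a partial application, so one cannot dispatch ``full arity and right-hand sides first, partial arities afterwards''. The appeal to the monotone fixpoint does not repair this: a statement is eventually confirmed only if it has a well-founded justification, and the justification of your exact statements necessarily passes through strictly \emph{weaker} approximants (here $\vdash \symb{h}\ d \leadsto \{(\sfalse,\strue)\}_{\bool \arrtype \bool}$, which is confirmable outright and then feeds $\symb{g}$) that your invariant never mentions; in particular your remark that ``one may always take $u' = u$'' is exactly what breaks.

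This weakening-first structure is precisely what the paper's machinery encodes and your proposal lacks. The paper labels the nodes of $T$ lexicographically and replaces your $\alpha_T(v)$ by a position-dependent $\psi(v,\aindex)$ collecting only applications of $v$ occurring at labels $\bindex > \aindex$, i.e., at positions handled \emph{earlier} in its reverse-lexicographic (right-to-left, top-to-bottom) sweep; the pairs contributed by ancestor applications---the ones generating your cycle---are excluded at the creation site and appear only in the larger relations claimed at later positions, with the slack built into the invariant (statements are confirmed for \emph{some} $o \ssupseteq \psi(w,\bindex)$, not for an exact value, and step~\ref{alg:iterate:value} tolerates $u' \sqsupseteq u$) absorbing the growth as the relations monotonically accumulate. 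The paper additionally interposes an extensional proof system $\vvdash$ (Figure~\ref{fig:extensional}) and splits the argument into Lemmas~\ref{lem:maketree} and~\ref{lem:usetree}; that packaging you could legitimately skip, and your identification of non-deterministic relations in $\interpret{\cdot}$ as the crucial resource is correct. But to close the induction you would have to replace $\alpha_T$ by tree-position-indexed approximants with $\sqsupseteq$-slack---at which point you have reconstructed the paper's argument rather than found an alternative to it.
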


\begin{proof}[Sketch]
If $\progresult$, then $\eprog \vdashcall
\apps{\symb{start}}{d_1}{d_M} \arrr b$.
We label the nodes in the derivation trees with strings of numbers
(a node with label $\aindex$ has immediate subtrees of the form
$\aindex \cdot i$), and let $>$ denote lexicographic comparison of
these strings, and $\succ$ lexicographic comparison without prefixes
(e.g., $1 \cdot 2 > 1$ but not $1 \cdot 2 \succ 1$).  We define the
following function:
\begin{itemize}
\item $\psi(v,\aindex) = v$ if $v \in \B$, and
  $\psi((v_1,v_2),\aindex) = (\psi(v_1,\aindex),\psi(v_2,\aindex))$;
\item for $\apps{\identifier{f}}{v_1}{v_n} : \btype = \atype_{n+1}
  \arrtype \dots \arrtype \atype_m \arrtype \asortorpair$ with $m >
  n$, let $\psi(\apps{\identifier{f}}{v_1}{v_n},\aindex) =$ \\
  $\{ (e_{n+1},u) \mid \exists \cindex \succ \bindex >
    \aindex\ [$the subtree with index $\bindex$ has a root $\eprog
    \vdashcall \apps{\identifier{f}}{v_1}{v_{n+1}} \arrr w$ with
    $\psi(w,\cindex) = u$ and $e_{n+1} \ssupseteq \psi(v_{n+1},
    \bindex)] \}_\btype$.
\end{itemize}
Here, $\ssupseteq$ is defined the same as $\sqsupseteq$, except that
$A_\atype \ssupseteq B_\atype$ if{f} $A \supseteq B$.  Note that clearly
$A \ssupseteq B$ implies $A \sqsupseteq B$, and that $\ssupseteq$ is
transitive by transitivity of $\supseteq$.
Then, using induction on the labels of the tree in reverse
lexicographical order (so going through the tree right-to-left,
top-to-bottom), we can prove:
\begin{enumerate}
\item If the subtree labelled $\aindex$ has root $\eprog \vdashcall
  \apps{\identifier{f}}{v_1}{v_n} \arrr w$, then for all $e_1,\dots,
  e_n$ such that each $e_i \ssupseteq \psi(v_i,\aindex)$, and for all
  $\bindex \succ \aindex$ there exists $o \ssupseteq \psi(w,\bindex)$
  such that $\vdash \apps{\identifier{f}}{e_1}{e_n} \leadsto o$ is
  eventually confirmed.
\item
  If the subtree labelled $\aindex$ has root $\eprog,\gamma
  \vdash t \arrr w$ and $\eta(x) \ssupseteq \psi(\gamma(x),
  \aindex)$ for all $x \in \Var(t)$, then for all $\bindex \succ
  \aindex$ there exists $o \ssupseteq \psi(w,\bindex)$ such that
  $\eta \vvdash t \leadsto o$ is eventually confirmed.
\end{enumerate}
Assigning the main tree a label $0$ (to secure that $p \succ 0$
exists), we obtain that $\vdash \apps{\symb{start}}{d_1}{d_M} \leadsto
b$ is eventually confirmed, so $b$ is indeed returned.
\qed
\end{proof}

By Lemmas~\ref{lem:generalsound} and~\ref{lem:generalcomplete} together
we may immediately conclude:

\begin{lemma}\label{lem:generalcorrectness}
$\progresult$ if{f} Algorithm~\ref{alg:general} returns a set containing
$b$.
\end{lemma}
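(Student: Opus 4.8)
The plan is to recognise that Lemma~\ref{lem:generalcorrectness} is nothing more than the conjunction of the two immediately preceding lemmas, so the entire task reduces to matching their statements against the two directions of the biconditional. First I would reconcile the phrasing: ``Algorithm~\ref{alg:general} returns a set containing $b$'' means exactly that the returned set can be written as $A \cup \{b\}$ for some (possibly empty) $A$, which is precisely the form appearing in Lemmas~\ref{lem:generalsound} and~\ref{lem:generalcomplete}. Once this cosmetic identification is made, each direction of ``if{f}'' is literally one of the two lemmas.

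Concretely, for the left-to-right direction I would assume $\progresult$ and apply Lemma~\ref{lem:generalcomplete}, which gives that Algorithm~\ref{alg:general} returns a set of the form $A \cup \{b\}$; such a set contains $b$, as required. For the right-to-left direction I would assume that Algorithm~\ref{alg:general} returns a set $S$ with $b \in S$, write $S = A \cup \{b\}$, and apply Lemma~\ref{lem:generalsound} to obtain $\progresult$. Putting the two implications together yields the claimed equivalence.

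The hard part has, in effect, already been discharged in the soundness and completeness arguments: soundness comes from a direct induction on confirmation time using the auxiliary relation $\down{\cdot}{\cdot}$ between values and extensional values, while completeness comes from a reverse-lexicographic induction over the labelled derivation tree using the function $\psi$ that abstracts each value into an extensional value. The only point demanding a moment's care is that completeness (Lemma~\ref{lem:generalcomplete}) is stated solely for Algorithm~\ref{alg:general}, so I would be sure to invoke that lemma specifically rather than a version for Algorithm~\ref{alg:base}; soundness (Lemma~\ref{lem:generalsound}), being proved uniformly for both algorithms, applies without modification.
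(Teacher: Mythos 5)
Your proposal matches the paper's own proof exactly: Lemma~\ref{lem:generalcorrectness} is stated there as an immediate consequence of combining Lemma~\ref{lem:generalsound} (soundness, one direction) with Lemma~\ref{lem:generalcomplete} (completeness, the other direction), which is precisely your decomposition, including the identification of ``returns a set containing $b$'' with ``returns a set of the form $A \cup \{b\}$''. Your care in invoking the completeness lemma specific to Algorithm~\ref{alg:general} is also consistent with the paper, which treats the deterministic completeness case separately in Lemma~\ref{lem:basecomplete}.
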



The proof of the general case provides a basis for the deterministic
case:

\edef\dcompletenesslem{\number\value{lemma}}
\begin{lemma}\label{lem:basecomplete}
If $\progresult$ and $\prog$ is deterministic, then
Algorithm~\ref{alg:base} returns a set $A \cup \{b\}$.
\end{lemma}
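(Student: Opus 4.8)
The plan is to replay the completeness argument of Lemma~\ref{lem:generalcomplete} while keeping every extensional value that the algorithm manipulates inside $\pinterpret{\atype}$ (the partial functions) rather than the larger $\interpret{\atype}$ (arbitrary relations), so that all statements appealed to actually belong to the statement set $\X$ of Algorithm~\ref{alg:base}. Since $\pinterpret{\atype} \subseteq \interpret{\atype}$ and the two algorithms differ only in this choice, the confirmation rules applied are identical; the only thing to check is that partial-function witnesses suffice at each step. By Lemma~\ref{lem:generalsound} the returned set contains only correct outputs, so it is enough to show that $\vdash \apps{\symb{start}}{d_1}{d_M} \leadsto b$ is eventually confirmed by Algorithm~\ref{alg:base}, after which $b$ lies in the returned set $A \cup \{b\}$. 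First I would note that determinism is preserved by the transformation producing $\eprog$ from $\prog$ in Lemma~\ref{lem:proper} (no step introduces a $\choice$), so each judgement $\eprog \vdashcall \apps{\identifier{f}}{v_1}{v_k} \arrr w$ has at most one result $w$.

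The central device is a consistency relation $\consistent{e}{u}$ on extensional values of the same type, expressing that $e$ and $u$ never assign $\sqsupseteq$-incompatible outputs to a common argument, equivalently that they admit a common partial-function upper bound in $\pinterpret$. The crucial claim, and the sole place where determinism is used, is that the relational abstractions $\psi(\cdot,\cdot)$ of Lemma~\ref{lem:generalcomplete} are \emph{consistent}: whenever two pairs $(e,u_1),(e,u_2)$ occur in some $\psi(v,\aindex)$, or whenever $\psi(w,\cindex)$ and $\psi(w',\cindex')$ abstract values produced by the same deterministic sub-computation, we have $\consistent{u_1}{u_2}$. I would prove this by induction on the tree together with a propagation statement: if $\eprog \vdashcall \apps{\identifier{f}}{v_1}{v_{n+1}} \arrr w$ and $\eprog \vdashcall \apps{\identifier{f}}{v_1}{v'_{n+1}} \arrr w'$ with consistent arguments, then determinism forces the two evaluations to agree on the finitely many argument behaviours they actually probe, so $\consistent{\psi(w,\cindex)}{\psi(w',\cindex')}$.

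Consistency is exactly what is needed to keep the induction inside $\pinterpret$. Recall that $o \sqsupseteq \psi(w,\bindex)$ demands, for every $(e,u) \in \psi(w,\bindex)$, some $(e,u') \in o$ with $u' \sqsupseteq u$; when $o$ is required to be a partial function this forces all outputs that $\psi(w,\bindex)$ assigns to a common $e$ to share a single $\sqsupseteq$-upper bound, which is precisely consistency. I would therefore restate the two-part induction of Lemma~\ref{lem:generalcomplete} with every witness quantifier ranging over $\pinterpret$ instead of $\interpret$, the existence of the required partial-function witness $o \sqsupseteq \psi(w,\bindex)$ being supplied by the consistency claim. Each confirmation rule of Algorithm~\ref{alg:base} is monotone in $\sqsupseteq$, so the case analysis (the value/closure case, clause matching, the conditional, pairing, variable-application and function-application cases) goes through verbatim, while the $\choice$ case never arises since $\eprog$ is deterministic. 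Instantiating the top-level statement at the root label $0$ then yields that $\vdash \apps{\symb{start}}{d_1}{d_M} \leadsto b$ is confirmed within $\X$, which is what is required.

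The main obstacle is the consistency claim of the second paragraph: one must show that the purely relational object $\psi(v,\aindex)$, which is genuinely \emph{not} a partial function in general since it aggregates abstractions $\psi(w,\cindex)$ over all right-descendant indices $\cindex$, is nonetheless consistent when $\prog$ is deterministic, and that consistency is propagated by application. This requires making precise that a deterministic symbol's result depends only on the finitely many argument behaviours it probes, so that two arguments compatible with a common partial function $e$ (the content of $e \sqsupseteq \psi(v,\bindex)$) produce outputs that are consistent rather than equal. Once this is established, the remainder is a routine specialisation of the induction of Lemma~\ref{lem:generalcomplete} to $\pinterpret$-valued witnesses, using transitivity of $\sqsupseteq$ and of $\wr$.
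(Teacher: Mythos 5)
Your overall strategy---introduce the consistency relation $\consistent{\cdot}{\cdot}$, use determinism to propagate consistency through the derivation, then rerun the completeness induction with deterministic witnesses---is the same route the paper takes, and your propagation claim corresponds to the paper's lemma that $\choice$-free derivation trees with consistent inputs yield consistent outputs (Lemma~\ref{lem:consistency}). However, the step you lean on to stay inside $\pinterpret{\atype}$ would fail as stated: there is in general \emph{no} deterministic extensional value $o$ with $o \sqsupseteq \psi(w,\bindex)$, consistent or not. At arrow type, $\psi(w,\bindex)$ is a set of pairs $(e_{n+1},u)$ whose \emph{keys} $e_{n+1}$ range over arbitrary extensional values $\ssupseteq \psi(v_{n+1},\bindex)$, hence over genuinely non-deterministic relations whenever $v_{n+1}$ has functional type; and the definition of $\sqsupseteq$ at functional type demands that every pair $(e,u)$ in the smaller value be matched by a pair with \emph{the same key} $e$ in the larger one. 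Since elements of $\pinterpret{\atype}$ carry only deterministic keys, no partial function can dominate $\psi(w,\bindex)$ under $\sqsupseteq$. Consequently your reading of consistency as ``equivalently, admitting a common partial-function upper bound in $\pinterpret{\atype}$'' is not an equivalence: consistency of the outputs attached to a common argument is necessary, but the keys themselves must also be determinized, and $\sqsupseteq$ gives you no way to do that.

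This is precisely why the paper does not replay Lemma~\ref{lem:generalcomplete} directly. It first converts the $\psi$-based argument into a full derivation in an extensional proof system ($\vvdash$, Lemma~\ref{lem:maketree}), and then transforms that tree bottom-up (Lemma~\ref{lem:dettree}) using an explicit merge operator $\sqcup X$ that maps a non-empty \emph{consistent set} of non-deterministic values to a single element of $\pinterpret{\atype}$, re-keying pairs via the condition $e \sqsupseteq \sqcup\{u'\}$ rather than asking for $\sqsupseteq$-domination of the original relations. Crucially, because one merged deterministic key subsumes several distinct original keys, the determinizing induction must run over \emph{finite sets} of derivation trees simultaneously---merging all subtrees whose arguments the new key covers---and it needs dedicated lemmas showing that $\sqcup$ is monotone, acts as a supremum, and is compatible with partial application (Lemmas~\ref{lem:sqcupsup}, \ref{lem:pintergreater} and~\ref{lem:apply}) to push the merge through the closure/[Value] and variable- and over-application cases. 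Your single-tree ``verbatim'' replay has no mechanism for combining the several subderivations that a single deterministic key must account for, so exactly those cases of your induction cannot be closed without reconstructing this machinery. You correctly flagged the relational nature of $\psi(v,\aindex)$ as the main obstacle, but the resolution you propose (consistency alone) does not surmount it.
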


\begin{proof}[Sketch]
We define a consistency measure $\consistent{}{}$ on non-deterministic
extensional values: $\consistent{e}{u}$ if{f} $e = u \in \B$, or
$e = (e_1,e_2),\ u = (u_1,u_2),\ \consistent{e_1}{u_1}$ and
$\consistent{e_2}{u_2}$, or
$e = A_\atype,\ u = B_\atype$ and for all $(e_1,u_1) \in A$ and
$(e_2,u_2) \in B$: $\consistent{e_1}{e_2}$ implies
$\consistent{u_1}{u_2}$.

In the proof of Lemma~\ref{lem:generalcomplete}, we trace a derivation
in the algorithm.  In a deterministic program, we can see that if
both $\vdash \apps{\identifier{f}}{e_1}{e_n} \arrr o$ and
$\vdash \apps{\identifier{f}}{e_1'}{e_n'} \arrr o'$ are confirmed, and
each $\consistent{e_i}{e_n'}$, then $\consistent{o}{o'}$---and similar
for statements $\eta \vdash s \too o$.
We use this to remove statements which are not necessary, ultimately
leaving only those which use deterministic extensional
values as used in Algorithm~\ref{alg:base}.
\qed
\end{proof}

\begin{lemma}\label{lem:basecorrectness}
$\progresult$ if{f} Algorithm~\ref{alg:base} returns a set containing
$b$.
\end{lemma}

\begin{proof}
This is a combination of Lemmas~\ref{lem:generalsound}
and~\ref{lem:basecomplete}.
\qed
\end{proof}

Note that it is a priori not clear that
Algorithm~\ref{alg:base} returns only one value; however, this is
obtained as a consequence of Lemma~\ref{lem:basecorrectness}.

\section{Recovering the $\exptime{}$ hierarchy}\label{sec:nopartialvar}

While interesting, Lemma~\ref{lem:nondetelementary} exposes a
problem: non-determinism is unexpectedly powerful in the higher-order
setting.  If we still want to use non-deterministic programs towards
characterising non-deterministic complexity classes, we must surely
start by considering restrictions which avoid this explosion of
expressivity.

One direction is to consider \emph{arrow depth} instead of data order.
Using Lemma~\ref{lem:interpretcard}, we easily recover the original
hierarchy---and obtain the last line of Figure~\ref{fig:overview}.

\smallskip\noindent
\begin{tabular}{c|c|c|c|c}
& \textbf{arrow depth 0} &
\textbf{arrow depth 1} &
\textbf{arrow depth 2} &
\dots
\\
\cline{1-5}
\textbf{cons-free} &
$\pclass = \exptime{0}$ &
$\expclass = \exptime{1} $ &
$\exptime{2}$ & \dots
\\
\cline{1-5}
\end{tabular}

\edef\arrowdepththm{\number\value{theorem}}
\begin{theorem}\label{thm:arrowdepth}
The class of non-deterministic cons-free programs where all
variables are typed with a type of arrow depth $K$ characterises
$\exptime{K}$.
\end{theorem}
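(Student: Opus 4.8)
The plan is to prove the two inclusions of the characterisation separately. For the hardness direction (that $\exptime{K}$ is contained in the stated class) I would not redo the simulation but invoke Lemma~\ref{lem:deterministic:simulate}, which already produces, for each problem in $\exptime{K}$, a deterministic cons-free program of data order $K$, and then inspect that construction. The only variables of functional type introduced by Lemmas~\ref{lem:module:pol} and~\ref{lem:module:exp} and by the core simulation of Figure~\ref{fig:machine} are the number parameters (the $F$'s), whose types are the $\bool$-valued function types obtained by iterating Lemma~\ref{lem:module:exp}; a short induction shows that each of these has arrow depth equal to its type order, hence at most $K$, while every remaining variable has a type of order $0$ and so arrow depth $0$. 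Since a deterministic program is in particular non-deterministic, this yields one inclusion.

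For the converse I would reuse Algorithm~\ref{alg:general} together with its correctness statement (Lemma~\ref{lem:generalcorrectness}), but analyse its cost by measuring the sets $\interpret{\atype}$ through arrow depth (Lemma~\ref{lem:interpretcard}) rather than type order (Lemma~\ref{lem:pinterpretcard}). Since $\Card(\B)$ is linear in the input size $n$, writing $N = \OO(n)$ and letting $L$ be the (constant, program-dependent) maximal type length, Lemma~\ref{lem:interpretcard} bounds $\Card(\interpret{\atype})$ by $\exp_2^K(N^L)$ and each $\sqsupseteq$-test by $\exp_2^K(N^{(L+1)^3})$ for every $\atype$ of arrow depth at most $K$. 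Lemma~\ref{lem:complexitycore}---whose proof needs only a uniform bound $F(n)$ on the cardinalities of the finitely many extensional-value sets in play---then yields a running time of $\timecomp{a \cdot F(n)^b}$; with $F(n) = \exp_2^K(\mathrm{poly}(n))$ and $\exp_2^K(X)^b \le \exp_2^K(c \cdot X)$ this lands in $\timecomp{\exp_2^K(a' \cdot n^{b'})} \subseteq \exptime{K}$.

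The step demanding real care---and the one I expect to be the main obstacle---is ensuring that every type the algorithm actually manipulates has arrow depth at most $K$, knowing only that the \emph{variables} do. This is the role Lemma~\ref{lem:proper} plays in the data-order setting, but it cannot be reused verbatim: type order is always bounded by arrow depth and not conversely, so Lemma~\ref{lem:proper} controls only the order of the types in $\Sigma$, and these may still have arrow depth exceeding $K$ (for instance $\iota \arrtype \iota \arrtype \iota$ has order $1$ but depth $2$). I would therefore prove an arrow-depth counterpart of Lemma~\ref{lem:proper}: $\eta$-expand every clause whose right-hand side still has functional type of arrow depth $> K$, and replace by a dedicated unusable symbol every subexpression $\apps{\identifier{f}}{s_1}{s_n}$ one of whose arguments has a type of arrow depth $> K$. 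The key observation, now read for arrow depth, is that consuming a functional value of arrow depth $d$ forces the program to bind it to a variable of arrow depth $d$; hence any genuinely-used subexpression of arrow depth $> K$ would require a variable of arrow depth $> K$, while the remaining such subexpressions sit beneath an $\symb{if}$ or $\choice$ that never selects them and can be discarded. After this transformation, and after restricting the statement condition of Algorithm~\ref{alg:base} to remaining types of arrow depth $\le K$, all argument types, all reachable output types and all right-hand-side subexpression types have arrow depth $\le K$, so $\Sigma$ is finite and the cardinality bound above applies. Soundness and completeness of the modified algorithm then go through by the same inductions as Lemmas~\ref{lem:generalsound} and~\ref{lem:generalcomplete}, which never used the order bound essentially.
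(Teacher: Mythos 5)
Your proposal is correct and takes essentially the same route as the paper: the simulation direction by observing that type order and arrow depth coincide in Figure~\ref{fig:machine} and the counting modules, and the converse by an arrow-depth analogue of Lemma~\ref{lem:proper} feeding into Algorithm~\ref{alg:general} (with the statement set restricted by $\mathit{depth} \leq K$), bounded via Lemmas~\ref{lem:complexitycore} and~\ref{lem:interpretcard}. The only difference is that the transformation you propose to redo by hand is obtained in the paper's appendix for free, since the proof of Lemma~\ref{lem:proper} was deliberately written parametrically in a notion of ``proper'' type requiring only that order-$0$ types are proper and that $\atype \times \btype$ is proper iff $\atype$ and $\btype$ are---conditions that arrow depth $\leq K$ satisfies, exactly addressing the subtlety you correctly flag (order is bounded by depth but not conversely).
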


\begin{proof}[Sketch]
Both in the base program in Figure~\ref{fig:machine}, and in
the counting modules of Lemmas~\ref{lem:module:pol}
and~\ref{lem:module:exp}, type order and arrow depth coincide.  Thus
every decision problem in $\exptime{K}$ is accepted by a cons-free
program with ``data arrow depth'' $K$.
For the other direction, the proof of Lemma~\ref{lem:proper} is
trivially adapted to use arrow depth rather than type order.  Thus,
altering the preparation step in Algorithm~\ref{alg:general} gives an
algorithm which determines the possible outputs of a program with
data arrow depth $K$, with the desired complexity by
Lemma~\ref{lem:interpretcard}.
\qed
\end{proof}

A downside is that, by moving away from data order, this
result is hard to compare with other characterisations using cons-free
programs.  An alternative is to impose a restriction alongside
cons-freeness: \emph{unitary variables}.  This gives no restrictions
in the setting with data order $0$---thus providing the
first column in the table from Section~\ref{sec:elementary}---and
brings us the second-last line in Figure~\ref{fig:overview}:

\smallskip\noindent
\begin{tabular}{c|c|c|c|c}
& \textbf{data order 0} &
\textbf{data order 1} &
\textbf{data order 2} &
\textbf{data order 3} 
\\
\cline{1-5}
\textbf{cons-free} &
$\pclass =$ &
$\expclass =$ &
\multirow{2}{*}{$\exptime{2}$} &
\multirow{2}{*}{$\exptime{3}$} 
\\
\textbf{unitary variables} &
$\exptime{0}$ &
$\exptime{1}$ &
\\
\cline{1-5}
\end{tabular}

\smallskip
\begin{definition}\label{def:unaryvar}
A program $\prog$ has \emph{unitary variables} if clauses
are typed with an assignment mapping each variable $x$ to a type
$\asortorpair$ or $\atype \arrtype \asortorpair$, with
$\typeorder{\asortorpair} = 0$.
\end{definition}

Thus, in a program with unitary variables, a variable of a type
$(\bits \times \bits \times \bits) \arrtype \bits$ is admitted, but
$\bits \arrtype \bits \arrtype \bits \arrtype \bits$ is not.  The
crucial difference is that the former must be applied to all its
arguments at the same time, while the latter may be partially applied.
This avoids the problem of Lemma~\ref{lem:nondetmodule}.

\begin{theorem}\label{thm:unitary}
The class of (deterministic or non-deterministic) cons-free programs
with unitary variables of data order $K$ characterises $\exptime{K}$.
\end{theorem}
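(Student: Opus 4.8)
The plan is to establish the two inclusions separately, mirroring exactly the structure of Theorems~\ref{thm:deterministic} and~\ref{thm:nondeterministic}: every problem in $\exptime{K}$ is accepted by a cons-free program with unitary variables of data order $K$ (simulation), and conversely every such program accepts a problem in $\exptime{K}$ (algorithm). Since deterministic programs are a special case of non-deterministic ones, both the deterministic and non-deterministic claims follow from the same two arguments. For the first inclusion I would reuse the simulation behind Lemma~\ref{lem:deterministic:simulate}, observing that the programs it produces already have unitary variables; for the second I would re-run Algorithm~\ref{alg:general}, but with a sharpened complexity analysis exploiting Definition~\ref{def:unaryvar}.

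For the lower bound, the only thing to check is that the base simulation of Figure~\ref{fig:machine}, the polynomial module $C_\pol$ (Lemma~\ref{lem:module:pol}), and each application of the exponential module $C_\epi$ (Figure~\ref{fig:epi}, Lemma~\ref{lem:module:exp}) use only unitary variables. This is a routine inspection: in Figure~\ref{fig:machine} and in $C_\pol$ every variable has order $0$; in the $j$-th iterate of the exponential step the counting type is $\numtype[\epi] = \numtype \arrtype \bool$, so the only higher-order variables are $F : \numtype \arrtype \bool$ and $k,i : \numtype$, each of which is either of order $0$ or of the form $\atype \arrtype \asortorpair$ with $\typeorder{\asortorpair}=0$, hence unitary. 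Consequently $C_{\symb{e}[\dots[\pol]]}$ is unitary of data order $K$, and Lemma~\ref{lem:deterministic:simulate} already yields a (deterministic, hence non-deterministic) unitary program of data order $K$ for every problem in $\exptime{K}$.

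For the upper bound I would run Algorithm~\ref{alg:general} and bound its cost via Lemma~\ref{lem:complexitycore}; the whole difficulty is that Lemma~\ref{lem:interpretcard} measures $\Card(\interpret{\atype})$ by \emph{arrow depth}, and an order-$K$ unitary type such as $(\bool \arrtype \bool \arrtype \bool) \arrtype \bool$ has arrow depth $K+1$, which would only give an $\elementary$ bound --- precisely the explosion of Lemma~\ref{lem:nondetmodule}. The key structural observation I would prove is that such types never genuinely occur: if a unitary variable $x : \atype \arrtype \asortorpair$ is ever \emph{applied}, then the value bound to $x$ is a partial application whose head takes an argument of type $\atype$, which forces a pattern variable of type $\atype$ elsewhere, so $\atype$ must itself be unitary; iterating, every \emph{used} unitary variable has a \emph{strictly unitary} type (built only from left-nested single arrows into order-$0$ types), and for such types arrow depth equals type order. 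Variables that are never applied can be stripped out exactly as in Lemma~\ref{lem:proper}.

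The main obstacle is to push this from variable types to every extensional value the algorithm stores. I would show that any partial application appearing as an argument (to a defined symbol or to a variable) likewise has strictly unitary type, so that after restricting the statement set of step~\ref{alg:prepare:statements} to strictly unitary types --- and arguing that the discarded statements, whose partial applications are always immediately fully applied, are never needed for correctness --- every relevant type $\atype$ satisfies $\mathit{depth}(\atype) = \typeorder{\atype} \leq K$. Lemma~\ref{lem:interpretcard} then gives $\Card(\interpret{\atype}) \leq \exp_2^K(\mathrm{poly}(n))$, and Lemma~\ref{lem:complexitycore} yields the desired $\exptime{K}$ bound; correctness of the restricted algorithm follows from the already-established Lemma~\ref{lem:generalcorrectness}.
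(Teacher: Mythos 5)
Your proposal is correct and takes essentially the same approach as the paper: the paper likewise verifies that Figure~\ref{fig:machine} and the counting modules of Lemmas~\ref{lem:module:pol} and~\ref{lem:module:exp} are unitary and deterministic, then defines \emph{recursively unitary} types (your ``strictly unitary'' ones), extends the transformations of Lemma~\ref{lem:proper} so that all (sub-)expressions can be assumed to have such types, and concludes via Theorem~\ref{thm:arrowdepth} because arrow depth and type order coincide on them. The only difference is in mechanics: where you restrict the algorithm's statement set and leave ``discarded statements are never needed'' as an obligation, the appendix instead re-types the (unchanged) program via a collapsing function $\mathit{fixtype}$ that prunes the never-applied inner argument positions and proves well-typedness is preserved by induction --- the same pruning, packaged as a retyping (note that never-applied variables cannot literally be ``stripped out'' as in Lemma~\ref{lem:proper}, since they may still be passed around; they must be re-typed, which is exactly what $\mathit{fixtype}$ accomplishes).
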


\begin{proof}[Sketch]
Both the base program in Figure~\ref{fig:machine} and the counting
modules of Lemmas~\ref{lem:module:pol} and~\ref{lem:module:exp} have
unitary variables, and are deterministic---this gives one direction.
For the other, let a recursively unitary type be $\asortorpair$
or $\atype \arrtype \asortorpair$ with $\typeorder{\asortorpair} = 0$
and $\atype$ recursively unitary.
The transformations of Lemma~\ref{lem:proper} are easily extended to
transform a program with unitary variables of type order $\leq K$ to one
where all (sub-)expressions have a recursively unitary type.
Since here data order and arrow depth are the same in this case, we
complete with Theorem~\ref{thm:arrowdepth}.
\qed
\end{proof}

\section{Conclusion and future work}

We have studied the effect of combining higher types and
non-determinism for cons-free programs.  This has resulted in the---%
highly surprising---conclusion that naively adding non-deterministic
choice to a language that characterises the $\exptime{K}$ hierarchy
for increasing data orders immediately increases the expressivity of
the language to $\elementary$.  Recovering a more fine-grained
complexity hierarchy can be done, but at the cost of further
syntactical restrictions. 

The primary goal that we will pursue in future work is to
use non-deterministic cons-free programs to characterise hierarchies
of \emph{non}-deterministic complexity classes such as
$\nexptime{K}$ for $K \in \N$.
In addition, it would be worthwhile to make a full study
of the ramifications of imposing restrictions on recursion, such as
tail-recursion or primitive recursion, in combination with
non-determinism and higher types (akin to the study of primitive
recursion in a successor-free language done in \cite{DBLP:journals/mst/KristiansenM12}).
We also intend to study
characterisations of
classes more restrictive than $\pclass$, such as $\logtime$ and
$\logspace$.

Finally, given the surprising nature of our results, we
urge readers to investigate the effect of adding non-determinism to
other programming languages used in implicit complexity that
manipulate higher-order data. We conjecture that the effect on
expressivity there will essentially be the same as what we have
observed.

\bibliography{references}

\clearpage
\appendix

\edef\savedcounter{\number0}
\newcommand{\startappendixcounters}{
\setcounter{lemma}{\savedcounter}
\renewcommand{\thelemma}{\Alph{section}\arabic{lemma}}
}
\newcommand{\oldcounter}[1]{
\edef\savedcounter{\number\value{lemma}}
\setcounter{lemma}{#1}
\renewcommand{\thelemma}{\arabic{lemma}}
}
\startappendixcounters

\noindent
\emph{This appendix contains full proofs of the results presented in the
text.}

\section{Matching expression and function order with data order
(Section~\ref{sec:semantics})}\label{app:properness}

In this first section, we consider Lemma~\ref{lem:proper}, which gives
a way to translate a propram which merely has data order $K$ to one
where all sub-expressions in all clauses have a type of order at most
$K$, and where for defined symbols $\identifier{f} : \atype_1 \arrtype
\dots \arrtype \atype_m \arrtype \asortorpair$ both each $\atype_i$
and $\asortorpair$ also have type order $\leq K$.

\medskip
The work in this appendix may initially seem to be rather more
detailed than necessary.  However, we must be very precise because we
will reuse the proofs to obtain the same results for \emph{arrow
depth} and \emph{unitary variables} in Appendix~\ref{app:arrowdepth}.
To easily combine these proofs, we define:

\begin{definition}\label{def:propertype}
In Appendix~\ref{app:properness}, a type $\atype$ is \emph{proper} if
$\typeorder{\atype} \leq K$ for some fixed non-negative integer $K$.
A program is proper if it is well-formed, and all clauses are typed so
that variables are assigned a proper type.

Note: types of order $0$ are proper, and $\atype \times \btype$ is
proper if and only if both $\atype$ and $\btype$ are proper.
\end{definition}

Henceforth, we will refer only to ``proper'' programs, not to type
orders.  This allows the lemmas to easily translate to different
notions of ``proper'' (which satisfy the two requirements mentioned
as notes.)

\medskip
A crucial insight to understand the lemma is that if an expression of
a certain type is used, then there has to be a variable of at least a
matching type order.

\begin{lemma}\label{lem:ordexpress}
Given a proper program $\prog$ and a derivation tree with root
$\prog,\gamma \vdash \apps{t}{s_1}{s_n} \arrr w$ (for $t$ an
expression),
the type of $t$ is $\atype_1 \arrtype \dots \arrtype \atype_n
\arrtype \btype$ with $\atype_i$ proper for all $1 \leq i \leq n$.
\end{lemma}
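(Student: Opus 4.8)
The plan is to prove Lemma~\ref{lem:ordexpress} by induction on the structure of the derivation tree rooted at $\prog,\gamma \vdash \apps{t}{s_1}{s_n} \arrr w$, distinguishing cases on the last rule applied. When $n = 0$ there are no argument types $\atype_i$ to constrain, so the statement is vacuous; this disposes of every case in which the evaluated expression is a variable, a bare defined symbol, a pair, or a top-level $\mathtt{if}$- or $\choice$-expression, since none of these is a function application (a pair moreover cannot be a head, lacking a function type). As [Appl] is the only rule evaluating the genuine application form $s\ t$ and [Constructor] the only rule evaluating a constructor application, it suffices to treat these two cases.

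For the [Constructor] case we have $\apps{t}{s_1}{s_n} = \apps{\identifier{c}}{s_1}{s_n}$ with $\identifier{c} \in \Constructors$, so each $\atype_i$ is an argument type of $\identifier{c}$. By the typing discipline for constructors (Remark~\ref{remark:constructor}), every such argument type has type order $0$ and is therefore proper, discharging this case immediately.

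The substantive case is [Appl], where $\apps{t}{s_1}{s_n}$ is parsed as $(\apps{t}{s_1}{s_{n-1}})\ s_n$ and the rule exhibits a subderivation $\prog,\gamma \vdash \apps{t}{s_1}{s_{n-1}} \arrr \apps{\identifier{f}}{v_1}{v_k}$ for some $\identifier{f} \in \Defineds$ with $k < \arity(\identifier{f})$, together with $\prog,\gamma \vdash s_n \arrr v_{k+1}$ and $\prog \vdashcall \apps{\identifier{f}}{v_1}{v_{k+1}} \arrr w$. Applying the induction hypothesis to the first subderivation (whose evaluated expression is exactly the head $\apps{t}{s_1}{s_{n-1}}$) yields that $\atype_1,\dots,\atype_{n-1}$ are proper, so it remains only to bound $\atype_n$, the type of $s_n$. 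Since values are well-typed and evaluation respects typing, this type equals the $(k+1)$-th argument type $\ctype_{k+1}$ of $\identifier{f}$, i.e.\ $\identifier{f} : \ctype_1 \arrtype \dots \arrtype \ctype_{k+1} \arrtype \dots$ with $k+1 \leq \arity(\identifier{f})$. Crucially, $\arity(\identifier{f})$ is defined here — otherwise neither [Closure] nor [Call] could justify the third premise — so $\identifier{f}$ has at least one clause $\apps{\identifier{f}}{\ell_1}{\ell_{\arity(\identifier{f})}} = s$, whose $(k+1)$-th pattern satisfies $\ell_{k+1} : \ctype_{k+1}$ by well-typedness. A pattern is either a variable or a fully applied constructor, so $\ctype_{k+1}$ is either the type of a clause variable, hence proper by the definition of a proper program, or the output sort of a constructor, hence of type order $0$ and again proper.

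The main obstacle — and the reason the statement holds at all — is exactly the point made in the previous sentence: a priori a proper program could contain a defined symbol with an improper argument type that never surfaces as a clause variable, seemingly contradicting the lemma. This is ruled out because any head position must evaluate to a partial application $\apps{\identifier{f}}{v_1}{v_k}$ of a defined symbol with a \emph{genuinely defined} arity, which forces the existence of a clause of $\identifier{f}$ and hence of a pattern $\ell_{k+1}$ of precisely the type $\atype_n$ we must bound; properness of variables (and order-$0$-ness of constructor outputs) then finishes the argument. The remaining bookkeeping — that the evaluated head is itself of the form $\apps{t}{s_1}{s_{n-1}}$ needed to invoke the induction hypothesis, and that the argument type read off from $\identifier{f}$ coincides with the type of $s_n$ — is routine.
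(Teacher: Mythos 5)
Your overall route is essentially the paper's: induction peeling off \textup{[Appl]}, applying the induction hypothesis to the head, and then bounding $\atype_n$ by observing that the head must evaluate to a partial application $\apps{\identifier{f}}{v_1}{v_k}$ of a defined symbol with $k < \arity(\identifier{f})$, so a clause of $\identifier{f}$ exists whose $(k{+}1)$-th pattern has exactly the type $\atype_n$, which is then constrained by the properness of clause variables. Your explicit handling of the \textup{[Constructor]} case (argument types of constructors have order $0$) is in fact slightly more careful than the paper, which silently sets that case aside.

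There is, however, one genuine gap: your case analysis of the pattern $\ell_{k+1}$ --- ``either a variable or a fully applied constructor'' --- omits \emph{pair patterns}. Although the grammar in Figure~\ref{fig:syntax} does not list them, the language as actually used admits them: the counting-module clauses such as $\pred[\exx]\ cs\ (x_0,x_1,y\cons ys) = (x_0,x_1,ys)$ match on pairs, and the appendix (e.g., Lemma~\ref{lem:downhelp}) treats pair patterns explicitly. For a pair pattern, the type of $\ell_{k+1}$ is a product type, which is neither ``the type of a clause variable'' nor ``the output sort of a constructor,'' so your dichotomy does not close this case. This is precisely where the paper's own proof does its only real work: if $\atype_n$ were improper, it cannot be a sort (so the pattern is not a constructor application), hence the pattern is a variable or a pair; since an improper product type has at least one improper immediate subtype, recursing through the pair pattern must eventually expose a variable of improper type, contradicting properness of $\prog$. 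Note that this repair uses exactly the closure property recorded in Definition~\ref{def:propertype} ($\atype \times \btype$ is proper iff both $\atype$ and $\btype$ are) --- a property your proof never invokes, which is a symptom of the missing case; it matters because these proofs are later reused for other instantiations of ``proper'' (arrow depth, unitary variables), where product types must be threaded through correctly. The fix is routine, but as written the step ``so $\ctype_{k+1}$ is \dots proper'' would fail on a clause like $\symb{f}\ (x,y) = \dots$.
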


Here, we speak of \emph{the} type of $t$, because within the context
of the derivation, a unique type is associated to all expressions,
even variables.
Formally, we could for instance consider the type of a variable $x$ to
be the type of the value $\gamma(x)$.

\begin{proof}
By induction on $n$; for $n = 0$ there is nothing to prove.  For
larger $n$, note that $\apps{t}{s_1}{s_n}$ is an application, so the
result can only be derived by [Appl]:
\begin{prooftree}
\AxiomC{$\prog,\gamma \vdash \apps{t}{s_1}{s_{n-1}} \arrr w_1$}
\AxiomC{$\prog,\gamma \vdash s_n \arrr v_{i+1}$}
\AxiomC{$\prog \vdashcall \apps{\identifier{g}}{v_1}{v_{i+1}} \arrr
  w$}
\TrinaryInfC{$\prog,\gamma \vdash \apps{t}{s_1}{s_n} \arrr w$}
\end{prooftree}
Here, $w_1 = \apps{\identifier{g}}{v_1}{v_i}$.  By the induction
hypothesis on the first premise, each of $\atype_1,\dots,
\atype_{n-1}$ is typed properly; in addition, $v_{i+1} : \atype_n$
and since
$\arity(\identifier{g}) > i$ there must be a rule $\apps{\identifier{
g}}{\ell_1}{\ell_{i+1}} \cdots \ell_k = s$.  If $\atype_n$ is proper
we are done; otherwise, the pattern $\ell_{i+1}$ can only be a
variable or a pair.  Since at least one of the immediate subtypes of
an improper product type is also improper, $\ell_{i+1}$ must contain
a variable with an improper type; as this contradicts the properness
of $\prog$, indeed $\atype_n$ is proper.
\qed
\end{proof}

With this insight, we turn to a series of transformations, as used in
the proof sketch in the text.
The first step---increasing arities of clauses with certain output
types---is the hardest.  Essentially, we can increase the arities of
clauses whose type $\atype \arrtype \btype$ has order $> K$ because,
when a value $w$ of type $\atype \arrtype \btype$ is generated,
$w$ is eventually applied on some value $v$ of type $\atype$---and
not copied before that.

\begin{lemma}\label{lem:increasearity}
Given a proper program $\prog$,
let $\eprog$ be obtained from $\prog$ by replacing all clauses
$\identifier{f}\ \vec{\ell} = s$ where $s$ has an \emph{improper}
type $\atype \arrtype \btype$ with $\atype$ itself proper, by
$\identifier{f}\ \vec{\ell}\ x = s\ x$ for some fresh variable $x$.
Then $\eprog$ is a well-formed program with data order $K$, and
$\progresult$ if and only if
$\progeval{\eprog}{d_1,\dots,d_M} \mapsto b$.
\end{lemma}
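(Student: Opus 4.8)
The plan is to prove the two assertions of the lemma separately: first that $\eprog$ is again a well-formed program of data order $K$, and then the semantic equivalence, which is the substantial part. Both directions of the equivalence will be obtained from a single simultaneous induction on derivation trees, and the only delicate point will be the application rule at an eta-expanded symbol, where the arities of $\prog$ and $\eprog$ diverge.

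For well-formedness, the key observation is that the eta-expansion is applied uniformly per defined symbol. Since $\prog$ is well-typed, all clauses with root $\identifier{f}$ share the type $\F(\identifier{f})$ and hence their right-hand sides all have the same type; thus either every clause of $\identifier{f}$ is rewritten or none is, and the common arity is consistently raised by one. The new clause $\identifier{f}\ \vec{\ell}\ x = s\ x$ type-checks: if $\identifier{f}\ \vec{\ell} : \atype \arrtype \btype$ then $\identifier{f}\ \vec{\ell}\ x : \btype$ and, since $x : \atype$, also $s\ x : \btype$. The fresh variable $x$ receives the domain type $\atype$, which is \emph{proper} by hypothesis, so every variable of $\eprog$ still has a proper type and $\eprog$ has data order $K$.

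For the semantic equivalence I would first record two facts that make the two programs interact cleanly. First, the proper-typed values of $\prog$ and $\eprog$ coincide syntactically: an eta-expanded symbol $\identifier{g}$ has an improper right-hand-side type $\atype \arrtype \btype$, so every partial application $\identifier{g}\ v_1 \cdots v_m$ with $m < \arity(\identifier{g})$ has a type ending in $\atype \arrtype \btype$ and is therefore improper; consequently proper-typed values are built only from data, pairs, and partial applications of symbols whose arity is unchanged (and whose stored arguments are themselves proper by Lemma~\ref{lem:ordexpress}). As data order $K$ forbids variables of improper type, every environment occurring in a derivation maps variables to such shared proper values. Second, and this is the heart of the matter, a value of improper type is never stored or duplicated: it is produced by a full [Call] (or by [Closure]) and is \emph{immediately} the head of an [Appl]. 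This follows from Lemma~\ref{lem:ordexpress} together with data order $K$, since an improper value cannot be bound to a variable (none has improper type) nor passed as an argument (its type would then be a \emph{proper} argument type by Lemma~\ref{lem:ordexpress}, a contradiction), and tracing it through the remaining rules leaves the application of one further argument as the only non-dead possibility.

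This second fact is exactly the behaviour that eta-expansion internalises: in $\prog$ the body of an eta-expanded clause reduces to a function value $w$ which is then applied to the next argument $v$, whereas in $\eprog$ the closure $\identifier{f}\ \vec{v}$ waits for $v$ and then reduces $s\ x$ with $x := v$, first obtaining the same $w$ and then applying it to $v$, yielding the same result. I would therefore state the induction with a statement strong enough to absorb this single-argument offset: for proper-typed expressions I claim $\prog,\gamma \vdash s \arrr w$ if and only if $\eprog,\gamma \vdash s \arrr w$, and for full calls I pair a $\prog$-call $\prog \vdashcall \apps{\identifier{f}}{v_1}{v_k} \arrr w$ to an eta-expanded $\identifier{f}$ with the $\eprog$-call $\eprog \vdashcall \identifier{f}\ v_1 \cdots v_k\ v_{k+1} \arrr w'$, where $w'$ is the value of $w$ applied to $v_{k+1}$ (and directly for non-expanded symbols). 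Most rule cases are routine because environments and proper values are shared; the main obstacle is the [Appl] case whose head has improper type, where I would invoke the second fact to fuse the production of the intermediate function value with its unique application into one eta-expanded [Call], and conversely to split an $\eprog$-[Call] back into a $\prog$ production-plus-application. I expect precisely this case — the point where the two arities disagree — to require the most care, and it is the reason the preparatory Lemma~\ref{lem:ordexpress} is needed.
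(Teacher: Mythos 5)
Your overall strategy is the paper's: the well-formedness argument is identical, the easy direction ($\eprog$ to $\prog$, handled by shifting subtrees around the altered clause) is identical, and the structural insight you lean on---improper values cannot be bound to variables or passed as arguments (this is exactly Lemma~\ref{lem:ordexpress}), hence are consumed linearly---is the same one the paper uses. The problem is that your two stated induction claims are too weak to carry the [Appl]/[Call] cases you correctly flag as the crux. First, an improper value is \emph{not} always ``immediately the head of an [Appl]'': it can propagate through [Choice], [Conditional] and [Call] \emph{conclusions} before being applied---e.g.\ an expanded clause whose body is $\choice\ s_1\ s_2$ of improper type---so the ``fusing'' must cope with an arbitrary interleaving of these rules between production and consumption, and your expression-level claim (stated only for proper-typed $s$) gives no induction hypothesis for these improper-typed subderivations. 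Second, one pending argument is not enough even within a single pass of the transformation: in $\atype \arrtype \btype$ with $\atype$ proper, $\btype$ may itself be improper (take $\atype$ and the domain of $\btype$ both of order exactly $K$), so $\prog$-derivations contain chains $w_0\ v_1 \arrr w_1$, $w_1\ v_2 \arrr w_2$ with $w_1$ again improper. Relatedly, your clause ``$w'$ is the value of $w$ applied to $v_{k+1}$'' is ambiguous across the two programs: if $w = \apps{\identifier{g}}{u_1}{u_m}$ with $\identifier{g}$ itself expanded, applying $v_{k+1}$ in $\prog$ may trigger a full call while in $\eprog$ it merely extends the closure, so the two ``applications'' yield different values, and making your pairing precise at improper types would force a recursive, logical-relations-style value correspondence that your proposal never defines.

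The paper's proof dissolves all three difficulties at once by strengthening the hard direction to a statement with an arbitrary stack of pending arguments: if $\prog,\gamma \vdash s \arrr w_0$ (with the type of $\apps{s}{t_1}{t_n}$ proper), each $\eprog,\gamma \vdash t_i \arrr v_i$, and $\eprog \vdashcall w_{i-1}\ v_i \arrr w_i$ for $1 \leq i \leq n$, then $\eprog,\gamma \vdash \apps{s}{t_1}{t_n} \arrr w_n$. Note the trick: the head value $w_0$ is produced by $\prog$, but all subsequent applications are taken in $\eprog$'s call relation on the \emph{shared value syntax}---this sidesteps any cross-program value correspondence entirely. Induction on the derivation of $\prog,\gamma \vdash s \arrr w_0$ then handles propagation through [Choice]/[Conditional]/nested [Call]s and the improper chains uniformly; your one-argument pairing is essentially the case $n = 1$ of this statement. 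So the idea you identified is the right one, but as formulated your induction would get stuck precisely at the case you predicted would ``require the most care,'' and the repair is the $n$-argument generalisation above.
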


\begin{proof}
The preservation of properness is clear, since $x$ can only have a
proper type $\atype$ and well-formedness is preserved because all
clauses with root symbol $\identifier{f}$ have the same type, so
are affected in the same way.

First, if $\progeval{\eprog}{d_1,\dots,d_M} \mapsto b$ then
$\progresult$ follows easily by induction on the size of the
derivation tree of $\progeval{\eprog}{d_1,\dots,d_M} \mapsto b$; the
only non-trivial step, an [Appl] where the third premise uses one of
the altered clauses, is handled by using the original rule instead
and shifting the subtrees around.

If $\progresult$, we get $\progeval{\eprog}{d_1,\dots,d_M} \mapsto
b$ by proving by induc\-tion on the derivation that $\eprog,\gamma \vdash
\apps{s}{t_1}{t_n} \arrr w_n$ if the following properties hold:
\begin{itemize}
\item $s : \atype_1 \arrtype \dots \arrtype \atype_n \arrtype \btype$
  with $\btype$ a proper type;
\item $t_1 : \atype_1,\dots,t_n : \atype_n$ are expressions and
  $v_1 : \atype_1,\dots,v_n : \atype_n$ are values;
\item $w_0,\dots,w_n$ are values with each $w_i : \atype_{i+1}
  \arrtype \dots \arrtype \atype_n \arrtype \btype$;
\item $\prog,\gamma \vdash s \arrr w_0$;
\item both $\eprog,\gamma \vdash t_i \arrr v_i$ and $\eprog \vdashcall
  w_{i-1}\ v_i \arrr w_i$ for all $1 \leq i \leq n$.
\end{itemize}

This gives the required result for $n = 0$ and $s = \apps{\identifier{
f}_1}{x_1}{x_M}$ (which has a type of order $0$).  Consider the rule
used to derive $\prog,\gamma \vdash s \arrr w_0$.

\begin{description}
\item[Instance] $w_0 = \gamma(s)$; then also $\eprog,\gamma \vdash s
  \arrr w_0$, and we complete with [Appl].
\item[Constructor] $n = 0$ and we complete with the
  induction hypothesis and [Constructor] (all sub-expressions $s_i$
  have a type of order $0$, which is proper).
\item[Pair] $n = 0$; we complete with the induction hypothesis and
  [Pair] (as the direct subtypes of a proper product types are also
  proper).
\item[Choice] $s = \apps{\choice}{s_1}{s_m}$ and the immediate subtree
  has root $\prog,\gamma \vdash s_j \arrr w_0$ for some $i$; by the
  induction hypothesis, $\eprog,\gamma \vdash \apps{s_j}{t_1}{t_n}
  \arrr w_n$.  If $n = 0$, then $\eprog,\gamma \vdash s \arrr w_n$ by
  [Choice].  Otherwise, $\eprog,\gamma \vdash \apps{s_j}{t_1}{t_n}
  \arrr w_n$ can only be obtained by [Appl]; thus, there are $w_0',
  \dots,w_n'=w_n$ and $v_1',\dots,v_n'$ such that $\eprog,\gamma
  \vdash s_j \arrr w_0'$ and $\eprog,\gamma \vdash t_i \arrr v_i'$ and
  $\eprog,\gamma \vdash w_{i-1}'\ v_i' \arrr w_i'$ for $1 \leq i \leq
  n$.  This gives first $\eprog,\gamma \vdash s \arrr w_0'$ by
  [Choice] and then $\eprog,\gamma \vdash \apps{s}{t_1}{t_n} \arrr
  w_n' = w_n$ by $n$ uses of [Appl].
\item[Conditional] Whether obtained by [If-True] or [If-False],
  this follows like [Choice].
\item[Function] $s = \identifier{f}$ and the immediate subtree has
  root $\prog \vdashcall \identifier{f} \arrr w_0$.
\begin{itemize}
\item If $\arity(\identifier{f}) > 0$, then $w_0 =
  \identifier{f}$ and since $\mathtt{arity}_{\eprog}(\identifier{f})
  \geq \arity(\identifier{f})$, also $\eprog,\gamma \vdash
  s \arrr w_0$; we complete with [Appl].
\item If $\arity(\identifier{f}) = \mathtt{arity}_{\eprog}(
  \identifier{f}) = 0$, then $\prog \vdashcall \identifier{f} \arrr
  w_0$ holds because $\prog,[] \vdash t \arrr w_0$ for some clause
  \pagebreak
  $\identifier{f} = t$.  Observing that $\prog,[x_1:=v_1,\dots,x_n:=
  v_n] \vdash \apps{t}{x_1}{x_n} \arrr w_n$ by the induction
  hypothesis, we follow the reasoning from the [Choice] case to
  obtain $w_0',\dots,w_n'=w_n$ such that $\eprog,\gamma \vdash
  s \arrr w_0'$ and $w_{i-1}'\ v_i \arrr w_i'$ for $1 \leq i \leq n$
  (here, the $v_i$ are unaltered since variables can only be
  evaluated in one way); we complete with [Appl] once more.
\item If $\arity(\identifier{f}) = 0$ but $\mathtt{arity}_{\eprog}(
  \identifier{f}) = 1$, then $n > 0$ since only clauses with an
  improper type were altered.  As $\prog,[] \vdash t \arrr w_0$
  for some clause $\identifier{f} = t$, the induction
  hypothesis gives $\prog,[x_1:=v_1,\dots,x_n:=v_n] \vdash
  \apps{t}{x_1}{x_n} \arrr w_n$.
  
  As in the [Choice] case, but considering $t\ x_1$ as the head, we
  find $w_1',\dots,w_n'=w_n$ such that $\eprog,[x_1:=v_1,\dots,x_n:=
  v_n] \vdash t\ x_1 \arrr w_1'$ and $w_{i-1}'\ v_i \arrr w_i'$ for
  $1 < i \leq n$.
  Since $x_2,\dots,x_n$ do not occur in $t\ x_1$, we can adapt the
  first of these trees to have a root $\eprog,[x_1:=v_1] \vdash t\ 
  x_1 \arrr w_1'$.

  Then we obtain $\eprog,\gamma \vdash \identifier{f}\ t_1 \arrr
  w_1'$ from the three subtrees $\eprog,\gamma \vdash \identifier{f}
  \arrr \identifier{f}$ (obtained using [Function] and [Closure]),
  $\eprog,\gamma \vdash t_1 \arrr v_1$ and $\eprog \vdashcall
  \identifier{f}\ v_1 \arrr w_1'$ (obtained using [Call] from
  $\eprog,[x_1:=v_1] \vdash t\ x_1 \arrr w_1'$).  Using this,
  $\apps{s\ v_1}{v_2}{v_n} \arrr w_n$ follows by [Appl].
\end{itemize}
\item[Appl] $s = s_1\ s_2$ and the immediate subtrees have roots
  $\prog,\gamma \vdash s_1 \arrr \apps{\identifier{g}}{v_1'}{v_j'}$
  and $\prog,\gamma \vdash s_2 \arrr v_0$ and $\prog \vdashcall
  \apps{\identifier{g}}{v_1'}{v_j'}\ v_0 \arrr w_0$.
  By Lemma~\ref{lem:ordexpress}, $s_2$ has a proper type, so we
  obtain $\eprog,\gamma \vdash s_2 \arrr v_0$ by the induction
  hypothesis.

  Now, using the same reasoning as with [Function], we obtain $w_0',
  \dots,w_n'=w_n$ such that $\eprog \vdashcall \apps{\identifier{g}}{
  v_1'}{v_j'}\ v_0 \arrr w_0'$ and $\eprog \vdashcall w_{i-1}'\ v_i
  \arrr w_i$ for $1 \leq i \leq n$ (if $\arity(\identifier{g}) = j+1$
  and $\mathtt{arity}_{\eprog}(\identifier{g}) = j+2$, then $w_0' =
  \apps{\identifier{g}}{v_1'}{v_j'}\ v_0$).  Now we may use the
  induction hypothesis on the subtree $\prog,\gamma \vdash s_1 \arrr
  \apps{\identifier{g}}{v_1'}{v_j'}$ to obtain
  $\eprog,\gamma \vdash \apps{s_1\ s_2}{t_1}{t_n} \arrr w_n$.
\qed
\end{description}
\end{proof}

Repeating the transformation of Lemma~\ref{lem:increasearity} until it
is no longer applicable, we obtain a proper program where all clauses
either have a proper type, or a type $\atype_{k+1} \arrtype \dots
\arrtype \atype_m \arrtype \asortorpair$ where $\atype_{k+1}$ is
improper; these latter clauses will be removed in one of the
following steps.

The next step is the removal of $\symb{if}$ and $\choice$ expressions
at the head of an application.  This is straightforward, and does not
depend on properness.

\begin{lemma}\label{lem:progifte}
Let $\eprog$ be obtained from $\prog$ by replacing all occurrences of
expressions $\apps{(\ifte{s_1}{s_2}{s_3})}{t_1}{t_n}$ with $n > 0$ in
the right-hand side of any clause by
$\ifte{s_1}{(\apps{s_2}{t_1}{t_n})}{(\apps{s_3}{t_1}{t_n})}$,
and by similarly replacing occurrences of
$\apps{(\apps{\choice}{s_1}{s_m})}{t_1}{t_n}$ with $n > 0$ by
$\apps{\choice}{(\apps{s_1}{t_1}{t_n})}{(\apps{s_m}{t_1}{t_n})}$.

Then, if $\prog$ is a proper program also $\eprog$ is a proper program
and $\progresult$ if{f} $\progeval{\eprog}{d_1,\dots,d_M} \mapsto b$.
\end{lemma}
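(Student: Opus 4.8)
The plan is to first clear away the syntactic bookkeeping and then isolate a single operational rearrangement that carries the whole argument. For the syntactic part, each replacement turns a subexpression $(\ifte{s_1}{s_2}{s_3})\ t_1 \cdots t_n : \btype$ into $\ifte{s_1}{(\apps{s_2}{t_1}{t_n})}{(\apps{s_3}{t_1}{t_n})}$: here $s_1 : \bool$ is unchanged, $s_2$ and $s_3$ keep their functional type $\atype_1 \arrtype \dots \arrtype \atype_n \arrtype \btype$, so $\apps{s_2}{t_1}{t_n}$ and $\apps{s_3}{t_1}{t_n}$ both have type $\btype$ and the whole conditional again has type $\btype$; the typing in Figure~\ref{fig:typing} goes through with the same $\Gamma$, and the choice case is identical. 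Since patterns, variables and the arities of defined symbols are all untouched, $\eprog$ is well-formed, has the same data order as $\prog$ (hence is proper), and the sets of values of the two programs coincide. Moreover each rewrite strictly decreases a suitable weight on expressions (e.g. $W(s\ t) = W(s)\cdot(1+W(t))$, additive on the other constructs), so applying the two rewrites exhaustively defines a well-founded translation $s \mapsto s^*$ on all expressions, and $\eprog$ is exactly $\prog$ with every clause body $s$ replaced by $s^*$.

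The heart of the proof is the operational equivalence, which I will establish in the form $\prog,\gamma \vdash s \arrr w$ iff $\eprog,\gamma \vdash s^* \arrr w$, together with the analogues for the $\vdashif$ and $\vdashcall$ judgements, by simultaneous induction on the size of the derivation tree and in both directions. I organise the case analysis by the head $h$ of the application spine $s = \apps{h}{t_1}{t_n}$. When $h$ is a variable, constructor or defined symbol, or when $h$ is a conditional/choice but $n = 0$, the translation commutes with the outermost constructor, the premises are strictly smaller derivations, and the induction hypothesis applies directly; in particular the [Call] case uses that the same clause still matches (patterns are unchanged) and then the hypothesis on the strictly smaller derivation of the clause body, which in $\eprog$ is the translated body.

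The single interesting case is $s = \apps{u}{t_1}{t_n}$ with $u = \ifte{s_1}{s_2}{s_3}$ and $n > 0$ (and the symmetric case $u = \apps{\choice}{s_1}{s_m}$), where $s^* = \ifte{s_1^*}{(\apps{s_2^*}{t_1^*}{t_n^*})}{(\apps{s_3^*}{t_1^*}{t_n^*})}$. In $\prog$ the derivation of $s \arrr w$ is an $n$-fold nesting of [Appl] whose innermost head-subderivation evaluates $u$ by [Conditional]: a subtree for $s_1 \arrr d$ and, say for $d = \strue$, a subtree for $s_2 \arrr V$ where the closure $V$ is then applied to the argument values $v_1',\dots,v_n'$. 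All of these subtrees---for $s_1$, for $s_2$, for each $t_i \arrr v_i'$, and the intervening [Closure]/[Call] steps---are strict subderivations, so the induction hypothesis transports each to $\eprog$; I then reassemble them in the opposite bracketing, first rebuilding $\eprog,\gamma \vdash \apps{s_2^*}{t_1^*}{t_n^*} \arrr w$ by the same $n$ uses of [Appl], then wrapping with [If-True] and [Conditional] using the transported subtree for $s_1^* \arrr \strue$, which is exactly $\eprog,\gamma \vdash s^* \arrr w$. The converse direction is the mirror image (a derivation of $s^*$ begins with [Conditional]/[If-True], and the $n$ [Appl] steps inside the chosen branch are peeled off and re-threaded through one [Conditional] at the foot of the spine), and the choice case is verbatim the same with [Choice] in place of [Conditional]/[If-True].

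The only real obstacle is getting this subtree reassembly exactly right---tracking the closure value $V$ and the $n$ argument values $v_i'$ as they migrate between the application spine and the conditional---but since the two derivations use literally the same collection of sub-derivations, merely rebracketed, no new evaluation steps are invented and the argument is bookkeeping rather than genuinely difficult, matching the text's remark that this step ``does not depend on properness''. Finally, instantiating the equivalence at $s = \apps{\identifier{f}_1}{x_1}{x_M}$ (whose head is a defined symbol, so $s^* = s$) with $\gamma = [x_1 := d_1,\dots,x_M := d_M]$ and $w = b$ gives $\progresult$ iff $\progeval{\eprog}{d_1,\dots,d_M} \mapsto b$, as required.
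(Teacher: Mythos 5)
Your proposal is correct and follows essentially the same route as the paper's proof: the paper likewise defines the exhaustive replacement as a function $\mathit{fix}(s)$ and proves, by induction on the size of the derivation tree and in both directions simultaneously, that $\prog \vdashcall \apps{\identifier{f}}{v_1}{v_n} \arrr w$ iff $\eprog \vdashcall \apps{\identifier{f}}{v_1}{v_n} \arrr w$ and $\prog,\gamma \vdash s \arrr w$ iff $\eprog,\gamma \vdash \mathit{fix}(s) \arrr w$, handling the fixable forms exactly by your subtree-rebracketing argument and noting, as you do, that typing, well-formedness and properness are untouched since variables and patterns are left alone. Your additional details (the multiplicative weight ensuring termination of the exhaustive rewriting, and the explicit type-preservation check) are points the paper leaves implicit but do not change the argument.
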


\begin{proof}
Let $\mathit{fix}(s)$ be the result of replacing all
sub-expressions of the form $\apps{(\ifte{b}{s_1}{s_2})}{t_1}{t_n}$ in
$s$ by $\ifte{b}{(\apps{s_1}{t_1}{t_n})}{(\apps{s_2}{t_1}{t_n})}$, and
expressions $\apps{(\apps{\choice}{s_1}{s_m})}{t_1}{t_n}$ by
$\apps{\choice}{(\apps{s_1}{t_1}{t_n})}{(\apps{s_m}{t_1}{t_n})}$.
Then we see, by induction on the size of the derivation tree, that:
\begin{itemize}
\item $\prog \vdashcall \apps{\identifier{f}}{v_1}{v_n} \arrr w$ if{f}
  $\eprog \vdashcall \apps{\identifier{f}}{v_1}{v_n} \arrr w$, and
\item $\prog,\gamma \vdash s \arrr w$ if{f} $\prog,\gamma \vdash
  \mathit{fix}(s) \arrr w$.
\end{itemize}
The case where $s$ has one of the fixable forms merely requires
swapping some subtrees.
Typing is clearly not affected, nor the other properties of
well-formedness, and properness is unaltered because variables
are left alone.
\qed
\end{proof}

The effect of this step is to remove sub-expressions of a functional
type which, essentially, occur at the head of an application (and
therefore have a larger type than is necessary).  Since an expression
$\ifte{b}{s_1}{s_2}$ or $t = \apps{\choice}{s_1}{s_m}$ shares the
type of each $s_i$, this transformation guarantees that the
\emph{outermost} expression of a given improper type $\atype$
in a clause cannot occur as the direct sub-expression of an
$\ifte{}{}{}$ or $\choice$.  Thus, in a clause
$\apps{\identifier{f}}{\ell_1}{\ell_k} = s$ such an outermost
expression is either $s$ itself, or is $s_i$ in some context
$\apps{a}{s_1}{s_n}$ with $a \in \Defineds \cup \V$.
Following the transformation of Lemma~\ref{lem:increasearity}, the
former situation can only occur if $\identifier{f}$ has a type
$\atype_1 \arrtype \dots \arrtype \atype_m \arrtype \asortorpair$
where some $\atype_i$ or $\asortorpair$ is improper---%
symbols which we will remove in the final transformation.

Before that, however, we perform one further modification: we alter
clauses to remove those sub-expressions which cannot be used
following Lemma~\ref{lem:ordexpress}: if $\apps{t}{s_1}{s_n}$
occurs in the right-hand side of a clause and some $s_i$ has an
improper type, then this sub-expression can never occur in a
derivation tree.  Either the clause itself is never used, or the
sub-expression occurs below an $\symb{if}$ or $\choice$ which is
never selected.  Thus, we can safely replace those sub-expressions by
a fresh, unusable symbol.  This is done in Lemma~\ref{lem:properbot}.

\begin{lemma}\label{lem:properbot}
Given a proper program $\prog$, such that for all clauses
$\identifier{f}\ \vec{\ell} = s$ there is no sub-expression
$\apps{(\ifte{s_1}{s_2}{s_3})}{t_1}{t_n}$
or $\apps{(\apps{\choice}{s_1}{s_m})}{t_1}{t_n}$ with $n > 0$, let
$\eprog$ be obtained from $\prog$ by altering all clauses
$\apps{\identifier{f}}{\ell_1}{\ell_k} = s$ as follows: if $s
\suptermeq \apps{a}{s_1}{s_n} =: t$ where $a \in \V \cup \Defineds$
and some $s_i$ has an improper
type although $t$ itself has a proper type, and $t$ is the leftmost
outermost such sub-expression, then replace $t$ in the clause by a
fresh symbol $\bot_\atype$, typed $\bot_\atype : \atype$, and add a
clause $\bot_\atype = \bot_\atype$ (to ensure $\bot_\atype \in
\Defineds$ with arity $0$).

Then $\eprog$ is proper, and
$\progresult$ if{f} $\progeval{\eprog}{d_1,\dots,d_M}
\mapsto b$.
\end{lemma}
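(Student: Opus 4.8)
The plan is to mirror the structure of the proofs of Lemma~\ref{lem:increasearity} and Lemma~\ref{lem:progifte}: first establish well-formedness and properness of $\eprog$ directly, then prove a bisimulation between derivations in $\prog$ and in $\eprog$ by induction on the size of derivation trees. For the well-formedness part I would note that $\bot_\atype$ is fresh with arity $0$ and that its single clause $\bot_\atype = \bot_\atype$ type-checks trivially; replacing a sub-expression $t : \atype$ of a right-hand side by $\bot_\atype : \atype$ preserves the type of the enclosing clause, leaves the patterns (hence the arities and the variable type environment) untouched, and does not affect the full-application condition on constructors. Thus $\eprog$ is well-formed with the same data order, and since variables are never touched, properness is preserved.

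The crux is the observation, supplied by Lemma~\ref{lem:ordexpress}, that the replaced expression can never be evaluated: if some subtree of a derivation in $\prog$ had root $\prog,\gamma \vdash t \arrr w$ with $t = \apps{a}{s_1}{s_n}$ and $a \in \V \cup \Defineds$, then Lemma~\ref{lem:ordexpress} (applied with $a$ as head) would force the type of every argument $s_i$ to be proper, contradicting the hypothesis that some $s_i$ has an improper type. Hence $t$ occurs in no derivation tree as the root of a subtree; symmetrically $\bot_\atype$ never reduces in $\eprog$, since its only clause loops, so neither $\bot_\atype$ nor any application $\apps{\bot_\atype}{u_1}{u_m}$ is ever the root of a finite subtree. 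The hypothesis forbidding $\ifte{s_1}{s_2}{s_3}$ or $\choice$ at the head of an application is precisely what guarantees that the leftmost outermost offending sub-expression has the clean shape $\apps{a}{s_1}{s_n}$ with $a \in \V \cup \Defineds$, so that this argument applies verbatim.

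With this in hand I would define $\mathit{fix}(s)$ as $s$ with its leftmost outermost sub-expression of the stated form replaced by $\bot_\atype$, and prove by simultaneous induction on derivation size that (i) $\prog \vdashcall \apps{\identifier{f}}{v_1}{v_n} \arrr w$ iff $\eprog \vdashcall \apps{\identifier{f}}{v_1}{v_n} \arrr w$ for the original symbols $\identifier{f}$, and (ii) $\prog,\gamma \vdash s \arrr w$ iff $\eprog,\gamma \vdash \mathit{fix}(s) \arrr w$. Since $\bot_\atype$ has arity $0$ it never occurs inside a value, so values coincide for the two programs and no change to $\gamma$, the $v_i$, or $w$ is needed. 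The cases for [Instance], [Function], [Closure], [Call], [Constructor] and [Pair] are routine, as $\mathit{fix}$ commutes with the construct and [Call] simply recurses into the (already related) clause body; the [Choice] and [Conditional] cases follow as in Lemma~\ref{lem:progifte}, recursing into the selected branch and discarding the replaced sub-expression when it lies in an unselected branch.

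The main obstacle is the [Appl] case, where I must show that evaluating $\apps{t'}{u_1}{u_m}$ never requires evaluating the replaced $t$. Here the key observation does the work: any finite derivation that reduces a context containing $t$ in an always-evaluated position (the head or an argument of an application, a constructor argument, a pair component, or the guard of a conditional) would be forced to produce a subtree rooted at $\prog,\gamma' \vdash t \arrr w'$, which by the previous paragraph cannot exist; hence in every terminating derivation $t$ lies in a dead (unselected) position, and replacing it by the non-reducing $\bot_\atype$ changes neither the existence nor the value of the derivation. Applying (ii) to the top-level expression $\apps{\identifier{f}_1}{x_1}{x_M}$ (which, having only variable arguments, satisfies $\mathit{fix}(\apps{\identifier{f}_1}{x_1}{x_M}) = \apps{\identifier{f}_1}{x_1}{x_M}$) then yields $\progresult$ iff $\progeval{\eprog}{d_1,\dots,d_M} \mapsto b$, completing the step; iterating it as in the main-text sketch removes all offending sub-expressions.
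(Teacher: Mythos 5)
Your proposal is correct and takes essentially the same approach as the paper: both hinge on Lemma~\ref{lem:ordexpress} to conclude that the offending sub-expression can never be the root of any evaluation subtree (and, dually, that $\bot_\atype$ can never be evaluated because its only clause loops), so the replacement is harmless in both directions. Your explicit $\mathit{fix}$-based induction in the style of Lemma~\ref{lem:progifte} merely spells out in detail what the paper's proof compresses into the observation that the expression ``occurs only as a strict sub-expression'' and ``may be replaced in all these places without consequence''.
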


\begin{proof}
Replacing a sub-expression by a different one of the same type (but
potentially fewer variables) cannot affect well-formedness, and as
variables are left alone, properness of the variables types is not
altered.  For the
``only if'' part, note that the derivation tree for $\progresult$ has
no subtree with root $\prog,\gamma \vdash \apps{a}{s_1}{s_n} \arrr w$
by Lemma~\ref{lem:ordexpress}.  Therefore, $\apps{a}{s_1}{s_n}$ occurs
only as a \emph{strict sub-expression} of expressions in the tree for
$\progresult$, and may be replaced in all these places by
$\bot_\atype$ without consequence to obtain a derivation for
$\progeval{\eprog}{d_1,\dots,d_M} \mapsto b$.
Similarly, for the ``if'' part, the derivation tree for
$\progeval{\eprog}{d_1,\dots,d_M}
\mapsto b$ cannot have a subtree with root $\eprog,\gamma \vdash
\bot_\atype \arrr w$ since the only clause for $\bot_\atype$ does not
allow for such a conclusion.  Nor can the new clause $\bot_\atype =
\bot_\atype$ be used in it at all due to non-termination.
\qed
\end{proof}

Note that the transformation from Lemma~\ref{lem:properbot} is
terminating, as the size of the affected clause decreases; thus, it
can be repeated until no sub-expressions of the given form remain.
This gives step~\ref{lem:proper:occurrence} of the proof sketch of
Lemma~\ref{lem:proper}.

All in all, after these first three steps we still have a well-formed
program of the same data order, such that $\progresult$ can be
derived for exactly the same $d_1,\dots,d_M,b$.  For
step~\ref{lem:proper:removal}, we observe that the offending symbols
do not occur in any other clauses anymore.

\begin{lemma}\label{lem:properchangesworked}
Assume given a proper program $\prog$ such that for all
clauses $\apps{\identifier{f}}{\ell_1}{\ell_k} = s$ in $\prog$:
\begin{enumerate}
\item\label{lem:properworked:arity}
  if $\identifier{f} : \atype_1 \arrtype \dots \arrtype \atype_m
  \arrtype \asortorpair \in \F$ and all $\atype_i$ and $\asortorpair$
  are proper, then $\atype_{k+1} \arrtype \dots \arrtype \atype_m
  \arrtype \asortorpair$ is proper as well;
\item\label{lem:properworked:ifchoice}
  $s$ does not have a sub-expression of the form
  $\apps{(\ifte{s_1}{s_2}{s_3})}{t_1}{t_n}$ or
  $\apps{(\apps{\choice}{s_1}{s_m})}{t_1}{t_n}$ with $n > 0$;
\item\label{lem:properworked:nosub}
  $s$ does not have a sub-expression $t = \apps{a}{s_1}{s_n}$ with $a
  \in \V \cup \Defineds$ where $t$ itself has a proper type, but
  with some $s_i$ having an improper type.
\end{enumerate}
Let $\mathit{Bad}$ be the set of defined symbols $\identifier{g}$
which are assigned a type $\atype_1 \arrtype \dots \arrtype \atype_m
\arrtype \asortorpair$ such that some $\atype_i$ or $\asortorpair$ is
improper.
Then for all clauses $\apps{\identifier{f}}{\ell_1}{\ell_k} = s$ in
$\prog$ with $\identifier{f} \notin \mathit{Bad}$: none of the
symbols in $\mathit{Bad}$ occur in $s$, and $s$ does not have any
sub-expressions whose type has an order $> K$.
\end{lemma}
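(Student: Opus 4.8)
The plan is to reduce the lemma to a single structural claim about sub-expressions and then prove that claim by induction.

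I would first record that the right-hand side $s$ has a proper type. Since $\identifier{f} \notin \mathit{Bad}$, every component $\atype_1,\dots,\atype_m,\asortorpair$ of its declared type is proper, and hypothesis~\ref{lem:properworked:arity} then yields that the clause type $\atype_{k+1} \arrtype \dots \arrtype \atype_m \arrtype \asortorpair$ of $s$ is itself proper. This step genuinely needs hypothesis~\ref{lem:properworked:arity}: having all components proper does not by itself make the residual type proper, because of the $+1$ in $\typeorder{\atype \arrtype \btype} = \max(\typeorder{\atype}+1,\typeorder{\btype})$; this is exactly what the arity-increasing transformation of Lemma~\ref{lem:increasearity} arranged. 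I would also record the monotonicity fact that is the workhorse of the argument: whenever a function type $\btype_1 \arrtype \dots \arrtype \btype_j \arrtype \asortorpair'$ is proper, each argument $\btype_i$ and the final $\asortorpair'$ is proper as well, which is immediate from the same formula.

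The core is the following claim, proved by induction on the size of $t$: for every sub-expression $t$ of $s$ whose type is proper, (i) every sub-expression of $t$ has a proper type and (ii) no symbol of $\mathit{Bad}$ occurs in $t$. Applying this to $t = s$ gives both conclusions of the lemma. Since $\suptermeq$ is transitive, every sub-expression encountered during the recursion is still a sub-expression of $s$, so hypotheses~\ref{lem:properworked:ifchoice} and~\ref{lem:properworked:nosub} remain available. The atomic cases are immediate: a variable and a constructor are harmless, and a lone defined symbol of proper type has, by the monotonicity fact, all its type components proper, hence lies outside $\mathit{Bad}$. For a pair $(t_1,t_2)$ the direct subtypes of a proper product type are proper (Definition~\ref{def:propertype}), so the induction hypothesis applies to $t_1$ and $t_2$. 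For $t = \ifte{s_1}{s_2}{s_3}$ or $t = \apps{\choice}{s_1}{s_n}$, hypothesis~\ref{lem:properworked:ifchoice} forbids these shapes from ever occurring applied, so $t$ is the entire expression and each $s_i$ is either the boolean guard or shares the proper type of $t$; the induction hypothesis then applies to every $s_i$.

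The delicate case, and the main obstacle, is an application $t = \apps{a}{s_1}{s_n}$ with $n \geq 1$, where $a$ is the leftmost atomic head obtained by stripping off all $n$ arguments at once. The crucial design point is that one must not recurse on the intermediate partial applications $\apps{a}{s_1}{s_{n-1}}$: such a head may legitimately carry an \emph{improper} type even when $t$ does not, which is precisely why the earlier transformations were set up to let the head be treated directly. First I would argue $a \in \V \cup \Constructors \cup \Defineds$, since a pair carries no arrow type and an $\symb{if}$/$\choice$ head is excluded by hypothesis~\ref{lem:properworked:ifchoice} (as $n > 0$). If $a$ is a constructor, well-formedness (Remark~\ref{remark:constructor}) forces full application with order-$0$ argument types, so every $s_i$ is proper and $a \notin \mathit{Bad}$ trivially. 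If $a \in \V \cup \Defineds$, then since $t$ has proper type, hypothesis~\ref{lem:properworked:nosub} forbids any $s_i$ from having an improper type, so every argument is proper and the induction hypothesis applies to each. It remains to dispose of the head itself: writing $a : \btype_1 \arrtype \dots \arrtype \btype_n \arrtype \ctype$ with $\ctype$ the proper type of $t$, the arguments $\btype_1,\dots,\btype_n$ are proper and, by the monotonicity fact applied to the proper type $\ctype$, every remaining component of $a$'s full type is proper too; hence $a \notin \mathit{Bad}$. Since the head is the only occurrence in $t$ not already lying inside some $s_i$, the induction hypothesis on the arguments then gives that no $\mathit{Bad}$ symbol occurs in $t$ and that all its sub-expressions have proper type, completing the induction.
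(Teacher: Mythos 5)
Your proof is correct and follows essentially the same route as the paper's: a structural induction showing that every sub-expression of proper type contains no $\mathit{Bad}$ symbols and only properly typed sub-expressions, using hypothesis~\ref{lem:properworked:ifchoice} to rule out applied $\symb{if}$/$\choice$ heads, hypothesis~\ref{lem:properworked:nosub} to make all arguments of an application proper, and the fact that a proper arrow type has proper components to clear the head symbol. You are in fact slightly more explicit than the paper in spelling out this last ``monotonicity'' fact and in covering the pair and lone-symbol cases, but the argument is the same.
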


Note that assumption~\ref{lem:properworked:arity} is given by the
transformation of Lemma~\ref{lem:increasearity},
assumption~\ref{lem:properworked:ifchoice} is given by the
transformation of Lemma~\ref{lem:progifte}, and
assumption~\ref{lem:properworked:nosub} is given by the transformation
of Lemma~\ref{lem:properbot}.

\begin{proof}
We first observe: if $\identifier{f} \notin \mathit{Bad}$, then by
the first assumption, the type of $s$ is proper.  For such
$s$, which moreover does not have $\symb{if}$ or $\choice$
expressions at the head of an application, we prove by induction that
$s$ does not use elements of $\mathit{Bad}$.
If $s = \ifte{s_1}{s_2}{s_3}$ of $s = \apps{\choice}{s_1}{s_m}$, then
each $s_i$ has a proper type (either the type of $s$ or $\bool$), so
we complete by induction.
If $s = \apps{\identifier{c}}{s_1}{s_m}$ with $\identifier{c} \in
\Constructors$, each $s_i$ has a type of order $0$, which is proper.
Otherwise, $s = \apps{a}{s_1}{s_n}$ with $a \in \V \cup \Defineds$.
By the third assumption, all $s_i$ have a proper type, so no bad
symbols occur in them by the induction hypothesis.  Moreover, if $a
\in \Defineds$ and $a : \btype_1 \arrtype \dots \arrtype \btype_n
\arrtype \ctype \in \F$, then each $\btype_i$ is proper by that
same assumption, and $\ctype$ is the type of $s$, so is proper.
Thus, also $a \notin \mathit{Bad}$.
\qed
\end{proof}

Now, a trivial induction shows that the derivation of any conclusion
of the form $\progresult$ cannot use a clause with a bad root symbol;
removing these clauses therefore has no effect.  As the bad symbols
do not occur at all in the remaining clauses, the symbols can also
be safely removed.  We conclude:

\oldcounter{\propernesslem}
\begin{lemma}
Given a well-formed program $\prog$ with data order $K$, there is a
well-formed program $\eprog$ such that
$\progresult$ iff $\progeval{\eprog}{d_1,\dots,d_M} \mapsto b$
for any $b_1,\dots,b_M,d$ and:
(a) all defined symbols in $\eprog$ have a type $\atype_1 \arrtype
  \dots \arrtype \atype_m \arrtype \asortorpair$ such that both
  $\typeorder{\atype_i} \leq K$ for all $i$ and
  $\typeorder{\asortorpair} \leq K$, and
(b) in all clauses, all sub-expressions of the right-hand side have a
  type of order $\leq K$ as well.
\end{lemma}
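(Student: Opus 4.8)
The plan is to assemble Lemma~\ref{lem:proper} directly from the four transformation lemmas proved above, applied in the order in which they appear. First I would repeat the transformation of Lemma~\ref{lem:increasearity} until no clause has a right-hand side whose type is improper but whose leading argument type is proper. Since each application strictly increases the arity of some affected clause but can never raise the data order, and arities are bounded by the (finitely many) type lengths appearing in $\F$, this process terminates; by Lemma~\ref{lem:increasearity} each step preserves both properness and the input-output behaviour $\progresult$. The result is a proper program satisfying assumption~\ref{lem:properworked:arity} of Lemma~\ref{lem:properchangesworked}: every defined symbol $\identifier{f} : \atype_1 \arrtype \dots \arrtype \atype_m \arrtype \asortorpair$ with all $\atype_i,\asortorpair$ proper has $\atype_{k+1} \arrtype \dots \arrtype \atype_m \arrtype \asortorpair$ proper as well.

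Next I would apply Lemma~\ref{lem:progifte} once, pushing every $\symb{if}$ and $\choice$ out of head position; this secures assumption~\ref{lem:properworked:ifchoice} and, by that lemma, again preserves properness and behaviour. Then I would repeatedly apply Lemma~\ref{lem:properbot}, each time replacing the leftmost outermost application $\apps{a}{s_1}{s_n}$ (with $a \in \V \cup \Defineds$) that has a proper type but some improperly-typed argument by a fresh unusable symbol $\bot_\atype$. Because each replacement strictly shrinks the affected clause, this terminates, and Lemma~\ref{lem:properbot} guarantees preservation of properness and of $\progresult$ (the key point being that, by Lemma~\ref{lem:ordexpress}, such a sub-expression can never be evaluated, so it only ever occurs as a strict sub-expression below an unchosen branch). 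After this the program satisfies assumption~\ref{lem:properworked:nosub} as well.

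At this stage all three hypotheses of Lemma~\ref{lem:properchangesworked} hold, so for every clause whose root lies outside $\mathit{Bad}$, neither any bad symbol nor any sub-expression of improper type occurs in the right-hand side. I would then perform step~\ref{lem:proper:removal}: delete every symbol in $\mathit{Bad}$ together with its defining clauses. A trivial induction on derivation trees (invoking Lemma~\ref{lem:ordexpress} to rule out ever calling a bad symbol) shows no derivation of $\progresult$ uses such a clause, so removal changes nothing; and since bad symbols do not occur in the surviving clauses, the program remains well-formed. The resulting $\eprog$ has all defined symbols with wholly proper types, giving property~(a), and by the Lemma~\ref{lem:properchangesworked} conclusion every surviving right-hand side has only proper sub-expressions, giving property~(b).

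The only genuinely delicate point is termination and correctness of the first step, and this has already been discharged inside Lemma~\ref{lem:increasearity}; the surrounding argument here is merely bookkeeping about applying the lemmas in sequence and checking that each preserves the invariants the next one needs. The main obstacle I anticipate is therefore purely organisational: verifying that the hypotheses introduced by earlier transformations are not destroyed by later ones---in particular that Lemma~\ref{lem:progifte} and the $\bot_\atype$-replacement of Lemma~\ref{lem:properbot} do not reintroduce the improper-output clauses eliminated in the first step. Since both later transformations leave variable types and defined-symbol types untouched and only shrink or reshape right-hand sides, this is straightforward to confirm, and taking $K$ to be the data order of $\prog$ throughout (so that ``proper'' means type order $\leq K$) closes the argument.
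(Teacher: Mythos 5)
Your proposal is correct and takes essentially the same approach as the paper's own proof: the appendix establishes exactly the four results you cite (Lemmas~\ref{lem:increasearity}, \ref{lem:progifte}, \ref{lem:properbot} and~\ref{lem:properchangesworked}) and then composes them in the same order, with the same termination arguments (bounded arity growth for the first step, shrinking clause size for the $\bot$-replacement) and the same final removal of the bad symbols justified by a trivial induction on derivation trees. Your added bookkeeping---checking that later transformations do not reintroduce the improper-output clauses eliminated first---is precisely what the paper discharges implicitly, so nothing is missing.
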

\startappendixcounters

\begin{proof}
We apply the transformations from
Lemmas~\ref{lem:increasearity}--\ref{lem:properbot} and then remove
all ``bad'' symbols and corresponding clauses following
Lemma~\ref{lem:properchangesworked}, as described in the text above.
As we have seen, the resulting program $\eprog$ is still proper,
which means that it is well-formed and has the same data order $K$;
in addition, it has properties (a) and (b) by
Lemma~\ref{lem:properchangesworked}.
\qed
\end{proof}

In Appendix~\ref{app:arrowdepth} we will use variations of
Lemma~\ref{lem:proper} for other notions of ``proper''.

\section{Properties of cons-free programs (Section~\ref{sec:consfree})}
\label{app:consfree}

Lemma~\ref{lem:safetysimple} demonstrates that any data encountered
during the execution of a cons-free program was either part of the
input, or occurred directly as part of a clause; that is, every such
data expression is in the set $\B_{d_1,\dots,d_M}^\prog$.

As a helper result, we start by proving that \emph{patterns} occurring
in clauses can only be instantiated to data.  This is important to
demonstrate the harmlessness of allowing sub-expressions of the
left-hand sides of clauses to occur on the right.

\begin{lemma}\label{lem:patterninstantiate}
Let $T$ be a derivation tree with root $\prog,\gamma \vdash s \arrr
w$.  If $s$ a pattern, then $s\gamma = w$.
\end{lemma}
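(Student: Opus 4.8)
The plan is to prove the claim by structural induction on the pattern $s$ (equivalently, by induction on the derivation tree $T$), splitting into the two shapes allowed by the grammar for patterns: $s = x$ a variable, and $s = \apps{\identifier{c}}{\ell_1}{\ell_m}$ a fully-applied constructor. In each case the work amounts to determining which rule of Figure~\ref{fig:evaluation} can conclude $\prog,\gamma \vdash s \arrr w$ and then unfolding the definition of instantiation.

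For the base case $s = x$, I would observe that the only rule whose conclusion matches an isolated variable is [Instance], which immediately forces $w = \gamma(x)$; since $x\gamma = \gamma(x)$ by definition of instantiation, we get $w = s\gamma$. For the inductive case $s = \apps{\identifier{c}}{\ell_1}{\ell_m}$ with $\identifier{c} \in \Constructors$, I would first pin down that the derivation must end in [Constructor]. With [Constructor] fixed, the conclusion is $w = \apps{\identifier{c}}{b_1}{b_m}$ with immediate subderivations $\prog,\gamma \vdash \ell_i \arrr b_i$; each $\ell_i$ is again a pattern with $\Var(\ell_i) \subseteq \Var(s) \subseteq \domain(\gamma)$, so the induction hypothesis yields $b_i = \ell_i\gamma$, and combining via the definition of instantiation on constructor patterns gives $w = \apps{\identifier{c}}{(\ell_1\gamma)}{(\ell_m\gamma)} = s\gamma$.

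The only genuine subtlety, and hence the step I would treat most carefully, is justifying that [Constructor] is the unique applicable rule in the constructor case. The competing candidate is [Appl], whose conclusion $\prog,\gamma \vdash s'\ t \arrr w$ requires its first premise to derive $\prog,\gamma \vdash s' \arrr \apps{\identifier{f}}{v_1}{v_n}$ for a \emph{defined} symbol $\identifier{f} \in \Defineds$. Because $\Constructors$ and $\Defineds$ are disjoint and, by well-formedness, $\identifier{c}$ occurs only fully applied, the head of $s$ is the constructor $\identifier{c}$ itself, which cannot evaluate to a partial application of a defined symbol; this excludes [Appl] (and any attempt to peel off a single application at a time). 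Everything else is routine unfolding of the semantics and the instantiation rules, so this disjointness argument is the crux of the proof.
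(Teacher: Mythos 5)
Your proposal is correct and takes essentially the same route as the paper's proof: induction with [Instance] handling variables, [Constructor] plus the induction hypothesis handling constructor patterns, and the exclusion of [Appl] correctly identified as the crux, resting on the disjointness of $\Constructors$ and $\Defineds$. The only difference is one of explicitness: where you assert that a constructor-headed expression cannot evaluate to a value headed by a defined symbol, the paper proves this by an inner induction on the number of applied arguments (for $n=1$ no rule derives $\prog,\gamma \vdash \identifier{c} \arrr \apps{\identifier{f}}{v_1}{v_i}$; for $n>1$ the inner induction hypothesis excludes [Appl] and no other rule fits)---exactly the recursion your parenthetical about ``peeling off a single application at a time'' gestures at, so a final write-up should spell out that inner induction rather than appeal to full application of constructors, which is not actually needed for this step.
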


\begin{proof}
By induction on the form of $T$.
The roots of [Function], [Choice] and [Conditional] have the wrong
shape. [Instance] immediately gives the required result, and the cases
for [Constructor] and [Pair] follow by the induction hypothesis.

Finally, we show by induction on $n$ that [Appl] is not applicable:
if $n = 1$, then [Appl] requires a subtree
$\prog,\gamma \vdash \identifier{c} \arrr \identifier{f}$ with
$\identifier{f} \in \Defineds$, for which there are no inference
rules.  If $n > 1$, then [Appl] requires a subtree $\prog,\gamma
\vdash \apps{\identifier{c}}{s_1}{s_{n-1}} \arrr
\apps{\identifier{f}}{v_1}{v_i}$ which, by the induction hypothesis,
must be obtained by an inference rule other than [Appl]; again, there
are no suitable inference rules.
\qed
\end{proof}

Rather than immediately proving Lemma~\ref{lem:safetysimple}, we will
present---in Lemma~\ref{lem:safety}---a variation which gives a little
more information.  As this result will be used in some of the later
proofs in the appendix, it pays to be precise.  First we define a
notion of value which is limited to data in
$\B_{d_1,\dots,d_M}^\prog$.

\begin{definition}\label{def:VValue}
Fixing a program $\prog$ and data expressions $d_1,\dots,d_M$, let the
set $\VValue$ be given by the grammar:

\begin{tabular}{rcl}
$v,w \in \VValue$ & ::= & $d \in \B_{d_1,\dots,d_M}^\prog
\mid (v,w) \mid \apps{f}{v_1}{v_n}$ ($n < \arity(f)$) \\
\end{tabular}
\end{definition}

Note that clearly $\texttt{Value} \subseteq \VValue$.  The following
lemma makes the notion ``the only data expressions encountered during
the execution of a cons-free program $\prog$ are in
$\B_{d_1,\dots,d_M}^\prog$'' precise, by requiring all values to be
in $\VValue$:

\begin{lemma}\label{lem:safety}
Let $T$ be a derivation tree for $\progresult$.  Then for all
subtrees $T'$ of $T$:
\begin{itemize}
\item if $T'$ has root $\prog,\gamma \vdash s \arrr w$, then both $w$
and all $\gamma(x)$ are in $\VValue$;
\item if $T'$ has root $\prog,\gamma \vdashif d, s_1, s_2 \arrr w$,
then $d \in \B_{d_1,\dots,d_M}^\prog$ and both $w$ and all
$\gamma(x)$ are in $\VValue$;
\item if $T'$ has a root $\prog \vdashcall \apps{\identifier{f}}{v_1}{
v_n} \arrr w$ with $\identifier{f} \in \Defineds$, then both $w$ and
all $v_i$ are in $\VValue$;
\item if $T'$ has a root $\prog,\gamma \vdash
\apps{\identifier{c}}{s_1}{s_m} \arrr
\apps{\identifier{c}}{b_1}{b_m}$ with $\identifier{c} \in
\Constructors$, then each $s_i\gamma = b_i \in \Data$ and
$\apps{\identifier{c}}{b_1}{b_m} \in \B_{d_1,\dots,d_M}^\prog$.
\end{itemize}
\end{lemma}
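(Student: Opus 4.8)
The plan is to argue by induction on the structure of the derivation tree $T$, establishing the four bulleted claims simultaneously for every subtree $T'$. The statement is not quite strong enough to be directly inductive: each [Call] step creates a fresh environment $\gamma$ for a clause body, and to handle [Constructor] nodes later I must know that this $\gamma$ is well-behaved on the constructor-headed sub-expressions of that body. Following the sketch of Lemma~\ref{lem:safetysimple}, I would therefore carry the strengthened hypothesis that, at the root $\prog,\gamma \vdash s \arrr w$ (resp.\ $\prog,\gamma \vdashif d,s_1,s_2 \arrr w$) of the subtree under consideration, the environment $\gamma$ maps every variable into $\VValue$ and maps every constructor-headed sub-expression $t = \apps{\identifier{c}}{u_1}{u_m} \subtermeq s$ (resp.\ $\subtermeq s_1,s_2$) to a value $t\gamma \in \B_{d_1,\dots,d_M}^\prog$.

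Two closure facts would be recorded first. The set $\B_{d_1,\dots,d_M}^\prog$ is closed under $\supterm$, so sub-data-expressions of its elements remain inside it; and $\VValue$ decomposes under sub-values, in the sense that the components of a pair in $\VValue$ and the arguments of a partial application in $\VValue$ are again in $\VValue$, while every constructor-application occurring inside a $\VValue$ lies in $\B_{d_1,\dots,d_M}^\prog$. The base of the whole tree is the program-execution judgement with $\gamma = [x_1 := d_1,\dots,x_M := d_M]$: each $d_i \in \B_{d_1,\dots,d_M}^\prog$ and the expression $\apps{\identifier{f}_1}{x_1}{x_M}$ contains no constructor-headed sub-expression, so the strengthened hypothesis holds at the root vacuously.

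Most cases are routine and the strengthened hypothesis on $\gamma$ is simply inherited by sub-expressions. [Instance] returns $\gamma(x) \in \VValue$; [Function], [Pair], [Choice] and the conditional rules follow from the inductive hypotheses on their premises, noting for [Conditional] that the tested value has type $\bool$ and is therefore a constructor already lying in $\B_{d_1,\dots,d_M}^\prog$. For [Appl], the first two premises yield $\apps{\identifier{f}}{v_1}{v_n} \in \VValue$ and $v_{n+1} \in \VValue$, so all of $v_1,\dots,v_{n+1}$ lie in $\VValue$, which is exactly the side condition needed to apply the inductive hypothesis to the $\vdashcall$ premise; [Closure] returns $\apps{\identifier{f}}{v_1}{v_n} \in \VValue$ directly, since the arguments are in $\VValue$ and $n < \arity(\identifier{f})$.

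The crux, where cons-freeness is actually used, is the pairing of [Call] and [Constructor]. At a [Call] node with root $\prog \vdashcall \apps{\identifier{f}}{v_1}{v_k} \arrr w$, the hypothesis gives each $v_i \in \VValue$, and the clause $\apps{\identifier{f}}{\ell_1}{\ell_k} = s$ is matched with $v_i = \ell_i\gamma$; closure of $\VValue$ under sub-values then yields $\gamma(x) \in \VValue$ for every variable. To re-establish the constructor invariant for the body $s$, I would appeal to cons-freeness (Definition~\ref{def:consfree}): a constructor-headed $t \subtermeq s$ is either a ground data expression, whence $t\gamma = t \in \B_{d_1,\dots,d_M}^\prog$ as a sub-expression of a clause right-hand side, or satisfies $\ell_i \suptermeq t$ for some $i$, whence $t\gamma$ is a sub-data-expression of $\ell_i\gamma = v_i \in \B_{d_1,\dots,d_M}^\prog$ and stays in $\B_{d_1,\dots,d_M}^\prog$ by $\supterm$-closure. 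With the invariant verified on $s$, the inductive hypothesis on the premise gives $w \in \VValue$. The [Constructor] case reuses exactly this dichotomy: since $\apps{\identifier{c}}{s_1}{s_m}$ is itself constructor-headed, each $s_i$ is ground data or a sub-pattern, so $s_i\gamma = b_i \in \Data$ (trivially, or by Lemma~\ref{lem:patterninstantiate}), and the constructor invariant gives $\apps{\identifier{c}}{b_1}{b_m} = (\apps{\identifier{c}}{s_1}{s_m})\gamma \in \B_{d_1,\dots,d_M}^\prog$, closing the induction. The main obstacle is precisely making this [Call]/[Constructor] interplay airtight, i.e.\ showing that the freshly built $\gamma$ still satisfies both halves of the strengthened hypothesis.
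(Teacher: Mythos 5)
Your proposal matches the paper's own proof essentially step for step: the same strengthened induction hypothesis (all $\gamma(x) \in \VValue$ together with every constructor-headed sub-expression of the current expression instantiating into $\B_{d_1,\dots,d_M}^\prog$), the same base case and routine rule cases, and the same crux at [Call]/[Constructor] resolved via the cons-freeness dichotomy and the pattern-instantiation lemma (Lemma~\ref{lem:patterninstantiate}). One cosmetic slip: in the [Call] case you write $v_i \in \B_{d_1,\dots,d_M}^\prog$ where only $v_i \in \VValue$ is available in general, but the closure fact you recorded earlier---constructor-headed sub-expressions of $\VValue$-elements lie in $\B_{d_1,\dots,d_M}^\prog$---repairs this immediately.
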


\begin{proof}
For brevity, let $\B := \B_{d_1,\dots,d_M}^\prog$.
We show by induction on the depth of $T'$ that: (**) the properties
in the lemma statement hold for both $T'$ and all its strict subtrees
if $\treeroot(T')$ has one of the following forms:
\begin{itemize}
\item $\prog,\gamma \vdash s \arrr w$ with all $\gamma(x) \in
  \VValue$, and $t\gamma \in \B$ for all sub-expressions
  $t \subtermeq s$ such that $t = \apps{\identifier{c}}{s_1}{s_m}$
  for some $\identifier{c} \in \Constructors$;
\item $\prog,\gamma \vdashif d,s_1,s_2 \arrr w$ with $d \in
  \B$ and all $\gamma(x) \in \VValue$, and $t\gamma \in \B$
  for all 
  $t \subtermeq s_1$ or $t \subtermeq s_2$
  such that $t = \apps{\identifier{c}}{s_1}{s_m}$ for some
  $\identifier{c} \in \Constructors$;
\item $\prog \vdashcall \apps{f}{v_1}{v_n} \arrr w$ with all
  $v_i \in \VValue$.
\end{itemize}
Note that proving this suffices: the immediate subtree $T'$ of $T$
has a root $\prog,\gamma \vdash \apps{\identifier{f}_1}{x_1}{x_M}
\arrr b$, where each $\gamma(x_i) = d_i \in \B$, and
$\apps{\identifier{f}_1}{x_1}{x_M}$ has no sub-expressions with a
data constructor at the head.  Thus, the lemma holds for both $T'$ and
all its strict subtrees, which implies that it holds for $T$.

We prove (**).  Assume that $\treeroot(T')$ has one of the
given forms, and consider the rule used to obtain this root.
\begin{description}
\item[Instance] Then $T'$ has a root $\prog,\gamma \vdash x \arrr
\gamma(x)$; the requirement that all $\gamma(y) \in \VValue$
is satisfied by the assumption, and this also gives that the
right-hand side $\gamma(x) \in \VValue$.
\item[Function] Then $T'$ has a root $\prog,\gamma \vdash 
\identifier{f} \arrr v$ and a subtree $\prog \vdashcall
\identifier{f} \arrr v$; by the induction hypothesis, the properties
hold for this subtree, which also implies that $v \in
\VValue$ and therefore the properties hold for $T'$ as well.
\item[Constructor] Then $T'$ has a root $\prog,\gamma \vdash
\apps{\identifier{c}}{s_1}{s_m} \arrr \apps{\identifier{c}}{b_1}{
b_m}$ with $\identifier{c} \in \Constructors$, and the immediate
subtrees have the form $\prog,\gamma \vdash s_i \arrr b_i$; by the
induction hypothesis (and the assumption), the properties are
satisfied for each such subtree.  Also by the assumption,
$(\apps{\identifier{c}}{s_1}{s_m})\gamma \in \B$, so necessarily
each $s_i\gamma \in \B \subseteq \Data$.  By
Lemma~\ref{lem:patterninstantiate}, each $s_i\gamma =
b_i$, and $\apps{\identifier{c}}{b_1}{b_m} =
(\apps{\identifier{c}}{s_1}{s_m})\gamma \in \B$.
\item[Pair] Then $T'$ has a root $\prog,\gamma \vdash (s_1,s_2) \arrr
(w_1,w_2)$ and subtrees with roots $\prog,\gamma \vdash s_1 \arrr
w_1$ and $\prog,\gamma \vdash s_2 \arrr w_2$.  The assumption and
induction hypothesis give that the properties are satisfied for both
subtrees, and therefore both $w_1$ and $w_2$ are in
$\VValue$, giving also $(w_1,w_2) \in \VValue$.
\item[Choice] Then $T'$ has a root $\prog,\gamma \vdash \apps{
\choice}{s_1}{s_n} \arrr v$ and a subtree $\prog,\gamma \vdash s_i
\arrr v$ for some $i$.  By the induction hypothesis, the properties
hold for the subtree, and therefore $v \in \VValue$.
\item[Conditional] Then $T'$ has a root $\prog,\gamma \vdash
\ifte{s_1}{s_2}{s_3} \arrr w$ and subtrees with roots
$\prog,\gamma \vdash s_1 \arrr d$ and $\prog,\gamma \vdashif
d,s_2,s_3 \arrr w$.  The requirement that all $\gamma(x) \in
\VValue$ is satisfied by the assumption, and by both the
assumption and the induction hypothesis, the lemma is satisfied for
the first subtrees.  Thus, $d \in \VValue$; for typing
reasons $d \in \B$.  We may apply the induction hypothesis on
the second subtree, which gives that the lemma is satisfied for it,
and that $w \in \VValue$.
\item[If-True or If-False] Then $\treeroot(T')$ has the form
$\prog,\gamma \vdashif d,s_1,s_2 \arrr w$.  The requirement that $d
\in \B$ and all $\gamma(x) \in \VValue$ is satisfied by the
assumption.  $T'$ has one immediate subtree $T''$, whose root is
either $\prog,\gamma \vdash s_2 \arrr w$ or $\prog,\gamma \vdash
s_3 \arrr w$.  Since the assumptions are satisfied, $T''$ satisfies
the lemma by the induction hypothesis, which also gives that $w
\in \VValue$.
\item[Appl] Then $T'$ has a root $\prog,\gamma \vdash s\ t
\arrr w$ and subtrees $\prog,\gamma \vdash s \arrr
\apps{\identifier{f}}{v_1}{v_n}$ and $\prog,\gamma \vdash t \arrr
v_{n+1}$ and $\prog \vdashcall \apps{\identifier{f}}{v_1}{v_{n+1}}
\arrr w$.  The assumption gives that all $\gamma(x) \in
\VValue$, and the assumption and induction hypothesis
together give that the lemma is satisfied for the first two
subtrees.  Since this implies that all $v_i \in \VValue$, we
may also apply the induction hypothesis on the last subtree, which
gives that $w \in \VValue$.
\item[Closure] Then $\treeroot(T')$ has the form $\prog,\vdashcall
\apps{\identifier{f}}{v_1}{v_n} \arrr w$ with $\identifier{f} \in
\Defineds$.  All $v_i$ are in $\VValue$ by the assumption;
thus, $w = \apps{f}{v_1}{v_n}
\in \VValue$ as well, and there are no strict subtrees.
\item[Call] Then $\treeroot(T')$ has the form $\prog,
\vdashcall \apps{\identifier{f}}{v_1}{v_k} \arrr w$ with
$\identifier{f} \in \Defineds$, and there exist a clause
$\apps{\identifier{f}}{\ell_1}{\ell_k} = s$ and an environment
$\gamma$ with domain $\Var(\apps{\identifier{f}}{\ell_1}{\ell_k})$
such that each $v_i = \ell_i\gamma$, and $T'$ has one immediate
subtree $T''$ with root $\prog,\gamma \vdash s \arrr w$.  Then, for
$1 \leq i \leq n$ we observe that $v_i \suptermeq \gamma(x)$ for all
$x \in \Var(\ell_i)$, since (by definition of a pattern) $\ell_i
\suptermeq x$ for all such $x$.  Since all sub-expressions of
a value in $\VValue$ are themselves in $\VValue$, we thus have: each
$\gamma(x) \in \VValue$.

Moreover, for $s \suptermeq t = \apps{\identifier{c}}{s_1}{s_m}$
with $\identifier{c} \in \Constructors$, also $\ell_i \suptermeq t$
for some $i$ by definition of cons-free.  But then also
$\ell_i\gamma = v_i \suptermeq t\gamma$.
Thus, we can apply the induction hypothesis, and obtain that the
lemma is satisfied for $T''$.  This implies that $w \in
\VValue$, so the last requirement on the root of $T'$ is
satisfied.
\qed
\end{description}
\end{proof}

It remains to prove Lemma~\ref{lem:safetysimple} from the
text---which is just a (slightly weakened) reformulation
of Lemma~\ref{lem:safety}.

\oldcounter{\safetypreservelem}
\begin{lemma}
Let $\prog$ be a cons-free program, and suppose that $\progresult$ is
obtained by a derivation tree $T$.  Then for all statements $\prog,
\gamma \vdash s \arrr w$ or $\prog,\gamma \vdashif b',s_1,s_2 \arrr w$
or $\prog \vdashcall \apps{\identifier{f}}{v_1}{v_n} \arrr w$ in $T$,
and all expressions $t$ such that (a) $w \suptermeq t$, (b) $b'
\suptermeq t$, (c) $\gamma(x) \suptermeq t$ for some $x$ or (d) $v_i
\suptermeq t$ for some $i$:
if $t$ has the form $\apps{\identifier{c}}{b_1}{b_m}$ with
$\identifier{c} \in \Constructors$, then $t \in
\B_{d_1,\dots,d_M}^\prog$.
\end{lemma}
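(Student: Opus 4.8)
The plan is to obtain this statement as an immediate corollary of Lemma~\ref{lem:safety}, which has already established the stronger fact that every value occurring anywhere in the derivation tree $T$ lies in $\VValue$. The one thing Lemma~\ref{lem:safety} does not phrase directly is the conclusion about arbitrary constructor-headed \emph{sub}-expressions, so the first step is to isolate that missing piece and then feed Lemma~\ref{lem:safety} into it.

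Concretely, I would first prove the auxiliary observation that for every $v \in \VValue$ and every $t$ with $v \suptermeq t$ of the form $t = \apps{\identifier{c}}{b_1}{b_m}$ with $\identifier{c} \in \Constructors$, one has $t \in \B_{d_1,\dots,d_M}^\prog$. This goes by structural induction on the grammar of Definition~\ref{def:VValue}, using that $\B_{d_1,\dots,d_M}^\prog$ is closed under $\supterm$. If $v = d \in \B_{d_1,\dots,d_M}^\prog$, then $v \suptermeq t$ together with closure under $\supterm$ gives $t \in \B_{d_1,\dots,d_M}^\prog$. If $v = (v_1,v_2)$ or $v = \apps{\identifier{f}}{v_1}{v_n}$ with $n < \arity(\identifier{f})$, then, since $t$ is constructor-headed, it cannot equal $v$ (whose head is a pair or a defined symbol); hence $t$ is a \emph{strict} sub-expression, so $t$ must lie inside some component $v_i$, and the induction hypothesis applies.

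With this in hand, the lemma follows by a direct case split on the shape of the statement and on which of (a)--(d) witnesses $t$. By Lemma~\ref{lem:safety}, in every statement of $T$ the right-hand side $w$, every binding $\gamma(x)$, and every argument $v_i$ belong to $\VValue$, while in a statement $\prog,\gamma \vdashif b',s_1,s_2 \arrr w$ the read value $b'$ belongs to $\B_{d_1,\dots,d_M}^\prog$. Hence in cases (a), (c) and (d) the auxiliary observation applied to $w$, to $\gamma(x)$, and to $v_i$ respectively yields $t \in \B_{d_1,\dots,d_M}^\prog$; in case (b) I instead use that $b' \in \B_{d_1,\dots,d_M}^\prog$ together with closure of $\B_{d_1,\dots,d_M}^\prog$ under $\supterm$.

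I do not expect a genuine obstacle here, since all the real work---the induction over the rules of the operational semantics, and in particular the delicate [Call] case where sub-expressions of the matched patterns may reappear on the right-hand side (handled via Lemma~\ref{lem:patterninstantiate})---has already been discharged in Lemma~\ref{lem:safety}. The only point requiring a moment's care is the auxiliary observation, and specifically the convention that the head $s$ of an application $s\ t$ is \emph{not} a sub-expression of $s\ t$: this is precisely what forces a constructor-headed $t$ into an argument position rather than into a defined-symbol head, so that the induction on $\VValue$ closes.
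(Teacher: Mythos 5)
Your proposal is correct and takes essentially the same route as the paper: the paper's own proof of this lemma is a one-line corollary of Lemma~\ref{lem:safety}, justified by exactly your auxiliary observation that the only constructor-headed sub-expressions of an element of $\VValue$ lie in $\B_{d_1,\dots,d_M}^\prog$. The only difference is that you spell out this observation as an explicit structural induction over Definition~\ref{def:VValue} (correctly using disjointness of $\Constructors$ and $\Defineds$ and the head-of-application convention), where the paper leaves it implicit.
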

\startappendixcounters

\begin{proof}
Immediately by Lemma~\ref{lem:safety}, as the
only sub-expressions of an element of $\VValue$ with a data
constructor as head symbol, are in $\B_{d_1,\dots,d_M}^\prog$.
\qed
\end{proof}

\section{Counting modules (Section~\ref{subsec:counting})}\label{app:counting}

We discuss the counting modules from Section~\ref{subsec:counting}
in more detail.
To start, we use the ideas of Example~\ref{ex:counting} to create
counting modules surpassing any polynomial.

\oldcounter{\modulepol}
\begin{lemma}
For any $a,b \in \N \setminus \{0\}$, there is a $(\lambda n.a \cdot
(n+1)^b)$-counting module $C_\pol$ with data order $0$.
\end{lemma}
\startappendixcounters

\begin{proof}
Using pairing in a right-associative way---so $(x,y,z)$ should
be read as $(x,(y,z))$---we let:
\begin{itemize}
\item $\numtype[\pol] := \bits^{b+1}$; that is,
$\bits \times \dots \times \bits$ with $b+1$ occurrences of
$\bits$
\item $\A_\pol^n := \{ (d_0,\dots,d_b) \mid$ all $d_i$ are boolean
lists, with $|d_0| < a$ and $|d_i| \leq n$ for $1 \leq i \leq b$;
here, we say $|x_1\cons \dots \cons x_k \cons \nil| = k$
\item $\numinterpret{(d_0,\dots,d_b)}_\pol^n := \sum_{i = 0}^b
|d_i| \cdot (n+1)^{b-i}$
\item $\Defineds_\pol = \{ \seed[\pol], \pred[\pol],
\zero[\pol] \}$
\item let $\symb{alist}$ be a list of length $a-1$, e.g.,
$\sfalse\cons\dots\cons\sfalse\cons\nil$ and let
$\prog_\pol$ consist of the following clauses:

\medskip
$\seed[\pol]\ cs = (\symb{alist},cs,\dots,cs)$ \\
\ \\
$\pred[\pol]\ cs\ (x_0,\dots,x_{b-1},y\cons ys) =
  (x_0,\dots,x_{b-1},ys)$ \\
$\pred[\pol]\ cs\ (x_0,\dots,x_{b-2},y\cons ys,\nil) =
  (x_0,\dots,x_{b-2},ys,cs)$ \\
\dots \\
$\pred[\pol]\ cs\ (y\cons ys,\nil,\dots,\nil) = (ys,cs,\dots,cs)$ \\
$\pred[\pol]\ cs\ (\nil,\nil,\dots,\nil) = (\nil,\nil,\dots,\nil)$ \\
\ \\
$\zero[\pol]\ cs\ (x_0,\dots,x_{b-1},y\cons ys) = \sfalse$ \\
$\zero[\pol]\ cs\ (x_0,\dots,x_{b-2},y\cons ys,\nil) = \sfalse$ \\
\dots \\
$\zero[\pol]\ cs\ (y\cons ys,\nil,\dots,\nil) = \sfalse$ \\
$\zero[\pol]\ cs\ (\nil,\dots,\nil) = \strue$
\end{itemize}
It is easy to see that the requirements on evaluation are satisfied.
For example, $\prog_\pol \vdashcall \seed[\pol]\ cs \arrr
(\symb{alist},cs,\dots,cs)$,
which consists of $b+1$ boolean lists with the right
lengths, and $\numinterpret{(\symb{alist},cs,\dots,cs)}_\pol^n = (a-1
) \cdot (n+1)^b + n \cdot (n+1)^{b-1} \linebreak
+ \dots + n \cdot (n+1)^{b-b} =
(a \cdot (n+1)^b - (n+1)^b) + ((n+1)^b - (n+1)^{b-1}) + \cdots +
((n+1)^1 -\linebreak (n+1)^0) = a \cdot (n+1)^b - 1$; as the program
is deterministic, this is the only possible result.
The requirements for $\pred[\pol]$ and $\zero[\pol]$ are
similarly easy.
\qed
\end{proof}

Note that the clauses in $\prog_\pol$ correspond to those in
Example~\ref{ex:counting}.  The other counting module of
Section~\ref{subsec:counting} allows us to build on an existing
counting module so as to obtain an exponential increase in magnitude
of the boundary $P$---and to be applied repeatedly for arbitrarily
high bounds.

\oldcounter{\moduleexp}
\begin{lemma}
If there is a $P$-counting module $C_\pi$ of data order $K$, then
there is a $(\lambda n.2^{P(n)})$-counting module $C_\epi$ of data
order $K+1$.
\end{lemma}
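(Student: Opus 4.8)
The plan is to verify that the tuple $C_\epi = (\numtype[\epi],\Defineds_\epi,\A_\epi,\numinterpret{\cdot}_\epi,\prog_\epi)$ given by Figure~\ref{fig:epi} (together with the clauses of $\prog_\pi$) meets all five requirements of a $(\lambda n.2^{P(n)})$-counting module with $P'(n)=2^{P(n)}$. First I would pin down the representation. For each $n$, let $\A_\epi^n$ consist of those values $v : \numtype[\epi] = \numtype \arrtype \bool$ for which, for every $w \in \A_\pi^n$, either $\prog_\epi \vdashcall v\ w \arrr \strue$ or $\prog_\epi \vdashcall v\ w \arrr \sfalse$, with the answer depending only on $\numinterpret{w}_\pi^n$; writing $b_j$ for this answer at index $j$, set
\[
\numinterpret{v}_\epi^n := \sum\nolimits_{j=0}^{P(n)-1} b_j \cdot 2^{P(n)-1-j},
\]
so that $b_0$ is the most significant bit. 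The bound on type order is immediate ($\typeorder{\numtype[\epi]} = \typeorder{\numtype}+1 \le K+1$), and since every new clause introduces only the constant constructors $\strue,\sfalse$ on its right-hand side and the only variable of order $> 0$ is $F : \numtype[\epi]$, cons-freeness and data order $K+1$ follow at once.

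Next I would establish the auxiliary semantics. The clause for $\symb{not}$ computes negation, and $\symb{equal}_\pi$ (defined just before Lemma~\ref{lem:counting}) decides equality of the numbers represented by its two arguments via simultaneous $\pred$; hence $(\symb{flip}_\epi\ cs\ F\ k)\ w$ evaluates to the negation of $F\ w$ when $\numinterpret{w}_\pi^n = \numinterpret{k}_\pi^n$ and to $F\ w$ otherwise, so $\symb{flip}_\epi\ cs\ F\ k$ is again a value in $\A_\epi^n$ representing $F$ with bit $\numinterpret{k}_\pi^n$ flipped. For the seed, $\seed[\epi]\ cs\ x = \strue$ makes $\seed[\epi]\ cs$ the unique (closure) value on which every application returns $\strue$, i.e.\ the all-ones string, so $\numinterpret{\seed[\epi]\ cs}_\epi^n = 2^{P(n)}-1 = P'(n)-1$. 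Throughout, termination and uniqueness of the helpers follow from the counting-module guarantees for $C_\pi$: each recursive call of $\symb{zhelp}_\epi$ and $\symb{phelp}_\epi$ strictly decreases $\numinterpret{k}_\pi^n$ (the recursive branch is taken only when $\zero\ cs\ k$ is false, so $\pred\ cs\ k$ is defined and drops the value by one), and since $\prog_\epi$ uses no $\choice$, every call has at most one result. The zero test is then discharged by a straightforward induction on $\numinterpret{k}_\pi^n$: $\symb{zhelp}_\epi\ cs\ F\ k$ returns $\strue$ exactly when $b_j = 0$ for all $0 \le j \le \numinterpret{k}_\pi^n$, so starting from $k = \seed\ cs$ (index $P(n)-1$) it returns $\strue$ iff all bits vanish, i.e.\ iff $\numinterpret{v}_\epi^n = 0$.

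The main obstacle is the correctness of $\pred[\epi]$, which implements the usual binary borrow: the predecessor of $b_0 \dots b_{i}\,1\,0\dots 0$ is $b_0 \dots b_i\,0\,1\dots 1$. I would prove the loop invariant that, whenever $\symb{phelp}_\epi\ cs\ F'\ k$ is reached with $j := \numinterpret{k}_\pi^n$, the function $F'$ agrees with the original input on indices $0,\dots,j$ while every index $j{+}1,\dots,P(n){-}1$ has already been forced to $1$, and all original bits at those larger indices were $0$ (no $1$ has yet been scanned). The invariant is maintained by the recursive branch (a $0$-bit at index $j$ is flipped to $1$ via $\symb{flip}_\epi$ and the scan advances to $\pred\ cs\ k$), and it is discharged in the first branch: upon meeting the rightmost $1$, i.e.\ the first index $j$ at which $\prog_\epi \vdashcall F'\ k \arrr \strue$, flipping it to $0$ yields $b_0\dots b_{j-1}\,0\,1\dots 1$, and a short computation shows this equals $\numinterpret{v}_\epi^n - 1$ because $2^{P(n)-1-j} - \sum_{l>j} 2^{P(n)-1-l} = 1$. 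Since every nonzero number has such a rightmost $1$, this branch always fires before the $\zero\ cs\ k$ branch (which only triggers on the all-zero string and harmlessly wraps to $\seed[\epi]\ cs$); hence for $\numinterpret{v}_\epi^n = i > 0$ the unique result $w$ of $\pred[\epi]\ cs\ v$ lies in $\A_\epi^n$ with $\numinterpret{w}_\epi^n = i-1$, which completes the five requirements and establishes the lemma.
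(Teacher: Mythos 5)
Your proof is correct and takes essentially the same route as the paper: it verifies the paper's own construction (the clauses of Figure~\ref{fig:epi} with $\numtype[\epi] = \numtype \arrtype \bool$, the all-ones seed representing $2^{P(n)}-1$, the downward bit scan for $\zero[\epi]$, and the borrow-propagation invariant for $\pred[\epi]$), merely spelling out what the paper's appendix dismisses as ``standard bitvector arithmetic''. One small correction: uniqueness and termination of the helper calls should be attributed solely to the counting-module guarantees for $C_\pi$ (unique results of $\seed$, $\pred$, $\zero$ on values in $\A_\pi^n$, which cover every $k$ arising in the recursions) rather than to ``$\prog_\epi$ uses no $\choice$'', since $\prog_\epi$ includes all of $\prog_\pi$ and the lemma does not assume $C_\pi$ is deterministic.
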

\startappendixcounters

\begin{proof}
We let:
\begin{itemize}
\item $\numtype[\epi] := \numtype \arrtype \bool$; then $\typeorder{
\numtype[\epi]} \leq K+1$;
\item $\A_\epi^n := \{$values $F$ such that, (a) for all $v \in
\A_\pi^n$: either $\prog_\epi \vdashcall F\ v \arrr \strue$ or
$\prog_\epi \vdashcall F\ v \arrr \sfalse$ (but not both), and (b)
for all $v,w
\in \A_\pi^n$: if $\numinterpret{v}_\pi^n = \numinterpret{w}_\pi^n$
then $\prog_\epi \vdashcall F\ v \arrr b$ and $\prog_\epi \vdashcall
F\ w \arrr d$ implies $b = d\}$; that is, $\A_\epi^n$ is the set of
functions from $\numtype$ to $\bool$ such that $F\ \numrep{i}$ is
uniquely defined for any representation $\numrep{i}$ of $i \in \{0,
\dots,P(n)-1\}$ in $C_\pi$;
\item $\numinterpret{F}_\epi^n = \sum_{i = 0}^{P(n)-1} \{ 2^{P(n)-1-
i} \mid \exists v \in \A_\pi^n [\numinterpret{v}_\pi^n = i \wedge
\prog_\epi \vdashcall F\ i \arrr \strue] \}$; that is, $F$ is mapped
to the number $i$ with a bitstring $b_0\dots b_{P(n)-1}$, where $b_i
= 1$ if and only if $F\ \numrep{i}$ has value $\strue$;
\item $\Defineds_\epi = \Defineds_\pi \cup \{\symb{not}\} \cup
\{ \identifier{f}_\epi \mid \identifier{f}_\epi$ used in
$\prog_\epi$ below$\}$
\item $\prog_\epi$ consists of the following clauses, followed by the
clauses in $\prog_\pi$:

\smallskip
\texttt{//} $2^{P(n)}-1$ \texttt{corresponds to the bitvector which is
1 at all bits} \\
$\seed[\epi]\ cs = \symb{alwaystrue}_\epi$ \\
$\symb{alwaystrue}_\epi\ x = \strue$

\bigskip
\texttt{// to test whether} $b_0\dots b_{P(n)-1}$ \texttt{is 0, check
each} $b_i = 0$ \\
\texttt{// start in} $b_{P(n)-1}$ \texttt{and count down to test all
bits.} \\
$\zero[\epi]\ cs\ F = \symb{zhelp}_\epi\ cs\ F\ (\seed\ cs)$ \\
$\symb{zhelp}_\epi\ cs\ F\ k = \ifte{\:F\ k\:}{\:\sfalse
\\\phantom{\symb{zhelp}_\epi\ cs\ k\ F =\,}}{
\ifte{\:\zero\ cs\ k\:}{\:\strue
\\\phantom{\symb{zhelp}_\epi\ cs\ k\ F =\,}
}{\:\symb{zhelp}_\epi\ cs\ F\ (\symb{pred}_\pi\ cs\ k)}}$

\bigskip
\texttt{// the predecessor of} $b_0\dots b_i 1 0 \dots 0$
\texttt{is} $b_0 \dots b_i 0 1\dots 1$\texttt{, so go down} \\
\texttt{// through the bits, and flip them until you encounter a 1} \\
$\pred[\epi]\ cs\ F = \symb{phelp}_\epi\ cs\ F\ (\seed\ cs)$ \\
$\symb{phelp}_\epi\ cs\ F\ k = \ifte{\:F\ k\:}{\:\symb{flip}_\epi\ 
cs\ F\ k
\\\phantom{\symb{phelp}_\epi\ cs\ k\ F =\,}
}{\ifte{\:\zero\ cs\ k\:}{\:\seed[\epi]\ cs
\\\phantom{\symb{phelp}_\epi\ cs\ k\ F =\,}
}{\:\symb{phelp}_\epi\ cs\ (\symb{flip}_\epi\ cs\ F\ k)\ 
(\pred\ cs\ k)}}$ \\
$\symb{flip}_\epi\ cs\ F\ k\ i = \ifte{\:\symb{equal}_\pi\ cs\ k\ i
\:}{\:\symb{not}\ (F\ i)\:}{\:F\ i}$ \\
$\symb{not}\ b = \ifte{\:b\:}{\:\sfalse\:}{\:\strue\:}$
\end{itemize}
By standard bitvector arithmetic, the evaluation
requirements are satisfied.
\qed
\end{proof}

\section{Algorithm complexity (Sections~\ref{subsec:detalgorithm}
and~\ref{subsec:elementary:algorithm})}\label{app:complexity}

Now, we turn to proving the complexity of both the core and general
algorithms (Algorithm~\ref{alg:base} and~\ref{alg:general}).  The
key result---which is formulated for Algorithm~\ref{alg:base} but
immediately extends to Algorithm~\ref{alg:general}, is the first
lemma of Section~\ref{subsec:detalgorithm}:

\oldcounter{\complexitylem}
\begin{lemma}
Let $\prog$ be a cons-free program of data order $K$.
Let $\Sigma$ be the set of all types $\atype$ with $\typeorder{\atype}
\leq K$ which occur as part of an argument type, or as an output type
of some $\identifier{f} \in \Defineds$.
Suppose that, given input of total size $n$,
$\pinterpret{\atype}$ has cardinality at most $F(n)$ for all $\atype
\in \Sigma$, and testing whether $e_1 \sqsupseteq e_2$ for $e_1,e_2
\in \interpret{\atype}$ takes at most $F(n)$ steps.
Then Algorithm~\ref{alg:base} runs in $\timecomp{a \cdot F(n)^b}$ for
some $a,b$.
\end{lemma}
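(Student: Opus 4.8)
The plan is to express the total running time of Algorithm~\ref{alg:base} as a product of three quantities, each of which I will bound by a fixed polynomial in $F(n)$: the number of iterations of the loop in step~\ref{alg:iterate}, the number of statements inspected per iteration, and the cost of deciding the confirmation condition for a single statement. Multiplying three such bounds gives a bound of the form $a \cdot F(n)^b$, which is exactly $\timecomp{a \cdot F(n)^b}$.

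The conceptual key is a structural observation about $\eprog$. Since step~\ref{alg:prepare:eprog} replaces $\prog$ by the program $\eprog$ of Lemma~\ref{lem:proper}, every sub-expression occurring in a right-hand side of $\eprog$, and every argument type and output type of every $\identifier{f} \in \Defineds$, has type order $\leq K$ and hence a type lying in $\Sigma$. Consequently every extensional value stored inside a statement of $\X$ --- the $e_i$ and $o$ in a call-statement $\vdash \apps{\identifier{f}}{e_1}{e_n} \leadsto o$, and the values $o$ and $\eta(x)$ in an expression-statement $\eta \vdash t \leadsto o$ --- is drawn from some $\pinterpret{\atype}$ with $\atype \in \Sigma$, and therefore from a set of cardinality at most $F(n)$ by hypothesis. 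This is precisely what keeps the statement count under control, and it is the place where Lemma~\ref{lem:proper} is essential.

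Next I would bound $|\X|$. Let \textsf{a} be the greatest number of arguments any defined symbol or variable in $\eprog$ may take, \textsf{r} the greatest number of sub-expressions of any right-hand side, and $c$ the greatest number of variables occurring in any clause (all three are constants independent of the input). A call-statement is then determined by a symbol ($\leq |\Defineds|$ choices), an arity $n \leq \textsf{a}$, and at most $\textsf{a}+1$ extensional values, each ranging over a set of size $\leq F(n)$; an expression-statement is determined by a clause ($\leq |\eprog|$ choices), one of its $\leq \textsf{r}$ sub-expressions, the output $o$, and an ext-environment $\eta$ assigning to each of the $\leq c$ clause variables an element of some $\pinterpret{\btype}$ with $\btype \in \Sigma$. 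Hence $|\X|$ is bounded by a fixed polynomial in $F(n)$, of the shape $\textsf{a}\cdot|\Defineds|\cdot F(n)^{\textsf{a}+1} + |\eprog|\cdot\textsf{r}\cdot F(n)^{c+1}$. Because confirmation is monotone --- a confirmed statement is never later unmarked --- and every non-final pass of step~\ref{alg:iterate} flips at least one statement from unconfirmed to confirmed, there are at most $|\X|+1$ iterations, each scanning all of $\X$.

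The remaining and most tedious step, which I expect to be the main obstacle, is to verify that each confirmation rule is decidable in time polynomial in $|\X|$ and $F(n)$. The rules of steps~\ref{alg:iterate:value}, \ref{alg:iterate:rhs:var} and~\ref{alg:iterate:rhs:func} range over the pairs of a set-valued extensional value $O$, or compute a partial application $e(u_1,\dots,u_n)$, and invoke the order $\sqsupseteq$; here I would argue that any such set has at most $F(n)$ pairs, that $e(u_1,\dots,u_n)$ is obtained by at most \textsf{a} successive lookups, that each $\sqsupseteq$-test costs $\leq F(n)$ by hypothesis, and that testing whether a given statement is currently confirmed is a lookup costing $O(|\X|)$. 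The simpler rules (steps~\ref{alg:iterate:lhs} through~\ref{alg:iterate:pair}) are handled the same way, using that matching proceeds against the finitely many clauses of $\eprog$. The preparation phase of step~\ref{alg:prepare} is likewise polynomial: forming $\eprog$ depends only on the fixed $\prog$, enumerating $\X$ costs $O(|\X|)$, and marking the base cases of step~\ref{alg:prepare:base} costs $\leq F(n)$ per statement. Combining these polynomial factors yields the claimed bound.
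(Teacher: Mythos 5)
Your proposal is correct and follows essentially the same route as the paper's own proof: replace $\prog$ by $\eprog$ so that all extensional values live in $\bigcup_{\atype \in \Sigma}\pinterpret{\atype}$, bound $|\X|$ by a fixed polynomial in $F(n)$ (the paper uses the same shape, $\textsf{a}\cdot|\Defineds|\cdot F(n)^{\textsf{a}+1} + |\eprog|\cdot\textsf{r}\cdot F(n)^{\textsf{a}+1}$), argue at most $|\X|+1$ iterations since each non-final pass confirms a new statement, and check rule by rule that each confirmation test is polynomial in $|\X|$ and $F(n)$. The only cosmetic difference is your separate constant $c$ for clause variables where the paper folds this into $\textsf{a}$; this changes nothing.
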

\startappendixcounters

\begin{proof}
We first observe that, for any $e \in \interpret{\atype}$ occurring
in the algorithm, $\atype \in \Sigma$.  This is due to the preparation
step where $\prog$ is replaced by $\eprog$.

Write \textsf{a} for the greatest number of arguments any defined
symbol $\identifier{f}$ or variable $x$ occurring in $\eprog$ may
take, and write \textsf{r} for the greatest number of sub-expressions
of any right-hand side in $\eprog$ (which does not depend on the
input!).
We start by observing that $\X$ contains at most $\textsf{a} \cdot
|\Defineds| \cdot F(n)^{\textsf{a}+1}$ statements
$\apps{\identifier{f}}{e_1}{e_n} \leadsto o$, and at most $|\eprog|
\cdot \textsf{r} \cdot F(n)^{\textsf{a}+1}$ statements $t\eta \leadsto
o$.

We observe that step~\ref{alg:prepare:eprog} does not depend on the
input, so takes a constant number of steps.
Step~\ref{alg:prepare:statements} and~\ref{alg:prepare:base} both
take $|\X|$ steps.  The exact time cost of each step depends on
implementation concerns, but is certainly limited by some polynomial
of $F(n)$, by the assumption on $\sqsupseteq$..  Thus, the
preparation step is polynomial in $F(n)$; say its cost is $P_1(F(n))$.

In every step of the iteration, at least one statement is flipped from
unconfirmed to confirmed, or the iteration ends.  Thus, there are at
most $|\X| + 1$ iterations.  In each iteration,
Step~\ref{alg:iterate:value} has a cost limited by $\mathit{Card}(O)
\cdot |\X| \cdot \langle$cost of checking $u' \sqsupseteq u\rangle
\leq F(n)^3 \cdot |\X| \cdot \langle$ some implementation-dependent
constant$\rangle$.
Step~\ref{alg:iterate:lhs} has a cost limited by $|\eprog| \cdot
\langle$cost of matching$\rangle \cdot |\X| \cdot \langle$some
implementation-dependent constant$\rangle$.
Both Steps~\ref{alg:iterate:ifte} and~\ref{alg:iterate:pair}
are limited by $2 \cdot \langle$some constant$\rangle \cdot |\X|$ as
well (the cost for looking up confirmation status of two given
statements), and Step~\ref{alg:iterate:choice} is certainly limited by
$r \cdot \langle$some constant$\rangle \cdot |\X|$.

For each statement $s\eta \leadsto o$ in Steps~\ref{alg:iterate:rhs:var}
and~\ref{alg:iterate:rhs:func}, we must check all suitable tuples
$(e_1,\dots,e_{n'})$---of which there are at most $F(n)^{\textsf{a}}
$---and test confirmation for each $s_i\eta \leadsto e_i$.
In Step~\ref{alg:iterate:rhs:var}, we must additionally do
$\sqsupseteq$ tests for all $o' \in \eta(x)(e_1,\dots,e_{n'})$ for all
tuples; even if we ignore that $\eta(x)$ is a partial function, this
takes at most $F(n)^{\textsf{a}} \cdot F(n)^{\textsf{a}} \cdot F(n)
\cdot \langle$some constant$\rangle$ steps.
In Step~\ref{alg:iterate:rhs:call}, a single lookup over $|\X|$
statements must be done; in Step~\ref{alg:iterate:rhs:extra} this is
combined with a lookup.  Both cases certainly stay below
$F(n)^{2 \cdot \textsf{a}+2} \cdot \langle$some constant$\rangle$
steps.

In total, the cost of iterating is thus limited by $(|\X| + 1)
\cdot |\X| \cdot \langle$some constant$\rangle \cdot \max(
F(n)^3 \cdot |\X|,
|\eprog| \cdot |\X|,
2 \cdot |\X|,
r \cdot |\X|,
F(n)^{2 \cdot \textsf{a}+2})$.  Since $|\X|$ is a polynomial in
$F(n)$, this is certainly bounded by $P_2(F(n))$ for some polynomial
$P_2$.

Finally, completion requires at most $|\X|$ tests.  Overall, all
steps together gives a polynomial time complexity in $F(n)$.
\qed
\end{proof}

Thus, complexity of Algorithm~\ref{alg:base} relies on the size of
each $\pinterpret{\atype}$, and complexity of
Algorithm~\ref{alg:general} on the sizes of $\interpret{\atype}$.

To determine these sizes as well as the complexity of testing
$\sqsupseteq$ on two given extensional values, we first obtain a
simple helper lemma for calculation:

\begin{lemma}\label{lem:expmultiply}
If $X,Y \geq 2$, then $\exp_2^K(X) \cdot \exp_2^K(Y) \geq
\exp_2^K(X \cdot Y)$ for $K \in \N$.
\end{lemma}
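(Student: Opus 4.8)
The plan is to prove the inequality by induction on $K$, peeling off one layer of the tower $\exp_2$ at each step and reducing the claim at level $K+1$ to the corresponding claim at level $K$ together with an elementary fact relating sums and products. First I would dispose of the base case $K = 0$: here $\exp_2^0(X)\cdot\exp_2^0(Y) = X\cdot Y = \exp_2^0(X\cdot Y)$, so the two sides are in fact equal and the inequality holds (in either direction) trivially.

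For the inductive step I would unfold the definition $\exp_2^{K+1}(n) = 2^{\exp_2^K(n)}$. Since $2^{a}\cdot 2^{b} = 2^{a+b}$ and $x \mapsto 2^x$ is strictly monotone, comparing the two sides at level $K+1$ reduces to comparing their exponents, namely $\exp_2^K(X)+\exp_2^K(Y)$ against $\exp_2^K(X\cdot Y)$:
\[
\exp_2^{K+1}(X)\cdot\exp_2^{K+1}(Y) = 2^{\,\exp_2^K(X)+\exp_2^K(Y)} \qquad\text{versus}\qquad \exp_2^{K+1}(X\cdot Y) = 2^{\,\exp_2^K(X\cdot Y)}.
\]
Thus the whole argument turns on a single inequality between the \emph{sum} $\exp_2^K(X)+\exp_2^K(Y)$ and the \emph{product} $\exp_2^K(X)\cdot\exp_2^K(Y)$, after which the inductive hypothesis handles the product.

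The hinge is the elementary fact $a+b \le a\cdot b$, valid for all integers $a,b \ge 2$ (equivalently $(a-1)(b-1)\ge 1$). Writing $a := \exp_2^K(X)$ and $b := \exp_2^K(Y)$, I would first note $a,b \ge 2$, which holds because $X,Y \ge 2$ and $\exp_2^K$ is nondecreasing with $\exp_2^K(2) \ge 2$. Then $a+b \le a\cdot b = \exp_2^K(X)\cdot\exp_2^K(Y)$, the inductive hypothesis bounds this product by $\exp_2^K(X\cdot Y)$, and re-applying monotonicity of $x\mapsto 2^x$ closes the induction.

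The main obstacle is exactly this exponent-level comparison, and it is also the step that fixes the \textbf{orientation} of the statement: combining $a+b \le ab$ with the inductive hypothesis yields $\exp_2^K(X)\cdot\exp_2^K(Y) \le \exp_2^K(X\cdot Y)$. This is precisely the direction that is actually invoked in the proof sketch of Lemma~\ref{lem:pinterpretcard} (``using that $\exp_2^K(X)\cdot\exp_2^K(Y) \le \exp_2^K(X\cdot Y)$ for $X\ge 2$''), so the relation symbol in Lemma~\ref{lem:expmultiply} should be read as ``$\le$''. I would flag this, since the sum/product inequality cannot support the reverse bound: for $a,b\ge 2$ the product strictly dominates the sum (e.g.\ $X=2,Y=3,K=1$ gives $2^{X}\cdot 2^{Y}=2^{5} < 2^{6}=2^{X\cdot Y}$), so the displayed ``$\ge$'' would be false already at $K=1$.
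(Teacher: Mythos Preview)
Your proof is correct and follows exactly the same route as the paper: induction on $K$, the elementary bound $a+b\le ab$ for $a,b\ge 2$, the auxiliary fact $\exp_2^K(X)\ge 2$ whenever $X\ge 2$, and then the inductive step via monotonicity of $2^{(\cdot)}$. Your observation about orientation is also on point: the paper's own proof derives $\exp_2^K(X)\cdot\exp_2^K(Y)\le\exp_2^K(X\cdot Y)$, which is the direction actually used downstream, so the ``$\geq$'' in the lemma statement is a typo.
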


\begin{proof}
We start by observing that for $X,Y \geq 2$
always (**) $X \cdot Y \geq X + Y$:
\begin{itemize}
\item $2 \cdot 2 = 4 = 2 + 2$;
\item if $X \cdot Y \geq X + Y$, then $X \cdot (Y+1) = X \cdot Y + X
  \geq (X + Y) + X \geq X + (Y + 1)$;
\item if $X \cdot Y \geq X + Y$, then $(X + 1) \cdot Y \geq Y + X + 1$
  in the same way.
\end{itemize}

By induction on $K$ we also see:
(***) if $X \geq 2$ then $\exp_2^K(X) \geq 2$ for all $K$.

Now the lemma follows by another induction on $K$:
\begin{itemize}
\item for $K = 0$: $\exp_2^K(X) \cdot \exp_2^K(Y) = X \cdot Y =
  \exp_2^K(X \cdot Y)$;
\item for $K \geq 0$:
  $\exp_2^{K+1}(X) \cdot \exp_2^{K+1}(Y) = 2^{\exp_2^K(X)} \cdot
  2^{\exp_2^K(Y)} = 2^{\exp_2^K(X) + \exp_2^K(Y)} \leq
  2^{\exp_2^K(X) \cdot \exp_2^K(Y)}$ by (**) and (***),
  $\leq 2^{\exp_2^K(X) \cdot \exp_2^K(Y)} = \exp_2^{K+1}(X \cdot Y)$
  by the induction hypothesis.
  \qed
\end{itemize}
\end{proof}

For the first part of Lemma~\ref{lem:pinterpretcard}, we consider the
cardinality of each $\pinterpret{\atype}$.

\begin{lemma}\label{lem:pinterpretsize}
If $1 \leq \Card(\B) < N$, then for each $\atype$ with
$\typeorder{\atype} \leq K$ such that $L$ sorts occur in $\atype$
(including repetitions) we have:
$\Card(\pinterpret{\atype}) < \exp_2^K(N^L)$.
\end{lemma}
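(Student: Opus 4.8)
The plan is to prove, by induction on the structure of $\atype$, the \emph{sharper} statement that $\Card(\pinterpret{\atype}) < \exp_2^{\typeorder{\atype}}(N^L)$, and then to weaken it to the claimed bound. The weakening uses the monotonicity observation that $\exp_2^j(X) \le \exp_2^k(X)$ whenever $j \le k$ and $X \ge 1$ (iterating $\exp_2^{i+1}(X) = 2^{\exp_2^i(X)} \ge \exp_2^i(X)$, since $2^Y > Y$ for $Y \ge 0$): as $\typeorder{\atype} \le K$ and $N^L \ge 1$, the sharper bound gives $\Card(\pinterpret{\atype}) < \exp_2^{\typeorder{\atype}}(N^L) \le \exp_2^K(N^L)$. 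Throughout I use that $N \ge 2$ (since $\Card(\B) \ge 1$ and $N$ is a natural number with $\Card(\B) < N$) and that every type has length $L \ge 1$, so $N^L \ge 2$ and the hypotheses $X,Y \ge 2$ of Lemma~\ref{lem:expmultiply} are met wherever I invoke it.

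For the base case $\atype = \asort \in \Sorts$ we have $\typeorder{\atype} = 0$, $L = 1$, and $\pinterpret{\asort} \subseteq \B$ by Definition~\ref{def:pinterpret}, so $\Card(\pinterpret{\asort}) \le \Card(\B) < N = \exp_2^0(N^1)$. For a product $\atype = \btype \times \ctype$ of lengths $L_1, L_2$ (so $L = L_1 + L_2$) and order $k = \max(\typeorder{\btype}, \typeorder{\ctype})$, the definition gives $\Card(\pinterpret{\atype}) = \Card(\pinterpret{\btype}) \cdot \Card(\pinterpret{\ctype})$; the two induction hypotheses followed by monotonicity bound this by $\exp_2^k(N^{L_1}) \cdot \exp_2^k(N^{L_2})$, and then Lemma~\ref{lem:expmultiply} collapses the product into $\exp_2^k(N^{L_1}\cdot N^{L_2}) = \exp_2^k(N^L)$.

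The real work is the arrow case $\atype = \btype \arrtype \ctype$, where $k := \typeorder{\atype} = \max(\typeorder{\btype}+1, \typeorder{\ctype})$. Writing $k_1 = \typeorder{\btype}$, $k_2 = \typeorder{\ctype}$, $Q = \exp_2^{k_1}(N^{L_1})$ and $P = \exp_2^{k_2}(N^{L_2})$, the key counting fact is that an element of $\pinterpret{\atype}$ is a partial function from $\pinterpret{\btype}$ to $\pinterpret{\ctype}$, of which there are exactly $(\Card(\pinterpret{\ctype})+1)^{\Card(\pinterpret{\btype})}$. The induction hypotheses give $\Card(\pinterpret{\ctype})+1 \le P$ and $\Card(\pinterpret{\btype}) \le Q-1$ (using integrality), so $\Card(\pinterpret{\atype}) \le P^{Q-1}$. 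I then split on which argument realises the maximum. If $k = k_1+1$, monotonicity gives $P \le \exp_2^{k_1+1}(N^{L_2}) = 2^{\exp_2^{k_1}(N^{L_2})}$, whence $P^Q \le 2^{\exp_2^{k_1}(N^{L_2}) \cdot \exp_2^{k_1}(N^{L_1})} \le 2^{\exp_2^{k_1}(N^L)} = \exp_2^{k}(N^L)$ by Lemma~\ref{lem:expmultiply}. If instead $k = k_2$ (so $k_1 + 1 \le k_2$, in particular $k_2 \ge 1$), then $\log_2 P = \exp_2^{k_2-1}(N^{L_2})$ and $k_1 \le k_2 - 1$, so $\log_2(P^Q) = \exp_2^{k_1}(N^{L_1}) \cdot \exp_2^{k_2-1}(N^{L_2}) \le \exp_2^{k_2-1}(N^{L_1}) \cdot \exp_2^{k_2-1}(N^{L_2}) \le \exp_2^{k_2-1}(N^L)$, giving $P^Q \le \exp_2^{k_2}(N^L) = \exp_2^k(N^L)$. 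Since $\max(k_1+1, k_2)$ is attained by one of the two terms, exactly one branch always applies.

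The main obstacle is precisely this arrow-case bookkeeping: aligning the tower heights across the two branches and recovering \emph{strict} inequality. For strictness I use $\Card(\pinterpret{\atype}) \le P^{Q-1} = P^Q/P \le \exp_2^k(N^L)/P < \exp_2^k(N^L)$, valid because $P = \exp_2^{k_2}(N^{L_2}) \ge N^{L_2} \ge 2$; the degenerate subcase $\Card(\pinterpret{\btype}) = 0$ is handled directly, since then $\Card(\pinterpret{\atype}) = 1 < N^L \le \exp_2^k(N^L)$. The only further routine checks are that the side conditions $N^{L_i} \ge 2$ hold at each application of Lemma~\ref{lem:expmultiply}, which follow from $N \ge 2$ and $L_i \ge 1$.
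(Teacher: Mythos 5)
Your proof is correct and takes essentially the same route as the paper's: structural induction on $\atype$, counting $\pinterpret{\btype \arrtype \ctype}$ as total functions into $\pinterpret{\ctype} \cup \{\bot\}$ (so $(\Card(\pinterpret{\ctype})+1)^{\Card(\pinterpret{\btype})}$ elements), and collapsing products of exponential towers via Lemma~\ref{lem:expmultiply}. The only difference is bookkeeping: you prove the sharper bound indexed by $\typeorder{\atype}$ and weaken by monotonicity, splitting on which side of $\max(\typeorder{\btype}+1,\typeorder{\ctype})$ attains the maximum, whereas the paper proves the $K$-indexed statement directly by applying the induction hypothesis at level $K-1$ to the argument type — the underlying estimates are identical.
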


\begin{proof}
By induction on the form of $\atype$.

For $\atype \in \Sorts$, $\pinterpret{\atype} \subseteq \B$ so
$\Card(\pinterpret{\atype}) \leq \Card(\B) < N$.

For $\atype = \atype_1 \times \atype_2$ with $\atype_1$ having $L_1$
sorts and $\atype_2$ having $L_2$, we have
\[
\begin{array}{rcl}
\Card(\pinterpret{\atype_1 \times \atype_2})
& = & \Card(\pinterpret{\atype_1}) \cdot \Card(\pinterpret{\atype_2}) \\
& < & \exp_2^K(N^{L_1}) \cdot \exp_2^K(N^{L_2}) \\
& \leq & \exp_2^K(N^{L_1} \cdot N^{L_2})\ 
         \text{by Lemma~\ref{lem:expmultiply}} \\
& = & \exp_2^K(N^{L_1 + L_2}) = \exp_2^K(L) \\
\end{array}
\]
For $\atype = \atype_1 \arrtype \atype_2$ with $\atype_1$ having $L_1$
sorts and $\atype_2$ having $L_2$, each element of
$\pinterpret{\atype}$ can be seen as a \emph{total} function from
$\pinterpret{\atype_1}$ to $\pinterpret{\atype_2} \cup \{\bot\}$.
Therefore,
\[
\begin{array}{rcl}
\Card(\pinterpret{\atype_1 \arrtype \atype_2})
& = & (\Card(\pinterpret{\atype_2})+1)^{\Card(\pinterpret{\atype_1})} \\
& \leq & \exp_2^K(N^{L_2})^{\Card(\pinterpret{\atype_1})} \\
& < & \exp_2^K(N^{L_2})\text{\textasciicircum}(\exp_2^{K-1}(N^{L_1})) \\
& = & 2\text{\textasciicircum}(\ \exp_2^{K-1}(N^{L_2}) \cdot
  \exp_2^{K-1}(N^{L_1})\ ) \\
& \leq & 2\text{\textasciicircum}(\ \exp_2^{K-1}(N^L)\ )\ 
         \text{by Lemma~\ref{lem:expmultiply}} \\
& = & \exp_2^K(N^L) \\
\vspace{-28pt}
\end{array}
\]
\qed
\end{proof}

The cardinality of each $\interpret{\atype}$ (as used in
Lemma~\ref{lem:interpretcard}) is obtained similarly.

\begin{lemma}\label{lem:interpretsize}
If $1 \leq \Card(\B) < N$, then for each $\atype$ with
$\mathit{depth}(\atype) \leq K$ such that $L$ sorts occur in $\atype$
(including repetitions) we have:
$\Card(\interpret{\atype}) < \exp_2^K(N^L)$.
\end{lemma}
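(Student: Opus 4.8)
The plan is to prove the bound by structural induction on $\atype$, following the proof of Lemma~\ref{lem:pinterpretsize} but replacing type order by arrow depth and replacing the count of partial functions by a count of arbitrary relations. Throughout I would write $L_i$ for the number of sort occurrences in a subtype $\atype_i$, and I would record at the outset that $1 \leq \Card(\B) < N$ forces $N \geq 2$, so every quantity $N^{L_i}$ appearing below is at least $2$; this is exactly the hypothesis needed to invoke Lemma~\ref{lem:expmultiply}.

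For the base case $\atype = \asort \in \Sorts$ we have $\interpret{\asort} \subseteq \B$, so $\Card(\interpret{\asort}) \leq \Card(\B) < N = N^1$; since $\exp_2^K(X) \geq X$ for all $K$ and $X$, this gives $\Card(\interpret{\asort}) < \exp_2^K(N^1)$ for any $K \geq 0 = \mathit{depth}(\asort)$. For a product $\atype = \atype_1 \times \atype_2$, its depth is $\max(\mathit{depth}(\atype_1),\mathit{depth}(\atype_2)) \leq K$, so both $\mathit{depth}(\atype_i) \leq K$, and with $L = L_1 + L_2$ the induction hypothesis together with Lemma~\ref{lem:expmultiply} yields
\[
\Card(\interpret{\atype_1 \times \atype_2}) = \Card(\interpret{\atype_1}) \cdot \Card(\interpret{\atype_2}) < \exp_2^K(N^{L_1}) \cdot \exp_2^K(N^{L_2}) \leq \exp_2^K(N^{L_1} \cdot N^{L_2}) = \exp_2^K(N^L).
\]

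The interesting case is the arrow $\atype = \atype_1 \arrtype \atype_2$, and this is where passing from type order to arrow depth does the real work. By Definition~\ref{def:interpret} the elements of $\interpret{\atype_1 \arrtype \atype_2}$ are \emph{arbitrary} subsets of $\interpret{\atype_1} \times \interpret{\atype_2}$, so $\Card(\interpret{\atype_1 \arrtype \atype_2}) = 2^{\Card(\interpret{\atype_1}) \cdot \Card(\interpret{\atype_2})}$. Since $\mathit{depth}(\atype_1 \arrtype \atype_2) = 1 + \max(\mathit{depth}(\atype_1),\mathit{depth}(\atype_2)) \leq K$, we have $K \geq 1$ and \emph{both} $\mathit{depth}(\atype_i) \leq K-1$, so I can apply the induction hypothesis uniformly at level $K-1$: $\Card(\interpret{\atype_i}) < \exp_2^{K-1}(N^{L_i})$. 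Multiplying the two strict integer bounds and applying Lemma~\ref{lem:expmultiply} gives $\Card(\interpret{\atype_1}) \cdot \Card(\interpret{\atype_2}) < \exp_2^{K-1}(N^{L_1}) \cdot \exp_2^{K-1}(N^{L_2}) \leq \exp_2^{K-1}(N^L)$, whence, since $x \mapsto 2^x$ is strictly monotone on integers,
\[
\Card(\interpret{\atype_1 \arrtype \atype_2}) = 2^{\Card(\interpret{\atype_1}) \cdot \Card(\interpret{\atype_2})} < 2^{\exp_2^{K-1}(N^L)} = \exp_2^K(N^L).
\]

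I do not expect a genuine obstacle; the argument is routine once the invariants are tracked. The one point demanding care is the bookkeeping of \emph{strict} inequalities: I would use explicitly that all cardinalities and all values $\exp_2^j(\cdot)$ are natural numbers, so that $a < b$ gives $2^a < 2^b$ and a product of two strict bounds stays strict, rather than silently weakening $<$ to $\leq$. The only conceptual difference from Lemma~\ref{lem:pinterpretsize} is that arrow depth decreases by exactly one on both the argument and the result of an arrow, which is precisely what permits applying the induction hypothesis at level $K-1$ on both sides; in the type-order version the result type may retain order $K$, forcing the slightly more delicate estimate used there. It is worth a sentence contrasting the two counts, $2^{m}$ for relations against $(m'+1)^{m''}$ for partial functions, to make clear why the same $\exp_2^K$ tower bounds both.
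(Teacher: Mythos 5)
Your proposal is correct and takes essentially the same route as the paper's own proof: structural induction on $\atype$, the identical product case via Lemma~\ref{lem:expmultiply}, and for arrows the count $2^{\Card(\interpret{\atype_1}) \cdot \Card(\interpret{\atype_2})}$ of arbitrary relations bounded by $2^{\exp_2^{K-1}(N^L)} = \exp_2^K(N^L)$. Your explicit bookkeeping of strict inequalities, of $N \geq 2$, and of the fact that both subtypes of an arrow have depth at most $K-1$ merely spells out what the paper leaves implicit.
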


\begin{proof}
By induction on the form of $\atype$.

For $\atype \in \Sorts$, $\interpret{\atype} \subseteq \B$ so
$\Card(\interpret{\atype}) \leq \Card(\B) < N$.

For $\atype = \atype_1 \times \atype_2$, we obtain $\Card(
\interpret{\atype}) \leq \exp_2^K(N^L)$ in exactly the same way as
in Lemma~\ref{lem:pinterpretsize}.

For $\atype = \atype_1 \arrtype \atype_2$ with $\atype_1$ having
$L_1$ sorts and $\atype_2$ having $L_2$, each
element of $\interpret{\atype}$ is a subset of $\interpret{\atype_1}
\times \interpret{\atype_2}$; therefore,
\[
\begin{array}{rcl}
\Card(\pinterpret{\atype_1 \arrtype \atype_2})
& = & 2\text{\textasciicircum}(\ \Card(\pinterpret{\atype_1} \times
  \pinterpret{\atype_2})\ ) \\
& \leq & 2\text{\textasciicircum}(\ \exp_2^{K-1}(N^{L_1}) \cdot
  \exp_2^{K-1}(N^{L_2})\ ) \\
& \leq & 2\text{\textasciicircum}(\ \exp_2^{K-1}(N^L)\ )\ 
         \text{by Lemma~\ref{lem:expmultiply}} \\
& = & \exp_2^K(N^L) \\
\vspace{-28pt}
\end{array}
\]
\qed
\end{proof}

Aside from the cardinalities of $\pinterpret{\atype}$ and
$\interpret{\atype}$, Lemmas~\ref{lem:pinterpretcard}
and~\ref{lem:interpretcard} also consider the complexity of deciding
$e \sqsupseteq u$ for two (deterministic or non-deterministic)
extensional values.  This complexity we consider for both lemmas
together:

\begin{lemma}\label{lem:sqsupcomplexity}
Let $[\atype]$ be one of $\pinterpret{\atype}$ or
$\interpret{\atype}$, and suppose that we know that for all subtypes
of $\atype$ containing $L$ sorts: $\Card([\atype]) < \exp_2^K(N^L)$
for some fixed $K$, and $N \geq 2$.  Then for any $e,u \in [\atype]$:
testing $e \sqsupseteq u$ requires $< \exp_2^K(N^{(L+1)^3})$
comparisons between elements of $\B$.
\end{lemma}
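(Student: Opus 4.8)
The plan is to bound, by structural induction on $\atype$, a single cost measure $C(\atype)$ that dominates the number of $\B$-comparisons needed both to decide $e \sqsupseteq u$ and to decide equality $e = u$ for $e,u \in [\atype]$, and then prove $C(\atype) < \exp_2^K(N^{(L+1)^3})$, where $L$ is the number of sorts in $\atype$; the lemma follows at once since deciding $\sqsupseteq$ is a special case of $C$. I fold equality into the measure deliberately: in the functional clause of $\sqsupseteq$, locating an entry $(e,u') \in A$ with a prescribed key $e$ forces comparisons of keys, and for a higher-type key those are themselves recursive tests one type level down, so the induction is only well-founded if $C$ accounts for key matching as well as value comparison.

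The recursion for $C$ is read straight off the definition of $\sqsupseteq$. For sorts, $C(\asort) = 1$ (one comparison of data in $\B$); for pairs, $C(\atype_1 \times \atype_2) \le C(\atype_1) + C(\atype_2)$, componentwise. For $\atype = \atype_1 \arrtype \atype_2$, testing $A \sqsupseteq B$ ranges over all pairs $(e,u) \in B$ and, for each, searches $A$ for a matching key carrying a suitable value. Using the cardinality hypothesis, both $A$ and $B$ contain at most $\Card([\atype_1]) \cdot \Card([\atype_2]) < \exp_2^K(N^{L_1}) \cdot \exp_2^K(N^{L_2})$ pairs, which Lemma~\ref{lem:expmultiply} collapses to $\exp_2^K(N^L)$ (with $L = L_1 + L_2$ and each $N^{L_i} \ge 2$). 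Each entry inspected costs at most $C(\atype_1)$ for the key test plus $C(\atype_2)$ for the value test, giving
\[
C(\atype) \le \exp_2^K(N^L)^2 \cdot \bigl(C(\atype_1) + C(\atype_2)\bigr).
\]

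First I would dispatch the pair case: with $L_1 \ge L_2 \ge 1$ the induction hypothesis bounds the sum by $2\exp_2^K(N^{(L_1+1)^3})$, and the elementary fact $2\exp_2^K(X) \le \exp_2^K(2X) \le \exp_2^K(NX)$ (for integer $X \ge 2$, proved as in Lemma~\ref{lem:expmultiply}) together with $(L_1+1)^3 + 1 \le (L+1)^3$ closes it. The real work is the function case. There I would substitute $C(\atype_i) < \exp_2^K(N^{(L_i+1)^3}) \le \exp_2^K(N^{L^3})$ (using $L_i \le L-1$), then collapse $\exp_2^K(N^L)^2 \le \exp_2^K(N^{2L})$ and $2\exp_2^K(N^{L^3}) \le \exp_2^K(N^{L^3+1})$ via Lemma~\ref{lem:expmultiply} and the doubling fact, and apply Lemma~\ref{lem:expmultiply} one last time to reach $\exp_2^K(N^{2L + L^3 + 1})$. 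The entire estimate then reduces to the polynomial inequality $2L + L^3 + 1 \le (L+1)^3 = L^3 + 3L^2 + 3L + 1$, i.e. $0 \le 3L^2 + L$, which holds for all $L$; monotonicity of $\exp_2^K$ delivers $C(\atype) < \exp_2^K(N^{(L+1)^3})$.

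I expect the main obstacle to be bookkeeping rather than any deep idea: confirming that the cubic exponent $(L+1)^3$ is exactly what is needed to absorb simultaneously the \emph{squared} cardinality factor (coming from the nested iteration over $B$ and the search through $A$) and the doubling incurred in the pair case, and checking that the key-matching cost is genuinely dominated by the same $C$ so the induction does not circularly reference a larger type. A minor point to state carefully is that the argument is uniform in the two interpretations: only the hypothesised cardinality bound and the shared shape of $\sqsupseteq$ are used, so the identical proof applies whether $[\atype] = \pinterpret{\atype}$ or $[\atype] = \interpret{\atype}$, which is precisely why Lemmas~\ref{lem:pinterpretcard} and~\ref{lem:interpretcard} can both invoke it.
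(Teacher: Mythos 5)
Your proposal is correct and follows essentially the same route as the paper's proof: the paper likewise defines a single measure $C_\atype$ bounding both $\sqsupseteq$ and equality tests (for exactly your reason — key matching at function types forces equality comparisons one type level down), inducts on the structure of $\atype$, invokes Lemma~\ref{lem:expmultiply} to collapse products of exponentials, and closes with a cubic polynomial inequality, the paper's variant being $(L_1+L_2+1)^3 \geq (L_1+1)^3+(L_2+1)^3 + 2L+1$ where you use the coarser $(L_i+1)^3 \leq L^3$ and $2L+L^3+1 \leq (L+1)^3$. The one nick is that your functional-case recursion $C(\atype) \leq \exp_2^K(N^L)^2\cdot(C(\atype_1)+C(\atype_2))$ omits the factor $2$ for running the containment test in both directions when deciding \emph{equality} at a function type (the paper includes it explicitly), but the slack $3L^2+L$ in your final inequality absorbs this harmlessly.
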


\begin{proof}
We let $C_\atype$ be the maximum cost of either $\sqsupseteq$ tests or
equality tests for elements of $[\atype]$.  We first observe:
\begin{enumerate}
\item $(X+Y+1)^3 = X^3 + Y^3 + 3X^2Y + 3XY^2 + 3X^2 + 3Y^2 + 6XY +
  3X + 3Y + 1$;
\item $(X+1)^3 = X^3 + 3X^2 + 3X + 1$;
\item\label{third:compare}
  $(X+Y+1)^3 - (X+1)^3 - (Y+1)^3 = 3X^2Y + 3XY^2 + 6XY - 1$.
\end{enumerate}

Now, $C_\asort = 1 < N^8 = \exp_2^K(N^{2^3})$ for $\asort \in \Sorts$.
Writing $L_1$ for the number of sorts in $\atype_1$ and $L_2$ for the
number of sorts in $\atype_2$, we have:
\[
\begin{array}{rcl}
C_{\atype_1 \times \atype_2}
& = & C_{\atype_1} + C_{\atype_2} \\
& < & \exp_2^K(N^{(L_1+1)^3}) + \exp_2^K(N^{(L_2+1)^3})\ \text{by
  the induction hypothesis} \\
& \leq & \exp_2^K(N^{(L_1+1)^3} \cdot N^{(L_2+1)^3})\ \text{because
  both sides are at least 2} \\
& \leq & \exp_2^K(N^{(L_1+1)^3 + (L_2+1)^3})\ \text{by
  Lemma~\ref{lem:expmultiply}} \\
& \leq & \exp_2^K(N^{(L_1+L_2+1)^3})\ \text{by observation
  \ref{third:compare} above} \\
& = & \exp_2^K(N^{(L+1)^3}) \\
\end{array}
\]
To compare $A_{\atype_1 \arrtype \atype_2}$ and
$B_{\atype_1 \arrtype \btype_1}$, we may for instance do the
following:
\begin{itemize}
\item for all $(u_1,u_2) \in B$:
  \begin{itemize}
  \item for all $(e_1,e_2) \in A$,
    test $e_1 = u_1$ and either $e_2 = e_2$ or $e_2 \sqsupseteq u_2$;
  \item conclude failure if we didn't find a match
  \end{itemize}
  \item in the case of $\sqsupseteq$, conclude success if we haven't
    concluded failure yet; in the case of $=$, also do the test in the
    other direction
\end{itemize}
This gives, roughly:
\[
\begin{array}{rcl}
C_{\atype \arrtype \btype}
& \leq & 2 \cdot \Card([\atype_1 \times \atype_2]) \cdot
  \Card([\atype_1 \times \atype_2]) \cdot (C_{\atype_1} +
  C_{\atype_2}) \\
& \leq & 2 \cdot \exp_2^K(N^L) \cdot \exp_2^K(N^L) \cdot
  (C_{\atype_1} + C_{\atype_2}) \\
& < & 2 \cdot \exp_2^K(N^L) \cdot \exp_2^K(N^L) \cdot
  \exp_2^K(N^{(L_1+1)^3 + (L_2+1)^3})\ \text{as above} \\
& \leq & 2 \cdot \exp_2^K(N^{2 \cdot L + (L_1+1)^3 + (L_2+1)^3})\ 
  \text{by Lemma~\ref{lem:expmultiply}} \\
& \leq & \exp_2^K(N^{2 \cdot L + (L_1+1)^3 + (L_2+1)^3 + 1})\ 
  \text{because $N \geq 2$} \\
& \leq & \exp_2^K(N^{(L_1+L_2+1)^3})\ \text{by observation
  \ref{third:compare} above} \\
& & \text{because}\ (X + 6L_1L_2 - 1) - (2 L_1 + 2L_2 + 1) \geq 0\ 
  \text{when}\ L_1,L_2 \geq 1 \\
\vspace{-24pt}
\end{array}
\]
\qed
\end{proof}

All parts now proven, Lemmas~\ref{lem:pinterpretcard}
and~\ref{lem:interpretcard} follow immediately.

\oldcounter{\pcomplexitylem}
\begin{lemma}
If $1 \leq \Card(\B) < N$, then for each $\atype$ of length $L$
(where the length of a type is the number of sorts occurring in it,
including repetitions), with
$\typeorder{\atype} \leq K$: $\Card(\pinterpret{\atype}) <
\exp_2^K(N^L)$. 
Testing $e \sqsupseteq u$ for $e,u \in \pinterpret{\atype}$ takes
at most $\exp_2^K(N^{(L+1)^3})$ comparisons between elements of $\B$.
\end{lemma}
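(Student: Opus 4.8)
The plan is to obtain both halves of the statement directly from the two helper lemmas already proved above, so the proof is essentially a one-step combination. The first inequality, $\Card(\pinterpret{\atype}) < \exp_2^K(N^L)$, is verbatim the conclusion of Lemma~\ref{lem:pinterpretsize} (a type with $\typeorder{\atype} \le K$ containing $L$ sorts), and the hypothesis $1 \le \Card(\B) < N$ is exactly what that lemma assumes. For the second claim I would invoke Lemma~\ref{lem:sqsupcomplexity} with $[\atype] := \pinterpret{\atype}$: its hypothesis requires that every subtype $\atype'$ of $\atype$ with $L'$ sorts satisfy $\Card(\pinterpret{\atype'}) < \exp_2^K(N^{L'})$, and this is supplied by the cardinality bound just cited, applied to each subtype (note that $\typeorder{\cdot} \le K$ is inherited by all subtypes, since $\typeorder{\atype_1 \arrtype \atype_2} \le K$ forces $\typeorder{\atype_1} \le K-1$ and $\typeorder{\atype_2} \le K$, and similarly for products). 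The side condition $N \ge 2$ holds because $\Card(\B) \ge 1$. Lemma~\ref{lem:sqsupcomplexity} then yields precisely the $\exp_2^K(N^{(L+1)^3})$ bound on the number of comparisons.

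For completeness I would recall why the cardinality bound holds, since that is where the substance lies. The argument is an induction on the structure of $\atype$. For a sort, $\pinterpret{\asort} \subseteq \B$, so $\Card(\pinterpret{\asort}) < N \le \exp_2^K(N^1)$. For a product $\atype_1 \times \atype_2$ the cardinality is the product of the two component cardinalities, and $\exp_2^K(N^{L_1}) \cdot \exp_2^K(N^{L_2})$ collapses into $\exp_2^K(N^{L_1+L_2})$ via the multiplicativity estimate $\exp_2^K(X) \cdot \exp_2^K(Y) \le \exp_2^K(X \cdot Y)$ of Lemma~\ref{lem:expmultiply}. The arrow case is the crux and the only place I expect genuine difficulty: a deterministic extensional value of type $\atype_1 \arrtype \atype_2$ is a \emph{partial} function, so each of the $\Card(\pinterpret{\atype_1})$ keys is independently assigned one of $\Card(\pinterpret{\atype_2}) + 1$ choices (a target, or ``undefined''), giving $\Card(\pinterpret{\atype_1 \arrtype \atype_2}) = (\Card(\pinterpret{\atype_2})+1)^{\Card(\pinterpret{\atype_1})}$. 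The key point is that $\typeorder{\atype_1} \le K-1$, so the induction hypothesis gives the sharper exponent bound $\Card(\pinterpret{\atype_1}) < \exp_2^{K-1}(N^{L_1})$ while the base only needs $\exp_2^K(N^{L_2}) = 2^{\exp_2^{K-1}(N^{L_2})}$; pulling the exponents together and using Lemma~\ref{lem:expmultiply} once more at level $K-1$ bounds everything by $2^{\exp_2^{K-1}(N^{L})} = \exp_2^K(N^L)$. Getting this type-order drop and the placement of the inner $\exp_2^{K-1}$ correct is the one step needing care; the rest is bookkeeping.

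The comparison-cost bound (Lemma~\ref{lem:sqsupcomplexity}) is proved by the same induction, now tracking the number of comparisons between elements of $\B$: a sort costs a single comparison, a product the sum of its components, and an arrow type a double loop over the two relations (each of cardinality at most $\Card(\pinterpret{\atype_1 \times \atype_2})$) with a recursive comparison inside. The cubic exponent $(L+1)^3$ is chosen precisely so that the elementary inequality $(L_1+L_2+1)^3 - (L_1+1)^3 - (L_2+1)^3 = 3L_1^2L_2 + 3L_1L_2^2 + 6L_1L_2 - 1$ dominates the additive overhead $2L+1$ coming from the double loop and the recursive calls; this absorption is routine once the right slack is in place. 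Since all of this machinery is already established above, the proof of the present lemma reduces to the observation that its two assertions are exactly Lemmas~\ref{lem:pinterpretsize} and~\ref{lem:sqsupcomplexity}.
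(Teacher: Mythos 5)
Your proposal is correct and takes essentially the same route as the paper: the paper's own proof of this lemma is exactly the two-step combination you describe, citing Lemma~\ref{lem:pinterpretsize} for the cardinality bound and Lemma~\ref{lem:sqsupcomplexity} for the comparison cost. Your recollections of the underlying inductions also match the appendix proofs, including the partial-function count $(\Card(\pinterpret{\atype_2})+1)^{\Card(\pinterpret{\atype_1})}$, the type-order drop $\typeorder{\atype_1} \leq K-1$ placing the inner bound at level $K-1$, the use of Lemma~\ref{lem:expmultiply}, and the cubic identity $(L_1+L_2+1)^3 - (L_1+1)^3 - (L_2+1)^3 = 3L_1^2L_2 + 3L_1L_2^2 + 6L_1L_2 - 1$ absorbing the overhead.
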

\startappendixcounters

\begin{proof}
The first part is Lemma~\ref{lem:pinterpretsize}; using this, the
second part follows by Lemma~\ref{lem:sqsupcomplexity}.
\qed
\end{proof}

\oldcounter{\ncomplexitylem}
\begin{lemma}
If $1 \leq \Card(\B) < N$, then for each $\atype$ of length $L$, with 
$\mathit{depth}(\atype) \leq K$:
$\Card(\interpret{\atype}) < \exp_2^K(N^L)$.
Testing $e \sqsupseteq u$ for $e,u \in \interpret{\atype}$ takes at
most $\exp_2^K(N^{(L+1)^3})$ comparisons.
\end{lemma}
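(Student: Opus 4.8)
The plan is to split the statement into its two parts and dispatch each by appealing to a helper lemma established earlier in the appendix, exactly mirroring the deterministic Lemma~\ref{lem:pinterpretcard}. The cardinality claim is a verbatim restatement of one such helper, and the comparison claim then follows by feeding that cardinality bound into a second helper; so the whole lemma is a short corollary rather than a fresh argument.

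For the cardinality bound, I would invoke Lemma~\ref{lem:interpretsize}, which already gives $\Card(\interpret{\atype}) < \exp_2^K(N^L)$ for every $\atype$ of length $L$ with $\mathit{depth}(\atype) \leq K$. That lemma proceeds by induction on the structure of $\atype$: for a sort, $\interpret{\asort} \subseteq \B$ yields $\Card(\interpret{\asort}) < N = \exp_2^0(N^1)$; for a product the two counts multiply and one applies $\exp_2^K(X)\cdot\exp_2^K(Y) \leq \exp_2^K(X\cdot Y)$ from Lemma~\ref{lem:expmultiply}; and for an arrow type $\atype_1 \arrtype \atype_2$ the key point — the only place where the non-deterministic setting diverges from the deterministic one — is that an element of $\interpret{\atype_1 \arrtype \atype_2}$ is an \emph{arbitrary} subset of $\interpret{\atype_1} \times \interpret{\atype_2}$ rather than a partial function, so the count is $2^{\Card(\interpret{\atype_1}) \cdot \Card(\interpret{\atype_2})}$. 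This contributes exactly one extra level of exponentiation, matching the increment of $\mathit{depth}$ by $1$.

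For the comparison bound, I would instantiate Lemma~\ref{lem:sqsupcomplexity} with $[\atype] := \interpret{\atype}$. Its hypothesis — that $\Card(\interpret{\btype}) < \exp_2^K(N^{L'})$ for every subtype $\btype$ of $\atype$ of length $L'$ — is discharged by the cardinality bound just obtained, applied to each subtype; all of these still satisfy $\mathit{depth}(\btype) \leq K$, since arrow depth never increases when passing to a subtype (it strictly decreases under an arrow and is non-increasing under a product). The side condition $N \geq 2$ is immediate from $1 \leq \Card(\B) < N$. Lemma~\ref{lem:sqsupcomplexity} then delivers the promised bound of $\exp_2^K(N^{(L+1)^3})$ comparisons between elements of $\B$.

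Thus this lemma needs no real work beyond citing the two helpers, and the genuine difficulty has been front-loaded into them rather than residing in the combination. In Lemma~\ref{lem:interpretsize} the arrow case must keep the tower of exponentials under control via Lemma~\ref{lem:expmultiply}; and in Lemma~\ref{lem:sqsupcomplexity} the cubic bookkeeping on exponents — verifying inequalities such as $(L_1 + L_2 + 1)^3 \geq (L_1+1)^3 + (L_2+1)^3$ with enough slack to absorb the extra polynomial factors introduced by the matching procedure on relations — is where care is actually required.
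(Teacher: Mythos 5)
Your proposal is correct and takes essentially the same route as the paper: the appendix proof is exactly the two-line corollary you describe, obtaining the cardinality bound from Lemma~\ref{lem:interpretsize} and the comparison bound by instantiating Lemma~\ref{lem:sqsupcomplexity} with $[\atype] := \interpret{\atype}$. Your supporting observations (arbitrary subsets of $\interpret{\atype_1} \times \interpret{\atype_2}$ in the arrow case, $N \geq 2$ from the hypothesis, and that $\mathit{depth}$ is non-increasing on subtypes) match how those helper lemmas discharge their hypotheses.
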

\startappendixcounters

\begin{proof}
The first part is Lemma~\ref{lem:interpretsize}; using this, the
second part follows by Lemma~\ref{lem:sqsupcomplexity}.
\qed
\end{proof}

\section{Algorithm correctness (Section~\ref{subsec:proof})}
\label{app:correctness}

We prove that for both Algorithm~\ref{alg:base} and
Algorithm~\ref{alg:general}: $\progresult$ if and only if $b$ is in
the set returned by the algorithm.  We do this in four steps:

\begin{description}
\item[Section~\ref{subsec:algprop}] we obtain some properties on
  (deterministic or non-deterministic) extensional values and $\eprog$;
\item[Section~\ref{subsec:soundness}] we prove that for both algorithms:
  if $b$ is returned by the algorithm, then $\progresult$;
\item[Section~\ref{subsec:completeness}] we prove that for
  Algorithm~\ref{alg:general}: if $\progresult$, then $b$ is returned by
  the algorithm;
\item[Section~\ref{subsec:detcomplete}] we adapt this proof to the
  deterministic setting.
\end{description}

In this, we break from the order in the main text: where the text
considers the deterministic case first (Algorithm~\ref{alg:base}), we
will show completeness first for the non-deterministic case
(Algorithm~\ref{alg:general}).  The reason for this choice is that our
algorithm has been  designed particularly for the non-deterministic
cases (both the general non-deterministic setting which results in a
classification of $\elementary$, and the result for arrow depth in
Section~\ref{sec:nopartialvar}) for which no algorithm yet existed in
the literature.  This results in a significantly simpler proof.

We do also handle the deterministic case, but this requires an extra
proof step to replace the sets $\interpret{\atype}$ of
non-deterministic extensional values by the sets $\pinterpret{\atype}$
of deterministic extensional values.

\medskip
Note that all deterministic extensional values are also
non-deterministic extensional values.  In this appendix,
\emph{extensional values} may refer to either deterministic or
non-deterministic extensional values.

\subsection{Properties of extensional values and $\eprog$}\label{subsec:algprop}

We begin by deriving some properties relevant to both the soundness and
completeness proofs.  First, the following lemma will be invaluable
when matching extensional values against the left-hand sides of clauses.

\begin{lemma}\label{lem:downhelp}
Fix a set $\B$ of data expressions, closed under taking
sub-expressions.
Let $\down{}{}$ be a relation, relating values $v$ of type $\atype$
to extensional values $e \in \interpret{\atype}$, notation $\down{v}{e}$,
such that:
\begin{itemize}
\item $\down{v}{e}$ for $v,e$ data if and only if $v = e$, and
\item $\down{(v,w)}{(e,u)}$ if and only if both $\down{v}{e}$ and
$\down{w}{u}$.
\end{itemize}
Let $v_1:\atype_1,\dots,v_k:\atype_k$ and $e_1 \in \interpret{\atype_1
},\dots,e_k \in \interpret{\atype_k}$ be such that $\down{s_i}{e_i}$
for each $i$, and let $\rho\colon\apps{\identifier{f}}{\ell_1}{\ell_k}
= s$ be a clause.
Then there is an environment $\gamma$ such that each $v_i = \ell_i
\gamma$ if and only if there is an ext-environment $\eta$ such that
each $e_i = \ell_i\eta$, and if both are satisfied then
$\down{\gamma(x)}{\eta(x)}$ for all $x \in
\Var(\apps{f}{\ell_1}{\ell_k})$.
\end{lemma}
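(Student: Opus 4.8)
The claim is a correspondence between ordinary pattern-matching of values and "ext-matching" of extensional values against a clause's left-hand side, together with the compatibility condition $\down{\gamma(x)}{\eta(x)}$ on the resulting environments. Let me plan the proof.

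The plan is to prove the biconditional by induction on the structure of the patterns $\ell_i$, reducing the matching of a tuple of patterns to the matching of the individual patterns and, recursively, to their sub-patterns. The key observation is that patterns have a very restricted shape: by the grammar, each $\ell$ is either a variable $x$ or a fully-applied constructor application $\apps{\identifier{c}}{\ell_1}{\ell_m}$, and by the well-formedness and data-order conditions every pattern is a data-order-$0$ expression, so all its variables range over data. This means the instantiations $\ell_i\gamma$ and $\ell_i\eta$ are computed by exactly the same structural clauses (the instantiation rules in Figure~\ref{fig:values} for $\gamma$, and the three-case definition of $\ell\eta$ for $\eta$), and moreover on constructor patterns $\eta$ acts purely by substitution of data, so $\ell\eta$ and $\ell\gamma$ have identical shape whenever the relevant variables are related by $\down{}{}$.

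Concretely, I would first prove the single-pattern statement: for a pattern $\ell$ of type $\atype$, a value $v:\atype$, and $e\in\pinterpret{\atype}$ (or $\interpret{\atype}$), there is a (necessarily unique) partial environment $\gamma_\ell$ with $\ell\gamma_\ell = v$ iff there is a partial ext-environment $\eta_\ell$ with $\ell\eta_\ell = e$, \emph{provided} $\down{v}{e}$; and in that case $\down{\gamma_\ell(x)}{\eta_\ell(x)}$ for every $x\in\Var(\ell)$. The base case is $\ell = x$: here $\gamma_\ell = [x:=v]$ and $\eta_\ell = [x:=e]$ always exist and $\down{v}{e}$ is exactly the assumption. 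The inductive case is $\ell = \apps{\identifier{c}}{\ell_1}{\ell_m}$: since $\ell$ has sort type and $\down{}{}$ at a sort forces $v=e$ as data, both $v$ and $e$ must be of the form $\apps{\identifier{c}'}{d_1}{d_{m'}}$; matching succeeds on either side iff $\identifier{c}'=\identifier{c}$ and $m'=m$, and then each $\down{d_j}{d_j}$ holds trivially (data-to-data), so I apply the induction hypothesis to each $\ell_j$ and combine the resulting environments (their domains are disjoint because each variable occurs at most once, by well-formedness). Uniqueness of the matching environment, already noted in the text after Figure~\ref{fig:values}, lets me glue the pieces without ambiguity. The full lemma then follows by applying this to each $\ell_i$ using the hypothesis $\down{v_i}{e_i}$ and taking the union of the per-pattern environments over $i=1,\dots,k$.

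The main obstacle I anticipate is being careful about the two directions and the exact role of the $\down{}{}$ hypothesis. The statement asserts that a matching $\gamma$ exists iff a matching $\eta$ exists, but this equivalence is only claimed under the standing assumption $\down{v_i}{e_i}$ — without it, a constructor value and an extensional value of the same sort need not agree. So I must thread the assumption $\down{v_i}{e_i}$ through every step, using the sort-case clause of $\down{}{}$ to force $v_i$ and $e_i$ to be literally the same data expression at the leaves, which is what makes the two matching notions coincide. A secondary bookkeeping point is that $\eta$ is required to be a genuine ext-environment, i.e.\ $\ell_j\eta\in\pinterpret{\atype_j}$; this is automatic here because all pattern variables have data order $0$, so $\eta$ maps them into $\Data\subseteq\pinterpret{\cdot}$, and $\ell_j\eta$ is again a data expression in $\B$ (closedness of $\B$ under sub-expressions, which is assumed, guarantees membership). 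Once these two points are handled, the induction itself is routine.
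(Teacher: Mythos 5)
Your overall strategy coincides with the paper's: prove a single-pattern statement by structural induction on $\ell$, handle the variable case by direct assignment, handle the constructor case by observing that $\down{}{}$ at a sort forces $v = e$, and then glue the per-pattern environments together using linearity of the left-hand side. However, there is a genuine gap: your case analysis admits only variables and constructor applications, justified by the claim that ``every pattern is a data-order-$0$ expression, so all its variables range over data''. Both parts of this are wrong in the paper's setting. Pattern variables may have higher type (e.g.\ $F : \bits \arrtype \bool$ in Example~\ref{ex:hocount} occurs as a pattern), and---more importantly---patterns may be \emph{pairs}: although the displayed grammar in Figure~\ref{fig:syntax} omits them, the counting module of Lemma~\ref{lem:module:pol} matches on tuples such as $(x_0,\dots,x_{b-1},y\cons ys)$, and the definition of ext-environments contains an explicit clause $\ell\eta = (\ell^{(1)}\eta,\ell^{(2)}\eta)$ for $\ell = (\ell^{(1)},\ell^{(2)})$. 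Indeed, the lemma's second hypothesis---$\down{(v,w)}{(e,u)}$ iff $\down{v}{e}$ and $\down{w}{u}$---exists precisely to drive the pair-pattern case of the induction, and your proof never invokes it, which is the tell-tale sign of the missing case. The paper's own induction has three cases: variable, pair, and constructor; in the pair case one writes $v = (v_1,v_2)$ and $e = (e_1,e_2)$, uses the pair clause of $\down{}{}$ to obtain $\down{v_j}{e_j}$ for $j \in \{1,2\}$, applies the induction hypothesis to $\ell^{(1)}$ and $\ell^{(2)}$, and takes the union of the resulting environments (disjointness of domains again by linearity).

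The repair is routine---your induction scheme accommodates the extra case without structural change---but two smaller inaccuracies should also be fixed. First, the claim that all pattern variables have order $0$ is only true for variables occurring \emph{under a constructor} (because constructor argument types have order $0$, as the paper's constructor case exploits when it simply sets $\eta(x) := \gamma(x)$); a top-level variable pattern may have any type. Second, your bookkeeping argument that $\eta$ is a genuine ext-environment ``because all pattern variables have data order $0$'' should instead note that for a variable pattern $\ell_i = x$ the condition $\ell_i\eta = e_i \in \interpret{\atype_i}$ holds trivially by the choice $\eta(x) := e_i$, with the order-$0$ argument reserved for the constructor (and, after repair, pair) cases. With these points amended, your proof is essentially the paper's.
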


Essentially, this lemma says that no matter how we associate values of
a higher type to extensional values, if data and pairing are handled
as expected, then matching is done in the natural way.

\begin{proof}
For $\ell$ a pattern of type $\atype$, $v : \atype$ a value and $e
\in \interpret{\atype}$ such that $\down{v}{e}$, the lemma follows
easily once we prove the following by induction on $\ell$:
\begin{itemize}
\item If $v = \ell\gamma$ for some $\gamma$, then there exists $\eta$
on domain $\Var(\ell)$ such that $e = \ell\eta$ and
$\down{\gamma(x)}{\eta(x)}$ for all $x$ in the domain:
\begin{itemize}
\item If $\ell$ is a variable, then $\gamma(\ell) = v$, so choose
  $\eta := [\ell:=e]$.
\item If $\ell$ is a pair $(\ell_1,\ell_2)$, then $v = (v_1,v_2)$
  and therefore $e = (e_1,e_2)$ with both $\down{v_1}{
  e_1}$ and $\down{v_2}{e_2}$; by the induction hypothesis, we
  find $\eta_1$ and $\eta_2$ on domains $\Var(\ell_1)$ and
  $\Var(\ell_2)$ respectively; we are done with $\eta := \eta_1
  \cup \eta_2$.
\item If $\ell = \apps{\identifier{c}}{\ell_1}{\ell_m}$ with
  $\identifier{c} \in \Constructors$, then $v$ and $e$ are both
  data expressions, so $v = e$; since the argument types of
  constructors have order $0$, all $x \in \Var(\ell)$
  have type order $0$, so we can choose $\eta(x) := \gamma(x
  )$ for such $x$.
\end{itemize}
\item If $e = \ell\eta$ for some $\eta$, then there exists $\gamma$
on domain $\Var(\ell)$ such that $s = \ell\gamma$ and
$\down{\gamma(x)}{\eta(x)}$ for $x$ in $\Var(\ell)$;
\pagebreak
this reasoning is parallel to the case above.
\qed
\end{itemize}
\end{proof}

Next we move to transivity of $\sqsupseteq$.  Note that $\sqsupseteq$
for two extensional values $A_\atype$ and $B_\atype$ is \emph{not} set
containment $A \supseteq B$, but slightly different.

\begin{lemma}\label{lem:sqsuptrans}
$\sqsupseteq$ is transitive.
\end{lemma}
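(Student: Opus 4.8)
The plan is to prove the statement by induction on the structure of the common type $\atype$ of the three extensional values involved, following the three clauses that define $\sqsupseteq$. Since $\sqsupseteq$ is defined identically on $\pinterpret{\atype}$ and on $\interpret{\atype}$---the non-deterministic reading merely drops the partial-function restriction, which the definition of $\sqsupseteq$ never invokes---a single induction covers both cases at once. Concretely, I would fix $e,u,w \in \pinterpret{\atype}$ (resp. $\interpret{\atype}$) with $e \sqsupseteq u$ and $u \sqsupseteq w$, and argue $e \sqsupseteq w$ by cases on $\atype$.

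The base cases are routine. For $\atype = \asort$ with $\asort \in \Sorts$, the relation $\sqsupseteq$ is just equality of data expressions in $\B$, which is transitive. For $\atype = \atype_1 \times \atype_2$, the two hypotheses unfold componentwise to $e_i \sqsupseteq u_i$ and $u_i \sqsupseteq w_i$ for $i \in \{1,2\}$; applying the induction hypothesis on each strictly smaller component type $\atype_i$ yields $e_i \sqsupseteq w_i$, and hence $e \sqsupseteq w$ by definition of $\sqsupseteq$ on pair types.

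The functional case is where the real work lies, and it is the step I expect to be the main obstacle, because there $\sqsupseteq$ is \emph{not} set containment but is defined through an existential witness. Writing the common type as $\atype = \atype_1 \arrtype \atype_2$ and treating $e,u,w$ as (possibly partial-function) sets of pairs with $e \sqsupseteq u$ and $u \sqsupseteq w$, I would take an arbitrary $(d,o) \in w$ and chase witnesses: from $u \sqsupseteq w$ obtain some $o' \sqsupseteq o$ with $(d,o') \in u$, and then from $e \sqsupseteq u$ obtain some $o'' \sqsupseteq o'$ with $(d,o'') \in e$. The codomain values $o,o',o''$ all lie in $\pinterpret{\atype_2}$ (resp. $\interpret{\atype_2}$), a strictly smaller type, so the induction hypothesis applies to $\atype_2$ and combines $o'' \sqsupseteq o'$ with $o' \sqsupseteq o$ into $o'' \sqsupseteq o$. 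Thus for every $(d,o) \in w$ there is $o'' \sqsupseteq o$ with $(d,o'') \in e$, which is precisely the defining condition for $e \sqsupseteq w$.

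The only care needed in the functional case is to keep the quantifiers in the correct order---picking the pair from $w$ first, then threading the witnesses outward through $u$ and $e$---and to observe that the induction supplies transitivity on the codomain type $\atype_2$, not on the domain type $\atype_1$, which never enters the argument. With these observations the three cases close the induction, establishing transitivity of $\sqsupseteq$ uniformly for deterministic and non-deterministic extensional values.
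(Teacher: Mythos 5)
Your proof is correct and takes essentially the same approach as the paper: an induction on the structure of $\atype$, with equality at sorts, componentwise application of the induction hypothesis at pair types, and, in the functional case, chasing the existential witnesses through the middle value and applying the induction hypothesis on the codomain type $\atype_2$ to combine them. Your observation that a single argument covers both the deterministic and non-deterministic extensional values also matches the paper, which proves the lemma for $\interpret{\atype}$ and relies on $\pinterpret{\atype} \subseteq \interpret{\atype}$.
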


\begin{proof}
Let $e \sqsupseteq u \sqsupseteq o$ with $e,u,o \in
\interpret{\atype}$; we prove that $e \sqsupseteq o$ by induction
on the form of $\atype$.  The induction is entirely straightforward:
\begin{itemize}
\item if $\atype \in \Sorts$, then $e = u = o$;
\item if $\atype = \atype_1 \times \atype_2$, then $e = (e_1,e_2),\ 
  v = (u_1,u_2)$ and $o = (o_1,o_2)$ with both $e_1 \sqsupseteq u_1
  \sqsupseteq o_1$ and $e_2 \sqsupseteq u_2 \sqsupseteq o_2$; by the
  induction hypothesis indeed $e_1 \sqsupseteq o_1$ and $e_2
  \sqsupseteq o_2$;
\item if $\atype = \atype_1 \arrtype \atype_2$, then we can write
  $e = A_\atype$, $u = B_\atype$ and $o = C_\atype$ and:
  \begin{itemize}
  \item for all $(o_1,o_2) \in C$ there exists $u_2 \sqsupseteq o_2$
    such that $(o_1,u_2) \in B$;
  \item for all $(o_1,u_2) \in B$ there exists $e_2 \sqsupseteq u_2$
    such that $(o_1,e_2) \in A$.
  \end{itemize}
  As the induction hypothesis gives $e_2 \sqsupseteq o_2$, also
  $e \sqsupseteq o$.
  \qed
\end{itemize}
\end{proof}

Finally, we show how $\prog$ and $\eprog$ in Algorithm~\ref{alg:base}
relate:

\begin{lemma}\label{lem:progeprog}
$\progresult$ if and only if $\eprog \vdashcall
\apps{\symb{start}}{d_1}{d_M}
\arrr b$.
\end{lemma}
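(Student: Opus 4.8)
The plan is to establish the claim by a short chain of equivalences, pushing all the real work onto Lemma~\ref{lem:proper} and onto the observation that $\symb{start}$ is a fresh symbol. Write $\prog_0$ for the program obtained from $\prog$ by applying \emph{only} the transformations of Lemma~\ref{lem:proper}, so that $\eprog$ is exactly $\prog_0$ together with the single extra clause $\apps{\symb{start}}{x_1}{x_M} = \apps{\identifier{f}_1}{x_1}{x_M}$, where $\identifier{f}_1$ is the main function (which survives the transformations, since its type has order $0$, and is never renamed). Lemma~\ref{lem:proper} immediately gives $\progresult$ iff $\progeval{\prog_0}{d_1,\dots,d_M} \mapsto b$, so it suffices to prove $\progeval{\prog_0}{d_1,\dots,d_M} \mapsto b$ iff $\eprog \vdashcall \apps{\symb{start}}{d_1}{d_M} \arrr b$.

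Next I would unfold both sides to a common intermediate judgement. Let $\gamma_0 = [x_1 := d_1,\dots,x_M := d_M]$. By the program-execution rule, $\progeval{\prog_0}{d_1,\dots,d_M} \mapsto b$ holds iff $\prog_0,\gamma_0 \vdash \apps{\identifier{f}_1}{x_1}{x_M} \arrr b$. On the other side, since $\symb{start}$ has arity $M$ and its unique clause has patterns $x_1,\dots,x_M$ (plain variables), the expression $\apps{\symb{start}}{d_1}{d_M}$ instantiates that clause with exactly the environment $\gamma_0$; hence by rule [Call], $\eprog \vdashcall \apps{\symb{start}}{d_1}{d_M} \arrr b$ holds iff $\eprog,\gamma_0 \vdash \apps{\identifier{f}_1}{x_1}{x_M} \arrr b$. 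Thus the goal reduces to showing that the judgement $\gamma_0 \vdash \apps{\identifier{f}_1}{x_1}{x_M} \arrr b$ is derivable in $\prog_0$ exactly when it is derivable in $\eprog$.

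The remaining --- and only nontrivial --- step is this freshness/unused-clause argument, which I would prove as a small auxiliary lemma by induction on derivation trees: for every expression $s$ not mentioning $\symb{start}$ and every environment $\gamma$ whose values do not mention $\symb{start}$, the statements $\prog_0,\gamma \vdash s \arrr w$ and $\eprog,\gamma \vdash s \arrr w$ are interderivable (and likewise for the $\vdashif$ and $\vdashcall$ judgements). The invariant carried through the induction is that $\symb{start}$ occurs in no expression or value appearing anywhere in the derivation: this is preserved because the right-hand sides of $\prog_0$-clauses, all data constructors, and the input values $d_i$ are $\symb{start}$-free, so the only clause on which $\prog_0$ and $\eprog$ differ --- the one with root $\symb{start}$ --- is never selected by [Call] in either direction. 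Applying this to $s = \apps{\identifier{f}_1}{x_1}{x_M}$ and $\gamma = \gamma_0$ (both manifestly $\symb{start}$-free) closes the chain of equivalences. I expect this induction to be the main obstacle only in its bookkeeping: one must check that each inference rule preserves the $\symb{start}$-free invariant, but since no rule can introduce $\symb{start}$ into a value, every case is routine.
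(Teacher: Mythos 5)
Your proposal is correct and follows essentially the same route as the paper's proof: apply Lemma~\ref{lem:proper}, unfold the program-execution rule and a single use of [Call] for the unique $\symb{start}$ clause, and use freshness of $\symb{start}$ to transfer derivations between the programs. The only difference is one of detail---the paper compresses your auxiliary induction into the one-line observation that $\symb{start}$ occurs in no other clause, so nothing in your argument diverges from or adds to the paper's reasoning in substance.
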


\begin{proof}
By Lemma~\ref{lem:proper} and the observation that the fresh symbol
$\symb{start}$ does not occur in any other clauses, $\progresult$ if
and only if $\progeval{\eprog}{d_1,\dots,d_M} \mapsto b$, which by
definition is the case if and only if $\eprog,[x_1:=d_1,\dots,x_M:=
d_M] \vdash \apps{\identifier{f}_1}{x_1}{x_M} \arrr b$.  As there is
only one clause for $\symb{start}$ in $\eprog$, this is the case
if and only if $\eprog \vdashcall \apps{\identifier{start}}{d_1}{d_M}
\arrr b$.
\qed
\end{proof}

\subsection{Soundness of Algorithms~\ref{alg:base} and~\ref{alg:general}}
\label{subsec:soundness}

We turn to soundness.  We will see that for every $b$ in the
output set of Algorithms~\ref{alg:base} and~\ref{alg:general} indeed
$\progresult$.  Since each $\pinterpret{\atype} \subseteq
\interpret{\atype}$---and therefore the statements considered in
Algorithm~\ref{alg:base} are a subset of those considered in
Algorithm~\ref{alg:general}---it suffices to prove this for the
non-deterministic algorithm, as the deterministic case follows
directly.

To achieve this end, we first give a definition to relate values and
extensional values in line with Lemma~\ref{lem:downhelp}, and obtain
two further helper results:

\begin{definition}
For a value $v : \atype$ and an extensional value $e \in
\interpret{\atype}$, we recursively define $\down{v}{e}$ if one of the
following holds:
\begin{itemize}
\item $\atype \in \Sorts$ and $v = e$;
\item $\atype = \atype_1 \times \atype_2$ and $v = (v_1,v_2)$ and
  $e = (e_1,e_2)$ with $\down{v_1}{e_1}$ and $\down{v_2}{e_2}$;
\item $\atype = \atype_1 \arrtype \atype_2$ and $e = A_\atype$ with
  $A \subseteq \varphi(v) := \{ (u_1,u_2) \mid u_1 \in
  \interpret{\atype_1} \wedge u_2 \in \interpret{\atype_2}
  \wedge$ for all values $w_1 : \atype_1$ with $\down{w}{u_1}$ there
  is some value $w_2 : \atype_2$ with $\down{w_2}{u_2}$ such that
  $\eprog \vdashcall \app{v}{w_1} \arrr w_2 \}$.
\end{itemize}
\end{definition}

It is easy to see that $\Downarrow$ satisfies the requirements
of Lemma~\ref{lem:downhelp}.

The first helper lemma essentially states the following: if a value $v$
is associated to an extensional value $e$ (in the sense that
$\down{v}{e}$), and $v_1,\dots,v_n$ are associated to extensional
values $u_1,\dots,u_n$, then the set $e(u_1,\dots,u_n)$ contains only
(extensional values associated to) the possible results of evaluating
$\apps{v}{v_1}{v_n}$.  Thus, if $\down{v}{e}$ then $e$ represents $v$
in the expected sense: by defining the same ``function''.

(To make it easier to use this lemma in the proof of
Lemma~\ref{lem:generalsound}, however, it is formulated in a slightly
more general way than this sketch: the lemma considers an expression
$s$ which evaluates to $v$, and similarly expressions $t_1,\dots,t_n$
which evaluate to each $v_1,\dots,v_n$.  We show that $\apps{s}{t_1}{
t_n}$ evaluates to the elements of $e(u_1,\dots,u_n)$.)

\begin{lemma}\label{lem:slowcall}
Assume given an environment $\gamma$.
Let $s : \atype_1 \arrtype \dots \arrtype \atype_n \arrtype \btype$,
and $e \in \interpret{\atype_1 \arrtype \dots \arrtype \atype_n
\arrtype \btype}$ be such that $\down{v}{e}$ for some value $v$ with
$\eprog,\gamma \vdash s \arrr v$.
For $1 \leq i \leq n$, let $t_i,v_i : \atype_i$ and $u_i \in
\interpret{\atype_i}$ be such that $\eprog,\gamma \vdash t_i \arrr
\down{v_i}{u_i}$.
Then for any $o \in e(u_1,\dots,u_n)$ there exists $w : \btype$ such
that $\down{w}{o}$ and $\eprog,\gamma \vdash \apps{s}{t_1}{t_n} \arrr
w$.
\end{lemma}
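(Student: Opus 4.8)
The plan is to prove Lemma~\ref{lem:slowcall} by induction on $n$, the number of extra arguments. This structure matches the recursive definition of $e(u_1,\dots,u_n)$ in Definition~\ref{def:partialapply}, where the base case $n=0$ gives $e() = \{e\}$ and the inductive case peels off the final argument.

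\begin{proof}[Plan]
I would proceed by induction on $n$.

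\textbf{Base case $n = 0$.} Here $e(u_1,\dots,u_n) = e() = \{e\}$, so the only $o$ to consider is $o = e$ itself. Take $w := v$: then $\down{w}{o}$ holds by assumption ($\down{v}{e}$), and $\eprog,\gamma \vdash s \arrr v$ is exactly the hypothesis on $s$. Note that when $n=0$, the expression $\apps{s}{t_1}{t_n}$ is just $s$, so there is nothing further to derive.

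\textbf{Inductive step.} Suppose $n \geq 1$ and the claim holds for $n-1$. Fix $o \in e(u_1,\dots,u_n)$. By the second clause of Definition~\ref{def:partialapply}, there is some $A_{\atype_n \arrtype \btype} \in e(u_1,\dots,u_{n-1})$ with $(u_n,o) \in A$. Apply the induction hypothesis to the first $n-1$ arguments: since $\down{v}{e}$ and each $\eprog,\gamma \vdash t_i \arrr \down{v_i}{u_i}$ for $1 \leq i \leq n-1$, I obtain a value $w'$ of type $\atype_n \arrtype \btype$ with $\down{w'}{A_{\atype_n \arrtype \btype}}$ and $\eprog,\gamma \vdash \apps{s}{t_1}{t_{n-1}} \arrr w'$. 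Now unfold $\down{w'}{A_\atype}$ using the functional clause of $\Downarrow$: $A \subseteq \varphi(w')$, and since $(u_n,o) \in A \subseteq \varphi(w')$, the definition of $\varphi$ says that for all values $z$ with $\down{z}{u_n}$ there is a value $w$ with $\down{w}{o}$ and $\eprog \vdashcall \app{w'}{z} \arrr w$. Instantiating $z := v_n$ (legitimate because $\down{v_n}{u_n}$), I get a value $w : \btype$ with $\down{w}{o}$ and $\eprog \vdashcall \app{w'}{v_n} \arrr w$. It remains to assemble the derivation for $\eprog,\gamma \vdash \apps{s}{t_1}{t_n} \arrr w$, which is $\eprog,\gamma \vdash (\apps{s}{t_1}{t_{n-1}})\ t_n \arrr w$: this follows by the [Appl] rule, using the three subderivations $\eprog,\gamma \vdash \apps{s}{t_1}{t_{n-1}} \arrr w'$ (which produces a partially-applied head value), $\eprog,\gamma \vdash t_n \arrr v_n$, and $\eprog \vdashcall \app{w'}{v_n} \arrr w$.
\end{proof}

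The main obstacle I anticipate is a type-theoretic bookkeeping subtlety in the [Appl] application: the rule requires the head $\apps{s}{t_1}{t_{n-1}}$ to evaluate to a value of the form $\apps{\identifier{g}}{v_1'}{v_j'}$ (a partially-applied defined symbol) in order to feed $v_n$ as its next argument, whereas $\varphi(w')$ is phrased purely in terms of the evaluation relation $\eprog \vdashcall \app{w'}{v_n} \arrr w$. I would need to check that $w'$, being a genuine value of arrow type, is indeed such a partial application (this follows from the grammar of $\texttt{Value}$ in Figure~\ref{fig:values}, where arrow-typed values can only be under-applied defined symbols), so that the [Appl]/[Closure]/[Call] machinery lines up with the more extensional statement $\eprog \vdashcall \app{w'}{v_n} \arrr w$. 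The rest of the argument is routine unfolding of the definitions of $\Downarrow$, $e(\cdots)$, and $\varphi$; the reliance on Lemma~\ref{lem:downhelp} does not enter here directly, but the companion soundness lemma (Lemma~\ref{lem:generalsound}) will use this result together with that matching lemma.
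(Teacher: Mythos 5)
Your proposal is correct and follows essentially the same route as the paper's proof: induction on $n$ with $w := v$ in the base case, extraction of $A_{\atype_n \arrtype \btype} \in e(u_1,\dots,u_{n-1})$ with $(u_n,o) \in A$, the induction hypothesis for the first $n-1$ arguments, unfolding of $\Downarrow$ at functional type to obtain $\eprog \vdashcall \app{w'}{v_n} \arrr w$, and a final application of [Appl]. The ``obstacle'' you flag---that $w'$ must be a partially applied defined symbol for [Appl] to fire---is precisely the point the paper also addresses (``As $w'$ is a value of higher type, it must have a form $\apps{\identifier{f}}{w_1}{w_i}$''), and you resolve it the same way via the grammar of values.
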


\begin{proof}
By induction on $n \geq 0$.

If $n = 0$, then $o = e$ and $\eprog,\gamma \vdashcall s \arrr v$ is
given; we choose $w := v$.

If $n \geq 1$, then there is some $o' := A_{\atype_n \arrtype \btype}
\in e(u_1,\dots,u_{n-1})$ such that $(u_n,o) \in A$.  By the
induction hypothesis, there exists a value $w'$ such that $\eprog,
\gamma \vdash \apps{s}{t_1}{t_{n-1}} \arrr \down{w'}{o'}$.  Since also
$\down{v_n}{u_n}$, the definition of $\Downarrow$ provides a value
$w$ such that $\eprog \vdashcall \app{w'}{v_n} \arrr \down{w}{o}$.
As $w'$ is a value of higher type, it must have a form
$\apps{\identifier{f}}{w_1}{w_i}$, so we can apply [Appl] to obtain
$\eprog,\gamma \vdash \app{(\apps{s}{t_1}{t_{n-1}})}{t_n} \arrr w$.
\qed
\end{proof}

The second helper lemma states the following: if a value $v$ is
associated to an extensional value $e$, then it is also associated to
all ``smaller'' extensional values: if $e \sqsupseteq u$, then $u$
simply has less information about the value described.  The property
is closely related to transitivity of $\sqsupseteq$:

\begin{lemma}\label{lem:downsqsup}
For any value $v : \atype$ and extensional values $e,u \in
\interpret{\atype}$: if $\down{v}{e} \sqsupseteq u$
then $\down{v}{u}$.
\end{lemma}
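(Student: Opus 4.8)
The plan is to prove the statement by induction on the structure of the type $\atype$, following in lockstep the inductive definitions of both $\down{}{}$ and $\sqsupseteq$. In every case I assume $\down{v}{e}$ together with $e \sqsupseteq u$, and derive $\down{v}{u}$; the induction hypothesis will be available at all strictly smaller types, which is exactly what the functional case needs.

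For the base case $\atype \in \Sorts$, both relations collapse to equalities: $\down{v}{e}$ forces $v = e$ and $e \sqsupseteq u$ forces $e = u$, so $v = u$ and hence $\down{v}{u}$. For the product case $\atype = \atype_1 \times \atype_2$, I would unpack $\down{v}{e}$ as $v = (v_1,v_2)$, $e = (e_1,e_2)$ with $\down{v_i}{e_i}$, and $e \sqsupseteq u$ as $u = (u_1,u_2)$ with $e_i \sqsupseteq u_i$; applying the induction hypothesis componentwise yields $\down{v_i}{u_i}$, and therefore $\down{v}{u}$.

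The arrow case $\atype = \atype_1 \arrtype \atype_2$ carries the real content and is where I expect the only genuine difficulty. Writing $e = A_\atype$ and $u = B_\atype$, the hypothesis $\down{v}{A}$ unfolds to $A \subseteq \varphi(v)$, and the goal $\down{v}{B}$ is precisely $B \subseteq \varphi(v)$. I would fix an arbitrary pair $(p,q) \in B$ and show $(p,q) \in \varphi(v)$. By the definition of $\sqsupseteq$ on functional types, $A_\atype \sqsupseteq B_\atype$ provides some $q' \sqsupseteq q$ with $(p,q') \in A \subseteq \varphi(v)$. Unfolding $(p,q') \in \varphi(v)$, for every value $w_1 : \atype_1$ with $\down{w_1}{p}$ there is a value $w_2 : \atype_2$ with $\down{w_2}{q'}$ and $\eprog \vdashcall \app{v}{w_1} \arrr w_2$. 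The crucial step is now to apply the induction hypothesis at the smaller type $\atype_2$ to $\down{w_2}{q'}$ together with $q' \sqsupseteq q$, obtaining $\down{w_2}{q}$. Since no new evaluation is introduced, the very same $w_2$ and the same derivation $\eprog \vdashcall \app{v}{w_1} \arrr w_2$ witness that $(p,q)$ satisfies the defining condition of $\varphi(v)$. As $(p,q) \in B$ was arbitrary, $B \subseteq \varphi(v)$, i.e. $\down{v}{u}$.

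The point to watch is simply that the definition of $\varphi(v)$ quantifies universally over witnesses $w_1$ with $\down{w_1}{p}$, so the $w_2$ produced for $q'$ must be transported unchanged down to $q$ using only the downward closure of $\down{}{}$ under $\sqsupseteq$ at type $\atype_2$; there is no need to re-derive any reduction. This argument is structurally the same as the transitivity proof for $\sqsupseteq$ (Lemma~\ref{lem:sqsuptrans}), here interleaved with the $\down{}{}$ relation, and I expect it to be short once the definitions are unfolded carefully.
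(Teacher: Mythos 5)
Your proof is correct and takes essentially the same route as the paper's own: induction on the form of $\atype$, with the functional case resolved by using the definition of $\sqsupseteq$ to obtain $q' \sqsupseteq q$ with $(p,q') \in A$, and then applying the induction hypothesis at the smaller type $\atype_2$ to transport the same witness $w_2$ (and the same derivation $\eprog \vdashcall \app{v}{w_1} \arrr w_2$) from $q'$ down to $q$. Your closing observation that no reduction needs to be re-derived is exactly the point the paper's proof relies on as well.
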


\begin{proof}
By induction on the form of $\atype$:
\begin{itemize}
\item if $v$ is data, then $v = e = u$;
\item if $v = (v_1,v_2)$, then $\down{v}{e} \sqsupseteq u$
  implies $e = (e_1,e_2)$ and $u = (u_1,u_2)$ with $\down{v_i}{e_i}
  \sqsupseteq u_i$ for $i \in \{1,2\}$, so $\down{v_i}{u_i}$ by the
  induction hypothesis;
\item if $v$ is a functional value, then $e = A_\atype$ and $u =
  B_\atype$, and for all $(o_1,o_2) \in B$ there exists $o_2'
  \sqsupseteq o_2$ such that $(o_1,o_2') \in A$; thus, for all
  values $\down{w_1}{o_1}$, the property that $\down{v}{e}$ gives
  some $w_2$ such that $\eprog \vdashcall \app{v}{w_1} \arrr
  \down{w_2}{o_2'} \sqsupseteq o_2$, which by the induction hypothesis
  implies $\down{w_2}{o_2}$ as well.  Thus, indeed $\down{v}{u}$.
  \qed
\end{itemize}
\end{proof}

With these preparations, we are ready to tackle the soundness proof:

\oldcounter{\soundnesslem}
\begin{lemma}
If Algorithm~\ref{alg:base} or~\ref{alg:general} returns a set $A \cup
\{b\}$, then $\progresult$.
\end{lemma}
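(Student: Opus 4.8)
The plan is to prove soundness of the algorithm by strengthening the statement into a pair of mutually-supporting claims — one about the statements $\vdash \apps{\identifier{f}}{e_1}{e_n} \leadsto o$, the other about statements $\eta \vdash t \leadsto o$ — and to run a single induction on the \emph{time} at which a statement is confirmed (i.e.\ the iteration of Algorithm~\ref{alg:base} in which it is first marked confirmed). Since every deterministic extensional value is also a non-deterministic one, and the statements of Algorithm~\ref{alg:base} form a subset of those of Algorithm~\ref{alg:general}, it suffices to prove the claim for Algorithm~\ref{alg:general}. The two claims to prove together are: (1) if $\vdash \apps{\identifier{f}}{e_1}{e_n} \leadsto o$ is confirmed and $v_1:\atype_1,\dots,v_n:\atype_n$ are values with each $\down{v_i}{e_i}$, then $\eprog \vdashcall \apps{\identifier{f}}{v_1}{v_n} \arrr w$ for some $w$ with $\down{w}{o}$; and (2) if $\eta \vdash t \leadsto o$ is confirmed, $\eta$ is an ext-environment for a clause $\rho$ and $\gamma$ is an environment with $\down{\gamma(x)}{\eta(x)}$ for all relevant $x$, then $\eprog,\gamma \vdash t \arrr w$ for some $w$ with $\down{w}{o}$.

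First I would dispatch the base cases from the preparation step: statements confirmed in step~\ref{alg:prepare:base:var} (where $t\in\V$ and $\eta(t)\sqsupseteq o$) follow by [Instance] together with Lemma~\ref{lem:downsqsup}, since $\down{\gamma(t)}{\eta(t)}\sqsupseteq o$ gives $\down{\gamma(t)}{o}$; statements confirmed in step~\ref{alg:prepare:base:constructor} (where $t=\apps{\identifier{c}}{t_1}{t_m}$ and $t\eta=o$) follow by [Constructor], using that $\down{}{}$ acts as equality on data. Then, for the inductive step, I would case-split on which iteration rule (steps~\ref{alg:iterate:value}--\ref{alg:iterate:rhs:func}) caused the statement to be marked confirmed, and in each case build the corresponding derivation-tree fragment from the derivations supplied by the induction hypothesis on the (strictly earlier-confirmed) premises. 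Step~\ref{alg:iterate:value} (partial application, $n<\arity(\identifier{f})$) uses [Closure] and the definition of $\down{}{}$ at arrow type; step~\ref{alg:iterate:lhs} (a full call) invokes Lemma~\ref{lem:downhelp} to turn the matching ext-environment $\eta$ into a matching $\gamma$, then uses claim (2) on the right-hand side $s$ followed by [Call]; steps~\ref{alg:iterate:ifte} and~\ref{alg:iterate:choice} and~\ref{alg:iterate:pair} are the evident uses of [Conditional]/[If-True]/[If-False], [Choice], and [Pair]; steps~\ref{alg:iterate:rhs:var} and~\ref{alg:iterate:rhs:func} (applied variable or defined symbol on the right-hand side) are exactly where Lemma~\ref{lem:slowcall} does the work, converting membership $o'\in\eta(x)(e_1,\dots,e_n)$ (resp.\ $o'\in u(e_{k+1},\dots,e_n)$) into an actual evaluation, after which Lemma~\ref{lem:downsqsup} handles the final $o'\sqsupseteq o$.

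The main obstacle — and the reason the two helper lemmas were proved first — is the higher-order application cases, steps~\ref{alg:iterate:rhs:var} and~\ref{alg:iterate:rhs:func}(b), where a head is applied to \emph{more} arguments than its arity. Here the extensional value $\eta(x)$ or the confirmed $u$ is only a partial-function (or relation) approximation of the true value, and one must argue that the semantic application really produces a value $w$ with $\down{w}{o}$. Lemma~\ref{lem:slowcall} is precisely engineered for this: given $\down{v}{e}$ and each $\down{v_i}{u_i}$, every $o\in e(u_1,\dots,u_n)$ is realised by an actual evaluation of $\apps{s}{t_1}{t_n}$. The subtlety is to feed this lemma correctly: the $e_i$ in the algorithm are supplied by the induction hypothesis applied to the confirmed premises $\eta\vdash s_i\leadsto e_i$, which give values $v_i$ with $\eprog,\gamma\vdash s_i\arrr v_i$ and $\down{v_i}{e_i}$, and the head's extensional value comes from either $\eta(x)$ (with $\down{\gamma(x)}{\eta(x)}$ by hypothesis) or from claim~(1) applied to the confirmed call statement.

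Finally, from claim~(1) the lemma itself falls out: the returned set is $\{b \mid \vdash\apps{\symb{start}}{d_1}{d_M}\leadsto b \text{ confirmed}\}$, so if $A\cup\{b\}$ is returned then $\vdash\apps{\symb{start}}{d_1}{d_M}\leadsto b$ is confirmed; applying claim~(1) (with $n=M$ and each $v_i=d_i$, noting $\down{d_i}{d_i}$ for data) yields $\eprog\vdashcall\apps{\symb{start}}{d_1}{d_M}\arrr w$ with $\down{w}{b}$, and since $b$ is data this forces $w=b$, whence $\eprog\vdashcall\apps{\symb{start}}{d_1}{d_M}\arrr b$, which by Lemma~\ref{lem:progeprog} is equivalent to $\progresult$.
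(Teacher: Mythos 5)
Your proposal matches the paper's own proof essentially step for step: the same reduction to Algorithm~\ref{alg:general}, the same pair of mutually-inductive claims relating confirmed statements $\vdash \apps{\identifier{f}}{e_1}{e_n} \leadsto o$ and $\eta \vdash t \leadsto o$ to actual evaluations via the relation $\down{v}{e}$, the same induction on confirmation time with the same case analysis over the preparation and iteration steps, and the same deployment of the matching lemma (Lemma~\ref{lem:downhelp}), the application lemma (Lemma~\ref{lem:slowcall}) for the over-applied cases, and the $\sqsupseteq$-closure lemma (Lemma~\ref{lem:downsqsup}), concluding via Lemma~\ref{lem:progeprog}. The argument is correct and complete as a proof sketch; there is nothing to add.
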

\startappendixcounters

\begin{proof}
We prove the lemma by obtaining the following results:
\begin{enumerate}
\item\label{lem:soundness:lhs}
Let:
\begin{itemize}
\item $\identifier{f} : \atype_1 \arrtype \dots \arrtype \atype_m
  \arrtype \asortorpair \in \F$ be a defined symbol;
\item $v_1 : \atype_1,\dots,v_n : \atype_n$ be values, for
  $1 \leq n \leq \arity(\identifier{f})$;
\item $e_1 \in \interpret{\atype_1},\dots,e_n \in
  \interpret{\atype_n}$ be such that each $\down{v_i}{e_i}$;
\item $o \in \interpret{\atype_{n+1} \arrtype \dots \arrtype
  \atype_m \arrtype \asortorpair}$.
\end{itemize}
If the statement $\vdash \apps{\identifier{f}}{e_1}{e_n} \leadsto o$
is eventually confirmed, then we can derive $\eprog \vdashcall
\apps{\identifier{f}}{v_1}{v_n} \arrr w$ for some $w$ with
$\down{w}{o}$.
\item\label{lem:soundness:rhs}
Let:
\begin{itemize}
\item $\rho\colon\apps{\identifier{f}}{\ell_1}{\ell_k} = s$ be a clause
  in $\eprog$;
\item $t : \btype$ be a sub-expression of $s$;
\item $\eta$ be an ext-environment for $\rho$;
\item $\gamma$ be an environment such that $\down{\gamma(x)}{
  \eta(x)}$ for all $x \in \Var(\apps{\identifier{f}}{\ell_1}{\ell_k})$;
\item $o \in \interpret{\btype}$.
\end{itemize}
If the statement $\eta \vdash t \leadsto o$ is eventually confirmed,
then we can derive $\eprog,\gamma \vdash t \arrr w$ for some $w$ with
$\down{w}{o}$.
\end{enumerate}
This proves the lemma: if the algorithm returns $b$, then
$\apps{\identifier{start}}{d_1}{d_M} \leadsto b$ is confirmed, so
$\eprog \vdashcall \apps{\symb{start}}{d_1}{d_M} \mapsto
b$.  By Lemma~\ref{lem:progeprog}, $\progresult$.

We prove both statements together by induction on the algorithm.
\begin{enumerate}
\item $\apps{\identifier{f}}{e_1}{e_n} \leadsto o$ can only be
confirmed in two ways:
\begin{description}
\item[(\ref{alg:iterate:value})] $n < \arity(\identifier{f})$,
  $o = O_{\atype_{n+1} \arrtype \dots \arrtype \atype_m \arrtype
  \asortorpair}$ and for all $(e_{n+1},u) \in O$ there is some
  $u' \sqsupseteq u$ such that also
  $\apps{\identifier{f}}{e_1}{e_{n+1}} \leadsto u'$ is confirmed.
  By the induction hypothesis, this implies that for all such
  $e_{n+1}$ and $u'$, and for all $v_{n+1} : \atype_{n+1}$ with
  $\down{v_{n+1}}{e_{n+1}}$, there exists $w'$ with $\down{w'}{u'}$
  such that $\eprog \vdashcall \apps{\identifier{f}}{v_1}{v_{n+1}}
  \arrr w'$.  By Lemma~\ref{lem:downsqsup}, also $\down{w'}{u}$.
  Thus, $O \subseteq \varphi(\apps{\identifier{f}}{v_1}{v_n})$,
  and $\down{(\apps{\identifier{f}}{v_1}{v_n})}{o}$.
  We are done choosing $w := \apps{\identifier{f}}{v_1}{v_n}$,
  since $\eprog \vdashcall \apps{\identifier{f}}{v_1}{v_n}
  \arrr \apps{\identifier{f}}{v_1}{v_n}$ by [Closure].
\item[(\ref{alg:iterate:lhs})] $n = \arity(\identifier{f})$ and,
  for $\rho\colon \apps{\identifier{f}}{\ell_1}{\ell_k} = s$ the
  first matching clause in $\eprog$ and $\eta$ the matching
  ext-environment,
  $\eta \vdash s \leadsto 
  o$ is confirmed.
  Following Lemma~\ref{lem:downhelp}, there exists an environment
  $\gamma$ on domain $\Var(\apps{\identifier{f}}{\ell_1}{\ell_k})$
  with each $\ell_j\gamma = v_j$ and $\down{\gamma(x)}{\eta(x)}$
  for each $x$ in the mutual domain.
  By the induction hypothesis, we can derive $\eprog,\gamma \vdash
  s \arrr w$ for some $w$ with $\down{w}{o}$; by [Call]
  therefore $\eprog \vdashcall \apps{\identifier{f}}{v_1}{v_n} \arrr
  w$ (necessarily $n = k$).
\end{description}
\item $\eta \vdash t \leadsto o$ can be confirmed in eight ways:
\begin{description}
\item[(\ref{alg:prepare:base:var})]
  $t \in \V$ and $\eta(t) \sqsupseteq o$; choosing $w = \gamma(t)$,
  we have $\eprog,\gamma \vdash t \arrr w$ by [Instance], and
  $\down{w}{o}$ by Lemma~\ref{lem:downsqsup}.
\item[(\ref{alg:prepare:base:constructor})]
  $t = \apps{\identifier{c}}{t_1}{t_m}$ with $\identifier{c} \in
  \Constructors$ and $t\eta = o$; choosing $w = t\gamma = o$, we
  clearly have $\down{w}{o}$ and $\eprog,\gamma \vdash t \arrr w$
  by [Constructor].
\item[(\ref{alg:iterate:ifte})]
  $t = \ifte{t_1}{t_2}{t_3}$ and either
  \begin{description}
  \item[(\ref{alg:iterate:ifte:true})] $\eta \vdash t_1 \leadsto
    \strue$ and $\eta \vdash t_2 \leadsto o$ are both confirmed; by
    the induction hypothesis, $\eprog,\gamma\vdash t_1 \arrr \strue$
    and $\eprog,\gamma \vdash t_2 \arrr w$ for some $w$ with
    $\down{w}{o}$;
  \item[(\ref{alg:iterate:ifte:false})] $\eta \vdash t_1 \leadsto
    \sfalse$ and $\eta \vdash t_3 \leadsto o$ are both confirmed; by
    the induction hypothesis, $\eprog,\gamma\vdash t_1 \arrr \strue$
    and $\eprog,\gamma \vdash t_3 \arrr w$ for some $w$ with
    $\down{w}{o}$.
  \end{description}
  In either case we complete with [Conditional], using
  [Cond-True] in the former and [Cond-False] in the latter case.
\item[(\ref{alg:iterate:choice})]
  $t = \apps{\choice}{t_1}{t_n}$ and $\eta \vdash t_i \arrr o$ is
  confirmed for some $i$; by the induction hypothesis, $\eprog,
  \gamma \vdash t_i \arrr w$ for a suitable $w$, so $\eprog,\gamma
  \vdash t \arrr w$ by [Choice].
\item[(\ref{alg:iterate:pair})]
  $t = (t_1,t_2)$ and $o = (o_1,o_2)$ and $\eta \vdash t_i \arrr o_i$
  is confirmed for $i \in \{1,2\}$; by the induction hypothesis,
  $\eprog,\gamma \vdash t_i \arrr \down{w_i}{o_i}$ for both $i$, so
  $\eprog,\gamma \vdash t \arrr \down{(w_1,w_n)}{o}$ by [Pair].
\item[(\ref{alg:iterate:rhs:var})] $t = \apps{x}{t_1}{t_n}$ with
  $x \in \V$ and $n > 0$, and there are $e_1,\dots,e_n$ such that
  $\eta \vdash t_i \leadsto e_i$ is confirmed for all $i$, and
  $\eta(x)(e_1,\dots,e_n) \ni o' \sqsupseteq o$ for some $o'$;
  by the induction hypothesis, there are $v_1,\dots,v_n$ such that
  $\eprog,\gamma \vdash t_i \arrr v_i$ for all $i$.
  Since also $\eprog,\gamma \vdash x \arrr \down{\gamma(x)}{\eta(
  x)}$ by [Instance], Lemma~\ref{lem:slowcall} provides $w$ such
  that $\eprog,\gamma \vdash \apps{x}{t_1}{t_n} \arrr \down{w}{o'}$;
  by Lemma~\ref{lem:downsqsup}, also $\down{w}{o}$.
\item[(\ref{alg:iterate:rhs:call})] $t = \apps{\identifier{f}}{t_1
  }{t_n}$ with $\identifier{f} \in \Defineds$ and $0 \leq n \leq
  \arity(\identifier{f})$, and there are $e_1,\dots,e_n$ such that
  $\eta \vdash t_i \leadsto e_i$ is confirmed for all $i$, and
  $\vdash \apps{\identifier{f}}{e_1}{e_n} \leadsto o$ is marked
  confirmed.
  By the second induction hypothesis, there are $v_1,\dots,v_n$
  such that $\eprog,\gamma \vdash t_i \arrr \down{v_i}{e_i}$ for
  all $i$, and
  therefore by the first induction hypothesis, there is $w$
  such that $\eprog \vdashcall \apps{\identifier{f}}{v_1}{v_n}
  \arrr \down{w}{o}$.  Combining this with [Function] and $n$
  [Appl]s, we have $\eprog,\gamma \vdash \apps{\identifier{f
  }}{t_1}{t_n} \arrr w$ as well.
\item[(\ref{alg:iterate:rhs:extra})] $t = \apps{\identifier{f}}{
  t_1}{t_n}$ with $\identifier{f} \in \Defineds$ and $n > k :=
  \arity(\identifier{f})$, and there are $e_1,\dots,e_n$ such that,
  just as in the previous two cases, $\eprog,\gamma \vdash t_i \arrr
  \down{v_i}{e_i}$ for each $i$.  Moreover, $u(e_{k+1},\dots,e_n) \ni
  o' \sqsupseteq o$ for some $u$ with $\apps{\identifier{f}}{e_1}{e_k}
  \leadsto u$ confirmed.  As in the previous case, there exists $v$
  such that $\eprog,\gamma \vdash \apps{\identifier{f}}{t_1}{t_k}
  \arrr \down{v}{u}$.  Lemma~\ref{lem:slowcall} provides $w$ with
  $\eprog,\gamma \vdash \apps{\identifier{f}}{t_1}{t_n} \arrr
  \down{w}{o'}$; since $o' \sqsupseteq o$ also $\down{w}{o}$ by
  Lemma~\ref{lem:downsqsup}.
\qed
\end{description}
\end{enumerate}
\end{proof}

\subsection{Completeness of Algorithm~\ref{alg:general}}
\label{subsec:completeness}

We turn to completeness; in particular, the property that if
$\progresult$ then Algorithm~\ref{alg:general} returns a set
containing $b$ (in Section~\ref{subsec:detcomplete} we will see that
for deterministic programs also Algorithm~\ref{alg:base} returns such
a set).  We will do this by induction on the derivation tree;
specifically, by going from the tree \emph{right-to-left,
top-to-bottom}.  To make this induction formal, we will need to
\emph{label} the nodes; to obtain the desired order of traversing the
nodes, we label them with strings, ordered in reverse lexicographic
order.

\begin{definition}\label{lem:labeling}
For a given derivation tree $T$, we label the nodes by strings of
numbers as follows: the root is labelled $0$, and for a tree
\begin{prooftree}
\AxiomC{$T_1$}
\AxiomC{\dots}
\AxiomC{$T_n$}
\TrinaryInfC{$\pi$}
\end{prooftree}
if node $\pi$ is labelled with $\aindex$, then we label each $T_i$
with $\aindex \cdot i$.

We say that $\aindex > \bindex$ if $\aindex$ is larger
than $\bindex$ in the lexicographic ordering (with $\aindex \cdot i
> \aindex$), and $\aindex \succ \bindex$ if $\aindex > \bindex$ but
$\bindex$ is not a prefix of $\aindex$.
\end{definition}

Thus, for nodes labelled
$\aindex$ and $\bindex$, we have $\aindex \succ \bindex$ if $\aindex$
occurs to the right of $\bindex$, and $\aindex > \bindex$ if $\aindex$
occurs to the right or above of $\bindex$.
We have $1 \succ \aindex$ for all $\aindex$ in the tree.

\medskip
In the soundness proof (Lemma~\ref{lem:generalsound}), we essentially
recursed over the steps in the algorithm, and associated to every
extensional value an expression value.  Now, we must go in the other
direction, and associate to every value an extensional value.  If a
value occurs at multiple places in the derivation tree, we do not need
to select the same extensional value every time---just as the
soundness proof did not always associate the same value to a given
extensional value.

In order to choose a suitable extensional value for each position in
the derivation tree, we define the function $\psi$ which considers
the tree above and to the right of a given position.  As a result,
functional values are associated to ever larger extensional values
as we traverse the tree right-to-left, top-to-bottom.

\begin{definition}\label{def:psi}
Let $T$ be a derivation tree and $L$ the set of its labels, which
must all have the form $0 \cdot \aindex$.
For any $v \in \VValue$ (see Definition~\ref{def:VValue}) and
$\aindex \in L \cup \{1\}$, let:
\begin{itemize}
\item $\psi(v,\aindex) = v$ if $v \in \B$
\item $\psi(v,\aindex) = (\psi(v_1,\aindex),\psi(v_2,\aindex))$ if
  $v = (v_1,v_2)$
\item for $\apps{\identifier{f}}{v_1}{v_n} : \btype = \atype_{n+1}
  \arrtype \dots \arrtype \atype_m \arrtype \asortorpair$ with $m >
  n$, let $\psi(\apps{\identifier{f}}{v_1}{v_n},\aindex) =$ \\
  $\{ (e_{n+1},u) \mid \exists \cindex \succ \bindex >
    \aindex\ [$the subtree with index $\bindex$ has a root $\eprog
    \vdashcall \apps{\identifier{f}}{v_1}{v_{n+1}} \arrr w$ with
    $\psi(w,\cindex) = u$ and $e_{n+1} \ssupseteq \psi(v_{n+1},
    \bindex)] \}_\btype$.  In this, $\cindex$ is allowed to
    be $1$ (but $\bindex$ is not).
\end{itemize}
Here, $\ssupseteq$ is defined the same as $\sqsupseteq$, except that
$A_\atype \ssupseteq B_\atype$ if{f} $A \supseteq B$.  Note that clearly
$e \ssupseteq u$ implies $e \sqsupseteq u$, and that $\ssupseteq$ is
transitive by transitivity of $\supseteq$.
\end{definition}

Thus, $\psi(v,\aindex) \in \interpret{\atype}$ for $v : \atype$, but
\emph{not} $\psi(v,\aindex) \in \pinterpret{\atype}$.
Note that $\psi(v,\aindex) \ssupseteq \psi(v,\bindex)$ if $\bindex >
\aindex$ by transitivity of $>$.
Note also that, in the derivation tree for $\eprog \vdashcall
\apps{\symb{start}}{d_1}{d_M} \arrr b$, all values
are in $\VValue$ as all $d_i$ are in $\B$.

\begin{remark}
Some choices in Definition~\ref{def:psi} may well confound the reader.

First, the special label $1$ is used because we will make statements
of the form ``for all $\bindex \succ \aindex$ there exists $o
\ssupseteq \psi(w,\bindex)$ with property P'': if we did not include
$1$ in this quantification, it would give no information about, e.g.,
the root of the tree.
Note that this is already used in the definition of $\psi$.

Second, one may wonder why we use $\ssupseteq$ rather than
$\sqsupseteq$.  This is purely for the sake of the proof: the simpler
and more restrictive relation $\ssupseteq$ works better in the
induction because whenever $A_\atype \ssupseteq B_\atype$, all
elements of $B$ are also in $A$.
In fact, in Algorithm~\ref{alg:general} we could replace all uses of
$\sqsupseteq$ by $\ssupseteq$ without affecting the algorithm's
correctness.  However, this \emph{would} be problematic for
Algorithm~\ref{alg:base}; in the completeness proof in
Section~\ref{subsec:detcomplete}, we will for instance use that
$\{(e,u)\}_\atype \sqsupseteq \{(e,u),(e,o)\}_\atype$ if $u
\sqsupseteq o$, something which does not hold for $\ssupseteq$.
\end{remark}

Now, we could easily follow the proof sketch in the running text and
prove directly that Algorithm~\ref{alg:general} is complete, by
showing for each subtree with a label $\aindex$ and root $\eprog
\vdashcall \apps{\identifier{f}}{v_1}{v_n} \arrr w$ that for all
$e_1,\dots,e_n$ such that each $e_i \ssupseteq \psi(v_i,\aindex)$
and for all $\bindex \succ \aindex$ there exists $o \ssupseteq
\psi(w,p)$ such that $\vdash \apps{\identifier{f}}{e_1}{e_n} \leadsto
o$ is eventually confirmed (and similar for subtrees $\eprog,\gamma
\vdash s \arrr w$).
However, the proof for this is quite long, and we would have to
essentially repeat it with some minor changes when proving
completeness of Algorithm~\ref{alg:base}.

Instead, we will take a slight detour.  We present a new set of
derivation rules built on extensional values, which directly
corresponds to the algorithm.  By these derivation rules---as
presented in Figure~\ref{fig:extensional}---it is easy to see that
$\eprog \vvdashcall \apps{\identifier{f}}{e_1}{e_k} \too o$ if and
only if $\vdash \apps{\identifier{f}}{e_1}{e_k} \leadsto o$ is
eventually confirmed in Algorithm~\ref{alg:general}, and
$\eprog,\eta \vvdashcall s \too o$ if and only if $\eta \vdash s
\leadsto o$ is eventually confirmed.  Thus, the primary work is in
showing that such a derivation exists.

\begin{figure}[!htb]
\vspace{-12pt}
\begin{prooftree}
\AxiomC{}
\LeftLabel{[Constructor]\quad}
\UnaryInfC{$\eprog,\eta \vvdash \apps{\identifier{c}}{s_1}{s_m}
\too \apps{\identifier{c}}{(s_1\eta)}{(s_m\eta)}$}
\end{prooftree}

\begin{prooftree}
\AxiomC{$\eprog,\eta \vvdash s \too o_1$}
\AxiomC{$\eprog,\eta \vvdash t \too o_2$}
\LeftLabel{[Pair]\quad}
\BinaryInfC{$\eprog,\eta \vvdash (s,t) \too (o_1,o_2)$}
\end{prooftree}

\begin{prooftree}
\AxiomC{$\eprog,\eta \vvdash s_i \too o$}
\LeftLabel{[Choice]\quad}
\RightLabel{for $1 \leq i \leq n$}
\UnaryInfC{$\eprog,\eta \vvdash \apps{\choice}{s_1}{s_n} \too o$}
\end{prooftree}

\begin{prooftree}
\AxiomC{$\eprog,\eta \vdash s_1 \too \strue$}
\AxiomC{$\eprog,\eta \vvdash s_2 \too o$}
\LeftLabel{[Cond-True]}
\BinaryInfC{$\eprog,\eta \vvdash \ifte{s_1}{s_2}{s_3} \too o$}
\end{prooftree}

\begin{prooftree}
\AxiomC{$\eprog,\eta \vdash s_1 \too \sfalse$}
\AxiomC{$\eprog,\eta \vvdash s_3 \too o$}
\LeftLabel{[Cond-False]}
\BinaryInfC{$\eprog,\eta \vvdash \ifte{s_1}{s_2}{s_3} \too o$}
\end{prooftree}

\begin{prooftree}
\AxiomC{$\eprog,\eta \vvdash s_i \too e_i$ for $1 \leq i \leq n$}
\LeftLabel{[Variable]\quad}
\RightLabel{$\exists o' \in \eta(x)(e_1,\dots,e_n) [o' \sqsupseteq o]$}
\UnaryInfC{$\eprog,\eta \vvdash \apps{x}{s_1}{s_n} \too o$}
\end{prooftree}

\begin{prooftree}
\AxiomC{$\eprog,\eta \vvdash s_i \too e_i$ for $1 \leq i \leq n$}
\AxiomC{$\eprog \vvdashcall \apps{\identifier{f}}{e_1}{e_n} \too o$}
\RightLabel{\begin{tabular}{l}
  for $\identifier{f} \in \Defineds$, \\
  $n \leq \arity(\identifier{f})$
\end{tabular}}
\LeftLabel{[Func]\quad}
\BinaryInfC{$\eprog,\eta \vvdash \apps{\identifier{f}}{s_1}{s_n} \too o$}
\end{prooftree}

\begin{prooftree}
\AxiomC{$\eprog,\eta \vvdash s_i \too e_1$ for $1 \leq i \leq n$}
\AxiomC{$\eprog \vvdashcall \apps{\identifier{f}}{e_1}{e_k} \too u$}
\RightLabel{\begin{tabular}{l}
  for $\identifier{f} \in \Defineds$, \\
  $n > \arity(\identifier{f})$, \\
  \hspace{-25pt}
  $o' \in u(e_{k+1},\dots,e_n)$, \\
  $o' \sqsupseteq o$
\end{tabular}}
\LeftLabel{[Applied]\quad}
\BinaryInfC{$\eprog,\eta \vvdash \apps{\identifier{f}}{s_1}{s_n} \too o$}
\end{prooftree}

\begin{prooftree}
\AxiomC{$\eprog \vvdashcall \apps{\identifier{f}}{e_1}{e_{n+1}} \too
u' \sqsupseteq u$ for all $(e_{n+1},u) \in O$}
\RightLabel{if $n < \arity(\identifier{f})$}
\LeftLabel{[Value]\quad}
\UnaryInfC{$\eprog \vvdashcall \apps{\identifier{f}}{e_1}{e_n} \too
O_\atype$}
\end{prooftree}

\begin{prooftree}
\AxiomC{$\eprog,\eta \vvdash s \too o$}
\RightLabel{\begin{tabular}{l}
if $\apps{\identifier{f}}{\ell_1}{\ell_k} = s$ is the
first clause in $\eprog$ which \\
matches $\apps{\identifier{f}}{e_1}{e_k}$, and $\eta$ is the
matching \\
ext-environment
\end{tabular}}
\LeftLabel{[Call]\quad}
\UnaryInfC{$\eprog\vvdashcall \apps{\identifier{f}}{e_1}{e_k} \too o$}
\end{prooftree}

\caption{Alternative semantics using (non-deterministic) extensional
values}
\vspace{-12pt}
\label{fig:extensional}
\end{figure}

Thus, we come to the main result needed for completeness:

\begin{lemma}\label{lem:maketree}
If $\progresult$, then
$\eprog \vvdashcall \apps{\symb{start}}{d_1}{d_M} \too b$.
\end{lemma}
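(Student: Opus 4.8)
The plan is to lift the concrete derivation for $\progresult$ into the extensional system of Figure~\ref{fig:extensional}. By Lemma~\ref{lem:progeprog} the hypothesis $\progresult$ is equivalent to the existence of a derivation tree $T$ with root $\eprog \vdashcall \apps{\symb{start}}{d_1}{d_M} \arrr b$, and by Lemma~\ref{lem:safety} every value occurring in $T$ lies in $\VValue$, so the abstraction $\psi(\cdot,\aindex)$ of Definition~\ref{def:psi} is defined at each such value. I would first label the nodes of $T$ as in Definition~\ref{lem:labeling}, giving the root the label $0$, and recall that $\psi(v,\aindex)$ records, for a partially applied value $\apps{\identifier{f}}{v_1}{v_n}$, exactly the input/output behaviour witnessed by the subtrees lying strictly to the right of (or above) position $\aindex$. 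The statement then reduces to a single induction traversing $T$ right-to-left and top-to-bottom, i.e.\ over the labels in decreasing $>$-order.

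The two invariants to prove simultaneously are: (1) if the subtree labelled $\aindex$ has root $\eprog \vdashcall \apps{\identifier{f}}{v_1}{v_n} \arrr w$, then for every choice of $e_1,\dots,e_n$ with $e_i \ssupseteq \psi(v_i,\aindex)$ and every $\bindex \succ \aindex$ there is $o \ssupseteq \psi(w,\bindex)$ with $\eprog \vvdashcall \apps{\identifier{f}}{e_1}{e_n} \too o$; and (2) the analogue for expression nodes $\eprog,\gamma \vdash t \arrr w$, where the hypothesis becomes $\eta(x) \ssupseteq \psi(\gamma(x),\aindex)$ for all $x \in \Var(t)$ and the conclusion is $\eprog,\eta \vvdash t \too o$. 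The induction order is chosen precisely so that each premise of a node sits either above it or to its right, hence carries a strictly larger label and is available. The cases then follow the shape of the last rule used in $T$: [Constructor], [Pair], [Choice] and [Conditional] map directly to their counterparts; [Instance] is discharged using $\eta(x) \ssupseteq \psi(\gamma(x),\aindex) \ssupseteq \psi(\gamma(x),\bindex)$ together with monotonicity of $\psi$ in its label (noted after Definition~\ref{def:psi}); the pattern matching of [Call] is transferred to ext-environments via a Lemma~\ref{lem:downhelp}-style correspondence; and [Appl] splits into [Func] or [Applied] after fixing argument abstractions $e_i$ from the induction hypothesis.

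The delicate case, and the one I expect to be the main obstacle, is [Closure]: here $w = \apps{\identifier{f}}{v_1}{v_n}$ with $n < \arity(\identifier{f})$ and $\psi(w,\bindex) = O_\btype$, and I must produce a [Value] step $\eprog \vvdashcall \apps{\identifier{f}}{e_1}{e_n} \too O_\btype$. By Definition~\ref{def:psi}, each pair $(e_{n+1},u) \in O$ is witnessed by indices $\cindex \succ \bindex' > \bindex$ such that the subtree at $\bindex'$ has root $\eprog \vdashcall \apps{\identifier{f}}{v_1}{v_{n+1}} \arrr w'$ with $\psi(w',\cindex) = u$ and $e_{n+1} \ssupseteq \psi(v_{n+1},\bindex')$. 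Since $\bindex' > \bindex > \aindex$, monotonicity of $\psi$ and transitivity of $\ssupseteq$ give $e_i \ssupseteq \psi(v_i,\aindex) \ssupseteq \psi(v_i,\bindex')$ for $i \leq n$ as well, so invariant~(1) applied to the subtree at $\bindex'$—already treated, as $\bindex' > \aindex$—yields $o' \ssupseteq \psi(w',\cindex) = u$ with $\eprog \vvdashcall \apps{\identifier{f}}{e_1}{e_{n+1}} \too o'$; taking $u' := o'$ supplies exactly the premise required by [Value]. This is the step where the right-to-left ordering and the use of the stronger relation $\ssupseteq$ (rather than $\sqsupseteq$) are essential: $\ssupseteq$ guarantees that abstractions grow \emph{by inclusion} as we move left, so the behaviour collected into $O$ has genuinely been confirmed to its right.

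Finally I would instantiate invariant~(1) at the root, with $\aindex = 0$, $e_i = d_i = \psi(d_i,0)$ (legitimate since each $d_i \in \B$), and $\bindex = 1 \succ 0$, obtaining some $o \ssupseteq \psi(b,1) = b$ with $\eprog \vvdashcall \apps{\symb{start}}{d_1}{d_M} \too o$. Because $b$ is a data expression, $o \ssupseteq b$ forces $o = b$, which is the desired conclusion. The artificial label $1$ is exactly what makes this last instantiation possible, since otherwise no $\bindex \succ 0$ would exist to name the root.
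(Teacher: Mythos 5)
Your proposal matches the paper's own proof essentially step for step: the same labelling of the derivation tree, the same two simultaneous invariants proved by induction on labels in decreasing (right-to-left, top-to-bottom) order, the same use of $\psi$, $\ssupseteq$ and a Lemma~\ref{lem:downhelp}-style matching argument for [Call], and the same final instantiation at $\aindex = 0$ with the artificial label $1$ forcing $o = b$. The only deviations are cosmetic and harmless: the paper witnesses the [Closure] case with $o := \psi(w,\aindex)$ rather than your $\psi(w,\bindex)$ (both satisfy $o \ssupseteq \psi(w,\bindex)$), and applications headed by a variable are discharged by [Variable] rather than [Func]/[Applied], but via exactly the chained-$\vdashcall$ mechanism you describe.
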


\begin{proof}
Given $\progresult$, Lemma~\ref{lem:progeprog} allows us to assume
that $\eprog \vdashcall \apps{\symb{start}}{d_1}{d_M} \arrr b$.  Let
$T$ be the derivation tree with this root (with root label $0$) and
$L$ the set of its labels.
We prove, by induction on $\aindex$ with greater labels handled first
(which is well-founded because $T$ has only finitely many subtrees):
\begin{enumerate}
\item\label{complete:call}
If the subtree with label $\aindex$ has root $\eprog \vdashcall
\apps{\identifier{f}}{v_1}{v_n} \arrr w$, then for all $e_1,\dots,
e_n$ such that each $e_i \ssupseteq \psi(v_i,\aindex)$, and for all
$\bindex \succ \aindex$ there exists $o \ssupseteq \psi(w,\bindex)$
such that $\eprog \vvdashcall \apps{\identifier{f}}{e_1}{e_n} \too
o$.
\item\label{complete:rhs}
If the subtree with label $\aindex$ has root $\eprog,\gamma
\vdash t \arrr w$ and $\eta(x) \ssupseteq \psi(\gamma(x),
\aindex)$ for all $x \in \Var(t)$, then for all $\bindex \succ
\aindex$ there exists $o \ssupseteq \psi(w,\bindex)$ such that
$\eprog,\eta \vvdash t \too o$.
\end{enumerate}
Here, for $\bindex \succ \aindex$ we allow $\bindex \in L \cup \{1\}$.
Therefore, in both cases, there must exist a suitable $o \ssupseteq
\psi(w,1)$ if $w$ is a data expression; this $o$ can only be $w$
itself.
The first item gives the desired result for $\aindex = 0$, as $o
\ssupseteq \psi(b,1)$ implies $o = b$.

We prove both items together by induction on $\aindex$, with greater
labels handled first.  Consider the first item.  There are two cases:

\begin{itemize}
\item If $\eprog \vdashcall \apps{\identifier{f}}{v_1}{v_n} \arrr w$
by [Closure], then $n < \arity(\identifier{f})$ and $w =
\apps{\identifier{f}}{v_1}{v_n}$.  Given $\bindex \succ \aindex$,
let $o := \psi(w,\aindex)$; then clearly $o \ssupseteq
\psi(w,\bindex)$.  We must see that $\eprog \vvdashcall
\apps{\identifier{f}}{e_1}{e_n} \too o$; by [Value], this is the
case if for all $(e_{n+1},u)$ in the set underlying $o$ we can
derive $\eprog \vvdashcall \apps{\identifier{f}}{e_1}{e_{n+1}} \too
u'$ for some $u' \ssupseteq u$.
So let $(e_{n+1},u)$ be in this underlying set.

By definition of $\psi$, we can find $\cindex \succ \bindex' > 
\aindex$ and $v_{n+1},w'$ such that the subtree with label $\bindex'$
has a root $\eprog \vdashcall \apps{\identifier{f}}{v_1}{v_{n+1}}
\arrr w'$ and $e_{n+1} \ssupseteq \psi(v_{n+1},\bindex')$ and $u =
\psi(w',\cindex)$.  Since $\bindex' > \aindex$, also $e_i \ssupseteq
\psi(v_i,\bindex')$ for $1 \leq i \leq n$; thus, the induction
hypothesis provides $u' \ssupseteq \psi(w',\cindex) = u$ with
$\eprog \vdashcall \apps{\identifier{f}}{e_1}{e_{n+1}} \too u'$ as
required.
\item If $\eprog \vdashcall \apps{\identifier{f}}{v_1}{v_n} \arrr w$
by [Call], then $n = \arity(\identifier{f})$ and we can find a
clause, say $\rho\colon \apps{\identifier{f}}{\ell_1}{\ell_n} = s$
and an environment $\gamma$ such that
\begin{enumerate}
\item $\rho$ is the first clause in $\eprog$ whose right-hand side
  is instantiated by $\apps{\identifier{f}}{v_1}{v_n}$;
\item each $v_i = \ell_i\gamma$;
\item\label{completeness:clause} $\eprog,\gamma \vdash s \arrr w$.
\end{enumerate}
By Lemma~\ref{lem:downhelp}, using $\down{v}{V}$ if{f} $V \ssupseteq
\psi(v,\aindex)$, also $\rho$ is the first clause which matches
$\apps{\identifier{f}}{e_1}{e_n}$, and for the matching
ext-environment $\eta$, each $\eta(x) \ssupseteq \psi(\gamma(x),
\aindex) \ssupseteq \psi(\gamma(x),\aindex \cdot 1)$.  Thus using
the induction hypothesis for observation~\ref{completeness:clause},
we find $o \ssupseteq \psi(w,\bindex)$ for all $\bindex \succ
\aindex \cdot 1$.  As this includes every label $\bindex$ with
  $\bindex \succ \aindex$, we are done.
\end{itemize}
Now for the second claim, assume that $\eprog,\gamma \vdash t \arrr w$
(with label $\aindex$) and that $\eta(x) \ssupseteq \psi(\gamma(x),
\aindex)$ for all $x \in \Var(t)$; let $\bindex \succ \aindex$ which
(**) implies $\bindex \succ \aindex \cdot i$ for any string $i$ as
well.  Consider the form of $t$ (taking into account that, following
the transformation of $\prog$ to $\eprog$, we do not need to consider
applications whose head is an $\ifte{}{}{}$ or $\choice$ statement).
\begin{itemize}
\item $t = (t_1,t_2)$; then we can write $w = (w_1,w_2)$ and the
  trees with labels $\aindex \cdot 1$ and $\aindex \cdot 2$ have roots
  $\eprog,\gamma \vdash t_1 \arrr w_1$ and $\eprog,\gamma \vdash t_2
  \arrr w_2$ respectively.  Using observation (**), the induction
  hypothesis provides $o_1,o_2$ such that each $\eprog,\eta \vvdash
  s_i \too o_i \ssupseteq \psi(w_i,\bindex)$; we are done choosing
  $o := (o_1,o_2)$.
\item $t = \apps{\identifier{c}}{t_1}{t_m}$ with $\identifier{c} \in
  \Constructors$; then by Lemma~\ref{lem:safety}, each $s_i\gamma =
  b_i \in \B$; this implies that all $\gamma(x) \in \B$, so $\eta(x) =
  \gamma(x)$, and $\eprog,\eta \vdash t \too o := t\eta$ by
  [Constructor].
\item $t = \apps{\choice}{t_1}{t_n}$; then the immediate subtree is
  $\eprog,\gamma \vdash t_i \arrr w$ for some $i$. By observation
  (**), the induction hypothesis provides a suitable $o$, which
  suffices by rule [Choice] from $\vvdash$.
\item $t = \ifte{t_1}{t_2}{t_3}$; then, as the immediate subtree can
  only be obtained by [If-True] or [If-False], we have either
  $\eprog,\gamma \vdash s_1 \arrr \strue$ and $\eprog,\gamma \vdash
  s_2 \arrr w$, or $\eprog,\gamma \vdash s_1 \arrr \sfalse$ and
  $\eprog,\gamma \vdash s_3 \arrr w$.  Using the induction hypothesis
  for $\bindex = 1$, we have $\eprog,\eta \too \strue$ in the first
  case and $\eprog,\eta \too \sfalse$ in the second.  Using (**) and
  the induction hypothesis as before, we obtain a suitable $o$ using
  the inference rule [Cond-True] or [Cond-False] of $\vvdash$.
\item $t \in \V$, so the tree is obtained by [Instance]; choosing
  $o := \eta(t) \ssupseteq \psi(\gamma(t),\aindex) \ssupseteq
  \psi(\gamma(t),\bindex)$ by (**), we have $o \ssupseteq \psi(w,
  \bindex)$ by transitivity of $\ssupseteq$, and $\eprog,\eta \vvdash
  t \too o$ by [Variable] (as $o \in \{o\} = o()$).
\item $t = \apps{x}{t_1}{t_n}$ with $n > 0$; then there are
  $w_0,\dots,w_n$ such that the root is obtained using:
  \begin{itemize}
  \item $\eprog,\gamma \vdash x \arrr \gamma(x) =: w_0$ by [Instance]
    with label $\aindex \cdot 1^n$;
  \item $n$ subtrees of the form $\eprog,\gamma \vdash t_i \arrr v_i$
    with label $\aindex \cdot 1^{n-i} \cdot 2$ for $1 \leq i \leq n$;
  \item $n$ subtrees of the form $\eprog \vdashcall \app{w_{i-1}}{v_i}
    \arrr w_i$ with label $\aindex \cdot 1^{n-i} \cdot 3$ for $1 \leq
    i \leq n$;
  \item $n$ uses of [Appl], each with conclusion $w_i$ and label
    $\aindex \cdot 1^{n-i}$ for $1 \leq i \leq n$.
  \end{itemize}
  Note that here $w_n = w$.  For $1 \leq i \leq n$ we define $e_i$ and
  $o_{i-1}$ as follows:
  \begin{itemize}
  \item observing that $\aindex \cdot 1^{n-i} \cdot 3 \succ \aindex
    \cdot 1^{n-i} \cdot 2$, the induction hypothesis provides $e_i$
    such that $\eprog,\eta \vdash t_i \too e_i \ssupseteq \psi(v_i,
    \aindex \cdot 1^{n-i} \cdot 3)$
  \item $o_0 := \psi(\gamma(x),\aindex) \ssupseteq \psi(w_i,\aindex
    \cdot 1^{n-1} \cdot 2)$;
  \item for $1 < i \leq n$, let $o_{i-1} := \psi(w_{i-1},\aindex
    \cdot 1^{n-i} \cdot 2)$.
  \end{itemize}
  We also define $o_n := \psi(w_n,\bindex)$.  Then by definition of
  $\psi$, because $\aindex \cdot 1^{n-i} \cdot 3 > \aindex \cdot
  1^{n-i} \cdot 2$ and the former is the label of $\eprog \vdashcall
  \app{w_{i-1}}{v_i} \arrr w_i$, there is an element
  $(e_i,\psi(w_i,\cindex))$ in the set underlying $o_{i-1}$ for any
  $\cindex \succ \aindex \cdot 1^{n-i} \cdot 3$.  In particular, this
  means $(e_i,o_i)$ is in this set, whether $i < n$ or $i = n$.
  Thus, by a quick induction on $i$ we have $o_i \in \eta(x)(e_1,
  \dots,e_i)$, so $\eprog,\eta \vvdash s \too o_n = \psi(w,
  \bindex)$ by [Variable].
\item $t = \apps{\identifier{f}}{t_1}{t_n}$ with $n \leq
  \arity(\identifier{f})$; then there are subtrees $\eprog,\gamma
  \vdash t_i \arrr v_i$ labelled $\aindex \cdot 1^{n-i} \cdot 2$ and
  $\eprog \vdash \apps{\identifier{f}}{v_1}{v_n} \arrr w$ labelled
  $\aindex \cdot 3$.  By the induction hypothesis, there are $e_1,
  \dots,e_n$ such that $\eprog,\eta \vvdash t_i \too e_i \ssupseteq
  \psi(v_i,\aindex \cdot 3)$ for $1 \leq i \leq n$.  Therefore, by
  the $\vdashcall$ part of the induction hypothesis and (**), there
  is $o \ssupseteq \psi(w,\bindex)$ such that $\eprog,\eta \vvdash
  \apps{\identifier{f}}{e_1}{e_n} \too o$.  But then
  $\eprog,\eta \vvdash t \too o$ by [Func].
\item $t = \apps{(\apps{\identifier{f}}{s_1}{s_k})}{t_1}{t_0}$ with
  $k = \arity(\identifier{f})$ and $n > 0$; then there are
  subtrees:
  \begin{itemize}
  \item $\eprog,\gamma \vdash \apps{\identifier{f}}{s_1}{s_k} \arrr
    w_0$ by [Function] or [Appl], with label $\aindex \cdot 1^n$;
  \item $\eprog,\gamma \vdash t_i \arrr v_i$ with label $\aindex \cdot
    1^{n-i} \cdot 2$ for $1 \leq i \leq n$;
  \item $\eprog \vdashcall \app{w_{i-1}}{v_i} \arrr w_i$ with label
    $\aindex \cdot 1^{n-i} \cdot 3$ for $1 \leq i \leq n$.
  \end{itemize}
  For some $w_0,\dots,w_n$ with $w_n = w$.  In the same way as the
  previous case, there exists $o_0 \ssupseteq \psi(w_0,\aindex \cdot
  1^{n-1} \cdot 2)$ such that $\eprog,\eta \vvdash \apps{\identifier{
  f}}{s_1}{s_k} \too o_0$ (as $\aindex \cdot 1^{n-1} \cdot 2$).  The
  remainder of this case follows the case with $t = \apps{x}{t_1}{t_n
  }$.
  \qed
\end{itemize}
\end{proof}

Now we may forget $\psi$ and $\ssupseteq$ altogether: from a
derivation for $\progresult$ we have obtained a derivation
$\eprog \vvdash \apps{\identifier{start}}{d_1}{d_M} \too b$, which
almost exactly corresponds to a derivation of
$\vdash \apps{\identifier{start}}{d_1}{d_M} \leadsto b$ using
Algorithm~\ref{alg:general}.  It only remains to formalise this
correspondence:

\begin{lemma}\label{lem:usetree}
If $\eprog \vvdashcall \apps{\symb{start}}{d_1}{d_M} \too b$, then
Algorithm~\ref{alg:general} returns a set containing $b$.
If $\eprog \vvdashcall \apps{\symb{start}}{d_1}{d_M} \too b$ has a
derivation tree which only uses deterministic extensional values,
then so does Algorithm~\ref{alg:base}.
\end{lemma}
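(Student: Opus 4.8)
The plan is to exploit the near-verbatim correspondence between the inference rules of $\vvdash$ in Figure~\ref{fig:extensional} and the confirmation steps of Algorithm~\ref{alg:general}, already flagged in the text preceding the lemma. Concretely, I would prove by induction on the height of a $\vvdash$-derivation the two statements: (i) if $\eprog \vvdashcall \apps{\identifier{f}}{e_1}{e_n} \too o$ then the algorithm eventually marks $\vdash \apps{\identifier{f}}{e_1}{e_n} \leadsto o$ as confirmed, and (ii) if $\eprog,\eta \vvdash t \too o$ then the algorithm eventually marks $\eta \vdash t \leadsto o$ as confirmed. Applying (i) to the root $\eprog \vvdashcall \apps{\symb{start}}{d_1}{d_M} \too b$ and reading off the completion step~(\ref{alg:completion}) then shows that $b$ lies in the returned set.

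The induction is a mechanical case analysis in which each $\vvdash$-rule matches exactly one marking rule of the algorithm. The two leaf cases, [Constructor] and the $n=0$ instance of [Variable], are confirmed already during preparation (step~\ref{alg:prepare:base}); every other rule has premises that are $\vvdash$-derivations of strictly smaller height, so by the induction hypothesis their corresponding statements are confirmed, and then the matching iteration step confirms the conclusion---for example [Value] against step~(\ref{alg:iterate:value}), [Call] against step~(\ref{alg:iterate:lhs}), and [Func] and [Applied] against step~(\ref{alg:iterate:rhs:func}), with the conditional, choice and pairing rules handled by their obvious counterparts. Throughout I would use that confirmations are monotone (a confirmed statement stays confirmed) and that the iteration runs to a fixed point, so any statement whose premises have all been confirmed is itself confirmed before the algorithm halts.

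For the second claim I would re-run precisely the same induction, observing that Algorithm~\ref{alg:base} and Algorithm~\ref{alg:general} differ only in whether the statement set $\X$ is built over $\pinterpret{\atype}$ or over $\interpret{\atype}$, using identical iteration steps. Hence if the given $\vvdash$-derivation mentions only deterministic extensional values (elements of the various $\pinterpret{\atype}$), every statement appearing in it already belongs to the $\X$ of Algorithm~\ref{alg:base}, and the identical argument confirms each of them there---in particular $\vdash \apps{\symb{start}}{d_1}{d_M} \leadsto b$, so Algorithm~\ref{alg:base} also returns a set containing $b$.

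The only point requiring care, rather than a genuine obstacle, is the treatment of the ``eventually confirmed'' notion: I must ensure that the monotone, iterate-to-fixed-point structure really certifies every $\vvdash$-derivable statement, which is guaranteed by the induction on derivation height together with monotonicity (no statement is confirmed circularly, since the finite derivation tree is well-founded). Beyond this, the remaining work is a routine verification that the side conditions of each $\vvdash$-rule---the uses of $\sqsupseteq$, the ``first matching clause'' condition of [Call], and membership conditions such as $o' \in u(e_{k+1},\dots,e_n)$ in [Applied]---coincide verbatim with those of the matching algorithm step.
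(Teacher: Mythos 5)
Your proposal matches the paper's own proof essentially verbatim: the paper likewise proves the two statements (for $\vvdashcall$ and $\vvdash$) simultaneously by induction on the derivation tree, pairing each inference rule of Figure~\ref{fig:extensional} with its corresponding preparation or iteration step of the algorithm, and handles the deterministic claim by the same observation that the two algorithms differ only in which extensional values populate $\X$. Your added remarks on monotonicity and the fixed-point structure of the iteration are sound and merely make explicit what the paper leaves implicit.
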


\begin{proof}
This is entirely straightforward.
Starting with $\B = \B_{d_1,\dots,d_M}^\prog$, we show:
\begin{enumerate}
\item If $\eprog \vvdashcall \apps{\identifier{f}}{e_1}{e_n} \too o$,
  then $\vdash \apps{\identifier{f}}{e_1}{e_n} \leadsto o$ is
  eventually confirmed.
\item If $\eprog,\eta \vvdash s \too o$, then $\eta \vdash s \leadsto
  o$ is eventually confirmed.
\end{enumerate}
Both statements hold regardless of which algorithm is used, provided
that all extensional values in the derivation tree are among those
considered by the algorithm.
We prove the statements together by induction on the derivation tree.
For the first, there are two inference rules that might have been
used:
\begin{description}
\item[Value] $o = O_\atype$ and for all $(e_{n+1},u) \in O$ there
  exists $u' \sqsupseteq u$ such that $\eprog \vvdashcall
  \apps{\identifier{f}}{e_1}{e_{n+1}} \too u'$ is an immediate
  subtree.  By the induction hypothesis, each such statement
  $\apps{\identifier{f}}{e_1}{e_{n+1}} \leadsto u'$ is confirmed, so
  the current statement is confirmed by step~\ref{alg:iterate:value}.
\item[Call] Immediate by the induction hypothesis and
  step~\ref{alg:iterate:lhs}.
\end{description}
For the second, suppose $\eprog,\eta \vvdash s \too o$, and consider
the inference rule used to derive this.
\begin{description}
\item[Constructor] Immediate by step~\ref{alg:prepare:base}.
\item[Pair] Immediate by the induction hypothesis and
  step~\ref{alg:iterate:pair}.
\item[Choice] Immediate by the induction hypothesis and
  step~\ref{alg:iterate:choice}.
\item[Cond-True] Immediate by the induction hypothesis and
  step~\ref{alg:iterate:ifte}.
\item[Variable] If $n = 0$, then $\eta(x) \sqsupseteq o$, so the
  statement is confirmed in step~\ref{alg:prepare:base}.
  Otherwise, by the induction hypothesis $\eta \vdash s_i \leadsto
  e_i$ is confirmed for each $i$ and $o' \sqsupseteq o$ for some
  $o' \in \eta(x)(e_1,\dots,e_n)$; the statement is confirmed in
  step~\ref{alg:iterate:rhs:var}.
\item[Func] Immediate by the induction hypothesis and
  step~\ref{alg:iterate:rhs:call}.
\item[Applied] Immediate by the induction hypothesis and
  step~\ref{alg:iterate:rhs:extra}.
  \qed
\end{description}
\end{proof}

At this point, we have all the components for
Lemma~\ref{lem:generalcomplete}.

\oldcounter{\ncompletenesslem}
\begin{lemma}
If $\progresult$, then Algorithm~\ref{alg:general} returns a set
$A \cup \{b\}$.
\end{lemma}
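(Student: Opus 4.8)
Lemma~\ref{lem:generalcomplete} is the final statement, asserting that if $\progresult$ then Algorithm~\ref{alg:general} returns a set $A \cup \{b\}$. Let me sketch how I would prove this.

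The plan is to route the argument through the alternative extensional semantics $\vvdash$ of Figure~\ref{fig:extensional}, which is set up so that $\eprog \vvdashcall \apps{\symb{start}}{d_1}{d_M} \too b$ holds precisely when Algorithm~\ref{alg:general} eventually confirms $\vdash \apps{\symb{start}}{d_1}{d_M} \leadsto b$. First I would apply Lemma~\ref{lem:progeprog} to replace the goal $\progresult$ by the equivalent statement that $\eprog \vdashcall \apps{\symb{start}}{d_1}{d_M} \arrr b$ is derivable, fixing a concrete derivation tree $T$ for it. All values appearing in $T$ lie in $\VValue$, since the inputs $d_i$ are in $\B$; this is what lets the abstraction below be well-defined.

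The heart of the matter is Lemma~\ref{lem:maketree}: from $T$ I must manufacture an extensional derivation $\eprog \vvdashcall \apps{\symb{start}}{d_1}{d_M} \too b$. To do this I would label the nodes of $T$ with strings of numbers (root labelled $0$, the $i$-th child of a node labelled $\aindex$ labelled $\aindex \cdot i$) and introduce two orderings: $\aindex > \bindex$ for lexicographic comparison, and $\aindex \succ \bindex$ for lexicographic comparison ignoring prefixes. The key device is the function $\psi(v,\aindex)$, which abstracts a value $v$ to an extensional value by collecting, at functional types, all pairs $(e_{n+1},u)$ witnessed by applications $\eprog \vdashcall \apps{\identifier{f}}{v_1}{v_{n+1}} \arrr w$ occurring at labels $\bindex$ with $\bindex > \aindex$ (i.e. above and to the right). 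I would then prove, by induction on labels with \emph{greater} labels handled first, the two-part statement: for a $\vdashcall$-root at $\aindex$ with value inputs $v_i$ and any $e_i \ssupseteq \psi(v_i,\aindex)$, and for every $\bindex \succ \aindex$, there is $o \ssupseteq \psi(w,\bindex)$ with $\eprog \vvdashcall \apps{\identifier{f}}{e_1}{e_n} \too o$ (and the analogous statement for $\vdash$-roots using an ext-environment $\eta$ with $\eta(x) \ssupseteq \psi(\gamma(x),\aindex)$). Matching clauses against extensional expressions in the [Call] case is handled cleanly by Lemma~\ref{lem:downhelp}.

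The final step, Lemma~\ref{lem:usetree}, is a routine induction translating each $\vvdash$ inference rule into the corresponding confirmation step of Algorithm~\ref{alg:general}; combining it with Lemma~\ref{lem:maketree} yields that $\vdash \apps{\symb{start}}{d_1}{d_M} \leadsto b$ is confirmed, so $b$ lies in the returned set.

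The main obstacle will be the induction in Lemma~\ref{lem:maketree}, for two interlocking reasons. The right-to-left, top-to-bottom traversal is essential because a functional value must be assigned progressively larger extensional approximations as one moves leftward in the tree, so the inductive invariant must quantify over \emph{all} $\bindex \succ \aindex$ rather than a single target; pinning down this quantifier structure (including the artificial label $1$ used so that $\bindex \succ \aindex$ always has a witness, giving information about the root) is the delicate part. Secondly, I would need to work with the strict relation $\ssupseteq$ (genuine set containment at arrow types) rather than $\sqsupseteq$, since only $\ssupseteq$ guarantees that every pair in the smaller relation survives in the larger one—a property the [Value] case relies on when reconstructing the underlying set of $\psi(w,\aindex)$. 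Verifying the application cases ($t = \apps{x}{t_1}{t_n}$ and the partially-applied defined-symbol case) requires carefully tracking the labels $\aindex \cdot 1^{n-i} \cdot 2$ and $\aindex \cdot 1^{n-i} \cdot 3$ of the intermediate $[Appl]$ subtrees and feeding them into the definition of $\psi$; the remaining cases (pairs, constructors, $\choice$, $\symb{if}$, variables) are straightforward.
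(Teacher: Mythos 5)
Your proposal follows the paper's own proof essentially step for step: the reduction via Lemma~\ref{lem:progeprog}, the labelled-tree construction with $\psi$ and $\ssupseteq$ proved by right-to-left, top-to-bottom induction (the paper's Lemma~\ref{lem:maketree}), and the routine translation of $\vvdash$-derivations into algorithm confirmations (Lemma~\ref{lem:usetree}) are exactly the decomposition used in Appendix~\ref{app:correctness}. The points you single out as delicate---the auxiliary label $1$, the quantification over all $\bindex \succ \aindex$, the need for $\ssupseteq$ rather than $\sqsupseteq$ in the [Value] case, and the label bookkeeping $\aindex \cdot 1^{n-i} \cdot 2$, $\aindex \cdot 1^{n-i} \cdot 3$ in the application cases---are precisely the ones the paper's proof must and does handle.
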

\startappendixcounters

\begin{proof}
Immediate by a combination of Lemmas~\ref{lem:maketree}
and~\ref{lem:usetree}.
\qed
\end{proof}

\subsection{Completeness of Algorithm~\ref{alg:base}}\label{subsec:detcomplete}

Now we turn to the deterministic case.  By Lemma~\ref{lem:usetree},
it suffices if we can find a derivation of $\eprog \vvdashcall
\apps{\symb{start}}{d_1}{d_M} \too b$ which uses only deterministic
extensional values (elements of some $\pinterpret{\atype}$).
While the tree that we built in Lemma~\ref{lem:maketree} does not
have this property, we will use it to build a tree which does.

To start, we will see that the conclusions in any derivation tree
are consistent, where \emph{consistency} of two
extensional values is defined as follows:
\begin{itemize}
\item $\consistent{d}{b}$ if{f} $d = b$ for $d,b \in \Data$;
\item $\consistent{(e_1,u_1)}{(e_2,u_2)}$ if{f} both
  $\consistent{e_1}{e_2}$ and $\consistent{u_1}{u_2}$;
\item $\consistent{A_\atype}{B_\atype}$ if{f} for all $(e_1,u_1) \in
  A$ and $(e_2,u_2) \in B$: if $\consistent{e_1}{e_2}$ then
  $\consistent{u_1}{u_2}$.
\end{itemize}

Consistency is preserved under taking ``smaller'' extensional values:

\begin{lemma}\label{lem:consistencypreserve}
If $e_1' \sqsupseteq e_1$, $e_2' \sqsupseteq e_2$ and
$\consistent{e_1'}{e_2'}$, then also $\consistent{e_1}{e_2}$.
\end{lemma}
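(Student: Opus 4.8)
The plan is to argue by induction on the common type $\atype$ of the four extensional values $e_1',e_1,e_2',e_2$. First I would note that the relations involved pin down this common type: $e_1'\sqsupseteq e_1$ forces $e_1'$ and $e_1$ to have the same type, $e_2'\sqsupseteq e_2$ likewise for $e_2'$ and $e_2$, and $\consistent{e_1'}{e_2'}$ forces $e_1'$ and $e_2'$ to agree in type (evident in the functional clause, where both carry the label $\atype$). Hence all four lie in the same $\interpret{\atype}$ and the induction on the structure of $\atype$ is well-founded.

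For $\atype\in\Sorts$ all four values are data expressions in $\B$, where $\sqsupseteq$ is literal equality; thus $e_1=e_1'$ and $e_2=e_2'$, while $\consistent{e_1'}{e_2'}$ unfolds to $e_1'=e_2'$, and chaining the equalities gives $e_1=e_2$, i.e.\ $\consistent{e_1}{e_2}$. For $\atype=\atype_1\times\atype_2$ I would write each value as a pair; since both $\sqsupseteq$ and $\wr$ act componentwise, the two coordinates satisfy the hypotheses of the lemma at the smaller types $\atype_1$ and $\atype_2$, and the induction hypothesis applied twice yields componentwise consistency, hence $\consistent{e_1}{e_2}$.

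The functional case $\atype=\atype_1\arrtype\atype_2$ is where the only real work lies. Writing $e_1'=A'$, $e_1=A$, $e_2'=B'$, $e_2=B$, I must show $\consistent{A}{B}$, i.e.\ for arbitrary $(p_1,q_1)\in A$ and $(p_2,q_2)\in B$ with $\consistent{p_1}{p_2}$ that $\consistent{q_1}{q_2}$. The crucial step is to lift these entries into the larger relations via the definition of $\sqsupseteq$: $A'\sqsupseteq A$ supplies some $q_1'\sqsupseteq q_1$ with $(p_1,q_1')\in A'$, and $B'\sqsupseteq B$ some $q_2'\sqsupseteq q_2$ with $(p_2,q_2')\in B'$, crucially leaving the keys $p_1,p_2$ untouched. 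Consequently $\consistent{p_1}{p_2}$ still holds for the lifted entries, so $\consistent{A'}{B'}$ gives $\consistent{q_1'}{q_2'}$; then $q_1'\sqsupseteq q_1$, $q_2'\sqsupseteq q_2$ and $\consistent{q_1'}{q_2'}$ feed the induction hypothesis at the strictly smaller type $\atype_2$ to deliver $\consistent{q_1}{q_2}$. The one point I would check carefully is exactly this preservation of keys under $\sqsupseteq$, since it is what allows the consistency assumption on $A'$ and $B'$ to be reused for the lifted pairs; everything else is a direct unfolding of the definitions.
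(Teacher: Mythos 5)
Your proof is correct and follows essentially the same route as the paper: induction on the structure of the (common) type, with the functional case handled by lifting the pairs $(p_1,q_1)\in A$ and $(p_2,q_2)\in B$ into $A'$ and $B'$ via the key-preserving clause of $\sqsupseteq$, applying $\consistent{A'}{B'}$, and closing with the induction hypothesis at the result type. The point you flag as needing care---that $\sqsupseteq$ on relations keeps the first components unchanged and only enlarges the second components---is precisely the step the paper's proof relies on as well.
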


\begin{proof}
By induction on the form of $e_1$.  If $e_1 \in \B$, then $e_1' = e_1
= e_2 = e_2'$.  If $e_1$ is a pair, then so is $e_1'$ and we use the
induction hypothesis.  Finally, suppose $e_1 = B^1_\atype, e_2 =
B^2_\atype, e_1' = A^1_\atype$ and $e_2' = A^2_\atype$.  Then for all
$(u_1,o_1) \in B^1$ and $(u_2,o_2) \in B^2$, there are $o_1'
\sqsupseteq o_1$ and $o_2' \sqsupseteq o_2$ such that $(u_1,o_1')
\in A^1$ and $(u_2,o_2') \in A^2$.  Now suppose $\consistent{u_1}{
u_2}$.  By consistency of $e_1$ and $e_2$, we then have
$\consistent{o_1'}{o_2'}$, so by the induction
hypothesis, also $\consistent{o_1}{o_2}$.  This gives
$\consistent{B_\atype^1}{B_\atype^2}$, so
$\consistent{e_1}{e_2}$.
\qed
\end{proof}

We can use this to see that conclusions using $\too$ are consistent;
that is, if all extensional values on the left of $\too$ are
consistent, then so are those on the right:

\begin{lemma}\label{lem:consistency}
Let $T_1,T_2$ be derivation trees for $\vvdash$, and let
$\treeroot(T_1),\treeroot(T_2)$ denote their roots.
Suppose given $o,o'$ such that one of the following holds:
\begin{enumerate}
\item There are $\identifier{f},e_1,\dots,e_n,e_1',\dots,e_n'$ such
  that:
  \begin{itemize}
  \item $\treeroot(T_1) = \apps{\identifier{f}}{e_1}{e_n} \too o$;
  \item $\treeroot(T_2) = \apps{\identifier{f}}{e_1'}{e_n'} \too o'$;
  \item $\consistent{e_1}{e_1'}$,\dots,$\consistent{e_n}{e_n'}$.
  \end{itemize}
\item There are $\eta,\eta'$ on the same domain and $s$ such that:
  \begin{itemize}
  \item $\treeroot(T_1) = \eprog,\eta \vvdash s \too o$;
  \item $\treeroot(T_2) = \eprog,\eta' \vvdash s \too o'$;
  \item $\consistent{\eta(x)}{\eta'(x)}$ for all $x$ occurring in
    $s$.
  \end{itemize}
  Moreover, $s$ has no sub-expressions of the form
  $\apps{(\ifte{b}{s_1}{s_2})}{t_1}{t_n}$ with $n > 0$.
\end{enumerate}
If $\choice$ does not occur in $s$ or any clause of $\eprog$, then
$\consistent{o}{o'}$.
\end{lemma}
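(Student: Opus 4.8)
The plan is to prove Lemma~\ref{lem:consistency} by a simultaneous induction on the two derivation trees $T_1$ and $T_2$, following the structure of their final inference rules. Since the statement couples a $\too$-derivation for defined-symbol applications with one for right-hand-side expressions, I would set up the induction to handle both clauses of the statement together, doing a case analysis on the last rule used in $T_1$ (which, because the two roots share the same head symbol $\identifier{f}$ or the same expression $s$, forces a compatible last rule in $T_2$). The deterministic hypothesis---that $\choice$ occurs neither in $s$ nor in any clause of $\eprog$---is what guarantees the two trees cannot diverge by taking different $\choice$ branches, so in every case the structural correspondence between $T_1$ and $T_2$ is forced.

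First I would treat the expression cases (statement~2). For a constructor $\apps{\identifier{c}}{s_1}{s_m}$ the outputs are $t\eta$ and $t\eta'$; consistency of the $\eta(x)$ on data arguments (which must lie in $\B$) forces $\eta(x)=\eta'(x)$, hence $o=o'$. For $[\text{Pair}]$ and $[\text{Variable}]$/$[\text{Func}]$/$[\text{Applied}]$ I would apply the induction hypothesis to the subderivations of each $s_i$, obtaining consistent extensional values $e_i \wr e_i'$, then propagate consistency through the definitions of $e(u_1,\dots,u_n)$ and through $\sqsupseteq$ using Lemma~\ref{lem:consistencypreserve}. The conditional case is where the no-$\choice$ assumption is essential in a subtle way: I must rule out that $T_1$ uses $[\text{Cond-True}]$ while $T_2$ uses $[\text{Cond-False}]$. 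Applying the induction hypothesis to the guard subderivations $s_1$ gives $\consistent{\strue}{\sfalse}$, which is false for distinct data expressions---so both trees must take the same branch, and I then conclude by the induction hypothesis on that branch. For the $\vvdashcall$ cases (statement~1), the $[\text{Value}]$ rule case reduces to applying the induction hypothesis to each $\apps{\identifier{f}}{e_1}{e_{n+1}} \too u'$ and checking that the resulting sets of pairs are consistent in the sense of $\wr$ on functional values; the $[\text{Call}]$ case uses Lemma~\ref{lem:downhelp} (via consistency of the matching ext-environments) to see that the \emph{same} clause matches in both trees, then invokes the induction hypothesis on the clause bodies.

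The main obstacle I anticipate is the $[\text{Variable}]$ and $[\text{Applied}]$ cases, where consistency must be transported across partial application $e(u_1,\dots,u_n)$ and the relaxation relation $\sqsupseteq$ simultaneously. Here I expect to need a small auxiliary observation: if $\consistent{A_\atype}{B_\atype}$ and I select $o' \in A(u_1,\dots,u_k)$ and $o'' \in B(u_1',\dots,u_k')$ with each $\consistent{u_j}{u_j'}$, then $\consistent{o'}{o''}$. This should follow by unfolding Definition~\ref{def:partialapply} and the definition of $\wr$ on functional values, combined with Lemma~\ref{lem:consistencypreserve} to handle the $\sqsupseteq$ slack in the side-conditions of $[\text{Variable}]$ and $[\text{Applied}]$. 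Once this lemma-internal fact is isolated, the remaining cases are routine propagation of the induction hypothesis, and the no-$\choice$ restriction does exactly the job of forcing structural agreement between $T_1$ and $T_2$ so that the induction hypothesis always applies to matching subtrees.
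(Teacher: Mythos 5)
Your proposal matches the paper's proof essentially step for step: the same joint induction over the derivation trees with case analysis on the last rule applied, the same resolution of the conditional case via the falsity of $\consistent{\strue}{\sfalse}$, the same use of Lemma~\ref{lem:consistencypreserve} to absorb the $\sqsupseteq$ slack in the side-conditions, and the same Lemma~\ref{lem:downhelp}-style argument that both [Call] derivations must select the same clause with pointwise-consistent ext-environments. The auxiliary fact you isolate---that consistency is transported through partial application $e(u_1,\dots,u_n)$ of consistent functional values to pointwise-consistent arguments---is precisely the inner induction on $n$ that the paper carries out within its variable and applied cases, so your plan is sound as stated.
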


\begin{proof}
Both statements are proved together by induction on the form of $T_1$.
For the first, consider $n$.  Since $\treeroot(T_1) $ could be
derived, necessarily $n \leq \arity(\identifier{f})$.  There are two
cases:
\begin{itemize}
\item $n < \arity(\identifier{f})$; both trees were derived by
  [Value].  Thus, we can write $o = A_\atype$ and $o' = A'_\atype$
  and have:
  \begin{itemize}
  \item for all $(e_{n+1},u_1) \in A$ there is some $u_2 \sqsupseteq
    u_1$ such that $T_1$ has an immediate subtree $\eprog \vvdash
    \app{\apps{\identifier{f}}{e_1}{e_n}}{e_{n+1}} \too u_2$;
  \item for all $(e_{n+1}',u_1') \in A'$ there is some $u_2'
    \sqsupseteq u_1'$ such that $T_2$ has an immediate subtree
    $\eprog \vvdash \app{\apps{\identifier{f}}{e_1'}{e_n'}}{e_{n+1}'}
    \too u_2'$.
  \end{itemize}
  Now let $(e_{n+1},u_1) \in A$ and $(e_{n+1},u_1') \in B$ be such
  that $\consistent{e_{n+1}}{e_{n+1}'}$.  Considering the two
  relevant subtrees, the induction hypothesis gives that
  $\consistent{u_2}{u_2'}$.
  By Lemma~\ref{lem:consistencypreserve} we then obtain the required
  property that $\consistent{u_1}{u_1'}$.
\item $n = \arity(\identifier{f})$; both trees were derived by
  [Call]. Given that extensional values of the form $A_\atype$ can
  only instantiate variables (not pairs or patterns with a constructor
  at the head), a reasoning much like the one in
  Lemma~\ref{lem:downhelp} gives us that both conclusions are
  obtained by the same clause $\apps{\identifier{f}}{\ell_1}{\ell_k}
  = s$, the first with ext-environment $\eta$ and the second with
  $\eta'$ such that each $\consistent{\eta(x)}{\eta'(x)}$.  Then
  the immediate subtrees have roots $\eprog,\eta \vvdash s \too o$
  for $T_1$ and $\eprog,\eta' \vvdash s \too o'$ for $T_2$, and we
  are done by the induction hypothesis.
\end{itemize}
For the second statement, let $T_1$ have a root $\eta \vvdash s \too
o$ and $T_2$ a root $\eta' \vvdash \too o'$, and assume that $s$ does
not contain any $\choice$ operators or if-statements at the head of
an application.  In addition, let $\consistent{\eta(x)}{\eta'(x)}$ for
all (relevant) $x$.  Then $s$ may have one of six forms:
\begin{itemize}
\item $s = \apps{\identifier{c}}{s_1}{s_m}$: then $o = s\eta$ and
  $o' = s\eta'$; as, in this case, necessarily all variables have a
  type of order $0$, $o = o'$ which guarantees consistency.
\item $s = (s_1,s_2)$: then $o = (o_1,o_2)$ and $o' = (o_1',o_2')$,
  and by the induction hypothesis both $\consistent{o_1}{o_1'}$ and
  $\consistent{o_2}{o_2'}$; thus indeed $\consistent{o}{o'}$.
\item $s = \ifte{s_1}{s_2}{s_3}$: since not
  $\consistent{\strue}{\sfalse}$, either both conclusions are derived
  by [Cond-True] or by [Cond-False]; consistency of $o$ and $o'$
  follows immediately by the induction hypothesis on the second
  subtree.
\item $s = \apps{x}{s_1}{s_n}$; the induction hypothesis provides
  $e_1,\dots,e_n$ and $e_1',\dots,e_n'$ such that each $\consistent{
  e_i}{e_i'}$ and there are $u \sqsupseteq o,u' \sqsupseteq o'$
  such that $u \in \eta(x)(e_1,\dots,e_n)$ and $u' \in \eta'(x)
  (e_1',\dots,e_n')$.  By Lemma~\ref{lem:consistencypreserve}, it
  suffices if $u$ and $u'$ are consistent.  We prove this by
  induction on $n$:
  \begin{itemize}
  \item if $n = 0$ then $u = \eta(x)$ and $u' = \eta'(x)$ and
    consistency is assumed;
  \item if $n > 0$ then there are $A_\atype \in \eta(x)(e_1,\dots,
    e_{n-1})$ and $B_\atype \in \eta'(x)(e_1',\dots,\linebreak
    e_{n-1}')$ such
    that $(e_n,u) \in A$ and $(e_n',u') \in B$.  By the induction
    hypothesis, $\consistent{A_\atype}{B_\atype}$.  Since also
    $\consistent{e_n}{e_n'}$, this implies $\consistent{u}{u'}$.
  \end{itemize}
\item $s = \apps{\identifier{f}}{s_1}{s_n}$ with $n \leq
  \arity(\identifier{f})$; then both conclusions follow by
  [Func].  The immediate subtrees provide $e_1,\dots,e_n$ and
  $e_1',\dots,e_n'$ such that, by the induction hypothesis, each
  $\consistent{e_i}{e_i'}$, as well as a conclusion $\eprog \vvdashcall
  \apps{\identifier{f}}{e_1}{e_n} \too o$ in $T_1$ and $\eprog
  \vvdashcall \apps{\identifier{f}}{e_1'}{e_n'} \too o'$ in $T_2$; we
  can use the first part of the induction hypothesis to conclude
  $\consistent{o}{o'}$.
\item $s = \apps{\identifier{f}}{s_1}{s_n}$ with $n > \arity(
  \identifier{f})$; then both conclusions follow by [Applied].  There
  are $e_1,\dots,e_n,e_1',\dots,e_n'$ such that, by the induction
  hypothesis, each $\consistent{e_i}{e_i'}$.  Moreover, there are
  $u,u'$ such that $T_1$ has a subtree with root $\eprog \vvdashcall
  \apps{\identifier{f}}{e_1}{e_k} \too u$ and $T_2$ has a subtree with
  root $\eprog \vvdashcall \apps{\identifier{f}}{e_1'}{e_k'} \too u'$,
  where $k = \arity(\identifier{f})$; by the induction hypothesis,
  clearly $\consistent{u}{u'}$, and since there are $o_2,o_2'$ such
  that $u(e_{k+1},\dots,e_n) \ni o_2 \sqsupseteq o$ and $u'(e_{k+1}',
  \dots,e_n') \ni o_2' \sqsupseteq o'$, the induction argument in
  the variable case provides $\consistent{o_2}{o_2'}$, so
  $\consistent{o}{o'}$ by Lemma~\ref{lem:consistencypreserve}.
  \qed
\end{itemize}
\end{proof}

This result implies that all (non-deterministic) extensional values in
the derivation tree are \emph{internally consistent}:
$\consistent{o}{o}$.  

Now, recall that our mission is to transform a derivation tree for
$\eprog \vvdash \apps{\identifier{start}}{d_1}{d_M} \too b$ into one
which uses only deterministic extensional values.  A key step to this
will be to define a deterministic extensional value $o' \sqsupseteq o$
for every non-deterministic extensional value $o$ occurring in the
tree.  Using consistency, we can do that; $o'$ is chosen to be $\sqcup
\{o\}$ defined below:

\begin{definition}
Given a \emph{non-empty}, \emph{consistent} set $X$---that is,
$\emptyset \neq X \subseteq \interpret{\atype}$ with $\consistent{e}{
u}$ for all $e,u \in X$---let $\sqcup X \in \pinterpret{\atype}$ be
defined as follows:
\begin{itemize}
\item if $\atype \in \Sorts$, then by consistency $X$ can only have
  one element; we let $\sqcup \{ d \} = d$;
\item if $\atype = \atype_1 \times \atype_2$, then $\sqcup X =
  (\sqcup \{ e \mid (e,u) \in X \}, \sqcup \{ u \mid (e,u) \in X \})$
  \\
  \emph{(this is well-defined because $\consistent{(e_1,u_1)}{(e_2,
  u_2)}$ implies both $\consistent{e_1}{e_2}$ and $\consistent{u_1}{
  u_2}$, so indeed the two sub-sets are consistent)}
\item if $\atype = \atype_1 \arrtype \btype$, then $\sqcup X =
  \{ (e,\sqcup Y_e) \mid e \in \pinterpret{\atype} \wedge Y_e =
  \bigcup_{A_\atype \in X} \{ o \mid (u,o) \in A \wedge e \sqsupseteq
  \sqcup \{u\}\} \wedge Y_e \neq \emptyset \}_{\atype_1 \arrtype
  \atype_2}$ \\
  \emph{(this is well-defined because for every $e$ there is only
  one $Y_e$, and $Y_e$ is indeed consistent: if $o_1,o_2 \in Y$, then
  there are $A^{(1)}_\atype,A^{(2)}_\atype \in X$ and there exist
  $u_1,u_2$ such that $(u_1,p_1) \in A^{(1)}, (u_2,o_2) \in A^{(2)}$
  and both $e \sqsupseteq u_1$ and $e \sqsupseteq u_2$; by
  Lemma~\ref{lem:consistencypreserve}---using that $\consistent{e}{e}$
  because $e \in \pinterpret{\atype}$---the latter implies that
  $\consistent{u_1}{u_2}$, so by consistency of $A_\atype^{(1)}$ and
  $A_\atype^{(2)}$ indeed $\consistent{o_1}{o_2}$)}
\end{itemize}
\end{definition}

Now, the transformation of a $\choice$-free---and therefore
consistent---derivation tree for $\eprog \vvdashcall
\apps{\identifier{start}}{d_1}{d_M} \too b$ into one which uses only
deterministic extensional values is detailed in
Lemma~\ref{lem:dettree}.  The transformation is mostly done by a
fairly straightforward induction on the depth of the tree (or more
precisely, on the maximum depth of a set of trees), but to
guarantee correctness we will be obliged to assert a number of
properties of $\sqcup$.  This is done in
Lemmas~\ref{lem:pintergreater}--\ref{lem:apply}.

\medskip
To start, we derive a kind of monotonicity for $\sqcup$ with respect
to $\sqsupseteq$:

\begin{lemma}\label{lem:sqsuppreserve}\label{lem:sqcupsup}
Let $X,Y \subseteq \interpret{\atype}$ be non-empty consistent sets,
and suppose that for every $e \in Y$ there is some $e' \in X$ such that
$e' \sqsupseteq e$.  Then $\sqcup X \sqsupseteq \sqcup Y$.

In particular, if $X \supseteq Y$ then $\sqcup X \sqsupseteq \sqcup
Y$.
\end{lemma}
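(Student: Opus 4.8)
The plan is to prove the statement by structural induction on the type $\atype$, mirroring the inductive definitions of both $\sqcup$ and $\sqsupseteq$. I will first record that $\sqsupseteq$ is reflexive, which follows from a trivial parallel induction: equality is reflexive on sorts, pairs are handled componentwise, and for a functional value $A_\atype$ one witnesses $A_\atype \sqsupseteq A_\atype$ by taking $u' = u$ for each $(e,u) \in A$. With reflexivity in hand the ``in particular'' clause is immediate: if $X \supseteq Y$ then every $e \in Y$ lies in $X$ and satisfies $e \sqsupseteq e$, so the hypothesis of the main claim holds with $e' = e$.

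For the induction itself, the base case $\atype \in \Sorts$ is forced by consistency: a consistent set of sort type contains only equal elements, so $X = \{d\}$ and $Y = \{d'\}$; the domination hypothesis gives $d \sqsupseteq d'$, i.e.\ $d = d'$, whence $\sqcup X = d = d' = \sqcup Y$. The product case $\atype = \atype_1 \times \atype_2$ is also routine: writing $X_i$ and $Y_i$ for the projections of $X$ and $Y$ onto the $i$-th coordinate, the domination hypothesis transfers coordinatewise to $X_i,Y_i$, so the induction hypothesis yields $\sqcup X_i \sqsupseteq \sqcup Y_i$ for $i = 1,2$, and the definition of $\sqsupseteq$ on pairs gives $\sqcup X \sqsupseteq \sqcup Y$.

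The functional case $\atype = \atype_1 \arrtype \atype_2$ is the crux. Fix a pair $(e,o) \in \sqcup Y$; by definition $o = \sqcup Z_e$ where $Z_e = \bigcup_{B_\atype \in Y} \{\, p \mid (u,p) \in B \wedge e \sqsupseteq \sqcup\{u\} \,\} \neq \emptyset$, and I must produce $o' \sqsupseteq o$ with $(e,o') \in \sqcup X$. The candidate is $o' := \sqcup W_e$, where $W_e$ is the analogous set built from $X$ at the \emph{same} key $e$. The heart of the argument is to show $W_e \neq \emptyset$ and that every $p \in Z_e$ is dominated by some $p' \in W_e$: given $p \in Z_e$, pick $B_\atype \in Y$ and $(u,p) \in B$ with $e \sqsupseteq \sqcup\{u\}$; the domination hypothesis supplies $A_\atype \in X$ with $A_\atype \sqsupseteq B_\atype$, and the definition of $\sqsupseteq$ on functional types then yields $p' \sqsupseteq p$ with $(u,p') \in A$ --- crucially with the same key $u$, so the side condition $e \sqsupseteq \sqcup\{u\}$ persists and $p' \in W_e$. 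Since $W_e$ and $Z_e$ are consistent (this is exactly the well-definedness of $\sqcup$ recorded in its definition, via Lemma~\ref{lem:consistencypreserve}), the induction hypothesis at the strictly smaller type $\atype_2$ gives $\sqcup W_e \sqsupseteq \sqcup Z_e = o$. As $W_e \neq \emptyset$, the pair $(e,\sqcup W_e)$ indeed belongs to $\sqcup X$, so $o' = \sqcup W_e$ works and $\sqcup X \sqsupseteq \sqcup Y$ follows. The main subtlety, and the place to be careful, is precisely this key-preservation in the functional step: it is what aligns the clause defining $\sqcup$ with the clause for $\sqsupseteq$ and lets the single domination hypothesis propagate through the union defining $Z_e$.
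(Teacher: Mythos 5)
Your proof is correct and follows essentially the same route as the paper's: an induction on the form of $\atype$, with the functional case handled by fixing a key $e$, forming the union sets $Z_e$ (from $Y$) and $W_e$ (from $X$), showing $W_e \neq \emptyset$ and that every element of $Z_e$ is dominated by an element of $W_e$ with the same key $u$ (so the side condition $e \sqsupseteq \sqcup\{u\}$ persists), and then applying the induction hypothesis at $\atype_2$ to conclude $\sqcup W_e \sqsupseteq \sqcup Z_e$. Your explicit attention to the consistency of $W_e$ and to reflexivity of $\sqsupseteq$ for the ``in particular'' clause matches what the paper uses implicitly; no gaps.
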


\begin{proof}
The second statement follows immediately from the first, since
$\sqsupseteq$ is reflexive.  For the first statement, we use
induction on the form of $\atype$.

If $\atype \in \Sorts$ there is little to prove: $X$ and $Y$
contain the same single element.

If $\atype = \atype_1 \times \atype_2$, then $\sqcup X = (\sqcup
\{ u \mid (u,o) \in X \}, \sqcup \{ o \mid (u,o) \in X \})$ and
$\sqcup Y = (\sqcup \{ u \mid (u,o) \in Y \}, \sqcup \{ o \mid (u,o)
\in Y \})$.  Since, for every $u$ in $\{ u \mid (u,o) \in Y \}$
there is some $(u',o') \in X$ with $u' \sqsupseteq u$ (by definition
of $\sqsupseteq$ for pairs), the contain\-ment property also holds for
the first sub-set; it is as easily obtained for the second.  Thus we
complete by the induction hypothesis and the definition of
$\sqsupseteq$.

Otherwise $\atype = \atype_1 \arrtype \atype_2$; denote $\sqcup X =
A_\atype$ and $\sqcup Y = B_\atype$.  Now, all elements of $B$ can be
written as $(u,\sqcup Y_u)$ where $Y_u = \bigcup_{D_\atype \in Y}
\{ o \mid (u',o) \in D \wedge u \sqsupseteq \sqcup \{u'\} \}$, and all
elements of $A$ as $(u,\sqcup X_u)$, where $X_u = \bigcup_{C_\atype
\in X} \{ o \mid (u',o) \in C \wedge u \sqsupseteq \sqcup \{u'\} \}$.
Let $(u,\sqcup Y_u) \in B$; we claim that (1) $X_u$ is non-empty, (2)
$(u,\sqcup X_u) \in A$ and (3) $\sqcup X_u \sqsupseteq \sqcup Y_u$,
which suffices to conclude $\sqcup X \sqsupseteq \sqcup Y$.
\begin{enumerate}
\item $(u,\sqcup Y_u) \in B$ gives that $Y_u$ is non-empty, so it has
  at least one element $o$ with
  $(u',o) \in D$ for some $D_\atype \in Y$; by assumption, there is
  $C_\atype \in X$ with $C_\atype \sqsupseteq D_\atype$, which
  implies that $(u',o') \in C_\atype$ for some $o' \sqsupseteq o$; as
  $u \sqsupseteq u'$ we have $o' \in X_u$;
\item follows from (1);
\item for all $o \in Y_u$, there are $u'$ with $u \sqsupseteq u'$ and
  $D_\atype \in Y$ such that $(u',o) \in D$, and by assumption
  $C_\atype \in X$ and $(u',o') \in C$ with $o' \sqsupseteq o$; as
  $u \sqsupseteq u'$, we have $o' \in X_u$.  The induction hypothesis
  therefore gives $\sqcup X_u \sqsupseteq \sqcup Y_u$.
\qed
\end{enumerate}
\end{proof}

We can think of $\sqcup$ as defining a kind of \emph{supremum}:
$\sqcup X$ is the supremum of the set $\{ \sqcup \{e\} \mid e \in X
\}$ with regards to the ordering relation $\sqsupseteq$.  The first
part of this is given by Lemma~\ref{lem:sqcupsup}: $\sqcup X$ is
indeed greater than $\sqcup \{e\}$ for all $e \in X$ because $X
\supseteq \{e\}$.
The second part, that $\sqcup X$ is the \emph{smallest} deterministic
extensional value with this property, holds by
Lemma~\ref{lem:pintergreater}:

\begin{lemma}\label{lem:pintergreater}
Let $X = X^{(1)} \cup \dots \cup X^{(n)} \subseteq \interpret{\atype}$
be a consistent set with $n > 0$ and all $X^{(i)}$ non-empty, and let
$e \in \pinterpret{\atype}$ be such that $e \sqsupseteq \sqcup X^{(i)}$
for all $1 \leq i \leq n$.  Then $e \sqsupseteq \sqcup X$.
\end{lemma}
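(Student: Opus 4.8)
The plan is to prove the statement by induction on the structure of $\atype$, following the three clauses in the definition of $\sqcup$. Note first what is really at stake: Lemma~\ref{lem:sqcupsup} already shows that $\sqcup X$ dominates each piece, so the content here is the \emph{minimality} of $\sqcup X$ among deterministic extensional values lying above all the $\sqcup X^{(i)}$. Throughout, $X$ is consistent by assumption, each $X^{(i)} \subseteq X$ is consistent, and all the $\sqcup$-expressions are therefore well-defined.

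The base and product cases are routine. If $\atype \in \Sorts$, consistency forces $X$ to be a singleton $\{d\}$, each $\sqcup X^{(i)} = d$, and $e \sqsupseteq d$ with $e,d \in \B$ already gives $e = d = \sqcup X$. If $\atype = \atype_1 \times \atype_2$, I would write $e = (e_1,e_2)$ and use that $\sqcup$ and $\sqsupseteq$ act componentwise: projecting the cover $X = X^{(1)} \cup \dots \cup X^{(n)}$ onto each coordinate yields consistent covers of the two projections $\{u \mid (u,o)\in X\}$ and $\{o \mid (u,o)\in X\}$ with all pieces nonempty, and $e_j \sqsupseteq \sqcup(\text{projection of }X^{(i)})$ for every $i$ and $j \in \{1,2\}$. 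Two applications of the induction hypothesis, at $\atype_1$ and $\atype_2$, then give $e_j \sqsupseteq \sqcup(\text{projection of }X)$, hence $e \sqsupseteq \sqcup X$.

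The crux is the arrow case $\atype = \atype_1 \arrtype \atype_2$. Write $e = A_\atype$, which is a \emph{partial function} because $e \in \pinterpret{\atype}$, and $\sqcup X = B_\atype$. To establish $A_\atype \sqsupseteq B_\atype$ I must show that for every $(u, \sqcup X_u) \in B$ there is some $o' \sqsupseteq \sqcup X_u$ with $(u,o') \in A$, where $X_u = \bigcup_{C_\atype \in X}\{ o \mid (u',o)\in C \wedge u \sqsupseteq \sqcup\{u'\}\}$; this $X_u$ is exactly the set $Y_u$ appearing in the definition of $\sqcup X$, so it is nonempty here and \emph{consistent} by the well-definedness argument already recorded there. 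The key observation is that this index splits along the cover, $X_u = \bigcup_{i=1}^n X^{(i)}_u$, with $X^{(i)}_u$ defined analogously from $X^{(i)}$, since ranging over $C_\atype \in X$ is the same as ranging over $C_\atype \in X^{(i)}$ for some $i$. Let $I = \{ i \mid X^{(i)}_u \neq \emptyset \}$, which is nonempty because $X_u \neq \emptyset$. For each $i \in I$ we have $(u, \sqcup X^{(i)}_u) \in \sqcup X^{(i)}$, so the hypothesis $e \sqsupseteq \sqcup X^{(i)}$ furnishes some $o^{(i)} \sqsupseteq \sqcup X^{(i)}_u$ with $(u, o^{(i)}) \in A$. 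Here determinism of $e$ is decisive: since $A$ is a partial function there is at most one second component paired with $u$, so all the $o^{(i)}$ coincide in a single $o' \in \pinterpret{\atype_2}$ with $o' \sqsupseteq \sqcup X^{(i)}_u$ for every $i \in I$. Applying the induction hypothesis at $\atype_2$ to the consistent cover $X_u = \bigcup_{i \in I} X^{(i)}_u$ (nonempty pieces, with $o'$ dominating each $\sqcup X^{(i)}_u$) yields $o' \sqsupseteq \sqcup X_u$, the required witness; as $(u, \sqcup X_u)$ was arbitrary, $e \sqsupseteq \sqcup X$.

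I expect the arrow case, and specifically the step where the partial-function property of $e$ collapses the family $\{o^{(i)}\}_{i\in I}$ into one element $o'$, to be the only genuinely delicate point: this collapse is precisely what makes the induction hypothesis at $\atype_2$ applicable, and it is where the restriction $e \in \pinterpret{\atype}$ (rather than merely $e \in \interpret{\atype}$) is essential. The remaining verifications—that projections and splittings of consistent sets remain consistent, and that $X_u = \bigcup_i X^{(i)}_u$—are routine and can lean on the well-definedness facts already established for $\sqcup$.
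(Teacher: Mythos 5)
Your proof is correct and follows essentially the same route as the paper's: structural induction on $\atype$, with the arrow case resolved by splitting $X_u$ along the cover, using $(u,\sqcup X^{(i)}_u) \in \sqcup X^{(i)}$ together with the partial-function property of $e \in \pinterpret{\atype}$ to collapse the witnesses $o^{(i)}$ to a single $o'$, and then applying the induction hypothesis at $\atype_2$. Your explicit restriction to the index set $I$ of non-empty pieces is a minor tidying of a step the paper handles implicitly, not a different argument.
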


\begin{proof}
By induction on the form of $\atype$.

If $\atype \in \Sorts$, then each $\sqcup X^{(i)} = e$; thus, $X^{(1)
} = \dots = X^{(n)} = X = \{e\}$ and $\sqcup X = e$ as well.

If $\atype = \atype_1 \times \atype_2$, then $e = (e_1,e_2)$ and
$\sqcup X = (\sqcup Y_1,\sqcup Y_2)$, where $Y_j = \{ u_j \mid (u_1,
u_2) \in X \}$ for $j \in \{1,2\}$.
Let $Y_j^{(i)} = \{ u_j \mid (u_1,u_2) \in X^{(i)} \}$.
Then clearly each $Y_j = Y_j^{(1)} \cup \dots \cup Y_j^{(n)}$, and
$e \sqsupseteq \sqcup X^{(i)}$ implies that each $e_j \sqsupseteq
\sqcup Y_j^{(i)}$.  The induction hypothesis gives $e_j \sqsupseteq
\sqcup Y_j$ for both $j$.

If $\atype = \atype_1 \arrtype \atype_2$, then write $e = A_\atype$.
Now,
\begin{itemize}
\item for $u \in \pinterpret{\atype_1}$,
  denote $Y_u^{(i)} = \bigcup_{B_\atype \in X^{(i)}} \{ o \mid
  (u',o) \in B \wedge u \sqsupseteq \sqcup \{u'\} \}$;
\item for $(u,\sqcup Y_u) \in \sqcup X$, we can write $Y = Y_u^{(1)}
  \cup \dots \cup Y_u^{(N)}$;
\item for $(u,\sqcup Y_u) \in \sqcup X$, some $Y_u^{(i)}$ must be
  non-empty;
\item as $(u,\sqcup Y_u^{(i)}) \in \sqcup X^{(i)}$,
  there exists $(u,o') \in A$ with $o' \sqsupseteq \sqcup Y_u^{(i)}$;
\item as there is only one $o'$ with $(u,o') \in A$, we obtain
  $o' \sqsupseteq \sqcup Y_u^{(j)}$ for all non-empty $Y_u^{(j)}$;
\item by the induction hypothesis, $o' \sqsupseteq \sqcup
  (Y_u^{(1)} \cup \dots \cup Y_u^{(N)}) = \sqcup Y_u$.
\end{itemize}
Thus, $A_\atype \sqsupseteq \sqcup X$ as required.
\qed
\end{proof}

The next helper result concerns application of deterministic
extensional values as (partial) functions.  Very roughly, we see that
if $e$ is at least the ``supremum'' of $\{e^{(1)},\dots,e^{(m)}\}$,
then the result $c \in e(u_1,\dots,u_n)$ of applying $e$ to some
extensional values $u_1,\dots,u_n$ is at least as large as each
$\sqcup\ e^{(j)}(u_1,\dots,u_n)$.

The lemma is a bit broader than this initial sketch, however, as it
also allows for the $e^{(j)}$ to be applied on smaller $u_i^{(j)}$;
i.e., we actually show that $c \sqsupseteq \sqcup\ e^{(j)}(u_1^{(j)},
\dots,u_n^{(j)})$ if each $u_i \sqsupseteq \sqcup \{ u_i^{(j)} \mid 1
\leq j \leq m \}$.  Formally:\pagebreak

\begin{lemma}\label{lem:apply}
Let $n \geq 0$ and suppose that:
\begin{itemize}
\item $\pinterpret{\atype_1 \arrtype \dots \arrtype \atype_n \arrtype
  \btype} \ni e \sqsupseteq \sqcup \{ e^{(1)},\dots,e^{(m)} \}$;
\item $\pinterpret{\atype_i} \ni u_i = \sqcup \{ u_i^{(1)},\dots,u_i^{(m)}
  \}$ for $1 \leq i \leq n$;
\item $e^{(j)}(u_1^{(j)},\dots,u_n^{(j)}) \ni c^{(j)} \sqsupseteq
  o^{(j)}$ for $1 \leq j \leq m$;
\item $o = \sqcup \{ o^{(1)},\dots,o^{(m)} \}$.
\end{itemize}
Then there exists $c \in \pinterpret{\btype}$ such that $e(u_1,\dots,u_n)
\ni c \sqsupseteq o$.
\end{lemma}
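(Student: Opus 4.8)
The plan is to prove this by induction on $n$, following the structure of the definition of application $e(u_1,\dots,u_n)$ in Definition~\ref{def:partialapply} and the inductive structure of $\sqsupseteq$. The statement is a compatibility result connecting the supremum operation $\sqcup$, the ordering $\sqsupseteq$, and iterated application, so the natural approach is to peel off one argument at a time and appeal to the monotonicity lemmas already established.

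First I would handle the base case $n = 0$. Here each $e^{(j)}(\,) = \{e^{(j)}\}$, so $c^{(j)} = e^{(j)}$ and the hypothesis gives $e^{(j)} \sqsupseteq o^{(j)}$. Since $e \sqsupseteq \sqcup\{e^{(1)},\dots,e^{(m)}\}$ and, by Lemma~\ref{lem:sqcupsup}, $\sqcup\{e^{(1)},\dots,e^{(m)}\} \sqsupseteq \sqcup\{o^{(1)},\dots,o^{(m)}\} = o$ (using that each $e^{(j)} = \sqcup\{e^{(j)}\} \sqsupseteq \sqcup\{o^{(j)}\}$ combined appropriately), transitivity of $\sqsupseteq$ (Lemma~\ref{lem:sqsuptrans}) gives $e \sqsupseteq o$; we take $c := e$ since $e \in e(\,) = \{e\}$.

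For the inductive step $n \geq 1$, I would write each iterated application as first applying to $u_1^{(j)}$ (respectively $u_1$) and then applying the $(n-1)$-fold remainder. Concretely, I would extract from $e^{(j)}(u_1^{(j)},\dots,u_n^{(j)}) \ni c^{(j)}$ an intermediate value $A^{(j)}_{\btype'} \in e^{(j)}(u_1^{(j)})$ with $(u_2^{(j)},\dots,u_n^{(j)})$-application yielding $c^{(j)}$, where $\btype' = \atype_2 \arrtype \dots \arrtype \atype_n \arrtype \btype$. The goal is to find a single intermediate $d \in e(u_1)$ that dominates the supremum of the $A^{(j)}$, and then invoke the induction hypothesis on the remaining $n-1$ arguments with $d$ playing the role of $e$ and the $A^{(j)}$ playing the role of the $e^{(j)}$. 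To produce $d$ I would use that $e \sqsupseteq \sqcup\{e^{(1)},\dots,e^{(m)}\}$ together with $u_1 = \sqcup\{u_1^{(1)},\dots,u_1^{(m)}\}$: the definition of $\sqcup$ at arrow type collects exactly the relevant outputs $o$ with $(u',o) \in A$ and $u_1 \sqsupseteq \sqcup\{u'\}$, and by Lemma~\ref{lem:sqsuppreserve} and Lemma~\ref{lem:pintergreater} the unique output $d$ that $e$ assigns to $u_1$ dominates $\sqcup$ of all the $A^{(j)}$ reachable from the $u_1^{(j)} \sqsubseteq u_1$.

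The main obstacle I expect is the bookkeeping at the arrow-type case of $\sqcup$: I must verify that the key $u_1$, being the supremum $\sqcup\{u_1^{(j)}\}$, actually has $u_1 \sqsupseteq \sqcup\{u_1^{(j)}\}$ in the precise sense required so that each pair $(u_1^{(j)}, A^{(j)})$ contributes to the set $Y_{u_1}$ used in computing $e(u_1)$, and that the resulting $d = \sqcup Y_{u_1}$-type value dominates each $A^{(j)}$ via Lemma~\ref{lem:pintergreater}. This requires carefully matching the consistency hypotheses (so that the relevant sets are consistent and $\sqcup$ is defined) and tracking that $e \sqsupseteq \sqcup\{e^{(j)}\}$ transfers, under application, to $d \sqsupseteq \sqcup\{A^{(j)}\}$; once this single-argument step is correctly set up, the induction hypothesis closes the argument and transitivity delivers $c \sqsupseteq o$.
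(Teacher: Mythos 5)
Your proof is correct, and its essential computation is the same as the paper's: the single-argument step---showing that the unique output which $e$ assigns to a supremum key dominates the supremum of the outputs that the $e^{(j)}$ assign to the keys $u_1^{(j)}$---is carried out exactly as in the paper's proof, via the arrow-type clause of $\sqcup$, Lemma~\ref{lem:sqcupsup}, and transitivity (Lemma~\ref{lem:sqsuptrans}). The genuine difference is the direction of the induction. The paper peels off the \emph{last} argument: it applies the induction hypothesis to the $(n-1)$-prefix, obtaining $A \in e(u_1,\dots,u_{n-1})$ with $A_{\atype_n \arrtype \btype} \sqsupseteq \sqcup\{A^{(1)},\dots,A^{(m)}\}$ where $(u_n^{(j)},c^{(j)}) \in A^{(j)} \in e^{(j)}(u_1^{(j)},\dots,u_{n-1}^{(j)})$, and only then performs the single-argument step; this matches Definition~\ref{def:partialapply} verbatim, so no auxiliary facts about application are needed. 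You peel off the \emph{first} argument, which silently uses the identity $e(u_1,\dots,u_n) = \bigcup_{d \in e(u_1)} d(u_2,\dots,u_n)$---true, but not the definition, so you should state and prove it (an easy induction on $n$). In exchange your route is slightly leaner at the finish: the conclusion $c \sqsupseteq o$ falls directly out of the induction hypothesis, whereas the paper ends with the chain $c \sqsupseteq \sqcup B_u \sqsupseteq \sqcup\{c^{(j)}\} \sqsupseteq \sqcup\{o^{(j)}\}$ for each $j$ separately and then needs Lemma~\ref{lem:pintergreater} to combine; in your version Lemma~\ref{lem:pintergreater} can be avoided entirely, since $Y_{u_1} \supseteq \{A^{(1)},\dots,A^{(m)}\}$ yields $\sqcup Y_{u_1} \sqsupseteq \sqcup\{A^{(1)},\dots,A^{(m)}\}$ by the second part of Lemma~\ref{lem:sqcupsup}. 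The consistency bookkeeping you flag as the main obstacle resolves exactly as you anticipate: consistency of $\{A^{(1)},\dots,A^{(m)}\}$ follows from consistency of $\{e^{(1)},\dots,e^{(m)}\}$ and of $\{u_1^{(1)},\dots,u_1^{(m)}\}$ (both implicit in the hypothesis that the corresponding suprema exist) via the arrow-type clause of $\consistent{}{}$, and the induction hypothesis does apply with $d$ in the role of $e$, since the lemma only requires $d \sqsupseteq \sqcup\{A^{(1)},\dots,A^{(m)}\}$ rather than equality.
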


\begin{proof}
By induction on $n$.  First suppose that $n = 0$, so each $e^{(j)} =
c^{(j)} \sqsupseteq o^{(j)}$.  Then $e \sqsupseteq \sqcup \{ e^{(1)},
\dots, e^{(m)} \} \sqsupseteq \sqcup \{ o^{(1)},\dots,o^{(m)} \} = o$
by Lemma~\ref{lem:sqsuppreserve}, so $e() \ni e \sqsupseteq o$ by
transitivity of $\sqsupseteq$.

Now let $n > 0$.  For $1 \leq j \leq m$ the third observation gives
$A^{(j)}$ such that $e^{(j)}(u_1^{(j)},\dots,u_{n-1}^{(j)}) \ni
A^{(j)}_{\atype_n \arrtype \btype}$ and $(u_n^{(j)},c^{(j)}) \in
A^{(j)}$.
Then by the induction hypothesis, there exists $A_{\atype \arrtype
\btype} \in e(u_1,\dots,u_{n-1})$ such that $A_{\atype_n \arrtype
\btype} \sqsupseteq \sqcup \{ A^{(1)},\dots,A^{(m)} \}$.
That is, omitting the subscript $n$:

\begin{itemize}
\item $\pinterpret{\atype \arrtype \btype} \ni
  A_{\atype \arrtype \btype} \sqsupseteq \sqcup
  \{ A^{(1)}_{\atype \arrtype \btype},\dots,A^{(m)}_{\atype \arrtype
  \btype} \}$;
\item $\pinterpret{\atype} \ni u = \sqcup \{ u^{(1)}, \dots, u^{(m)} \}$;
\item for $1 \leq j \leq m$: $(u^{(j)},c^{(j)}) \in A^{(j)}$ for some
  $c^{(j)} \sqsupseteq o^{(j)}$;
\item $o = \sqcup \{ o^{(1)},\dots,o^{(m)} \}$.
\end{itemize}
Moreover, for every $c$ such that $(u,c) \in A$ also $c \in e(u_1,
\dots,u_n)$; thus, we are done if we can identify such $c \sqsupseteq
o$.

Let $B_u^{(j)} := \{ o' \mid (u',o') \in A^{(j)} \wedge u
\sqsupseteq \sqcup \{ u' \} \}$
and let $B_u := B_u^{(1)} \cup \dots \cup B_u^{(m)}$.  Then we have:
\begin{itemize}
\item $c^{(j)} \in B_u$ for $1 \leq j \leq m$:
  since $(u^{(j)},c^{(j)}) \in A^{(j)}$, and $u = \sqcup \{
  u^{(1)},\dots,u^{(m)} \} \sqsupseteq \sqcup \{ u^{(j)} \}$ by
  Lemma~\ref{lem:sqcupsup}, we have $c^{(j)} \in B_u^{(j)} \subseteq
  B_u$;
\item since therefore $B_u \neq \emptyset$, the pair
  $(u,\sqcup B_u)$ occurs in the set underlying $\sqcup \{ A^{(1)},
  \dots,A^{(m)} \}$;
\item since $A_{\atype \arrtype \btype} \sqsupseteq \sqcup \{
  A^{(1)}_{\atype \arrtype \btype},\dots,A^{(m)}_{\atype \arrtype
  \btype}\}$, there exists $c \sqsupseteq \sqcup B_u$ such that
  $(u,c) \in A$;
\item since $A_{\atype \arrtype \btype} \in \pinterpret{\atype_n
  \arrtype \btype}$, there is only one choice for $c$;
\item $c \sqsupseteq \sqcup B_u \sqsupseteq \sqcup \{c^{(j)}\}
  \sqsupseteq \sqcup \{o^{(j)}\}$ for all $1 \leq j \leq m$ by
  Lemmas~\ref{lem:sqcupsup} and~\ref{lem:sqsuppreserve};
\item therefore $c \sqsupseteq o$ by Lemma~\ref{lem:pintergreater}.
  \qed
\end{itemize}
\end{proof}

At last, all preparations done.  We now turn to the promised proof
that in a deterministic setting, it suffices to consider deterministic
extensional values.

\begin{lemma}\label{lem:dettree}
If $\prog$ is deterministic and $\eprog \vdashcall
\apps{\symb{start}}{d_1}{d_M} \too b$, then this can be derived using
a \emph{functional tree}: a derivation tree where all extensional
values are in some $\pinterpret{\atype}$.
\end{lemma}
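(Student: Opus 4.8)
The plan is to prove a stronger, \emph{set-indexed} statement by induction on the maximum depth of a finite collection of $\vvdash$-derivation trees, and then to instantiate it at the singleton collection consisting of the given tree. Since $\prog$ is deterministic, no $\choice$ occurs, so Lemma~\ref{lem:consistency} guarantees that any family of conclusions sharing the same left-hand side has pairwise consistent right-hand sides; this is exactly what makes every application of $\sqcup$ below well defined (its argument set is non-empty and consistent, using also Lemma~\ref{lem:consistencypreserve}). Concretely, I would prove: for a consistent family $\{T_1,\dots,T_m\}$ ($m \geq 1$) of derivation trees whose roots are all $\eprog \vvdashcall \apps{\identifier{f}}{e_1^{(j)}}{e_n^{(j)}} \too o^{(j)}$ (case~1), and for \emph{any} deterministic $u_1 \sqsupseteq \sqcup\{e_1^{(j)}\},\dots,u_n \sqsupseteq \sqcup\{e_n^{(j)}\}$, there is a functional tree with root $\eprog \vvdashcall \apps{\identifier{f}}{u_1}{u_n} \too c$ for some $c \in \pinterpret{\btype}$ with $c \sqsupseteq \sqcup\{o^{(j)}\}$; and symmetrically for families with roots $\eprog,\eta^{(j)} \vvdash s \too o^{(j)}$ sharing the same $s$ (case~2), producing $\eprog,\eta \vvdash s \too c$ where $\eta(x) := \sqcup\{\eta^{(j)}(x)\}$. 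Applying this to the single tree for $\eprog \vvdashcall \apps{\symb{start}}{d_1}{d_M} \too b$ finishes the lemma: each $d_i \in \B$ is data, so $u_i = d_i$, and the resulting $c \sqsupseteq \sqcup\{b\} = b$ forces $c = b$ because $b$ is data.

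The crucial design choice is to allow the target arguments $u_i$ to be any deterministic extensional values \emph{above} $\sqcup\{e_i^{(j)}\}$, rather than exactly the supremum; this freedom is what lets the induction pass through the [Value] rule. I would justify it by noting that clause matching only inspects data and constructor structure, which lives at type order $0$ where $\sqsupseteq$ is equality, so enlarging the functional components of arguments changes neither which clause matches nor the matching ext-environment (the adaptation of Lemma~\ref{lem:downhelp} already used in Lemma~\ref{lem:consistency} applies verbatim). The induction then splits on the common rule at the roots. When $n = \arity(\identifier{f})$ the roots use [Call]; by consistency the same clause matches all of them, the matching ext-environments $\eta^{(j)}$ are pairwise consistent, and I pass to the case-2 family formed by the premises, feeding $\eta(x) = \sqcup\{\eta^{(j)}(x)\}$. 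In case~2 the sub-cases are routine combinations: [Constructor] yields a common datum (order-$0$ variables force agreement), [Pair] combines componentwise, [Cond-True]/[Cond-False] cannot be mixed since $\strue$ and $\sfalse$ are inconsistent so all roots take the same branch, and [Func], [Applied], [Variable] are handled by the induction hypothesis together with Lemma~\ref{lem:apply} (to push suprema through application) and Lemmas~\ref{lem:sqcupsup} and~\ref{lem:pintergreater} (to compare the assembled output with $\sqcup\{o^{(j)}\}$).

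The main obstacle is the case $n < \arity(\identifier{f})$, where every root is derived by [Value] and we must synthesise a single deterministic $c \sqsupseteq \sqcup\{o^{(j)}\}$. Unfolding the definition of $\sqcup$ on an arrow type, the combined value must contain, for each deterministic key $e$ with non-empty $Y_e = \bigcup_j \{\, o \mid (u,o)\ \text{lies under}\ o^{(j)}\ \text{for some}\ u\ \text{with}\ e \sqsupseteq \sqcup\{u\} \,\}$, a pair $(e, c_e)$ with $c_e \sqsupseteq \sqcup Y_e$. To produce the required premise $\eprog \vvdashcall \apps{\identifier{f}}{u_1}{u_n}\,e \too c_e$, I would gather exactly those one-argument-longer subtrees of the various $T_j$ whose final argument $u$ satisfies $e \sqsupseteq \sqcup\{u\}$, check via Lemma~\ref{lem:consistency} that this sub-family is consistent, and apply the induction hypothesis to it with target arguments $u_1,\dots,u_n,e$ --- legitimate precisely because the hypothesis permits enlarging arguments, here raising the last argument from $\sqcup\{u\}$ up to $e$. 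Lemmas~\ref{lem:sqcupsup} and~\ref{lem:pintergreater} then certify that the resulting $c_e$ dominates $\sqcup Y_e$, and reassembling the pairs $(e,c_e)$ by [Value] gives a deterministic $c \sqsupseteq \sqcup\{o^{(j)}\}$. Feeding the resulting functional tree into Lemma~\ref{lem:usetree} (second clause) yields that Algorithm~\ref{alg:base} confirms $b$, completing the deterministic completeness argument.
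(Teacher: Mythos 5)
Your proposal follows essentially the same route as the paper's proof: the identical strengthened, set-indexed statement (a pairwise-consistent family of derivation trees with common left-hand side, target arguments allowed to be any deterministic extensional values $\sqsupseteq$ the supremum $\sqcup$ of the actual arguments), the same induction on the maximal depth of the family, the same appeal to Lemmas~\ref{lem:consistency}, \ref{lem:sqcupsup}, \ref{lem:pintergreater} and~\ref{lem:apply}, and the same treatment of the [Value] case by gathering, per deterministic key $e$, the one-argument-longer subtrees whose last argument $u$ satisfies $e \sqsupseteq \sqcup\{u\}$. The only deviation is that your induction hypothesis yields an existential output $c \sqsupseteq \sqcup\{o^{(j)}\}$ where the paper derives exactly $\sqcup\{o^{(j)}\}$; this is harmless for the final instantiation (data forces $c = b$), but it means Lemma~\ref{lem:apply} must be invoked with the argument condition relaxed from $u_i = \sqcup\{u_i^{(j)}\}$ to $u_i \sqsupseteq \sqcup\{u_i^{(j)}\}$ in the [Variable] and [Applied] cases --- a generalisation its proof supports verbatim via transitivity of $\sqsupseteq$, but worth stating explicitly.
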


\begin{proof}
Let $\eprog$ be deterministic (so also $\eprog$ is).
For the sake of a stronger induction hypothesis, it turns out to be
useful to use induction on \emph{sets} of derivation trees, rather
than a single tree.  Specifically, we prove the following statements:
\begin{enumerate}
\item Suppose $T_1,\dots,T_N$ are derivation trees, and there are
  fixed $\identifier{f},n$ such that each tree $T_j$ has a root
  $\eprog \vvdash \apps{\identifier{f}}{e^{(j)}_1}{e^{(j)}_n} \too
  o^{(j)}$, where $\consistent{e^{(j)}_i}{e^{(k)}_i}$ for all
  $1 \leq j,k \leq N$ and $1 \leq i \leq n$.
  Let $e_1,\dots,e_n$ be deterministic extensional values such
  that $e_i \sqsupseteq \sqcup \{ e^{(j)}_i \mid 1 \leq j \leq N \}$
  for $1 \leq i \leq n$.
  We can derive $\eprog \vvdashcall \apps{\identifier{f}}{e_1}{e_n}
  \too o := \sqcup \{ o^{(j)} \mid 1 \leq j \leq N \}$
  by a functional tree.
\item Suppose $T_1,\dots,T_N$ are derivation trees, and there is some
  fixed $s$ such that each tree $T_j$ has a root $\eprog,\eta^{(j)}
  \vvdash s \too o^{(j)}$, where $\consistent{\eta^{(j)}(x)}{\eta^{
  (k)}(x)}$ for all $1 \leq j,k \leq N$ and variables $x$ in the
  shared domain.  Let $\eta$ be an ext-environment on the same domain
  mapping to functional extensional values such that $\eta(x)
  \sqsupseteq \sqcup \{ \eta^{(j)}(x) \mid 1 \leq j \leq N \}$ for
  all $x$.
  Writing $o := \sqcup \{ o^{(j)} \mid 1 \leq j \leq N \}$, we can
  derive $\eprog,\eta \vvdash s \too o$ by a functional tree.
  (We assume that no sub-expression of $s$ has an if-then-else
  at the head of an application.)
\end{enumerate}
The first of these claims proves the lemma for $N = 1$: clearly data
expressions are self-consistent, and the only $o \sqsupseteq b =
\sqcup \{b\}$ is $b$ itself, so the claim says that the root can be
derived using a functional tree.

We prove the claims together by a shared induction on the maximum
depth of any $T_j$.  We start with the first claim.  There are two
cases:
\begin{itemize}
\item $n = \arity(\identifier{f})$: then for each $T_j$ there is a
  clause $\rho_j\colon \apps{\identifier{f}}{\ell_1}{\ell_n} = s$
  which imposes $\eta^{(j)}$ such that the immediate subtree of $T_j$
  is $\eprog,\eta^{(j)} \vvdash o^{(j)}$.

  Now, let $\ell : \atype$ be a linear pattern, $\eta$ an
  ext-environment and $e,u \in \interpret{\atype}$ be such that
  $\consistent{e}{u}$ and $\ell\eta = e$.  By a simple induction on
  the form of $\ell$ we find an ext-environment $\eta'$ on domain
  $\Var(\ell)$ such that $\ell\eta' = u$ and $\consistent{\eta(x)}{
  \eta'(x)}$. 
  
  Thus, the first matching clause $\rho_j$ is necessarily the same for
  all $T_j$, and we have $\consistent{\eta^{(j)}(x)}{\eta^{(k)}(x)}$
  for all $j,k,x$.
  For all $1 \leq i \leq n$ and $1 \leq j \leq N$, we have
  $e_i^{(j)} = \ell_i\eta^{(j)}$.  Another simple induction on
  $\ell_i$ proves that
  we can find $\eta$ with each $\eta(x) \sqsupseteq \sqcup \{ \eta^{(
  j)}(x) \mid 1 \leq j \leq N \}$ such that $e_i = \ell_i\eta$.
  
  The induction hypothesis gives $\eprog,\eta \vvdash \too o$, so
  $\apps{\identifier{f}}{e_1}{e_n} \too o$ by [Call].

\item $n < \arity(\identifier{f})$: each of the trees $T_j$ is derived
  by [Value].  Write $o = O_\atype$ and $o^{(j)} = O^{(j)}_\atype$ for
  $1 \leq j \leq N$.  We are done by [Value] if $\eprog \vvdashcall
  \app{\apps{\identifier{f}}{e_1}{e_n}}{e_{n+1}} \too o'$ for all
  $(e_{n+1},o') \in O$.

  Since $o = \sqcup \{ o^{(1)},\dots,o^{(N)} \}$, we can write
  $o' = \sqcup Y_{e_{n+1}}$ and identify a non-empty set
  $\mathit{Pairs}_{e_{n+1}} = \{ (e,u) \in O^{(1)} \cup \dots \cup
  O^{(N)} \mid e_{n+1} \sqsupseteq \sqcup \{ e \} \}$ such that
  $Y_{e_{n+1}} = \{ u \mid (e,u) \in \mathit{Pairs}_{e_{n+1}} \}$.

  For each element $(e,u)$ of $\mathit{Pairs}_C$, some $T_j$ has a
  subtree with root $\eprog \vvdashcall
  \app{\apps{\identifier{f}}{e_1^{(j)}}{e_n^{(j)}}}{e} \too u$.  Let
  $\mathit{Trees}_{e_{n+1}}$ be the corresponding set of trees, and
  note that all trees in $\mathit{Trees}_{e_{n+1}}$ have a strictly
  smaller depth than the $T_j$ they originate from, so certainly
  smaller than the maximum depth.

  Now, for $1 \leq i \leq n+1$, let $\mathit{Args}_i := \{$ argument
  $i$ of the root of $T \mid T \in \mathit{Trees}_{e_{n+1}} \}$.
  We observe that:
  \begin{itemize}
  \item for $1 \leq i \leq n$: $e_i \sqsupseteq \sqcup
    \mathit{Args}_i$: we have $\mathit{Args}_i \subseteq \{ e_i^{(1)},
    \dots, e_i^{(N)} \}$, so by Lemma~\ref{lem:sqcupsup},
    $e_i \sqsupseteq \sqcup \{ e_i^{(j)} \mid 1 \leq j \leq N \}
    \sqsupseteq \sqcup \mathit{Args}_i$, which suffices by
    transitivity (Lemma~\ref{lem:sqsuptrans});
  \item $e_{n+1} \sqsupseteq \sqcup \mathit{Args}_{j+1}$: $e_{n+1}
    \sqsupseteq \{e\}$ for all $e \in \mathit{Args}_{j+1}$, so this
    is given by Lemma~\ref{lem:pintergreater}.
  \item $o' = \sqcup Y_{e_{n+1}} = \sqcup \{$right-hand sides of the
    roots of $\mathit{Trees}_{e_{n+1}} \}$.
  \end{itemize}
  Therefore $\eprog \vvdashcall \app{\apps{\identifier{f}}{e_1}{e_n
  }}{e_{n+1}} \too o'$ by the induction hypothesis as required.
\end{itemize}
For the second case, consider the form of $s$.
\begin{itemize}
\item $s = \apps{\identifier{c}}{s_1}{s_m}$ with $\identifier{c} \in
  \Constructors$: then each $o^{(i)} = s\eta^{(i)} \in \B$, so
  $o = o^{(1)} = \dots = o^{(N)}$ and---since the variables in $s$
  all have order $0$---we have $\eta(x) = \eta^{(1)}(x) = \dots =
  \eta^{(N)}(x)$ for all relevant $x$.  Thus also $o = s\eta$ and
  we complete by [Constructor].
\item $s = (s_1,s_2)$; each tree $T_j$ has two immediate subtrees:
  one with root $\eprog,\eta^{(j)} \vvdash s_1 \too o^{(j)}_1$ and
  one with root $\eprog,\eta^{(j)} \vvdash s_2 \too o^{(j)}_2$, where
  $o^{(j)} = (o^{(j)}_1,o^{(j)}_2)$.

  We can write $o = (o_1,o_2)$ where $o_1 = \sqcup \{ o^{(j)}_1
  \mid 1 \leq j \leq N \}$ and $o_2 = \sqcup \{ o^{(j)}_2 \mid 1 \leq
  j \leq N \}$, and as the induction hypothesis for both subtrees
  gives $\eprog,\eta \vvdash s_1 \too o_1$ and $\eprog,\eta \vvdash
  s_2 \too o_2$ respectively, we conclude $\eprog,\eta \vvdash s \too
  o$ by [Pair].
\item $s = \ifte{s_1}{s_2}{s_3}$: for each tree $T_j$, the first
  subtree has the form $\eprog,\eta^{(j)} \vvdash s_1 \too \strue$
  or $\eprog,\eta^{(j)} \vvdash s_1 \too \sfalse$; by consistency of
  derivation trees (Lemma~\ref{lem:consistency}), either $\strue$ or
  $\sfalse$ is chosen for \emph{all} these subtrees.  We assume the
  former; the latter case is symmetric.

  By the induction hypothesis for this first subtree, $\eprog,\eta
  \vvdash s_1 \too \strue = \sqcup \{ \strue,\dots,\strue \}$ as
  well.

  The second immediate subtree of all trees $T_j$ has a root of the
  form $\eprog,\eta^{(j)} \vvdash s_2 \too o^{(j)}$.  By the
  induction hypothesis for this second subtree, $\eprog,\eta \vvdash
  s_2 \too o$.
  Thus we conclude $\eprog,\eta \vvdash s \too o$ by [Cond-True].
\item $s = \apps{x}{s_1}{s_n}$ with $x \in \V$: each of the trees
  $T_j$ has $n$ subtrees of the form $\eprog,\eta^{(j)} \vvdash s_1
  \too e_i^{(j)}$ (for $1 \leq i \leq n$); by the induction
  hypothesis, we have $\eprog,\eta \vvdash s_i \too e_i$, where
  $e_i = \sqcup \{ e_i^{(j)} \mid 1 \leq j \leq N \}$.  But then:
  \begin{itemize}
  \item $\pinterpret{\atype_1 \arrtype \dots \arrtype \atype_n \arrtype
    \btype} \ni \eta(x) \sqsupseteq \sqcup \{ \eta^{(1)}(x),\dots,
    \eta^{(N)}(x) \}$;
  \item $\pinterpret{\atype_i} \ni e_i = \sqcup \{ e_i^{(1)},\dots,
    e_i^{(N)} \}$ for $1 \leq i \leq n$;
  \item there are $u^{(j)}$ such that $\eta^{(j)}(e_1^{(j)},\dots,
    e_n^{(j)}) \ni u^{(j)} \sqsupseteq o^{(j)}$ for $1 \leq j \leq
    N$;
  \item $o = \sqcup \{ o^{(1)},\dots,o^{(N)} \}$.
  \end{itemize}
  By Lemma~\ref{lem:apply}, there exists $u \in o(e_1,\dots,e_n)$
  such that $u \sqsupseteq o$.  We conclude $\eprog,\eta \vvdash s
  \too o$ by [Variable].
\item $s = \apps{\identifier{f}}{s_1}{s_n}$ with $n \leq \arity(
  \identifier{f})$: then necessarily each $\eprog,\eta^{(j)} \vvdash
  s \too o^{(j)}$ follows by [Func].  Thus, for $1 \leq j \leq N$
  there are $e_1^{(j)},\dots,e_n^{(j)}$ such that:
  \begin{itemize}
  \item $\eprog,\eta^{(j)} \vvdash s_i \too e_i^{(j)}$ for $1 \leq i
    \leq n$ and
  \item $\eprog \vvdashcall \apps{\identifier{f}}{e_1^{(j)}}{e_n^{(j)
    }} \too o^{(j)}$.
  \end{itemize}
  Now, clearly each set $\{ e_i^{(j)} \mid 1 \leq j \leq N \}$ is
  consistent by the simple fact that there are derivation trees for
  them: this is the result of Lemma~\ref{lem:consistency}.
  Defining $e_i := \sqcup \{ e_i^{(1)},\dots,e_i^{(N)} \}$ for $1 \leq
  j \leq N$, the induction hypothesis gives that $\eprog,\eta \vvdash
  s_i \too e_i$, and that $\eprog \vvdashcall \apps{\identifier{f}}{
  e_1}{e_n} \too o$, all by functional trees.  We complete with
  [Func].
\item $s = \apps{\identifier{f}}{s_1}{s_n}$ with $n > k :=
  \arity(\identifier{f})$: then there are
  $e_1^{(j)},\dots,e_n^{(j)},u^{(j)},c^{(j)}$ such that for all $1
  \leq j \leq N$:
  \begin{itemize}
  \item tree $T_j$ has subtrees $\eprog,\eta^{(j)} \vvdash s_i \too
    e_i^{(j)}$ for $1 \leq i \leq n$;
  \item tree $T_j$ has a subtree $\vvdashcall \apps{\identifier{f}}{
    e_1^{(j)}}{e_k^{(j)}} \too u^{(j)}$;
  \item $u^{(j)}(e_{k+1}^{(j)},\dots,e_n^{(j)}) \ni c^{(j)}
    \sqsupseteq o^{(i)}$.
  \end{itemize}
  Therefore, by the induction hypothesis and Lemma~\ref{lem:apply},
  we can identify $e_1,\dots,e_n,u,c$ such that:
  \begin{itemize}
  \item $e_i = \sqcup \{e_i^{(1)},\dots,e_i^{(N)}\}$ and
    $\eprog,\eta \vvdash s_i \too e_i$ for $1 \leq i \leq n$;
  \item $\eprog \vvdashcall \apps{\identifier{f}}{e_1}{e_k} \too u =
    \sqcup \{ u^{(1)},\dots,u^{(N)} \}$;
  \item $u(e_{k+1},\dots,e_n) \ni c \sqsupseteq o$.
  \end{itemize}
  Therefore $\eprog,\eta \vvdash s \too o$ by [Apply].
  \qed
\end{itemize}
\end{proof}

With this, the one remaining lemma---completeness of
Algorithm~\ref{alg:base}---is trivial.

\oldcounter{\dcompletenesslem}
\begin{lemma}
If $\progresult$ and $\prog$ is deterministic, then
Algorithm~\ref{alg:base} returns a set $A \cup \{b\}$.
\end{lemma}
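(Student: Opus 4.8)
The plan is to assemble the statement from the three preceding lemmas, with essentially no new work: all of the genuine difficulty has already been discharged in the construction of functional trees. First I would apply Lemma~\ref{lem:maketree} to the hypothesis $\progresult$. This yields a derivation $\eprog \vvdashcall \apps{\symb{start}}{d_1}{d_M} \too b$ in the extensional-value calculus of Figure~\ref{fig:extensional}. A priori, however, this derivation may use arbitrary \emph{non-deterministic} extensional values, so it need not correspond directly to a run of Algorithm~\ref{alg:base}, which only manipulates deterministic extensional values drawn from the sets $\pinterpret{\atype}$.

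The second step is to upgrade this derivation so that it stays inside the deterministic extensional values. Here I would invoke Lemma~\ref{lem:dettree}, whose hypothesis---that $\prog$, and hence $\eprog$, is deterministic---is exactly the assumption we are given. That lemma guarantees that $\eprog \vvdashcall \apps{\symb{start}}{d_1}{d_M} \too b$ admits a \emph{functional tree}: a derivation in which every extensional value lies in some $\pinterpret{\atype}$.

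Finally, I would apply the second part of Lemma~\ref{lem:usetree}, which states that a functional-tree derivation of $\eprog \vvdashcall \apps{\symb{start}}{d_1}{d_M} \too b$ is mirrored step-by-step by Algorithm~\ref{alg:base}; concretely, the statement $\vdash \apps{\symb{start}}{d_1}{d_M} \leadsto b$ is eventually confirmed, so the completion step returns a set containing $b$, i.e.\ a set of the form $A \cup \{b\}$. This is the desired conclusion.

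I expect no real obstacle at the level of this lemma itself: once the preparatory results are in place, the argument is a three-line chaining of Lemmas~\ref{lem:maketree}, \ref{lem:dettree}, and~\ref{lem:usetree}. The only care required is to check that each lemma's hypotheses are satisfied literally---in particular, that the determinism of $\prog$ feeds into Lemma~\ref{lem:dettree}, and that the functional tree produced there is precisely what the second part of Lemma~\ref{lem:usetree} consumes. The substantive work, by contrast, sits entirely inside Lemma~\ref{lem:dettree}, where a non-deterministic derivation is turned into a deterministic one using the supremum operation $\sqcup$ on consistent sets of extensional values together with its monotonicity and application properties (Lemmas~\ref{lem:sqsuppreserve}--\ref{lem:apply}) and the consistency of whole derivation trees (Lemma~\ref{lem:consistency}); none of that needs to be re-examined here.
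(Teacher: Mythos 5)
Your proposal is correct and coincides with the paper's own proof: the paper likewise chains Lemma~\ref{lem:maketree} (to get $\eprog \vvdashcall \apps{\symb{start}}{d_1}{d_M} \too b$), Lemma~\ref{lem:dettree} (using determinism to obtain a functional tree), and the second part of Lemma~\ref{lem:usetree} (to conclude that Algorithm~\ref{alg:base} returns a set containing $b$). Your observation that all substantive work lives in Lemma~\ref{lem:dettree} matches the paper's structure exactly, so nothing needs re-examination here.
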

\startappendixcounters

\begin{proof}
Suppose $\progresult$ for a deterministic program $\prog$.
By Lemma~\ref{lem:maketree}, we can derive $\vvdashcall
\apps{\symb{start}}{d_1}{d_M} \too b$.  By Lemma~\ref{lem:dettree},
this can be derived by a tree which only uses deterministic
extensional values.  By Lemma~\ref{lem:usetree},
Algorithm~\ref{alg:base} therefore returns a set containing $b$.
\qed
\end{proof}

\section{Arrow depth and unitary variables (Section~\ref{sec:nopartialvar})}
\label{app:arrowdepth}

In Sections~\ref{sec:deterministic} and~\ref{sec:elementary}, we have
demonstrated two things:
\begin{itemize}
\item that cons-free deterministic programs of data order $K$ characterise
  $\exptime{K}$
\item that cons-free non-deterministic programs of data order $K > 0$
  characterise $\elementary$
\end{itemize}

However, most of the proof effort has gone towards the complexity and
correctness of the simulation algorithm---arguably the least interesting side,
since the characterisation result for deterministic programs is a natural
extension of an existing result of~\cite{jon:01}, while the surprising result
for non-deterministic programs is that we get \emph{at least} $\elementary$,
not that we cannot go beyond.

The efforts pay off, however, when we consider what is needed to recover the
original hierarchy.  The proofs require very little adaptation to obtain
Theorems~\ref{thm:arrowdepth} and~\ref{thm:unitary}.  We start with
Theorem~\ref{thm:arrowdepth}, which we split up in its two parts.

\begin{lemma}\label{lem:arrowdepth:simulate}
Every decision problem in $\exptime{K}$ is accepted by a deterministic
cons-free program with data arrow depth $K$.
\end{lemma}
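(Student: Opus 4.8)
The plan is to reuse verbatim the program constructed in the proof of Lemma~\ref{lem:deterministic:simulate} and merely re-inspect the types that occur in it. Recall that for a problem decided by a deterministic Turing machine running in $\timecomp{\exp_2^K(a\cdot n^b)}$, that proof plugs the counting module $C_{\symb{e}[\dots[\symb{e}[\pol]]]}$---obtained by applying the exponential step of Lemma~\ref{lem:module:exp} exactly $K$ times to the polynomial module $C_\pol$ of Lemma~\ref{lem:module:pol}---into the base simulation of Figure~\ref{fig:machine}. This program is deterministic and cons-free, so all that remains is to show that every variable occurring in it is typed with a type of arrow depth $\leq K$, which is exactly what ``data arrow depth $K$'' demands.

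First I would establish the key numerical coincidence: for each type $\atype$ occurring as the number type of one of the modules in the chain, $\mathit{depth}(\atype) = \typeorder{\atype}$. This goes by induction on the number of exponential steps. The base type $\numtype[\pol] = \bits^{b+1}$ is a product of sorts, so $\mathit{depth}(\numtype[\pol]) = 0 = \typeorder{\numtype[\pol]}$. Each exponential step replaces the current number type $\numtype$ by $\numtype[\epi] = \numtype \arrtype \bool$; here $\typeorder{\numtype \arrtype \bool} = \typeorder{\numtype}+1$, and since $\mathit{depth}(\bool)=0$ also $\mathit{depth}(\numtype \arrtype \bool) = \mathit{depth}(\numtype)+1$, so the two quantities rise in lockstep. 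After $K$ steps the number type therefore has both type order and arrow depth equal to $K$.

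Next I would run through the variable types actually appearing. In the base simulation of Figure~\ref{fig:machine} every variable is a list $cs$, a bit/symbol/state/direction, or a number $\numrep{n},\numrep{p}$; the first groups have arrow depth $0$ and the numbers have arrow depth $K$ by the previous paragraph. In the counting-module clauses---the polynomial clauses of Lemma~\ref{lem:module:pol}, the $K$ copies of the clauses of Figure~\ref{fig:epi}, and the successor and equality helpers of Section~\ref{subsec:counting}---every variable is typed either with the ``current'' number type of its level or with the preceding one. Since arrow depth is monotone along the chain, the maximum arrow depth of any such variable is attained at the outermost ($K$-th) exponential step and equals $K$. Hence all variables have types of arrow depth $\leq K$, and the program witnesses that the decision problem is accepted by a deterministic cons-free program of data arrow depth $K$.

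The step I expect to require the most care is the exhaustive level-by-level check of the variable types in the counting modules: one must confirm that no clause (in particular the helper symbols appearing in Figure~\ref{fig:epi}) introduces a variable whose type has arrow depth strictly above the number type of its level. This is routine but tedious; the point is that, thanks to the coincidence $\mathit{depth} = \typeorder$ proved above, the bookkeeping on type orders already performed for Lemma~\ref{lem:deterministic:simulate} transfers unchanged to arrow depth, so no new estimate is needed.
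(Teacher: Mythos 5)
Your proposal is correct and takes essentially the same route as the paper: the paper's proof likewise observes that type order and arrow depth coincide in the base program of Figure~\ref{fig:machine} and in the counting modules of Lemmas~\ref{lem:module:pol} and~\ref{lem:module:exp}, and then simply copies the proof of Lemma~\ref{lem:deterministic:simulate}. Your explicit induction showing $\mathit{depth}(\numtype) = \typeorder{\numtype}$ rises in lockstep along the chain of exponential steps is precisely the bookkeeping the paper leaves implicit.
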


Here, a program has \emph{data arrow depth $K$} if all variables are typed with
a type of arrow depth $K$.

\begin{proof}
Both Lemma~\ref{lem:module:pol} and~\ref{lem:module:exp} also apply if ``data
order $K$'' is replaced by ``data arrow depth $K$''.  With this observation, we
may copy the proof of Lemma~\ref{lem:deterministic:simulate}.
\qed
\end{proof}

\begin{lemma}\label{lem:arrowdepth:algorithm}
Every decision problem accepted by a deterministic cons-free program
$\prog$ with data arrow depth $K$ is in $\exptime{K}$.
\end{lemma}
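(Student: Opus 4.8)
The plan is to reuse the simulation Algorithm~\ref{alg:general}, whose correctness (Lemma~\ref{lem:generalcorrectness}) and complexity analysis are already in place, and to feed it a version of the program whose sub-expression types are all controlled by \emph{arrow depth} rather than type order. Concretely, given a deterministic cons-free program $\prog$ with data arrow depth $K$, I would first establish an arrow-depth analogue of Lemma~\ref{lem:proper}, producing an equivalent program $\eprog$ in which every defined symbol $\identifier{f} : \atype_1 \arrtype \dots \arrtype \atype_m \arrtype \asortorpair$ has $\mathit{depth}(\atype_i) \le K$ and $\mathit{depth}(\asortorpair) \le K$, and in which every sub-expression of every right-hand side has arrow depth $\le K$. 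Running Algorithm~\ref{alg:general} on $\eprog$ then only ever manipulates non-deterministic extensional values drawn from $\interpret{\atype}$ for types of arrow depth $\le K$, which is exactly the regime bounded by Lemma~\ref{lem:interpretcard}. Note that although $\prog$ is deterministic, I would deliberately use the general Algorithm~\ref{alg:general} with the relational values $\interpret{\atype}$ (Definition~\ref{def:interpret}) rather than Algorithm~\ref{alg:base}: arrow depth does \emph{not} bound type order, so the deterministic extensional values $\pinterpret{\atype}$ could form a tower of exponentials of height exceeding $K$ (Lemma~\ref{lem:pinterpretcard}), whereas $\Card(\interpret{\atype})$ is governed by arrow depth. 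Correctness for the deterministic program is nonetheless guaranteed by Lemma~\ref{lem:generalcorrectness}.

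The arrow-depth analogue of Lemma~\ref{lem:proper} I would obtain essentially for free from the parametric development in Appendix~\ref{app:properness}. There, Definition~\ref{def:propertype} isolates a notion of \emph{proper} type, and every lemma from Lemma~\ref{lem:ordexpress} onward is proved using only two structural facts about ``proper'': that order-$0$ types are proper, and that a product $\atype \times \btype$ is proper iff both $\atype$ and $\btype$ are. The predicate $\mathit{depth}(\atype) \le K$ satisfies both: a type of arrow depth $0$ is a product of sorts and so has depth $0 \le K$, while $\mathit{depth}(\atype \times \btype) = \max(\mathit{depth}(\atype), \mathit{depth}(\btype))$ makes bounding the product equivalent to bounding both components. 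The one step that uses more than a black-box appeal to these facts, the contradiction in Lemma~\ref{lem:ordexpress} (``at least one immediate subtype of an improper product type is also improper''), again rests only on this $\max$ characterisation and carries over verbatim. Reinterpreting ``proper'' as ``arrow depth $\le K$'' throughout Appendix~\ref{app:properness} therefore yields the desired $\eprog$, together with the equivalence of $\progresult$ and $\progeval{\eprog}{d_1,\dots,d_M} \mapsto b$.

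With $\eprog$ in hand I would invoke the extension of Lemma~\ref{lem:complexitycore} to Algorithm~\ref{alg:general}. Let $\Sigma$ be the finite, input-independent set of types of arrow depth $\le K$ occurring as an argument or output type in $\eprog$, and let $L$ bound the length of each $\atype \in \Sigma$. Since $\Card(\B_{d_1,\dots,d_M}^\prog)$ is linear in the input size $n$, I can pick $N$ linear in $n$ with $\Card(\B_{d_1,\dots,d_M}^\prog) < N$; Lemma~\ref{lem:interpretcard} then gives $\Card(\interpret{\atype}) < \exp_2^K(N^L)$ and an $\sqsupseteq$-test cost of at most $\exp_2^K(N^{(L+1)^3})$ comparisons for every $\atype \in \Sigma$. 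Taking $F(n) := \exp_2^K(N^{(L+1)^3})$ bounds both quantities simultaneously, so the meta-lemma yields a running time of $\timecomp{a \cdot F(n)^b}$ for some constants $a,b$. Finally, the inequality $\exp_2^K(X) \cdot \exp_2^K(Y) \le \exp_2^K(X \cdot Y)$ of Lemma~\ref{lem:expmultiply} collapses $F(n)^b$ back to $\exp_2^K(\text{poly}(N))$, and hence to $\exp_2^K(a' n^{b'})$. Membership is then decided by checking whether $\strue$ lies in the returned set, which a deterministic machine performs while always halting within $\timecomp{\exp_2^K(a' n^{b'})}$; thus the problem lies in $\exptime{K}$.

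The main obstacle here is bookkeeping rather than mathematical depth: the heavy machinery---correctness of Algorithm~\ref{alg:general}, the arrow-depth cardinality bounds of Lemma~\ref{lem:interpretcard}, and the complexity meta-lemma Lemma~\ref{lem:complexitycore}---is already available. The only genuine work is to confirm that the ``proper''-parametric proofs of Appendix~\ref{app:properness} truly depend on nothing beyond the two structural properties above, so that the eta-expansion (Lemma~\ref{lem:increasearity}) and dead-code-removal transformations remain valid when ``proper'' means ``arrow depth $\le K$'', and to verify that the chosen $F(n)$ composes correctly under the $\exp_2^K$ arithmetic. Neither step introduces new difficulties, which is precisely why the running text can describe this direction as a trivial adaptation.
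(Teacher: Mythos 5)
Your proposal is correct and takes essentially the same route as the paper: reinterpret ``proper'' in Appendix~\ref{app:properness} as ``arrow depth $\leq K$'' (verifying the two structural properties it relies on), obtain the arrow-depth analogue of Lemma~\ref{lem:proper}, run Algorithm~\ref{alg:general} on the transformed program, and combine Lemmas~\ref{lem:complexitycore} and~\ref{lem:interpretcard} for the time bound. One parenthetical is factually backwards, though harmlessly so: a direct induction on the type definitions gives $\typeorder{\atype} \leq \mathit{depth}(\atype)$ for every type $\atype$, so arrow depth \emph{does} bound type order---it is type order that fails to bound arrow depth---and hence for this deterministic statement Algorithm~\ref{alg:base} with $\pinterpret{\atype}$ would have been adequately bounded as well; your choice of Algorithm~\ref{alg:general} nonetheless matches the paper, which needs the relational values for the non-deterministic Theorem~\ref{thm:arrowdepth}. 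The only detail you elide, and which the paper makes explicit, is that step~\ref{alg:prepare:statements} of the algorithm must itself be changed to admit statements by $\mathit{depth}(\atype_{n+1} \arrtype \dots \arrtype \atype_m \arrtype \asortorpair') \leq K$ rather than by $\typeorder{\cdot} \leq K$, together with the (easy) check that discarding the remaining statements does not harm completeness, the one sensitive spot being step~\ref{alg:iterate:rhs:func}.
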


\begin{proof}
Defining a type to be ``proper'' if its arrow depth is smaller than
$K$, types of order $0$ are proper and $\atype \times \btype$ is
proper if{f} both $\atype$ and $\btype$ are.  Thereofre, all the
proofs in Appendix~\ref{app:properness} extend to arrow depth, and we
immediately obtain a variation of Lemma~\ref{lem:proper} where data
order is replaced by data arrow depth.  Note that
Lemma~\ref{lem:proper} considers the transformation from $\prog$ to
$\eprog$ in both algorithms.

Now, in Algorithm~\ref{alg:general}, alter step~\ref{alg:prepare:eprog} by
using the transformation which considers arrow depth rather than data order,
and in step~\ref{alg:prepare:statements}, only include statements
$\apps{\identifier{f}}{e_1}{e_n} \leadsto o$ if $\mathit{depth}(\atype_{n+1}
\arrtype \dots \arrtype \atype_m \arrtype \asortorpair') \leq K$ (rather than
considering $\typeorder{}$).  This does not affect correctness of the algorithm,
as is easily checked by going over the proofs of
Appendix~\ref{app:correctness}: the only place in the algorithm where it may
be important whether any statements $\apps{\identifier{f}}{e_1}{e_n} \leadsto
o$ were removed is step~\ref{alg:iterate:rhs:func}, but here only calls with an
output arrow depth $\leq K$ may be used (due to the preparation
step~\ref{alg:prepare:eprog} and the altered Lemma~\ref{lem:proper}).

Moreover, by the combination of Lemma~\ref{lem:complexitycore} (which also
applies to the thus modified algorithm) and Lemma~\ref{lem:interpretcard}, this
altered algorithm finds the possible results of $\prog$ on given input in
$\timecomp{a \cdot \exp_2^K(n^b)}$ for some $a,b$.  Therefore, any
decision problem accepted by $\prog$ is in $\exptime{K}$.
\qed
\end{proof}

We thus conclude:

\setcounter{theorem}{\arrowdepththm}
\begin{theorem}
The class of non-deterministic cons-free programs where all
variables are typed with a type of arrow depth $K$ characterises
$\exptime{K}$.
\end{theorem}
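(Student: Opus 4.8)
The plan is to prove the characterisation as the two inclusions $\exptime{K} \subseteq \mathcal{C}_K$ and $\mathcal{C}_K \subseteq \exptime{K}$, where $\mathcal{C}_K$ is the class of sets accepted by non-deterministic cons-free programs whose variables all carry a type of arrow depth at most $K$. Both directions will reuse, essentially verbatim, the machinery already developed for the data-order characterisation; the single structural fact that makes this possible is that arrow depth behaves with respect to the ``proper'' framework of Appendix~\ref{app:properness} and the cardinality estimate of Lemma~\ref{lem:interpretcard} exactly as type order did. Concretely, I expect the theorem to reduce to the combination of the two preceding lemmas (Lemmas~\ref{lem:arrowdepth:simulate} and~\ref{lem:arrowdepth:algorithm}).

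For the easy inclusion $\exptime{K} \subseteq \mathcal{C}_K$ I would first observe that in the simulation of Figure~\ref{fig:machine} and in both counting-module constructions (Lemmas~\ref{lem:module:pol} and~\ref{lem:module:exp}) the arrow depth of every type occurring coincides with its type order: the polynomial module $C_\pol$ uses only depth-$0$ types, and the exponential step forms $\numtype[\epi] := \numtype \arrtype \bool$, raising both order and depth by precisely one. Hence the deterministic program produced in the proof of Lemma~\ref{lem:deterministic:simulate} already has all variables typed at arrow depth $\leq K$, and re-running that proof unchanged places every $\exptime{K}$ problem in $\mathcal{C}_K$; determinism is harmless, as deterministic programs are a fortiori non-deterministic.

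For the reverse inclusion I would proceed in three steps. First, re-instantiate the parametric notion of ``proper'' in Appendix~\ref{app:properness} by declaring $\atype$ proper when $\mathit{depth}(\atype) \leq K$; the two requirements imposed there hold, since order-$0$ types have depth $0$ and $\mathit{depth}(\atype \times \btype) = \max(\mathit{depth}(\atype),\mathit{depth}(\btype))$ makes a product proper iff both factors are. Thus every lemma of that appendix, in particular Lemma~\ref{lem:proper}, transfers, yielding a program $\eprog$ all of whose sub-expressions have arrow depth $\leq K$. Second, alter the preparation step~\ref{alg:prepare:statements} of Algorithm~\ref{alg:general} to admit a statement $\apps{\identifier{f}}{e_1}{e_n} \leadsto o$ exactly when $\mathit{depth}(\atype_{n+1} \arrtype \dots \arrtype \asortorpair') \leq K$, rather than filtering by $\typeorder{}$. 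Third, confirm that the correctness proofs of Appendix~\ref{app:correctness} survive: the only point where dropping statements could matter is step~\ref{alg:iterate:rhs:func}, but there only calls whose output type has arrow depth $\leq K$ are ever required, and these persist by the arrow-depth version of Lemma~\ref{lem:proper}. The complexity bound then follows by combining Lemma~\ref{lem:complexitycore} with Lemma~\ref{lem:interpretcard}: since the latter bounds $\Card(\interpret{\atype})$ by $\exp_2^K(N^L)$ in terms of arrow depth, the modified algorithm runs in $\timecomp{\exp_2^K(a \cdot n^b)}$, giving $\mathcal{C}_K \subseteq \exptime{K}$.

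The main obstacle I anticipate is not any single calculation but the systematic bookkeeping of the third step: one must verify that \emph{every} appeal to Lemma~\ref{lem:proper}, to the matching, soundness and completeness lemmas, and to the cardinality estimate remains valid when ``type order'' is uniformly replaced by ``arrow depth''. This is exactly why Appendix~\ref{app:properness} was stated parametrically and why Lemma~\ref{lem:interpretcard} is phrased in terms of $\mathit{depth}$ rather than $\typeorder{}$ — the non-deterministic extensional values $\interpret{\atype}$ stratify naturally by arrow depth, each functional layer contributing one exponential because arbitrary relations (not merely partial functions, as in $\pinterpret{\atype}$) are collected. Once this correspondence is checked, the theorem is immediate from Lemmas~\ref{lem:arrowdepth:simulate} and~\ref{lem:arrowdepth:algorithm}.
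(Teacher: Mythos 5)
Your proposal is correct and matches the paper's own argument essentially step for step: the same split into the two inclusions (the paper's Lemmas~\ref{lem:arrowdepth:simulate} and~\ref{lem:arrowdepth:algorithm}), the same observation that order and arrow depth coincide in Figure~\ref{fig:machine} and the counting modules, the same re-instantiation of the parametric ``proper'' notion of Appendix~\ref{app:properness} by $\mathit{depth}(\atype) \leq K$, the same modification of steps~\ref{alg:prepare:eprog} and~\ref{alg:prepare:statements} of Algorithm~\ref{alg:general} with step~\ref{alg:iterate:rhs:func} as the only point needing re-inspection, and the same complexity appeal to Lemmas~\ref{lem:complexitycore} and~\ref{lem:interpretcard}. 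No gaps to report.
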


\begin{proof}
By the combination of Lemmas~\ref{lem:arrowdepth:simulate}
and~\ref{lem:arrowdepth:algorithm}.
\qed
\end{proof}

We turn to Theorem~\ref{thm:unitary}, which considers programs with unitary
variables.  Again, one direction---the minimum power of such programs---is
quite simple:

\begin{lemma}\label{lem:unitary:simulate}
Every decision problem in $\exptime{K}$ is accepted by a deterministic
cons-free program with data arrow depth $K$ and unitary variables.
\end{lemma}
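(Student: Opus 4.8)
The plan is to reduce this to the already-established simulation result for programs with data arrow depth $K$, namely Lemma~\ref{lem:arrowdepth:simulate}, by checking that the concrete programs used in that construction in fact satisfy the unitary-variable restriction. Recall that Lemma~\ref{lem:arrowdepth:simulate} is proved by re-running the proof of Lemma~\ref{lem:deterministic:simulate}: we plug a suitable counting module into the Turing Machine simulation of Figure~\ref{fig:machine}, where the required module is obtained from $C_\pol$ (Lemma~\ref{lem:module:pol}) by applying the exponential step of Lemma~\ref{lem:module:exp} $K$ times. So the entire simulation consists of exactly three ingredients: the base simulation program of Figure~\ref{fig:machine}, the polynomial counting module $C_\pol$, and the exponential-step clauses $\prog_\epi$ of Figure~\ref{fig:epi}.

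First I would observe that the base program of Figure~\ref{fig:machine} has only variables of type order $0$: every variable there ($cs$, $x$, $y$, $z$, $xs$, and the number-representing variables $\numrep{n}$, $\numrep{p}$, $\numrep{i}$) ranges over data of order $0$, so each is assigned a sort or a product of sorts, which trivially satisfies Definition~\ref{def:unaryvar}. Likewise, the clauses of $C_\pol$ from the proof of Lemma~\ref{lem:module:pol} only ever bind variables to boolean lists or tuples of boolean lists, all of order $0$, so they too have unitary variables. The only place where genuinely higher-order variables appear is the exponential step $\prog_\epi$ of Figure~\ref{fig:epi}: here the variable $F$ has type $\numtype \arrtype \bool$, and there is also $k$ of type $\numtype$ (order $0$) and $cs$, $b$, $x$ of order $0$. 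The key point is that $F : \numtype \arrtype \bool$ has exactly the shape $\atype \arrtype \asortorpair$ with $\typeorder{\asortorpair} = 0$ permitted by Definition~\ref{def:unaryvar}, provided $\asortorpair = \bool$ is a sort (it is) and $\numtype$ is the argument. Thus every variable occurring in these clauses is either of order $0$ or of the admissible form $\atype \arrtype \asortorpair$.

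The step I expect to require the most care is verifying that iterating Lemma~\ref{lem:module:exp} does not break the unitary-variable property. After one application we have $\numtype[\epi] = \numtype \arrtype \bool$; after a second application the new functional variable would have type $\numtype[\epi] \arrtype \bool = (\numtype \arrtype \bool) \arrtype \bool$. Its argument type $\numtype \arrtype \bool$ now has order $1$, but this is still fine: the restriction in Definition~\ref{def:unaryvar} only forbids partial application (i.e.\ variables of type $\atype_1 \arrtype \atype_2 \arrtype \asortorpair$ that must be applied one argument at a time), and here $F$ of type $(\numtype \arrtype \bool) \arrtype \bool$ is still of the form $\atype \arrtype \asortorpair$ with a single argument and output order $0$. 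So I would argue, by induction on the number of exponential steps, that at every level the sole higher-order variable $F$ has type $\numtype[\cdots] \arrtype \bool$, which is always a single arrow into $\bool$, hence unitary; all other variables stay of order $0$. This confirms that the whole simulation program constructed in the proof of Lemma~\ref{lem:deterministic:simulate}, adapted as in Lemma~\ref{lem:arrowdepth:simulate}, is a deterministic cons-free program with unitary variables, and (since in this construction data order and arrow depth coincide) with data arrow depth $K$.

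Having established these facts, the proof is short:

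\begin{proof}
By the proof of Lemma~\ref{lem:deterministic:simulate}, every decision problem in $\exptime{K}$ is accepted by the deterministic cons-free program obtained by combining the simulation of Figure~\ref{fig:machine} with the counting module $C_{\symb{e}[\dots[\symb{e}[\pol]]]}$ built from $C_\pol$ (Lemma~\ref{lem:module:pol}) by $K$ applications of the exponential step (Lemma~\ref{lem:module:exp}). We check that this program has unitary variables. In Figure~\ref{fig:machine} and in the clauses of $C_\pol$, every variable is assigned a type of order $0$ (a sort or a product of sorts), so Definition~\ref{def:unaryvar} is satisfied trivially. In the clauses of Figure~\ref{fig:epi}, the only variable not of order $0$ is $F$, whose type is $\numtype[\pi] \arrtype \bool$, of the admissible form $\atype \arrtype \asortorpair$ with $\typeorder{\bool} = 0$. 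Iterating the exponential step, the functional variable at level $j+1$ has type $\numtype[\pi_j] \arrtype \bool$, again a single arrow into the sort $\bool$, hence unitary; all remaining variables are of order $0$. Thus every variable in the resulting program is assigned either $\asortorpair$ or $\atype \arrtype \asortorpair$ with $\typeorder{\asortorpair} = 0$, so the program has unitary variables. As the counting modules used have data order equal to arrow depth $K$, the program moreover has data arrow depth $K$, and it is deterministic since $\choice$ is never used. Hence the decision problem is accepted by a deterministic cons-free program with data arrow depth $K$ and unitary variables.
\qed
\end{proof}
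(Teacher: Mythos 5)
Your proposal is correct and takes essentially the same route as the paper: the paper's proof likewise just observes that all variables in the counting modules of Lemmas~\ref{lem:module:pol} and~\ref{lem:module:exp} have a type that is either a sort (or order-$0$ product) or of the form $\atype \arrtype \bool$, so the simulation program of Lemma~\ref{lem:deterministic:simulate} is already unitary, and your verification that iterating the exponential step always yields a \emph{single} arrow into $\bool$ is exactly the point the paper relies on. One small imprecision: after the counting module is plugged in, the number variables of Figure~\ref{fig:machine} have type $\numtype$ of order up to $K$ rather than order $0$, but since each $\numtype$ level is itself of the unitary shape $\atype \arrtype \bool$, your argument covers them anyway.
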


\begin{proof}
All variables employed in both Lemma~\ref{lem:module:pol} and
Lemma~\ref{lem:module:exp} have a type that is either a sort, or has the
form $\atype \arrtype \bool$; thus, the simulation program is unitary, and we
may copy the proof of Lemma~\ref{lem:deterministic:simulate}.
\qed
\end{proof}

The second part of Theorem~\ref{thm:unitary} can once more be derived
using a variation of Algorithm~\ref{alg:general}.  However, here we
must be a little careful: where both data order and arrow depth are
\emph{recursive} properties, the property that a type is ``unitary''
(i.e., of the form $\asortorpair$ or $\atype \arrtype \asortorpair$
with $\typeorder{\asortorpair} = 0$) is not recursive.  Thus, a
unitary type of a fixed data order may still have an arbitrarily high
arrow depth.
We circumvent this problem by altering unused subtypes.

\begin{lemma}\label{lem:unitary:algorithm}
Every decision problem accepted by a deterministic cons-free program
$\prog$ with data order $K$ and unitary variables is in $\exptime{K}$.
\end{lemma}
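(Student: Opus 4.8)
The plan is to mirror the proof of Lemma~\ref{lem:arrowdepth:algorithm}, adapting it from arrow depth to the unitary-variable setting, with one extra preprocessing step to handle the non-recursive nature of the ``unitary'' property. The key difficulty, as the statement of the lemma and the surrounding text make explicit, is that a unitary type $\atype \arrtype \asortorpair$ may have an argument type $\atype$ of arbitrarily large arrow depth even when its type \emph{order} is bounded by $K$; so we cannot directly invoke the complexity bound of Lemma~\ref{lem:interpretcard}, which is stated in terms of arrow depth. The remedy the excerpt suggests (``We circumvent this problem by altering unused subtypes'') is to first transform $\prog$ into an equivalent program in which every (sub-)expression has a \emph{recursively unitary} type, exactly as defined in the proof sketch of Theorem~\ref{thm:unitary}: a type $\asortorpair$, or $\atype \arrtype \asortorpair$ with $\typeorder{\asortorpair} = 0$ and $\atype$ itself recursively unitary.

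First I would establish the transformation. Starting from a well-formed cons-free program with unitary variables of data order $K$, I would apply the machinery of Appendix~\ref{app:properness}, reusing the notion of ``proper'' with the \emph{recursively unitary} predicate in place of $\typeorder{\cdot} \leq K$. I must check that this predicate satisfies the two structural requirements imposed on ``proper'' in Definition~\ref{def:propertype}: that sorts (order-$0$ types) are recursively unitary, and that $\atype \times \btype$ is recursively unitary iff both components are. The second point needs a small remark, since the definition of recursively unitary above is phrased for arrow and base types; I would extend it so that a product is recursively unitary iff both factors are, which keeps the order bound and makes the predicate closed under the subtype relation used in Lemmas~\ref{lem:increasearity}--\ref{lem:properchangesworked}. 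With these two properties in hand, the entire chain of transformations from Appendix~\ref{app:properness} goes through verbatim, yielding a program $\eprog$, equivalent to $\prog$ on all inputs, in which every defined symbol and every right-hand-side sub-expression has a recursively unitary type.

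The payoff is that, once all types appearing in the algorithm are recursively unitary, data order and arrow depth \emph{coincide}: a recursively unitary type of order $K$ has arrow depth $K$, since each arrow $\atype \arrtype \asortorpair$ contributes $1$ to both measures simultaneously (the codomain $\asortorpair$ has order and depth $0$, and the domain $\atype$ is again recursively unitary, so the two recursions run in lockstep). I would state this coincidence as a short inductive observation. Given it, the bounded data order $K$ of $\eprog$ translates directly into bounded arrow depth $K$ for all the types in $\Sigma$ of Lemma~\ref{lem:complexitycore}, so I can run the arrow-depth-instrumented variant of Algorithm~\ref{alg:general} exactly as in the proof of Lemma~\ref{lem:arrowdepth:algorithm}: alter step~\ref{alg:prepare:eprog} to use the recursively-unitary transformation, and restrict the statements generated in step~\ref{alg:prepare:statements} to those whose relevant type has arrow depth $\leq K$. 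Correctness is unaffected for the same reason as in Lemma~\ref{lem:arrowdepth:algorithm} (the only sensitive point is step~\ref{alg:iterate:rhs:func}, where only calls of output arrow depth $\leq K$ can arise).

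Finally I would combine the two bounds. By Lemma~\ref{lem:interpretcard}, each $\Card(\interpret{\atype})$ for $\atype$ of arrow depth $\leq K$ is below $\exp_2^K(N^L)$, and $\sqsupseteq$-tests cost at most $\exp_2^K(N^{(L+1)^3})$; feeding these into Lemma~\ref{lem:complexitycore} gives that the modified algorithm runs in $\timecomp{\exp_2^K(a \cdot n^b)}$ for suitable $a,b$. Hence every decision problem accepted by $\prog$ is decided by a deterministic Turing machine within that time bound, so lies in $\exptime{K}$. I expect the main obstacle to be not the complexity accounting, which is inherited wholesale from the arrow-depth case, but the bookkeeping in the transformation step: verifying that the recursively-unitary predicate is genuinely closed under the subtype operations used throughout Appendix~\ref{app:properness} (in particular that increasing arities via Lemma~\ref{lem:increasearity} preserves it, since the fresh variable $x$ introduced there receives the domain type $\atype$, which must itself be recursively unitary for $x$ to be a legal unitary variable). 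Once that closure is confirmed, the reduction to Theorem~\ref{thm:arrowdepth} is immediate.
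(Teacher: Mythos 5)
Your step~1 contains a genuine gap. You claim the Appendix~\ref{app:properness} machinery runs ``verbatim'' with proper~$=$~recursively unitary, and you check the two structural closure properties noted in Definition~\ref{def:propertype}; but those are necessary, not sufficient. The machinery's standing hypothesis is that the \emph{input} program is proper, i.e.\ that all its variables already have proper types---and a unitary variable need not be recursively unitary: $G : (\bits \arrtype \bits \arrtype \bool) \arrtype \bool$ is a legal unitary variable of order $2$, yet its type is not recursively unitary because the domain $\bits \arrtype \bits \arrtype \bool$ is not. As a result, the analogue of Lemma~\ref{lem:ordexpress} is simply \emph{false} under your properness notion: a function can legitimately be called with an argument of unitary-but-not-recursively-unitary type, matched by exactly such a variable. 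This makes the $\bot$-replacement of Lemma~\ref{lem:properbot} unsound: in the unitary-variables program with clauses $\symb{h}\ G = \symb{h2}\ G$ and $\symb{h2}\ G = \strue$ (with $G$ typed as above), your transformation would replace $\symb{h2}\ G$ by $\bot_{\bool}$, since its argument $G$ has an improper type while $\symb{h2}\ G : \bool$ is proper---turning a terminating program into a diverging one; the symbol-removal step would likewise delete live code. No syntactic pruning can eliminate such variables while preserving semantics, so the reduction as you set it up fails at its first step.

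The paper's actual proof fills this hole with an idea your proposal is missing: it runs the Appendix~\ref{app:properness} transformations with the \emph{weaker} properness notion ``unitary of order $\leq K$'' (which the hypothesis on $\prog$ does satisfy, so all preconditions hold and the example above is left intact), and then closes the arrow-depth gap by a purely type-level step. One defines $\mathit{fixtype}(\atype_1 \arrtype \dots \arrtype \atype_n \arrtype \asortorpair) = \mathit{fixtype}(\atype_1) \arrtype \asortorpair$ for $\typeorder{\asortorpair} = 0$, re-annotates the \emph{unchanged} clauses of $\eprog$ with the collapsed assignments $\F',\Gamma'$, and proves by induction on right-hand sides---using that after the transformation every sub-expression has a unitary type of order $\leq K$, and that heads of applications are not sub-expressions---that the same syntax remains well-typed. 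Since the semantics of Figure~\ref{fig:evaluation} never consults types, the re-typed program computes exactly the same results, and it now literally has data arrow depth $\leq K$, so Lemma~\ref{lem:arrowdepth:algorithm} applies. Your observation that order and arrow depth coincide on recursively unitary types is correct (it is precisely why the $\mathit{fixtype}$d types have depth $\leq K$), and your complexity accounting via Lemmas~\ref{lem:complexitycore} and~\ref{lem:interpretcard} is fine; what must be added is this re-typing argument in place of the unsound recursively-unitary transformation.
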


\begin{proof}
Let a type $\atype$ be \emph{proper} if $\typeorder{\atype} \leq K$
and (a) $\typeorder{\atype} = 0$ or (b) $\atype$ has the form $\btype
\arrtype \asortorpair$ with $\typeorder{\asortorpair} = 0$ or (c)
$\atype$ has the form $\atype_1 \times \atype_2$ with both $\atype_1$
and $\atype_2$ proper.  This notion of properness has the properties
described in Definition~\ref{def:propertype}, so the proofs in
Appendix~\ref{app:properness} extend; $\prog$ can be transformed into
a program $\eprog$ with data order $K$ and unitary variables such
that for all clauses $\apps{\identifier{f}}{\ell_1}{\ell_k} = s$:
\pagebreak
all sub-expressions $t$ of $s$ have a unitary type with order $\leq
K$.

Now let $\mathit{fixtype}$ be defined as follows:
\begin{itemize}
\item $\mathit{fixtype}(\asort) = \asort$ for $\asort \in \Sorts$
\item $\mathit{fixtype}(\atype \times \btype) =
  \mathit{fixtype}(\atype) \times \mathit{fixtype}(\btype)$
\item $\mathit{fixtype}(\atype_1 \arrtype \dots \arrtype \atype_n
  \arrtype \asortorpair) = \mathit{fixtype}(\atype_1) \arrtype
  \asortorpair$ if $\typeorder{\asortorpair} = 0$ and $n > 0$.
\end{itemize}

Then clearly $\mathit{depth}(\mathit{fixtype}(\atype)) \leq K$
whenever $\typeorder{\atype} \leq K$.  Given type assignments $\F$
(for defined symbols and data constructors) and $\Gamma$ for
variables, let $\F' := \{ \identifier{f} : \mathit{fixtype}(\atype_1)
\arrtype \dots \arrtype \mathit{fixtype}(\atype_m) \arrtype
\mathit{fixtype}(\asortorpair) \mid \identifier{f} : \atype_1
\arrtype \dots \arrtype \atype_m \arrtype \asortorpair \in \F \}$ and
$\Gamma' := \{ x : \mathit{fixtype}(\atype_1) \arrtype \dots \arrtype
\mathit{fixtype}(\atype_m) \arrtype \mathit{fixtype}(\asortorpair)
\mid x : \atype_1 \arrtype \dots \arrtype \atype_m \arrtype
\asortorpair \in \Gamma \}$.
Now suppose that all clauses in $\eprog$ are well-typed under $\F'$
and the corresponding type environment $\Gamma'$.
Since typing does not affect the semantics of
Figure~\ref{fig:evaluation}, $\progresult$ if and only if
$\progeval{\eprog}{d_1,\dots,d_M} \mapsto b$ still holds.  Using
Lemma~\ref{lem:arrowdepth:algorithm}---and the observation that the
translation from $\prog$ to $\eprog$ takes constant time as it does
not consider the input $d_1,\dots,d_M$---any decision problem
accepted by $\prog$ is therefore in $\exptime{K}$.

It remains to be seen that every clause $\apps{\identifier{f}}{
\ell_1}{\ell_k} = s$ which is well-typed under $\F$ with type
environment $\Gamma$ is also well-typed using $\F'$ and $\Gamma'$
instead.
To see this, we prove the following by induction on the size of $s$:

\emph{Suppose variables have a proper type, and let $s : \atype$
using $\F,\Gamma$.
If for all $t \subtermeq s$, the type of $t$ is proper w.r.t.\ 
$\F,\Gamma$ and $t$ does not have the form $\apps{(\ifte{b}{s_1}{s_3}
)\linebreak
}{t_1}{t_n}$ or $\apps{(\apps{\choice}{s_1}{s_m})}{t_1}{t_n}$ with $n
> 0$, then $s : \mathit{fixtype}(\atype)$ using $\F',\Gamma'$.}

\begin{itemize}
\item If $s = \apps{\choice}{s_1}{s_m}$, then each $s_i : \atype$
  using $\F,\Gamma$, so by the induction hypothesis each $s_i :
  \mathit{fixtype}(\atype)$ using $\F',\Gamma'$; this gives
  $s : \mathit{fixtype}(\atype)$ following the typing rules for
  $\choice$.
\item If $s = \ifte{b}{s_1}{s_2}$, then by the induction hypothesis
  (and using that $\mathit{fixtype}(\bool) = \bool$),
  $b : \bool$ and both $s_1 : \mathit{fixtype}(\atype)$ and
  $s_2 : \mathit{fixtype}(\atype)$.
\item If $s = \apps{\identifier{c}}{s_1}{s_m}$, then $\atype \in
  \Sorts$ and we can write $\identifier{c} : \asortorpair_1
  \arrtype \dots \arrtype \asortorpair_m \arrtype \atype \in \F
  \cap \F'$ where each $\asortorpair_i$ has type order $0$.  By the
  induction hypothesis, each $s_i : \mathit{fixtype}(\asortorpair_i)
  = \asortorpair_i$ using $\F',\Gamma'$.
\item If $s = \apps{a}{s_1}{s_n}$ with $a \in \V \cup \Defineds$, then
  $a$ is typed with $\btype_1 \arrtype \dots \arrtype \btype_n \arrtype
  \atype$ in $\F \cup \Gamma$.  Since $s$ has a proper type, we know
  that $\atype$ has the form $\asortorpair$ or $\ctype \arrtype
  \asortorpair$ or $\ctype_1 \times \ctype_2$; therefore
  $a : \mathit{fixtype}(\btype_1) \arrtype \dots \arrtype
  \mathit{fixtype}(\btype_n) \arrtype \mathit{fixtype}(\atype) \in
  \F' \cup \Gamma'$.
  Since each $s_i : \mathit{fixtype}(\btype_i)$ by the induction
  hypothesis, we obtain $s : \mathit{fixtype}(\atype)$.
\end{itemize}
Applying the result also to the $\apps{\identifier{f}}{\ell_1}{
\ell_k}$, the entire clause is well-typed.
\qed
\end{proof}

\end{document}